\documentclass[11pt,letterpaper]{article}
\usepackage{amsmath,amssymb,amsthm}
\usepackage[margin=1in, centering]{geometry}
\usepackage{setspace}
\usepackage[numbers]{natbib}
\usepackage{xcolor}
\definecolor{HANADA}{RGB}{0, 98, 132}
\definecolor{KURENAI}{RGB}{203, 27, 69}
\usepackage[
    pdfstartview=FitH,
    pdfpagemode=UseNone,
    colorlinks=true,
    citecolor=KURENAI,
    linkcolor=HANADA,
    linktocpage=true
]{hyperref}
\usepackage[capitalise, nameinlink]{cleveref}
\usepackage{biolinum}
\usepackage{tgpagella}
\usepackage{mathpazo}
\usepackage{thm-restate}
\usepackage{graphicx}
\usepackage{xcolor}
\usepackage{enumitem}
\usepackage{booktabs}
\usepackage{tabularx}
\usepackage{array}
\usepackage{makecell}
\usepackage[margin=1in]{geometry}
\usepackage{authblk}

\renewcommand\Affilfont{\small}
\makeatletter
\renewcommand\AB@affilsepx{\protect\\[0.2em]\protect\Affilfont}
\makeatother

\usepackage{mathtools}
\usepackage{dsfont}

\newtheorem{theorem}{Theorem}[section]
\newtheorem{proposition}[theorem]{Proposition}
\newtheorem{fact}[theorem]{Fact}
\newtheorem{lemma}[theorem]{Lemma}
\newtheorem{corollary}[theorem]{Corollary}
\newtheorem{claim}[theorem]{Claim}

\newtheorem{definition}[theorem]{Definition}
\newtheorem{remark}[theorem]{Remark}

\crefname{assumption}{Black-box Theorem}{Black-box Theorems}
\Crefname{assumption}{Black-box Theorem}{Black-box Theorems}
\makeatletter
\newcommand*{\threshold@theHshared}{\thesection.\arabic{theorem}}
\providecommand*{\theHtheorem}{\threshold@theHshared}
\renewcommand*{\theHtheorem}{\threshold@theHshared}
\providecommand*{\theHproposition}{\threshold@theHshared}
\renewcommand*{\theHproposition}{\threshold@theHshared}
\providecommand*{\theHfact}{\threshold@theHshared}
\renewcommand*{\theHfact}{\threshold@theHshared}
\providecommand*{\theHlemma}{\threshold@theHshared}
\renewcommand*{\theHlemma}{\threshold@theHshared}
\providecommand*{\theHcorollary}{\threshold@theHshared}
\renewcommand*{\theHcorollary}{\threshold@theHshared}
\providecommand*{\theHclaim}{\threshold@theHshared}
\renewcommand*{\theHclaim}{\threshold@theHshared}
\providecommand*{\theHconjecture}{\threshold@theHshared}
\renewcommand*{\theHconjecture}{\threshold@theHshared}
\providecommand*{\theHdefinition}{\threshold@theHshared}
\renewcommand*{\theHdefinition}{\threshold@theHshared}
\providecommand*{\theHremark}{\threshold@theHshared}
\renewcommand*{\theHremark}{\threshold@theHshared}
\providecommand*{\theHassumption}{\threshold@theHshared}
\renewcommand*{\theHassumption}{\threshold@theHshared}
\makeatother

\newcommand{\appref}[1]{\hyperref[#1]{Appendix~\ref*{#1}}}

\usepackage{etoolbox}
\AtEndEnvironment{proof}{\qedhere}

\definecolor{lightcyan}{RGB}{0.88,1,1}
\definecolor{darkgreen}{RGB}{0, 128, 0}
\definecolor{darkblue}{RGB}{0, 0, 128}

\newcommand{\e}{\mathrm{e}}
\newcommand{\N}{\mathbb{N}}  \newcommand{\R}{\mathbb{R}}   \renewcommand{\d}{\mathrm{d}} \DeclareMathOperator*{\E}{\mathbb{E}}      \newcommand{\A}{\mathcal{A}}   \newcommand{\CC}{\mathcal{C}} \newcommand{\D}{\mathcal{D}}     \newcommand{\NN}{\mathcal{N}}    \renewcommand{\S}{\mathcal{S}}   \newcommand{\EE}{\mathcal{E}}  \newcommand{\RR}{\mathcal{R}}          \newcommand{\pmcube}[1]{\{-1,1\}^{#1}}      \newcommand{\expect}[2]{\E_{\substack{#1}}\!\Br{#2}}       \newcommand{\AND}{\newsf{AND}}
\newcommand{\OR}{\newsf{OR}}

\newcommand{\He}{\mathrm{He}}

\newcommand{\br}[1]{\left(#1\right)} \newcommand{\Br}[1]{\left[#1\right]}   \newcommand{\abs}[1]{\left|#1 \right|} \newcommand{\norm}[1]{\left\lVert #1 \right\rVert}     \newcommand{\poly}[1]{\mathrm{poly}\!\br{#1}}                            \newcommand{\ip}[1]{\langle #1 \rangle} 

\newcommand{\newsf}[1]{\textnormal{\textup{\textsf{#1}}}}
\newcommand{\ltf}{\newsf{LTF}}
\newcommand{\ptf}{\newsf{PTF}}
\newcommand{\thr}{\newsf{Thr}}
\newcommand{\prg}{\newsf{PRG}}
\newcommand{\CNF}{\newsf{CNF}}
\newcommand{\DNF}{\newsf{DNF}}
\newcommand{\gcdf}{\Phi}
\newcommand{\gpdf}{\phi}
\newcommand{\mollifier}{\widetilde{\mathcal O}}
\newcommand{\mollifierkernel}{\mathcal B}
\newcommand{\clauseprob}{w}

\newcommand{\clausemass}{q}
\newcommand{\clausemoment}{M}

\newcommand{\ort}{\mathcal{O}}
\newcommand{\ind}{\mathds{1}}
\newcommand{\onevec}{\mathbf{1}}
\newcommand{\zerovec}{\mathbf{0}}

\newcommand{\tildeo}{\widetilde{O}}
\newcommand{\gmr}{G_{\newsf{GMR}}}
\newcommand{\xor}{\newsf{XOR}}
\newcommand{\ns}{\newsf{NS}}
\newcommand{\gsa}{\newsf{GSA}}

\providecommand{\pmcube}[1]{\{-1,1\}^{#1}}
\providecommand{\CNF}{\ensuremath{\newsf{CNF}}}
\providecommand{\DNF}{\ensuremath{\newsf{DNF}}}
\providecommand{\Khead}{K_{\newsf{head}}}
\providecommand{\Wcnf}{W_{\newsf{CNF}}}
\providecommand{\whd}{w_{\newsf{head}}}

\providecommand{\dhyb}{d_{\newsf{hyb}}}
\providecommand{\deltaCNF}{\delta_{\newsf{CNF}}}
\providecommand{\polylog}{\mathrm{polylog}}
\providecommand{\coloneqq}{:=}
\providecommand{\D}{\mathcal{D}}

\usepackage[normalem]{ulem}

\title{\textbf{Fooling Thresholds of Halfspaces}\vspace{0.3em}}
\date{}
\author[1]{Minglong Qin}
\author[2,3]{Penghui Yao}
\author[2]{Mingnan Zhao}
\author[2]{Haigang Zhou}

\affil[1]{Centre for Quantum Technologies, National University of Singapore, Singapore}
\affil[ ]{\texttt{mlqin6@gmail.com}}
\affil[2]{State Key Laboratory for Novel Software Technology, Nanjing University, Nanjing 210023, China}
\affil[ ]{\texttt{phyao1985@gmail.com, mingnanzh@gmail.com, hgzhou2003@outlook.com}}
\affil[3]{Hefei National Laboratory, Hefei 230088, China}

\begin{document}

\maketitle

\allowdisplaybreaks

\vspace*{-2em}
\begin{abstract}
    We initiate the study of constructing explicit pseudorandom generators for thresholds of halfspaces
with seed length polylogarithmic in the number of halfspaces. This class of functions lies at the frontier of circuit complexity~\cite{ChenTalWang2026}.
We show that the generator designed by O'Donnell, Servedio, and Tan
for polytopes \cite{OST22} also fools this broader class.

To analyze the generator,
we develop a threshold-specific smooth approximation framework
based on a Bentkus-type mollifier.
We prove derivative bounds for this mollifier
and also establish a Boolean anticoncentration theorem
for thresholds of halfspaces
via a random thinning argument.
These ingredients imply that the generator \(\delta\)-fools
every \(k\)-out-of-\(m\) threshold of \(m\) halfspaces over \(\{-1,1\}^n\)
with seed length
\(\tildeo(\kappa^{6+2\varepsilon}\log^{6+2\varepsilon}\!m\cdot
\delta^{-(2+2\varepsilon)}\log n)\), for any arbitrarily small constant
\(\varepsilon>0\), where \(\kappa=\min\{k,m-k+1\}\).
The random thinning argument also yields bounds on the
noise sensitivity and Gaussian surface area for thresholds of halfspaces,
leading to learning algorithms under both the uniform and Gaussian distributions.

\end{abstract}

\vspace{0.2em}
\setcounter{tocdepth}{1}
\hypersetup{bookmarksdepth=2}
\begingroup
\setstretch{0.8}
\tableofcontents
\par
\endgroup

\thispagestyle{empty}
\clearpage

\setcounter{page}{1}
\section{Introduction}
\label{sec:intro}
Derandomization has long been a central theme in computational complexity theory.
It asks whether randomized algorithms can be simulated deterministically
without significantly compromising efficiency.
A fruitful approach is to replace truly random bits
used by randomized algorithms
with pseudorandom bits produced by
a \emph{pseudorandom generator} (\prg) from a much shorter random seed.
If the seed is short enough, enumerating all possible seeds becomes feasible
and yields a deterministic simulation with essentially the same behavior;
see \cite{Vad12} for a survey.
For a class \(\mathcal{C}\) of Boolean functions over \(\pmcube n\),
a generator \(G\) with seed length \(r\ll n\)
is said to \(\delta\)-fool \(\mathcal{C}\) if,
for every \(f\in\mathcal{C}\),
\(f(G(U_r))\) and \(f(U_n)\) differ by at most \(\delta\) in expectation,
where \(U_r\) and \(U_n\) denote the uniform distributions
over \(\pmcube r\) and \(\pmcube n\).
The same notion applies to other distributions over the inputs,
such as the standard Gaussian distribution on \(\R^n\),
by replacing \(U_n\) accordingly.
Seminal works constructed \(\prg\)s
for constant-depth circuits \cite{AW85,Nis91,LVW93}
and space-bounded computation \cite{BNS89,Nis92,NZ93},
and developed general frameworks for pseudorandomness \cite{NW94,INW94,NZ96}.
Since these foundational results,
a substantial body of work has sought
explicit \(\prg\)s for specific, well-structured classes of functions,
including $\CNF$ and $\DNF$ formulas \cite{Baz09,Raz09},
halfspaces \cite{KM15,GKM18},
and polynomial threshold functions \cite{MZ13,OST20,KM22}.

Among these natural classes,
halfspaces have been studied extensively.
A \emph{halfspace}, or \emph{linear threshold function}, is a basic Boolean function
that tests whether a weighted sum of its input bits exceeds a given threshold.
Halfspaces play a fundamental role across many fields,
including machine learning \cite{Ros58,CV95},
theoretical computer science \cite{Hu65,Mur71},
and game theory \cite{TZ92,TZ99}.
Even this simple class has led to a rich line of work in pseudorandomness,
yielding \(\prg\)s that \(\delta\)-fool halfspaces
in both the Boolean \cite{Ser06,DGJ+10,MZ13,GKM18}
and the Gaussian \cite{KM15} settings,
with seed length \(O(\log(n/\delta))\) up to a loglog factor.
One direction generalizes the degree, replacing the linear form
by a low-degree polynomial to obtain \emph{polynomial threshold functions} (\ptf s).
In the Boolean setting,
\cite{MZ13} gave an explicit \(\prg\)
that fools degree-\(d\) \ptf{}s under the uniform distribution
with seed length $(d/\delta)^{O(d)}\cdot\log n$.
In the Gaussian setting, a series of works
\cite{DKN10,Kan11a,Kan11b,Kan12,MZ13,Kan14,Kan15,OST20,KM22}
has progressively improved \(\prg\)s for \ptf{}s,
achieving seed length polynomial in
\(d\), \(1/\delta\), and \(\log n\) \cite{OST20,KM22}.

Beyond single threshold functions,
\emph{functions of halfspaces} have received considerable attention
as another natural generalization \cite{DKN10,GOWZ10,CDS19,OST22,YZ25}.
A particularly important instance is the class of intersections of halfspaces.
An \emph{intersection of halfspaces},
or equivalently a \emph{polytope},
is a Boolean function that outputs $1$ if and only if
the input satisfies all halfspaces in a given collection.
Such functions are fundamental in optimization \cite{GLS88},
computational geometry \cite{Zie95},
and learning theory \cite{KOS02,KOS08}.
Early works initiated this line by constructing \(\prg\)s for restricted settings,
such as intersections of \emph{regular} and \emph{low-weight} halfspaces \cite{HKM13,ST17}.
A significant advance was made by O'Donnell, Servedio, and Tan,
who constructed \(\prg\)s for intersections of $m$ halfspaces over the Boolean cube
with seed length polynomial in \(\log m\), \(1/\delta\), and \(\log n\) \cite{OST22}.
A comparable result on \(\prg\)s for polytopes in the Gaussian setting
was given in \cite{CDS19}.
These results naturally raise an intriguing question:
\begin{quote}
    \itshape
    Beyond intersections, can we fool broader combinations of halfspaces,
    thereby capturing a richer family of geometric concepts?
\end{quote}

One such combination is the \emph{\(k\)-out-of-\(m\) threshold of halfspaces},
a Boolean function that takes \(m\) halfspaces as constraints
and outputs $1$ if and only if the input satisfies at least \(k\) of them.
This class properly generalizes polytopes,
which correspond to the case \(k=m\).
Other choices of \(k\) yield various functions of halfspaces,
including their union when \(k=1\) and their majority when \(k=m/2\).
Apart from derandomization, the importance of this class
also stems from its central role in computational complexity theory.
In circuit complexity, where halfspaces are precisely linear threshold gates,
a threshold of halfspaces is a depth-two threshold circuit \cite{GHR92,HMPST93}.
Such circuits lie at the heart of the class \(\mathsf{TC}^0\),
for which proving strong lower bounds remains a long-standing challenge~\cite{KaneWilliam2016,ChenTalWang2026}.
In learning theory, they have also served as a testbed for
learning geometric concepts beyond a single halfspace \cite{KOS02}.
These connections position thresholds of halfspaces as a natural target for pseudorandomness,
and make constructing explicit \(\prg\)s for them a compelling next step beyond polytopes.

In this work, we study explicit constructions of \(\prg\)s
for thresholds of halfspaces over the Boolean cube.
The result of Gopalan, O'Donnell, Wu, and Zuckerman~\cite{GOWZ10}
for monotone functions of halfspaces applies to this class,
but the resulting seed length has superlinear dependence
on the number \(m\) of halfspaces.
On the other hand, known explicit \(\prg\)s for intersections of halfspaces
achieve polylogarithmic dependence on \(m\) \cite{HKM13,ST17,OST22},
but their analyses do not directly extend to thresholds of halfspaces.
Thus, prior work leaves open whether one can obtain
polylogarithmic dependence on \(m\) for this broader class.
See \cref{sec:discussion} for a more detailed discussion.

\subsection{Main Results}
This work studies the construction of explicit \(\prg\)s
for $k$-out-of-$m$ thresholds of halfspaces (see \cref{def:k-out-of-m-threshold}).
We show that the \(\prg\) designed by O'Donnell, Servedio, and Tan for polytopes
(i.e., intersections of halfspaces) \cite{OST22} also fools this broader class.
Our main result is as follows.
Here and throughout, \(\tildeo(\cdot)\) hides polylogarithmic factors in the relevant parameters.

\begin{theorem}[\cref{thm:main-prg}, informal]
    There is an explicit $\prg$ that \(\delta\)-fools all \(k\)-out-of-\(m\) thresholds of
    halfspaces over \(\pmcube n\)
    with seed length
    \(\tildeo(\kappa^{6+2\varepsilon}\log^{6+2\varepsilon}\!m\cdot
    \delta^{-(2+2\varepsilon)}\log n)\), for any arbitrarily small constant
    \(\varepsilon>0\), where \(\kappa=\min\{k,m-k+1\}\).
\end{theorem}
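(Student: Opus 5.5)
We would follow the OST22 template: mollify the indicator of the threshold, prove that the OST generator fools the mollified function via a hybrid/Taylor argument, and then transfer this back to the sharp function using anticoncentration. The first step is a reduction. Since a \(k\)-out-of-\(m\) threshold is the negation of an \((m-k+1)\)-out-of-\(m\) threshold of the complementary halfspaces, we may assume \(k=\kappa\le m/2\). Write \(F(x)=\ind[\#\{i: W_ix\le \theta_i\}\ge \kappa]\). Equivalently, \(F(x)=\Psi(Wx)\), where \(\Psi(y)=\ind[\text{the }\kappa\text{-th smallest value of }y_i-\theta_i\text{ is }\le 0]\), an order-statistic orthant-type set in \(\R^m\). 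As in OST22, we apply the standard regularization and restriction step, with critical-index decomposition of each halfspace into head and tail coordinates. This reduces the problem to the case where, after fixing the head variables, each row of \(W\) restricted to the tail is \(\tau\)-regular. The generator's bounded-independence and hashing components handle the head variables exactly as in OST22.

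Next, we build a Bentkus-type mollifier \(\mollifier\) for \(\Psi\) at scale \(\lambda\). The natural choice is the convolution \(\mollifier(y)=\E_{g}[\Psi(y+\lambda g)]\) in a suitable form. Alternatively, and more likely, one uses Bentkus's construction adapted to the \(\kappa\)-th order statistic: a smooth function of the form \(\Pr\) over a Gaussian shift that \(\kappa\)-th order statistic is \(\le 0\). We then prove derivative bounds of the form \(\sum_{|\alpha|=r}\|\partial^\alpha\mollifier\|_\infty\lesssim (C\kappa\log m)^{r/2}/\lambda^r\) for \(r\le 4\). Intuitively, the derivative mass is governed by \(\Pr[\text{the }\kappa\text{-th order statistic is within }\lambda]\). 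Controlling it requires summing over \(\binom{m}{\kappa}\)-type events, which costs a \(\kappa\log m\) factor in place of the \(\log m\) of the orthant case.

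With these bounds in hand, a Lindeberg-style hybrid argument against the OST generator goes through. The generator hashes coordinates into buckets, fills each with bounded-independence bits, and the regularity makes each bucket's contribution small. Taylor expansion to fourth order, with matched low moments, shows \(|\E\mollifier(WG)-\E\mollifier(Wu)|\lesssim (\kappa\log m)^{2}\tau\lambda^{-4}\)-type error. The final step is the transfer from \(\mollifier\) back to \(\Psi\). This uses a Boolean anticoncentration theorem: \(\Pr_u[Wu \text{ lies within } \lambda \text{ of }\partial\Psi]\lesssim \lambda\sqrt{\kappa\log m}+\) a regularity error. The same bound is needed under the pseudorandom distribution, and we get it by sandwiching between two shifted mollifiers.

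We expect the main obstacle to be this anticoncentration step. For polytopes, OST22 used Nazarov's Gaussian surface area bound plus an invariance argument. The boundary of a \(\kappa\)-out-of-\(m\) set is a union over many \(\kappa\)-subsets, so union bounds blow up. Our plan is random thinning. We keep each halfspace independently with probability \(p\approx \kappa/m\)-scaled so that near the boundary, with constant probability, exactly the relevant halfspace survives as the decisive one among few survivors. This reduces the boundary event to a polytope or union of \(O(\kappa)\)-sized structure, where the \(\sqrt{\log m}\) Nazarov-type bound applies. The losses from thinning then yield the \(\kappa\)-dependence. Balancing \(\lambda\), \(\tau\), and the bounded-independence parameters against \(\delta\) gives seed length \(\tildeo(\kappa^{6+2\varepsilon}\log^{6+2\varepsilon}m\cdot\delta^{-(2+2\varepsilon)}\log n)\). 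The \(\varepsilon\) comes from using derivative order slightly above the minimal, or from a mild loss in the thinning tail.
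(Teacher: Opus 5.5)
Your outline (complementation, standardization, mollifier, hybrid, random thinning, sandwich) matches the paper's architecture. However, the anticoncentration step, which you rightly call the crux, is set up wrongly.

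The target $\Pr_u[Wu\text{ within }\lambda\text{ of }\partial\Psi]\lesssim\lambda\sqrt{\kappa\log m}$ is false once $\kappa\gg\log m$. Counterexample:
\begin{itemize}
\item Put $2\kappa-1$ halfspaces on one regular direction $Z=q^{-1/2}\sum_{j\le q}u_j$, alternately $Z\ge t_j$ and $Z\le t_j$, with $t_1<\dots<t_{2\kappa-1}$ in $[-1,1]$ spaced $2\Lambda$ apart.
\item Make the remaining halfspaces unsatisfiable.
\item The number of satisfied halfspaces then alternates between $\kappa-1$ and $\kappa$, so every window $|Z-t_j|<\Lambda$ lies in the band, and the band has mass $\Omega(\kappa\Lambda)$. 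This is \cref{thm:boolean-threshold-band-lower}.
\end{itemize}

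Your thinning also cannot give a constant-probability reduction. Let $S_\pm$ be the halfspaces satisfied at thresholds $b\pm\Lambda$, with $|S_-|=\kappa-1$ and $|S_+\setminus S_-|=1$. Then $R$ meets $S_+$ while avoiding $S_-$ with probability exactly $p(1-p)^{\kappa-1}$. This is at most $\kappa/m$ for your $p\approx\kappa/m$, and at most $1/\kappa$ for every $p$.

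The argument that works goes as follows:
\begin{itemize}
\item Take $p=1/\kappa$ and pay the $\Omega(1/\kappa)$ factor.
\item On the good event, $Au$ lies in the $\pm\Lambda$-band of the $\OR$ of the halfspaces in $R$, which is a polytope band after negation.
\item Apply OST22's Boolean Littlewood--Offord-type anticoncentration theorem for polytopes. Unlike an invariance-plus-Nazarov argument, it holds for rows with arbitrary heads and a $\tau$-regular unit-norm tail.
\item Use Jensen on $\E_R\sqrt{\log|R|}$.
\end{itemize}
This gives $O(\kappa\Lambda\sqrt{\log(m/\kappa)})$. The linear loss in $\kappa$ is unavoidable, and it dictates $\lambda\approx\delta/(\kappa\sqrt{\log(m/\kappa)\log(m/\delta)})$.

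The remaining analytic inputs are asserted rather than proved, and as stated they do not produce the claimed seed length.

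\textbf{Derivative bound.} The Taylor remainder needs $\sup_x\sum_{i_1,\dots,i_d}|\partial_{i_1\cdots i_d}\mollifier(x)|$. The sum of sup-norms you wrote is already $\Omega(m/\lambda)$ for a single orthant. Bounding the correct quantity without a $\binom{m}{\kappa}$ factor is the real work, not something intuition settles:
\begin{itemize}
\item For the product-form generalization $1-\prod_{S}(1-\prod_{i\in S}\widetilde{\ind}_{b_i,\lambda}(x_i))$ of Bentkus's mollifier, your rate is false. At the symmetric point where $\binom{m}{\kappa}\gcdf(z)^\kappa=1$, the gradient has $L_1$ norm of order $\kappa\sqrt{\log(m/\kappa)}/\lambda$.
\item For your convolution mollifier, the rate would need a genuine proof.
\item The paper proves $O_d(\kappa\sqrt{\log(m/\kappa)}/\lambda)^d$. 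It uses $|\gcdf^{(r)}|\le C_r\gcdf\,(1+\log\frac{1}{\gcdf})^{r/2}$, a Cauchy--Schwarz bound per clause $w_S=\prod_{i\in S}\gcdf(x_i)$, an exponential-moment bound under the tilted measure $\mu(S)\propto w_S$, and $\prod_{R\notin\mathcal{S}_h}(1-w_R)\le\mathrm{e}^{d-q}$ with $q=\sum_S w_S$.
\end{itemize}

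\textbf{Low-order Taylor terms.} Matching low moments via bounded independence does not fool them. Their coefficients $\partial_I\mollifier(H^Bx)$ depend jointly on the head bits of all $m$ rows in the bucket. One must write each term as a weight-$O_c(m^\kappa\sqrt{n}/\lambda)^c$ combination of width-$\kappa w$ CNFs and fool them with the GMR component. This forces CNF width $\kappa w$ and $\log(1/\delta_{\mathrm{CNF}})=\Omega(\kappa\log m)$.

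\textbf{Seed length.} Your fourth-order error $(\kappa\log m)^2\tau\lambda^{-4}$ would force $\tau\approx\delta^5/\mathrm{poly}(\kappa\log m)$. That makes $K_{\mathrm{head}}\approx\tau^{-2}$ of order $\delta^{-10}$. The exponent $\kappa^{6+2\varepsilon}\delta^{-(2+2\varepsilon)}$ instead comes from the bucket-wise bound $L\cdot\bigl(\frac{\kappa\sqrt{\log(m/\kappa)}}{\lambda}(\tau\log\frac{m}{\delta}+\sqrt{\log(m/\delta)/L})\bigr)^{d}$ with Taylor order $d=O_\varepsilon(1)$. That bound gives $L\approx\kappa^{4+2\varepsilon}\log^{4+2\varepsilon}m/\delta^{2+2\varepsilon}$, and the CNF width adds a further factor $\kappa^2$ per bucket.
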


For simplicity, we henceforth assume \(1\le k\le m/2\)
unless specified otherwise, so that \(\kappa=k\),
since the case \(k>m/2\) is equivalent to
\((m-k+1)\)-out-of-\(m\) thresholds under complementation.
For $k = \polylog(m)$, the seed length of this $\prg$ is
polylogarithmic in both $m$ and $n$.
This extends the $\prg$ for polytopes in \cite{OST22}
from intersections of halfspaces to thresholds of halfspaces,
while preserving the same qualitative bound on seed length.
Indeed, intersections of \(m\) halfspaces correspond to the case \(k=m\),
or equivalently \(k=1\).
A direct application of the result of
Gopalan, O'Donnell, Wu, and Zuckerman for monotone functions of halfspaces \cite{GOWZ10}
to \(k\)-out-of-\(m\) thresholds
gives seed length
\(O((m\log(m/\delta)+\log n)\log(m/\delta))\).
Thus, in the regime \(k=\polylog(m)\),
our result improves the dependence on \(m\) from superlinear to polylogarithmic.

The main difficulty is that a threshold of halfspaces is no longer described by a single orthant,
so the existing analysis for polytopes does not apply directly.
Our contribution is to develop a threshold-specific analysis of the same generator.
This requires two additional ingredients.
First, we construct a suitable Bentkus-type mollifier (see \cref{def:mollifier})
for threshold functions,
which smoothly approximates the discontinuous Boolean threshold function
except within a small neighborhood of the boundary.
We also prove derivative bounds for this mollifier (see \cref{thm:scaled})
that are strong enough for the hybrid argument.
Second, we establish a Boolean anticoncentration bound for thresholds of halfspaces
(see \cref{thm:threshold-ost-anticonc}).
This bound shows that the Boolean cube places little mass near the boundary of a threshold of halfspaces,
and it allows us to reduce fooling the original threshold function to fooling the smooth approximator.
The proof uses a \emph{random thinning argument},
which relates anticoncentration for thresholds of halfspaces
to anticoncentration for an \(\OR\) of a random subset of these halfspaces.
The latter follows from the Boolean anticoncentration theorem for polytopes
\cite{OST22}.
We will elaborate on these ingredients later in the proof overview.

The random thinning argument also applies to other boundary-complexity measures,
allowing us to lift known bounds from intersections of halfspaces to thresholds of halfspaces.
In particular, we obtain the following bounds on noise sensitivity and Gaussian surface area.

\begin{theorem}[\cref{thm:threshold-ns-gsa}, informal]
    Let \(F\) be a \(k\)-out-of-\(m\) threshold of halfspaces, where $k\leq m/2$. Then
    $\ns_{\delta}(F) = O( k \sqrt{\delta\cdot \log(m/k)} ) $ and
    \(\gsa(F) = O(k \sqrt{\log(m/k)})\).
\end{theorem}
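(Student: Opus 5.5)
Known inputs: for an OR (equivalently intersection, after complementing) of $m'$ halfspaces, $\ns_\delta = O(\sqrt{\delta\log m'})$ (Kane) and $\gsa = O(\sqrt{\log m'})$ (Nazarov/Klivans--O'Donnell--Servedio). The plan is to lift these via the random thinning argument.

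\textbf{Thinning identity.} Write $F(x)=\ind[\sum_i h_i(x)\ge k]$ with $h_i\in\{0,1\}$. Pick a random subset $S\subseteq[m]$, each index kept independently with probability $p=1/k$ (or $c/k$), and set $G_S=\OR_{i\in S}h_i$. If $F$ changes value between $x$ and $y$, then the count $N(x)=\sum_i h_i$ crosses $k$: say $N(x)\ge k>N(y)$. I want $\Pr_S[G_S(x)\ne G_S(y)]\ge c_0/k$ for such pairs. Let $A=\{i:h_i(x)=1\}$ and $B=\{i:h_i(y)=1\}$. Then $G_S$ separates $x,y$ whenever $S\cap A\neq\emptyset$ and $S\cap B=\emptyset$, or vice versa. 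Since $N(y)\le k-1$, we get $\Pr[S\cap B=\emptyset]\ge(1-1/k)^{k-1}\ge 1/e$.

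The main obstacle is that $A$ may overlap $B$ heavily. I would instead use $A\setminus B$, which is nonempty, but then the success probability is only about $p\,|A\setminus B|$, and that difference can be $1$. So the clean bound I aim for is
$$\Pr_S[G_S(x)\ne G_S(y)]\ \ge\ p(1-p)^{k-1}\ \ge\ \frac{1}{ek}$$
whenever $F(x)\ne F(y)$. This uses one fixed $i\in A\setminus B$ being kept and all of $B$ being dropped. Averaging over $S$ gives
$$\Pr[F(x)\ne F(y)]\le ek\,\E_S\Pr[G_S(x)\ne G_S(y)].$$

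\textbf{Noise sensitivity.} Apply this with $(x,y)$ a $\delta$-correlated pair. Each $G_S$ is an OR of $|S|$ halfspaces, so $\ns_\delta(G_S)=O(\sqrt{\delta\log(|S|+1)})$. Since $\E|S|=m/k$ and $\sqrt{\log(\cdot)}$ is concave, Jensen gives $\ns_\delta(F)=O(k\sqrt{\delta\log(m/k)})$.

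\textbf{Gaussian surface area.} Use the characterization $\gsa(F)=\lim_{\eta\to0}\Pr_{x,y}[F(x)\ne F(y)]/\eta$ for $\eta$-correlated Gaussians (up to the standard $\sqrt{\pi/2}$-type normalization), or equivalently the $\eta$-neighborhood volume. The same pointwise inequality passes through the limit, using Fatou or dominated convergence with $|S|\le m$ uniformly. Together with the bound $\gsa(G_S)=O(\sqrt{\log|S|})$ for polytopes and Jensen again, this gives $\gsa(F)=O(k\sqrt{\log(m/k)})$.

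The main care points are the event bound in the thinning step and handling $|S|=0$ or $1$, which only adds constants.
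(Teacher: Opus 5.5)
Your noise-sensitivity argument is the paper's argument. It uses thinning with $p=1/k$, the bound $p(1-p)^{k-1}$ from keeping one index of $S(x)\setminus S(y)$ while dropping all of $S(y)$, and then the polytope bound and Jensen. Your $\log(|S|+1)$ is in fact more careful than the paper's $\log|R|$, which gives a vacuous $0$ at $|R|=1$.

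For Gaussian surface area your route differs from the paper's. The paper assumes the hyperplanes $H_i$ are in general position and writes $\gsa(F)=\sum_i\int_{H_i}\ind[s_i(x)=k-1]\,\phi_n\,d\sigma_i$. It then compares integrands facet by facet via $\E_R[\EE_{R,i}(x)]=p(1-p)^{s_i(x)}$, and covers the general case by a perturbation-and-limit step.

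Your argument, carried out with the $\eta$-neighbourhood, needs none of this.
\begin{itemize}
\item $K(F)$ is closed, being a finite union of polyhedra. So for $x\in K(F)^\eta\setminus K(F)$ you can pick $y\in K(F)$ with $\norm{x-y}_2\le\eta$.
\item With probability at least $p(1-p)^{k-1}$, $S$ meets $S(y)\setminus S(x)$ and misses $S(x)$. This forces $G_S(y)=1$ and $G_S(x)=0$, hence $x\in K(G_S)^\eta\setminus K(G_S)$.
\item Integrate, divide by $\eta$, and let $\eta\to0$. The expectation over $S$ is a finite sum, so no Fatou or dominated convergence is needed. This gives $\gsa(F)\le \E_S[\gsa(G_S)]/(p(1-p)^{k-1})$ for every threshold, degenerate or not.
\end{itemize}
This directly parallels the width-$\Lambda$ band argument the paper uses for Boolean anticoncentration. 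The paper's surface-integral route instead shows exactly where on each facet the factor $p(1-p)^{k-1}$ is lost. Be aware that what you pass to the limit is a one-sided, existential version of your pair inequality, with $y$ chosen depending on $x$.

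Do not use the correlated-Gaussian ``characterization'' you list first. Its scaling is off: with $\rho=1-\eta$, the disagreement probability is of order $\sqrt{\eta}$, not $\eta$. More seriously, it is not equivalent to the paper's one-sided definition. Take $h_1=\ind[x_1\le0]$, $h_2=\ind[-x_1\le0]$, $k=2$, and the remaining halfspaces always false. Then $K(F)=\{x:x_1=0\}$ has $\gsa(F)=2\phi(0)>0$, while $F=0$ almost surely, so its Gaussian noise sensitivity is zero. In general one only has an inequality of the form $\Pr[F(x)\ne F(y)]=O(\sqrt{\eta}\cdot\gsa(F))$. That is the direction you need for the $G_S$, but the wrong one for $F$. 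Your argument should therefore be run with the neighbourhood volume; the noise-based route would need the same general-position reduction the paper makes.
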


These structural bounds,
together with known learning results based on noise sensitivity and Gaussian surface area
\cite{Man94,KOS02,KKMS08,KOS08,PSW26}, yield the following consequences for the learnability of thresholds of halfspaces.

\begin{corollary}[\cref{cor:threshold-learning}, informal]
    The class of \(k\)-out-of-\(m\) thresholds of halfspaces, where $k\leq m/2$, is learnable with
    accuracy \(\varepsilon\): 
    in time \(n^{O(k^2\log(m/k)/\varepsilon^4)}\)
    under the uniform distribution
    and \(n^{\tildeo(k^2\log(m/k)/\varepsilon^2)}\)
    under the Gaussian distribution in the agnostic model;
    in time \(n^{O(k^2\log(m/k)/\varepsilon^2)}\)
    under either the uniform or Gaussian distribution in the PAC model.
\end{corollary}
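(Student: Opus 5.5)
This corollary follows by plugging the structural bounds of \cref{thm:threshold-ns-gsa} into the standard low-degree and $L_1$-polynomial-regression learning frameworks. The plan treats the four settings separately. Write $W=k^2\log(m/k)$ throughout; it is the square of the bounds on $\ns_\delta(F)/\sqrt{\delta}$ and on $\gsa(F)$.

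\emph{Uniform distribution.} The informal theorem gives $\ns_\delta(F)=O(k\sqrt{\delta\log(m/k)})$. The classical Fourier-concentration fact of \cite{KOS02} says that for Boolean $F$,
\[
\sum_{|S|>d}\widehat F(S)^2 \le O\bigl(\ns_{1/d}(F)\bigr).
\]
Choosing $\delta=1/d$, the tail is at most $O(k\sqrt{\log(m/k)/d})$. For this to be at most $\epsilon'$ we need $d=O(W/\epsilon'^2)$.

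In the PAC model, the low-degree algorithm \cite{Man94} needs Fourier weight above degree $d$ at most $\varepsilon$. So $\epsilon'=\varepsilon$, giving $d=O(W/\varepsilon^2)$ and running time $n^{O(d)}=n^{O(k^2\log(m/k)/\varepsilon^2)}$.

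In the agnostic model, the $L_1$-regression algorithm of \cite{KKMS08} requires $L_1$-approximation error $\varepsilon$. The $L_2$ error of the degree-$d$ truncation is $\sqrt{\text{tail}}$, and this bounds the $L_1$ error. So we need the tail at most $\varepsilon^2$, i.e.\ $\epsilon'=\varepsilon^2$. This gives $d=O(W/\varepsilon^4)$ and running time $n^{O(k^2\log(m/k)/\varepsilon^4)}$.

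\emph{Gaussian distribution.} We use $\gsa(F)=O(k\sqrt{\log(m/k)})$. By \cite{KOS08}, Gaussian surface area $\Gamma$ implies Gaussian noise sensitivity $O(\Gamma\sqrt{\delta})$. This in turn implies Hermite concentration: the Hermite weight above degree $d$ is $O(\Gamma/\sqrt d)$.

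For PAC learning, the low-degree Hermite algorithm then needs $d=O(\Gamma^2/\varepsilon^2)=O(W/\varepsilon^2)$, giving time $n^{O(W/\varepsilon^2)}$.

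For agnostic learning, we invoke the recent result of \cite{PSW26}. It gives $L_1$-approximation of degree $\tildeo(\Gamma^2/\varepsilon^2)$ for functions with Gaussian surface area $\Gamma$, which avoids the squaring loss of the $L_2$ route. With $L_1$ regression this yields $n^{\tildeo(k^2\log(m/k)/\varepsilon^2)}$.

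\emph{Main obstacle.} The one nonroutine point is checking the exact form of the \cite{PSW26} guarantee: that its degree scales as $\Gamma^2/\varepsilon^2$ up to polylogarithmic factors. If it does not, I would fall back to the $L_2$ route, which only gives $\varepsilon^{-4}$. Everything else is parameter substitution, and sample complexity is polynomial in $n^{d}$ in every case.
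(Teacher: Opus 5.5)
Your proposal is correct and takes essentially the paper's route: the paper substitutes $\alpha(\delta)=O(k\sqrt{\delta\log(m/k)})$ and $\beta=O(k\sqrt{\log(m/k)})$ from \cref{thm:threshold-ns-gsa} into the black-box \cref{thm:ns-gsa-learning}, whose four items are exactly the reductions you unpack (for example, $1/\alpha^{-1}(\varepsilon^2/2.32)=O(k^2\log(m/k)/\varepsilon^4)$). The point you flag as the main obstacle is settled there, since \cite[Corollary~1.3]{PSW26} is invoked directly in the form $n^{\tildeo(\beta^2/\varepsilon^2)}$, so no $L_2$ fallback is needed.
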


Furthermore, we show that the upper bounds on noise sensitivity and Gaussian surface area
are nearly tight in their dependence on \(k\) up to a $\log m$ factor.

\begin{theorem}
    [\cref{prop:threshold-ns-lower-bound,prop:threshold-surface-lower-bound}, informal]
    There exist \(k\)-out-of-\(m\) thresholds of halfspaces \(F_1\) and \(F_2\), where $k\leq m/2$, such that
    $\ns_{\delta}(F_1) = \Omega \bigl(\sqrt{\delta}(k+\sqrt{k\cdot\log(m/k)}) \bigr)$ and
    \(\gsa(F_2) = \Omega \bigl(k+\sqrt{k\cdot\log(m/k)} \bigr)\).
\end{theorem}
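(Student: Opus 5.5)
We construct two explicit examples and treat the two additive terms separately: one construction for the linear-in-\(k\) term and one for the \(\sqrt{k\log(m/k)}\) term. Since the maximum of two quantities is at least half their sum, it suffices (up to constants) to exhibit, for each term, a \(k\)-out-of-\(m\) threshold attaining it. Realizing both terms in a single function is optional; the statement allows separate functions \(F_1\) and \(F_2\), and we may also pick whichever of two candidate functions is worse. We fix \(k\le m/2\) and write \(m'=\lfloor m/k\rfloor\ge 2\).

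\emph{The \(\sqrt{k\log(m/k)}\) term.} Split the coordinates into \(k\) disjoint blocks. On each block, place an intersection or union \(P_j\) of \(m'\) halfspaces that is extremal for polytopes. In the Gaussian setting, use Nazarov's or Kane's polytope with \(\gsa=\Omega(\sqrt{\log m'})\). In the Boolean setting, use an \(\OR\) of \(m'\) disjoint small ANDs or majorities tuned so that the acceptance probability is about \(1/2\); this gives \(\ns_\delta=\Omega(\sqrt{\delta\log m'})\), as in the known lower bounds for polytopes.

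Next we need \(F\) to count the \(P_j\). If each \(P_j\) is a union of its \(m'\) halfspaces and the blocks are disjoint, then typically exactly one halfspace fires inside a union. In that case \(\sum_{\text{all }m\text{ halfspaces}}\ind[h_i]\ge k\) behaves like \(\sum_j P_j\ge k\), up to a boundary-multiplicity issue. To avoid that issue, we instead take \(F=\ind[\sum_j \ind[P_j]\ge \lceil k/2\rceil]\)-type constructions only when they are expressible as thresholds of halfspaces. Alternatively, and more simply, we use a majority-of-blocks structure: each coordinate block contributes independently, so the boundary of \(F\) is the union over \(j\) of the events ``the other blocks sum to exactly \(k-1\) and \(x\) lies near \(\partial P_j\).'' When the \(P_j\) have acceptance probability \(1/2\), the count of accepting blocks is binomial with probability \(\Theta(1/\sqrt k)\) of equalling the pivotal value. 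Summing over \(k\) blocks gives surface area \(\Omega(k\cdot k^{-1/2}\sqrt{\log m'})=\Omega(\sqrt{k\log(m/k)})\), and similarly for noise sensitivity.

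\emph{The linear \(k\) term.} Take \(F\) to be a threshold of \(m\) halfspaces that all share nearly the same direction, for instance \(h_i(x)=\ind[\langle v,x\rangle\ge t_i]\) with evenly spaced \(t_i\), or more cleverly \(k\) parallel slabs. In the Gaussian case, use \(2k\) halfspaces \(\ind[x_1\ge t_i]\) and \(\ind[-x_1\ge t_i]\) so that \(F\) is the indicator of a union of \(\approx k\) disjoint intervals in \(x_1\) (``at least \(k\) satisfied''). Choosing the interval endpoints inside \([-1,1]\) gives \(\Omega(k)\) boundary points, each with density \(\Omega(1)\), so \(\gsa=\Omega(k)\).

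The main obstacle is checking that a union of \(k\) intervals really arises as a \(k\)-out-of-\(m\) count, since monotone counts of parallel halfspaces yield only rays. The fix is to use halfspaces in opposite directions with interleaved thresholds. For instance, take halfspaces \(x_1\ge a_i\) and \(x_1\le b_i\), paired so that the count equals a constant plus a \(\pm1\) oscillating sawtooth. The count then crosses level \(k\) about \(k\) times. With \(m\ge 2k\) halfspaces this is achievable: \(\#\{i: a_i\le x_1\}+\#\{i: b_i\ge x_1\}\) oscillates between \(k-1\) and \(k\) as \(x_1\) moves through the interleaved points \(a_1<b_1<a_2<\cdots\).

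For the Boolean noise-sensitivity version, replace \(x_1\) by \(\sum_i x_i/\sqrt n\) with thresholds at spacing \(\Theta(1/\sqrt{n})\)-scale lattice points inside \([-1,1]\). Then \(\delta\)-noise moves the sum by \(\Theta(\sqrt\delta)\). The probability of crossing one of the \(\Omega(k)\) boundaries is \(\Omega(k\sqrt\delta)\), as long as the spacing is at least \(\sqrt\delta\), or we take \(n\) large. This requires the regime \(k\sqrt\delta\lesssim1\), which is implicit since \(\ns\le1\).

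Combining the two terms gives both claimed lower bounds.
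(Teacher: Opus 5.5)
Your handling of the linear-in-$k$ term is essentially the paper's. You use $2k-1$ halfspaces in one direction with alternating orientations and interleaved thresholds, with the unused $m-2k+1$ halfspaces taken identically false, so the count oscillates between $k-1$ and $k$. In the Boolean case you use a normalized sum $\sum_{j\le q}x_j/\sqrt q$ with threshold spacing $\Theta(\sqrt\delta)$.

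The gap is in the $\sqrt{k\log(m/k)}$ term. First, $\ind[\sum_j\ind[P_j]\ge\cdot\,]$ is a threshold of polytopes, not of halfspaces, and you leave this unresolved (``only when they are expressible as thresholds of halfspaces''). The halfspace count and the block count genuinely differ, because several halfspaces of a block can hold at once; in $\R^n$, no three nonempty halfspaces are pairwise disjoint. Second, even granting block counting, the numbers do not fit. With $k$ blocks and level $k$, block $j$ is pivotal only when all other $k-1$ blocks accept. At acceptance probability $1/2$ this has probability $2^{-(k-1)}$, not $\Theta(1/\sqrt k)$. A $\Theta(1/\sqrt k)$ pivotal probability requires the level to sit at the mean of the count. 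A ``majority of blocks'' would give that, but it is not a $k$-out-of-$m$ threshold of halfspaces. Also, an $\OR$ of disjoint small $\AND$s is tribes-like and has $\ns_\delta=O(\delta\log m')$ for small $\delta$, not $\Omega(\sqrt{\delta\log m'})$.

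The missing idea is to drop blocks and polytopes and tune each halfspace so that the halfspace count itself is binomial with mean $k$. Take $h_j(x)=\ind[x_j\ge t]$ for $j\in[m]$ with $\Pr_{g\sim\NN(0,1)}[g\ge t]=p=k/m$; in the Boolean case, replace $x_j$ by a normalized sum over the $j$-th block of $q$ coordinates. The hyperplane $\{x_j=t\}$ is on the boundary exactly when $k-1$ of the other $m-1$ halfspaces hold, so $\gsa(F_2)=m\,\gpdf(t)\binom{m-1}{k-1}p^{k-1}(1-p)^{m-k}$. Stirling shows the binomial factor is $\Omega(1/\sqrt k)$, and Gaussian tail (Mills ratio) estimates give $\gpdf(t)=\Omega(p\sqrt{\log(1/p)})$. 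Hence $\gsa(F_2)=\Omega(\sqrt{k\log(m/k)})$. The $\sqrt{\log(m/k)}$ comes from the Gaussian hazard rate at the $k/m$ quantile, not from a polytope surface-area lower bound.

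For noise sensitivity, ``similarly'' hides real work. One uses the disjoint events that coordinate $i$ crosses $t$, while exactly $k-1$ others stay above $t$ and the rest stay below, under both $x$ and $y$. This needs the two-sided estimate $a=\Pr[g<t\le g']=\Theta(\sqrt\delta\,p\sqrt{\log(1/p)})$ for $(1-2\delta)$-correlated Gaussians $g,g'$. That estimate keeps $(p-a)^{k-1}$ and $(1-p-a)^{m-k}$ within constant factors of $p^{k-1}$ and $(1-p)^{m-k}$. This is where a restriction like $\delta\le c_0/(k^2\log(m/k))$ is needed.
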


The same linear dependence on \( k \)
also appears in a similar lower bound for the Boolean anticoncentration probability
of thresholds of halfspaces (see \cref{thm:boolean-threshold-band-lower}).
These lower bounds together indicate a lower bound on the anticoncentration and thus suggest that
new ideas and techniques may be needed to
extend the analysis framework from a very successful line of work
\cite{HKM13,ST17,OST22,AY22} to thresholds of halfspaces
and achieve a \(\prg\) whose seed length depends only polylogarithmically on \(k\).

\subsection{Proof Overview}

Our work uses the generator from~\cite{OST22}, which is originally designed to fool polytopes.
We begin by briefly recalling the analysis framework in~\cite{OST22}, and then introduce new ingredients to handle thresholds of halfspaces.

\subsubsection{Framework for Polytopes from \cite{OST22}}

Define the orthant indicator \(\ort_b^{\wedge}(y)\) to
output \(1\) on input \(y\in\R^m\)
if \(y_i \le b_i\) for all \(i\in[m]\),
and output \(0\) otherwise.
To prove that a generator $G$ fools every polytope,
one needs to show that \(\E_{z\sim G}[\ort_b^{\wedge}(Az)] \approx \E_{u\sim\pmcube{n}} [\ort_b^{\wedge}(Au)]\) for every matrix \(A\in\R^{m\times n}\) and every vector \(b\in\R^m\),
where $z$ is the output of \(G\) on a uniform seed
and $u$ is uniformly distributed over $\pmcube{n}$.
The analysis in~\cite{OST22} begins by reducing the problem to \emph{standardized} matrices.
Each row of such a matrix consists of
a \emph{sparse} head containing the large coordinates and a \emph{regular} tail
in which no individual coordinate is too large.
The remainder of the proof proceeds in two main steps.
\vspace{-1em}
\paragraph{Step 1: Designing and fooling a smooth approximator.}
The discontinuous orthant indicator \(\ort_b^{\wedge}\) is replaced by
a smooth Bentkus mollifier $\mollifier_b^{\wedge}$,
which approximates \(\ort_b^{\wedge}\) outside a narrow neighborhood of its boundary.
It is therefore sufficient to prove two properties:
first, the expectations of \(\mollifier_b^{\wedge}(Az)\)
and \(\mollifier_b^{\wedge}(Au)\) are close;
and second, \(Au\) is unlikely to lie near the boundary of \(\ort_b^{\wedge}\).

The first property is established via a hybrid argument.
One defines a sequence of hybrid random variables $x_0=u, x_1, \ldots, x_L=z$
by replacing the blocks of coordinates of \(u\) with those of \(z\) one at a time.
It then remains to show that
$\E[\mollifier_b^{\wedge}(Ax_{\ell})] \approx\E[\mollifier_b^{\wedge}(Ax_{\ell-1})]$
for each $\ell\in[L]$.

This is done by Taylor expanding the mollifier \(\mollifier_b^{\wedge}\)
to order \(d\) in the coordinates of the bucket being replaced.
The expansion consists of the Taylor terms of degree less than \(d\),
together with a degree-\(d\) remainder.
After fixing all coordinates outside the bucket being replaced,
each low-degree Taylor term becomes a function only of the variables in the bucket.
One key observation is that 
this function can be expressed in terms of small-width \(\CNF\) formulas
due to the sparsity of the head part of each row.
These formulas are fooled by the \(\CNF\) generator used as one component of \(G\).
Therefore, the low-degree terms have nearly the same expectations
under the two adjacent hybrids \(x_{\ell-1}\) and \(x_\ell\).
For the remainder, it suffices to show that
its expectation is small using derivative bounds for the Bentkus mollifier~\cite{Ben90},
together with the regularity of the tail
and moment bounds for the bucket variables.

\vspace{-1em}
\paragraph{Step 2: Establishing Boolean anticoncentration.}
To transfer the closeness from the smooth mollifier back to the original orthant indicator, 
one needs to show that \(Au\) is unlikely to lie
near the boundary of the orthant \(\ort_b^{\wedge}\).
For standardized matrices \(A\), \cite{OST22} proves a high-dimensional Littlewood-Offord-type 
anticoncentration theorem showing that the Boolean cube assigns at most \(O(\Lambda\sqrt{\log m})\) mass
in a width-\(\Lambda\) neighborhood of the boundary of any \(m\)-facet polytope.
It is the Boolean counterpart of the Gaussian anticoncentration bound used in \cite{HKM13},
which follows from Nazarov's bound on the Gaussian surface area of polytopes \cite{Naz04,KOS08}.

\subsubsection{Extending to Thresholds of Halfspaces}
Extending the framework above from polytopes to thresholds of halfspaces
presents two main challenges.
First, 
a \(k\)-out-of-\(m\) threshold of halfspaces can be written as \(\ort_{k,b}(Ax)\),
where the threshold indicator \(\ort_{k,b}(y)\) outputs \(1\) on input \(y\in\R^m\)
if \(y_i \leq b_i\) for at least \(k\) indices \(i\in[m]\),
and \(0\) otherwise.
In other words, \(\ort_{k,b}\) is the \(\OR\) of \(\binom{m}{k}\) orthant indicators
of the form \(\ort_{b_S}^{\wedge}\), each corresponding to a \(k\)-facet polytope.
A direct reduction to the polytope case would therefore have to
handle \(\binom{m}{k}\) such orthant indicators,
which would lead to a seed length with polynomial dependence on
\(\binom{m}{k}\).
To avoid this loss, we develop a direct analysis of the same generator tailored to thresholds of halfspaces.
Second, the analysis requires a Boolean anticoncentration bound
for the boundary of $\ort_{k,b}$,
which is more complicated than the boundary of a polytope.
Thus, such a bound does not follow directly from the corresponding result
for polytopes and therefore requires a separate argument.

We first observe that the reduction to standardized matrices continues
to hold for thresholds of halfspaces.
In fact, this reduction applies more generally to any Boolean function of halfspaces
(see \cref{cor:threshold-outer-reduction}).
After this reduction, the main task is to
show \(\E_{z\sim G}[\ort_{k,b} (Az)] \approx \E_{u\sim\pmcube{n}} [\ort_{k,b} (Au)]\)
for all standardized matrices \(A\).

\vspace{-1em}
\paragraph{A Bentkus-type mollifier for $\ort_{k,b}$}
As in the polytope case, we prove this closeness through a smooth approximation.
Notice that \(\ort_{k,b}(y)\) admits the representation
\[
	\ort_{k,b}(y)
        ~=~
        1-\prod_{S\in \binom{[m]}{k}}
        \left[
        1-\prod_{i\in S} \ind[y_i\le b_i]
        \right]\enspace.
\]
This naturally leads us to define the smooth approximator
\(\mollifier_{k,b}\) (see \cref{def:mollifier})
by replacing each one-dimensional indicator $\ind[y_i\le b_i]$
with its Gaussian-smoothed version $\widetilde{\ind}_{b_i}(y_i)$.
In particular, when \(k=m\), this construction recovers the Bentkus mollifier
\(\mollifier_b^{\wedge}\) used in \cite{OST22}.
With a suitable choice of parameters,
we prove that \(\mollifier_{k,b}\) approximates \(\ort_{k,b}\) except near
the boundary of \(\ort_{k,b}\) (see \cref{lem:approximator}).
Thus, by a sandwiching argument (see \cref{lem:sandwich}),
the desired closeness for \(\ort_{k,b}\) follows once we prove two facts:
the smooth approximator is fooled by \(G\),
and a Boolean anticoncentration bound holds for \(\ort_{k,b}\).

\vspace{-1em}
\paragraph{Fooling the mollifier \(\mollifier_{k,b}\)}
To show that
\(\E_{z\sim G}[\mollifier_{k,b} (Az)] \approx \E_{u\sim\pmcube{n}} [\mollifier_{k,b} (Au)]\),
we follow the same hybrid argument as in the polytope case and
Taylor expand \(\mollifier_{k,b}\).
There are two differences.
First, the low-degree Taylor terms are still expressible by \(\CNF\) formulas
but the width increases by a factor of \(k\) (see \cref{lem:cnf-decomposition}).
Second, the treatment of the remainder term requires derivative bounds for
the new mollifier \(\mollifier_{k,b}\).

The main obstacle lies in proving a good derivative bound for \(\mollifier_{k,b}\),
since \(\mollifier_{k,b}\) is defined through a product over all \(k\)-subsets of \([m]\).
A naive bound would introduce a factor depending on \(\binom{m}{k}\),
which would be too large for our purposes.
We avoid this overhead by exploring the derivative structure of the mollifier more carefully.
After normalizing the parameters,
$\mollifier_{k,b}(y)$ has the form \( \prod_{S\in\binom{[m]}{k}} (1-\prod_{i\in S}\gcdf(y_i))\),
where $\gcdf$ is the Gaussian cumulative distribution function.
By the product rule, derivatives of \(\mollifier_{k,b}\) are governed by sums,
over \(S\in\binom{[m]}{k}\), of derivatives of the clause functions \(\prod_{i\in S}\gcdf(y_i)\).
To handle this, we first prove a refined derivative bound for a single \(\gcdf(y_i)\)
(see \cref{lem:oned}),
then lift this estimate to the product \(\prod_{i\in S}\gcdf(y_i)\)
(see \cref{lem:clause}),
and finally sum these clausewise bounds over all \(S\)
(see \cref{lem:moment}).
A direct summation would introduce a factor \(\binom{m}{k}\),
whereas our estimate incurs only an additional polynomial dependence on \(k\) and \(\log(m/k)\).

\vspace{-1em}
\paragraph{Boolean anticoncentration for \(\ort_{k,b}\)}
A Boolean anticoncentration bound for polytopes was established in~\cite{OST22} through a delicate analysis of
how the Boolean cube interacts with the facets of high-dimensional polytopes.
In our setting, we build on this theorem
and transfer it to thresholds of halfspaces
through a \emph{random thinning argument}.

The goal is to bound the probability that \(Au\) lies
in a width-\(\Lambda\) neighborhood of the boundary of \(\ort_{k,b}\),
namely \(\ort_{k,b+\Lambda\cdot\onevec}\setminus\ort_{k,b-\Lambda\cdot\onevec}\).
Suppose this event holds for some \(u\).
Let \(S_+,S_-\subseteq [m]\) index the halfspaces satisfied by \(u\)
at thresholds \(b+\Lambda\cdot\onevec\) and
\(b-\Lambda\cdot\onevec\), respectively.
Then \(|S_+|\ge k\) and \(|S_-|\leq k-1\).
The thinning argument randomly selects a subset \(R\subseteq[m]\)
by keeping each halfspace independently with probability \(p=1/k\).
Therefore, the random set \(R\) contains at least one index from $S_+$
and no index from \(S_-\) with probability at least
$(1-p)^{|S_- |}\cdot(1-(1-p)^{|S_+\setminus S_- |}) \geq p(1-p)^{k-1} = \Omega(1/k)$.
When this happens, \(Au\) lies in the width-\(\Lambda\) boundary of
the \(\OR\) of the halfspaces indexed by \(R\).
Thus, the probability that \(Au\) lies in the width-\(\Lambda\) boundary of \(\ort_{k,b}\)
is at most \(O(k)\) times the probability that \(Au\) lies in the width-\(\Lambda\) boundary
of the \(\OR\) of a random subset of the halfspaces.
The latter is bounded by the Boolean anticoncentration theorem for polytopes \cite{OST22},
since the \(\OR\) of halfspaces is, after negation,
an intersection of the complementary halfspaces.
The random set \(R\) has size \(O(m/k)\) in expectation,
so averaging over \(R\) yields the \(O(k\cdot\Lambda\sqrt{\log(m/k)})\)
anticoncentration bound for \(\ort_{k,b}\)
(see \cref{thm:threshold-ost-anticonc}).

This random thinning argument may be of independent interest.
We further observe that it can be used to transfer other boundary-complexity bounds
from intersections of halfspaces to thresholds of halfspaces.
Applying it to the known bounds for intersections gives
our bounds on noise sensitivity and Gaussian surface area
of thresholds of halfspaces (see \cref{thm:threshold-ns-gsa}).

\subsection{Related Work}
\label{sec:related-work}
We provide an overview of related prior work.
\Cref{tab:prior-work} summarizes the \(\prg\)s most relevant to our setting.
In the table, \(m\) denotes the number of halfspaces,
\(n\) the input dimension, and \(\delta\) the error of the \(\prg\).

\begin{table}[!b]
    \centering
    \small
    \renewcommand{\arraystretch}{1.1}
    \newlength{\priorworktablewidth}
    \setlength{\priorworktablewidth}{0.96\linewidth}
    \newcommand{\priorworkbodyheight}{5.5ex}
    \newcommand{\priorworkedgeheight}{5.5ex}
    \makeatletter
    \newcommand{\priorworkcellbox}[5]{\raisebox{#4}[#1][0pt]{\vbox to #2{\vfil
                \hbox to #3{\hfil\parbox{#3}{\centering\strut #5\strut}\hfil}\vfil
            }}}
    \newcommand{\priorworkcellwithshift}[4][\priorworkbodyheight]{\priorworkcellbox{#1}{#1}{#2}{#3}{#4}}
    \newcommand{\priorworkcell}[3][\priorworkbodyheight]{\priorworkcellwithshift[#1]{#2}{-.35\dp\@arstrutbox}{#3}}
    \makeatother
    \newcommand{\priorworkref}[2][\priorworkbodyheight]{\priorworkcell[#1]{0.16\priorworktablewidth}{#2}}
    \newcommand{\priorworkclass}[2][\priorworkbodyheight]{\priorworkcell[#1]{\dimexpr0.42\priorworktablewidth-2\tabcolsep\relax}{#2}}
    \newcommand{\priorworkseed}[2][\priorworkbodyheight]{\priorworkcell[#1]{\dimexpr0.42\priorworktablewidth-2\tabcolsep\relax}{#2}}
    \makeatletter
    \newcommand{\priorworkheadref}[1]{\priorworkcellbox{\dimexpr\priorworkedgeheight+.33pt\relax}{\priorworkedgeheight}{0.16\priorworktablewidth}{-.30625\dp\@arstrutbox}{#1}}
    \newcommand{\priorworkheadclass}[1]{\priorworkcellbox{\dimexpr\priorworkedgeheight+.33pt\relax}{\priorworkedgeheight}{\dimexpr0.42\priorworktablewidth-2\tabcolsep\relax}{-.30625\dp\@arstrutbox}{#1}}
    \newcommand{\priorworkheadseed}[1]{\priorworkcellbox{\dimexpr\priorworkedgeheight+.33pt\relax}{\priorworkedgeheight}{\dimexpr0.42\priorworktablewidth-2\tabcolsep\relax}{-.30625\dp\@arstrutbox}{#1}}
    \makeatother
    \caption{Related work on \(\prg\)s for functions of halfspaces.}
    \label{tab:prior-work}
    \vspace{6pt}
    \begin{tabular}{@{}ccc@{}}
        \toprule
        \priorworkheadref{Reference}
        & \priorworkheadclass{Function class}
        & \priorworkheadseed{Seed length}
        \\ \midrule

        \priorworkref{\cite{GOWZ10}}
        & \priorworkclass{Monotone functions of \(m\) halfspaces}
        & \priorworkseed{\(O\!\left(
            (m\log(m/\delta)+\log n)\cdot\log(m/\delta)
          \right)\)}
        \\

        \priorworkref{\cite{HKM13}}
        & \priorworkclass{Intersections of \(m\) \(\tau\)-regular halfspaces}
        & \priorworkseed{
            \(O((\log n\cdot \log m) /\tau)\)\\
            {\footnotesize for \(\tau \le
            \delta^5/(\log^{8.1}m\cdot\log(1/\delta))\)}
          }
               
        \\

        \priorworkref{\cite{ST17}}
        & \priorworkclass{Intersections of \(m\) weight-\(t\) halfspaces}
        & \priorworkseed{\(\operatorname{poly}
            (\log n,\log m,t,1/\delta)\)}
        \\

        \priorworkref{\cite{OST22}}
        & \priorworkclass{Intersections of \(m\) arbitrary halfspaces}
        & \priorworkseed{\(\operatorname{poly}
            (\log m,1/\delta)\cdot\log n\)}
        \\

                \priorworkref{\cite{CDS19}}
        & \priorworkclass{Intersections of \(m\) arbitrary halfspaces}
        & \priorworkseed{\(\operatorname{poly}(\log m,1/\delta)+
        O\!\left(
            \log n
          \right)\)}
        \\

        \priorworkref{\cite{CDS19}}
        & \priorworkclass{Arbitrary functions of \(m\) halfspaces}
        & \priorworkseed{\(\operatorname{poly}(m,1/\delta)+
        O\!\left(
            \log n
          \right)\)}
          \\
        \midrule
        
        \priorworkref[\priorworkedgeheight]{\textbf{Our work}}
        & \priorworkclass[\priorworkedgeheight]{\(k\)-out-of-\(m\) threshold of halfspaces}
        & \priorworkseed[\priorworkedgeheight]{\(\poly{\kappa,\log m,1/\delta}\cdot \log n\)\\where $\kappa=\min\{k,m-k+1\}$}
        \\

        \bottomrule
    \end{tabular}
\end{table}

Gopalan, O'Donnell, Wu, and Zuckerman~\cite{GOWZ10} considered
fooling \emph{monotone functions} of $m$ halfspaces
under \emph{product distributions}, including the uniform distribution over the Boolean cube
and the Gaussian distribution.
Building on the generator of Meka and Zuckerman~\cite{MZ13},
they constructed a \(\prg\) that \(\delta\)-fools this class with seed length
\(O\big((m\log(m/\delta)+\log n)\log(m/\delta)\big)\),
which grows superlinearly with the number of halfspaces.

Remarkably, for intersections of halfspaces, this dependence on $m$ can be improved substantially.
Harsha, Klivans, and Meka~\cite{HKM13} constructed \(\prg\)s for intersections of \emph{regular} halfspaces,
where a halfspace is regular if no single coordinate has large influence.
Their construction works under \emph{proper} and \emph{hypercontractive} distributions,
including both the uniform and Gaussian distributions.
They showed that a slightly modified Meka-Zuckerman generator \cite{MZ13}
fools intersections of \(m\) \(\tau\)-regular halfspaces with seed length
\(O((\log n\cdot \log m)/\tau)\),
provided \(\tau\) is below a certain threshold.
At the heart of their argument is an \emph{invariance principle} for intersections of regular halfspaces
obtained by extending the classical Lindeberg method for proving central limit theorems~\cite{Lin22}.
Servedio and Tan~\cite{ST17} addressed intersections of \emph{weight}-$t$ halfspaces over the Boolean cube,
whose coefficients are integers of absolute value at most \(t\).
They constructed an explicit \(\prg\) that \(\delta\)-fools
any intersection of \(m\) such halfspaces with seed length
\(\poly{\log n,\log m,t,1/\delta}\),
by extending the approach of \cite{HKM13}
and combining it with results on fooling \(\CNF\) formulas
\cite{Baz09,Raz09}.

Both of these results require structural restrictions on the halfspaces
(e.g., regularity or bounded weight).
A major breakthrough came from O'Donnell, Servedio, and Tan \cite{OST22},
who removed these restrictions
and constructed a \(\prg\) for intersections of \emph{arbitrary} halfspaces over the Boolean cube
with seed length \(\operatorname{poly}(\log m,1/\delta)\cdot\log n\).
Their analysis combines a new invariance principle for intersections of general halfspaces
with a Littlewood-Offord-type anticoncentration inequality
for polytopes over the Boolean cube.
A parallel line in the Gaussian setting is due to
Chattopadhyay, De, and Servedio \cite{CDS19}.
Inspired by the \emph{Johnson-Lindenstrauss transform} \cite{JL86,KMN11},
they constructed $\prg$s that \(\delta\)-fool intersections of \(m\) halfspaces with seed length
\(\poly{\log m,1/\delta}+O(\log n)\),
and arbitrary functions of \(m\) halfspaces with seed length
\(\poly{m,1/\delta}+O(\log n)\).

In summary, prior work on \(\prg\)s for compositions of halfspaces exhibits
a clear contrast in the dependence on the number \(m\) of halfspaces.
For intersections of halfspaces,
a sequence of works \cite{HKM13, ST17, CDS19, OST22} achieved seed length
with only polylogarithmic dependence on \(m\).
In contrast, for broader compositions,
such as monotone functions of halfspaces \cite{GOWZ10} or
arbitrary functions of halfspaces \cite{CDS19,YZ25},
the known seed lengths have polynomial dependence on \(m\).

\subsection{Discussion and Future Directions}
\label{sec:discussion}
This work identifies an intermediate regime beyond polytopes
in which such a polylogarithmic dependence on \(m\) can still be obtained:
\(k\)-out-of-\(m\) thresholds of halfspaces for \(k\) up to \(\polylog(m)\).
The generator is inherited from \cite{OST22},
but the analysis is not a black-box consequence of the polytope case:
a direct decomposition of a \(k\)-out-of-\(m\) threshold
into \(\binom{m}{k}\) orthants would lose the desired dependence on \(m\).
Our contribution is to replace this decomposition 
by a direct analysis of the threshold set,
based on a threshold-specific mollifier and a
corresponding Boolean anticoncentration bound for its boundary.
Our results suggest several directions for future work.
\begin{itemize}[topsep=2pt,parsep=0pt,partopsep=0pt,leftmargin=1.5em, itemsep=2pt]
    \item One immediate question is whether the polynomial dependence on \(k\)
    in our seed length can be reduced.
    It would be interesting to determine
    whether this dependence is necessary,
    or whether thresholds of halfspaces admit generators
    whose seed length depends only polylogarithmically on \(m\) and \(1/\delta\)
    for a larger range of \(k\), such as majority of halfspaces.
    Solving this question may require sharper derivative bounds
    for the threshold mollifier
    or a different smooth approximation
    that avoids the current dependence on the threshold parameter.
    \item It is also natural to ask whether polylogarithmic dependence on \(m\)
    can be obtained for other compositions of halfspaces,
    such as $\xor$s or more general symmetric functions of halfspaces.
    One should not expect such a bound for arbitrary outer functions of halfspaces.
    The threshold case suggests a more refined possibility:
    such a bound may still be achievable for outer functions
    for which the set defined by the outer function has boundary complexity
    controlled by a parameter much smaller than its full description size.
    Identifying the right combinatorial or geometric condition
    on the outer function is an interesting problem.
    \item Finally, the random thinning argument may be of independent interest.
    In this work, we use it to transfer Boolean anticoncentration,
    noise sensitivity, and Gaussian surface area
    from intersections of halfspaces to thresholds of halfspaces.
    It would be interesting to see whether similar arguments
    apply to other quantities that are useful
    in learning theory and derandomization,
    or to analogous questions under more general product distributions.
\end{itemize}

\paragraph{Organization.}
The remainder of the paper is organized as follows.
\Cref{sec:preliminaries} introduces the notation and preliminaries used throughout.
In \cref{sec:smooth-approx}, we introduce the Bentkus-type mollifier for thresholds of halfspaces
and prove the sandwiching lemma that reduces fooling the threshold indicator
to fooling the smooth mollifiers.
\Cref{sec:anti} derives the Boolean anticoncentration bound,
as well as the noise sensitivity and Gaussian surface area,
which imply the learning consequences.
In \cref{sec:prg}, we instantiate the \(\prg\) of O'Donnell, Servedio, and Tan
with our parameters and prove the main pseudorandomness theorem for thresholds of halfspaces.
We prove the derivative bounds for the Bentkus-type mollifier in \cref{sec:derivative}.
Some proofs are deferred to the appendix.

\paragraph{Acknowledgment.}
The authors thank Haonan Zhang for helpful discussions on the random thinning argument
and its applications to the Boolean anticoncentration and Gaussian surface area bounds.
MQ was supported by the National Research Foundation, Singapore,
through the National Quantum Office, hosted in A*STAR,
under its Centre for Quantum Technologies Funding Initiative (S24Q2d0009).
PY, MZ and HZ were supported by the National Natural Science Foundation of China (Grant Nos. 62332009 and 12347104), the Quantum Science and Technology-National Science and Technology Major Project (Grant No. 2021ZD0302901), the NSFC/RGC Joint Research Scheme (Grant No. 12461160276), the Natural Science Foundation of Jiangsu Province (No. BK20243060), the Fundamental and Interdisciplinary Disciplines Breakthrough Plan of the Ministry of Education of China (No. JYB2025XDXM118), the ``111 Center'' (No. B26023), and the Fundamental Research Funds for the Central Universities (Grant No. 2026300376).
OpenAI's GPT-5.5 was used for language editing and
to help check the clarity and completeness of parts of the proofs.

\section{Preliminaries}
\label{sec:preliminaries}
\subsection{Basic Notation}

We use the following notation throughout the paper.
Let \(\mathbb{N} = \{0, 1, 2, \ldots\}\) denote the set of natural numbers.
For positive \(n\in\mathbb{N}\), let \([n]=\{1,\ldots,n\}\).
We write \(\ind[\cdot]\) for the indicator function.
For integers $k$ and $m$ with $1\le k\le m$,
let \(\binom{[m]}{k}\) be the family of
all subsets of \([m]\) with size \(k\).
For disjoint sets $S_1$ and $S_2$, we write
\(
    S_1 \sqcup S_2
\)
for their disjoint union. More generally, the notation
\(
    S_1 \sqcup \cdots \sqcup S_k = [n]
\)
means that the sets $S_1,\dots,S_k$ are pairwise disjoint and \(\cup_{i=1}^k S_i = [n]\).
For a distribution \(\mathcal{D}\), the notation \(x\sim\mathcal{D}\)
means that \(x\) is drawn from \(\mathcal{D}\).
For a finite set \(S\), we write \(|S|\) for its size,
and the notation \(x\sim S\) means that \(x\) is drawn uniformly from \(S\).

For a vector \(v \in \R^n\) and an integer \(p \geq 1\),
$\norm{v}_p = \left( \sum_{i=1}^n |v_i|^p \right)^{1/p}$
denotes the \(p\)-norm of $v$,
and \(\|v\|_\infty=\max_i |v_i|\).
For \(B\subseteq[n]\), let \(v_B \in \R^{|B|}\) denote
the restriction of \(v\in\mathbb{R}^n\) to the coordinates in \(B\).
For a matrix \(A\in\R^{m\times n}\), $i\in[m]$ and $B\subseteq[n]$,
let \(A_i \in\R^{n}\) denote the \(i\)-th row of \(A\),
and \(A^B\in\mathbb{R}^{m\times |B|}\) denote the restriction of \(A\)
to the columns indexed by \(B\).
We write \(\onevec\) for the all-ones vector
and \(\zerovec\) for the all-zeros vector.

We use \(\log\) to denote the logarithm with base \(\e\),
use \(\tildeo(\cdot)\) to hide polylogarithmic factors in the relevant parameters,
and write \(O_d(\cdot)\) to hide constants depending only on \(d\).
We write \(\mathcal{N}(0,1)\) for the standard Gaussian distribution.
Its probability density function (PDF) \(\gpdf\)
and cumulative distribution function (CDF) \(\gcdf\)
are given by $\gpdf(x)=\frac{1}{\sqrt{2\pi}}\cdot \e^{-x^2/2}$ and $\gcdf(x)=\int_{-\infty}^x \gpdf(t)\,\mathrm{d}t$, respectively.
The gamma function is defined by
$\Gamma(x)=\int_0^\infty t^{x-1}\cdot\e^{-t}\,\mathrm{d}t$
for $x>0$.

\subsection{Regular and Standardized Vectors and Matrices}
We next define what it means for a vector or matrix to be regular or
standardized, following the terminology of \cite{OST22}.
These notions will be used later to decompose each halfspace
into a sparse head part and a regular tail part.

\begin{definition}[Regularity]
    A vector $w \in \R^n$ is $\tau$-\emph{regular}
    if $\abs{w_i} \leq \tau \norm{w}_2$ for all $i \in [n]$,
    and a matrix $W \in \R^{m\times n}$ is $\tau$-\emph{regular}
    if each of its rows is $\tau$-regular.
\end{definition}

\begin{definition}[Standardization]\label{def:ost-standardized}
A vector $w\in\R^n$ is \emph{$(K,\tau)$-standardized}
if there is a partition $[n]=\newsf{Head}\sqcup \newsf{Tail}$
with $|\newsf{Head}|\le K$ such that
the subvector $w_{\newsf{Tail}}$ is $\tau$-regular and $\sum_{i \in \newsf{Tail}} w_i^2 = 1$.
A matrix is \emph{$(K,\tau)$-standardized} if each of its rows is $(K,\tau)$-standardized.
\end{definition}

\subsection{Boolean Functions}

A function $f:\R^n\to \R$ is called \emph{Boolean} if its range is
contained in $\{0,1\}$.
For a class $\mathcal C$ of Boolean functions, we say that a function $f:\R^n\to\R$ is a
\emph{weight-$W$ combination of functions in $\mathcal C$} if
$f=\sum_{\ell} c_{\ell}\cdot f_{\ell}$,
where each $f_{\ell}\in\mathcal C$ and $\sum_{\ell}|c_{\ell}|\le W$.
In particular, when $\mathcal C$ is the class of all Boolean functions,
we call this a weight-$W$ combination of Boolean functions.
A function is called a \emph{$w$-junta} if it depends on at most $w$ coordinates.

\begin{fact}[\citetext{\citealp[Fact 8.8]{OST22}}]\label[fact]{fact:weight-w-combination}
    The following properties are immediate:
    \begin{itemize}
        \item Every function $f:\pmcube n\to[0,1]$ is a weight-$1$ combination of Boolean functions.
        \item Every function $f:\pmcube n\to[-W,W]$ is a weight-$2W$ combination of Boolean functions.
    \end{itemize}
\end{fact}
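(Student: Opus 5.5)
The statement is \cref{fact:weight-w-combination}: a function $f:\pmcube n\to[0,1]$ is a weight-$1$ combination of Boolean functions, and a function $f:\pmcube n\to[-W,W]$ is a weight-$2W$ combination of Boolean functions. I will prove the first item by the layer-cake (threshold) decomposition, and then reduce the second item to the first by an affine rescaling.

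\textbf{First item.} Take $f:\pmcube n\to[0,1]$. Since the domain is finite, $f$ takes finitely many values. List them as $0\le v_1<v_2<\dots<v_s\le 1$, and set $v_0=0$. For each $j\in[s]$ define the Boolean function $f_j(x)=\ind[f(x)\ge v_j]$. For any $x$ with $f(x)=v_t$, the sum telescopes:
\[
\sum_{j=1}^s (v_j-v_{j-1})\, f_j(x) ~=~ \sum_{j=1}^{t}(v_j-v_{j-1}) ~=~ v_t-v_0 ~=~ f(x)\enspace.
\]
So $f$ is a combination of the Boolean functions $f_j$ with nonnegative coefficients $v_j-v_{j-1}$. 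Their total weight is $\sum_j |v_j-v_{j-1}|=v_s-v_0\le 1$. (Equivalently, one can write $f(x)=\int_0^1 \ind[f(x)\ge t]\,\d t$ and observe that the integrand is piecewise constant in $t$.)

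\textbf{Second item.} Take $f:\pmcube n\to[-W,W]$ and define $g=(f+W)/(2W)$, which maps into $[0,1]$. By the first item, $g=\sum_j c_j g_j$ with each $g_j$ Boolean and $\sum_j|c_j|\le 1$. Hence
\[
f ~=~ 2Wg - W ~=~ \sum_j 2Wc_j\, g_j \;-\; W\cdot\onevec\enspace,
\]
where the constant function $\onevec$ is Boolean. The total weight is at most $2W+W=3W$, which exceeds the claimed bound $2W$.

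To get weight $2W$, I will instead split $f$ into its positive and negative parts, $f=f^+-f^-$. Both $f^+/W$ and $f^-/W$ map into $[0,1]$ and have disjoint supports. I apply the layer-cake decomposition to each: $f^+$ uses the indicators $\ind[f\ge t]$ for $t\in(0,W]$, and $f^-$ uses the indicators $\ind[f\le -t]$ for $t\in(0,W]$. This writes $f^+$ and $f^-$ each as a combination of Boolean functions with weight at most $W$, so $f$ is a combination of weight at most $2W$. The rescaling approach is therefore not needed.

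The only subtle point is this weight count in the second item. Everything else is a routine telescoping identity, so I do not expect any real obstacle.
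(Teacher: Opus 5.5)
Your argument is correct. The telescoping (layer-cake) decomposition $f=\sum_j (v_j-v_{j-1})\,\ind[f\ge v_j]$ gives the first item, and applying it to $f^+/W$ and $f^-/W$ and subtracting gives weight $2W$ for the second; the $3W$ rescaling detour can simply be dropped. The paper gives no proof and just cites this as immediate from OST22, and your argument is the standard one behind that remark.
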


We will also use two standard boundary measures for Boolean functions.
\begin{definition}
    For a Boolean function $f:\R^n\to\{0,1\}$ and $\delta\in[0,1]$,
    the \emph{noise sensitivity} of $f$ at noise rate $\delta$ is defined by
    \[
    	\ns_\delta(f) ~\coloneqq~ \Pr_{\substack{x\sim\pmcube n,\ y\sim N_\delta(x)}} [f(x)\neq f(y)] \enspace,
    \]
    where $N_\delta(x)$ is the distribution obtained by flipping each bit of $x$ independently with probability $\delta$.
\end{definition}

\begin{definition}
    Let \(f:\R^n\to\{0,1\}\) be a Boolean function, and write
    $K(f) \coloneqq \{x\in\R^n : f(x)=1\}$.
    The \emph{Gaussian surface area} of \(f\) is defined as
    \[
        \gsa(f)
        ~\coloneqq~
        \lim_{\delta\to0^+}
        \frac{
            \operatorname{vol}_{\mathcal N^n}\!
            \bigl(K(f)^\delta\setminus K(f)\bigr)
        }{\delta} \enspace,
    \]
    where, for \(A\subseteq\R^n\),
    $A^\delta\coloneqq\{x\in\R^n : \operatorname{dist}(x,A)\le \delta\}$
    is the \(\delta\)-neighborhood of \(A\), and
    \(\operatorname{vol}_{\mathcal N^n}\!(\cdot)\) denotes standard Gaussian
    probability mass on \(\R^n\).
\end{definition}

In this work, we concern ourselves with a specific class of Boolean functions, namely,
\emph{$k$-out-of-$m$ thresholds of halfspaces}.
We first give the definition of halfspaces.

\begin{definition}
A function $f:\R^n\to\{0,1\}$ is a \emph{halfspace} or \emph{linear threshold function (\ltf)} if there exist
a vector $w\in\R^n$ and $\theta\in\R$ such that
\[
f(x) = \ind[\langle w,x\rangle\leq \theta] \quad\text{for all }\ x \in \R^n \enspace.
\]
We say that \(f\) is a \(\tau\)-\emph{regular}
(\emph{\((K,\tau)\)-standardized}, resp.) halfspace if it admits such
a representation with \(w\) being \(\tau\)-regular
(\((K,\tau)\)-standardized, resp.).
\end{definition}

For \(m\in\N\), we define
\(\AND_m:\{0,1\}^m\to\{0,1\}\) by
$
    \AND_m(y)
    \coloneqq
    \prod_{i=1}^m y_i
$.
Equivalently, \(\AND_m(y)=1\) if and only if \(y_i=1\) for every
\(i\in[m]\).

\begin{definition}
A function \(F:\R^n\to\{0,1\}\) is an \emph{intersection of \(m\) halfspaces} (or \emph{$m$-facet polytope})
if there exist halfspaces \(f_1,\ldots,f_m:\R^n\to\{0,1\}\) such that
\[
    F(x)
    =
    \AND_m(f_1(x),\ldots,f_m(x)) \quad\text{for all }\ x \in \R^n \enspace.
\]
\end{definition}

We will use the following results for intersections of halfspaces.

\begin{theorem}[\cite{Naz04,KOS08,Kan-ns}]
\label{thm:intersection-halfspaces-ns-gsa}
Let \(f\) be an intersection of \(m\) halfspaces.
Then for a universal constant \(C>0\),
$\ns_\delta(f) \le C\sqrt{\delta\log m}$ for every \(\delta\in(0,1)\) 
and $\gsa(f) \le C\sqrt{\log m}$.
\end{theorem}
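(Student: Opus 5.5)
This statement collects known results, so the plan is to cite rather than re-derive. I would invoke Nazarov's theorem \cite{Naz04}, as presented in \cite{KOS08}, for the Gaussian surface area bound, and Kane's theorem \cite{Kan-ns} for the Boolean noise sensitivity bound. Below I sketch how each argument goes, so that the dependence on \(\sqrt{\log m}\) is transparent for the later random thinning step; the sketches are not meant to be complete proofs.

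\textbf{Gaussian surface area.} Write \(K=\bigcap_{i\le m}\{x:\langle w_i,x\rangle\le\theta_i\}\) with \(\|w_i\|_2=1\). Then \(\gsa(f)\) is the Gaussian surface integral over \(\partial K\), which is at most the sum over \(i\) of the Gaussian mass of the facet \(F_i=\partial K\cap\{\langle w_i,x\rangle=\theta_i\}\). Set \(r=\sqrt{2\log m}\) and split the facets.
\begin{itemize}
\item Facets with \(|\theta_i|>r\): each contributes at most \(\gpdf(\theta_i)\le\gpdf(r)=O(1/m)\), so together they contribute \(O(1)\).
\item Facets with \(|\theta_i|\le r\): I would follow Nazarov's divergence-theorem argument. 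On \(F_i\) the outward normal is \(w_i\), and \(\langle x,w_i\rangle=\theta_i\ge -r\).
\end{itemize}
For the second group, apply the Gaussian divergence identity to the vector field \(x\mapsto x\) (or a truncation of it) on the region \(K\cap\{\|x\|\lesssim\sqrt n\}\). This trades \(\sum_i\int_{F_i}(\theta_i+r)\,\gpdf\) against a volume term of order \(r\). Combining this with a slab-by-slab estimate of the mass of \(F_i\) at distance \(\theta_i\) gives total mass \(O(r)=O(\sqrt{\log m})\). I expect this geometric step to be the delicate part of the GSA bound. It is also exactly the content of \cite{Naz04}, so in the paper it is imported rather than reproved.

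\textbf{Noise sensitivity.} Kane \cite{Kan-ns} proves that the average sensitivity of an intersection of \(m\) halfspaces on \(\pmcube n\) is \(O(\sqrt{n\log m})\). To pass to noise rate \(\delta\), I would use the standard random-bucketing reduction. Partition \([n]\) randomly into \(t\approx1/\delta\) buckets, and assign each bucket a uniformly random sign pattern, so that \(f\) becomes a function of \(t\) fresh bits. Since a halfspace composed with such a substitution is again a halfspace, the result is still an intersection of \(m\) halfspaces on \(\pmcube t\). Flipping each bucket bit with probability \(\approx\delta\) then simulates \(N_\delta\) up to constant factors. This yields
\[
\ns_\delta(f)\;\lesssim\;\delta\cdot O\bigl(\sqrt{t\log m}\bigr)\;=\;O\bigl(\sqrt{\delta\log m}\bigr).
\]
The main obstacle here is Kane's average sensitivity bound itself, whose proof is a Boolean analogue of Nazarov's argument using a careful Littlewood--Offord-type analysis. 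I would treat it as a black box.
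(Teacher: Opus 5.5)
Your plan matches the paper: there the theorem is stated purely by citation, to Nazarov's bound (as presented in \cite{KOS08}) for Gaussian surface area and to Kane's bound \cite{Kan-ns} for noise sensitivity, with no proof given. The illustrative sketches are not needed, but two of their steps are wrong as written. The divergence identity for the field $x\mapsto x$ only gives $\sum_i\theta_i\int_{F_i}\phi_n\,d\sigma=\int_K(n-\|x\|^2)\phi_n\,dx$, which does not control $\sum_i\int_{F_i}\phi_n\,d\sigma$; the standard argument instead uses that the outward normal prisms $\{x+sw_i : x\in F_i,\ s>0\}$ are disjoint by convexity and have Gaussian mass $\frac{\Phi(-\theta_i)}{\phi(\theta_i)}\int_{F_i}\phi_n\,d\sigma$, then applies Mills' ratio to the facets with $\theta_i\le\sqrt{2\log m}$. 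Likewise, the exact bucketing reduction flips a single uniformly random bucket, giving $\ns_{1/t}(f)\le\max_g\mathrm{AS}(g)/t$, because flipping every bucket independently with probability $\delta$ does not reproduce $N_\delta$.
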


For $m\in\N$ and $k\in\N$ with $1\le k\le m$,
we define the \emph{$k$-out-of-$m$ threshold function}
$\thr_{m,k}:\{0,1\}^m \to \{0,1\}$ by
$\thr_{m,k}(y) = \ind\!\left[\sum_{i=1}^m y_i\ge k\right]$.
That is, the threshold function $\thr_{m,k}$ outputs $1$
if and only if at least $k$ of its $m$ input bits are $1$.

\begin{definition}\label{def:k-out-of-m-threshold}
    A function $F:\R^n \to \{0,1\}$ is a \emph{$k$-out-of-$m$ threshold of halfspaces}
    if there exist halfspaces $f_1,\dots,f_m:\R^n\to\{0,1\}$ such that
    \[
        F(x) = \thr_{m,k}(f_1(x),\dots,f_m(x)) \quad\text{for all }\ x \in \R^n \enspace.
    \]
    We call \(F\) a \(k\)-out-of-\(m\) threshold of \(\tau\)-\emph{regular}
    (\emph{\((K,\tau)\)-standardized}, resp.) halfspaces if each of the
    halfspaces \(f_1,\dots,f_m\) is \(\tau\)-regular
    (\((K,\tau)\)-standardized, resp.).
\end{definition}

Throughout the paper, \(m\) denotes the number of halfspaces and
\(k\) denotes the threshold parameter.
Unless otherwise specified,
we assume \(1\le k\le m/2\) without loss of generality,
since for larger $k>m/2$, one may pass to the complementary threshold with parameter $m-k+1$.

\subsection{Pseudorandomness}

We begin with the definition of pseudorandom generators.

\begin{definition}
    Let $\CC$ be a class of Boolean functions.
    We say that a function $G: \pmcube r\to\pmcube n$
    is a \emph{pseudorandom generator (\prg)} for $\CC$ with error $\varepsilon$
    over Boolean space $\pmcube n$
    if for any $f\in\CC$,
    \[
    	\abs{
            \E_{u\sim\pmcube{n}} [f(u)] - \E_{s\sim\pmcube{r}} [f(G(s))]
        }
        ~\leq~ \varepsilon \enspace.
    \]
    We call $r$ the \emph{seed length} of the \prg.
    We also say that $G$ \emph{$\varepsilon$-fools} $\CC$.
\end{definition}

We need the following standard pseudorandom objects in our construction.

\begin{definition}
    For a finite set \(\Omega\) and an integer \(r\ge 1\),
    a random string \(Y=(Y_1,\ldots,Y_n)\in\Omega^n\) is \emph{\(r\)-wise uniform} if,
    for every subset \(I\subseteq[n]\) with \(|I|\le r\),
    the marginal \(Y_I\) is uniform over \(\Omega^{|I|}\).
\end{definition}

\begin{definition}
    For an integer \(r\ge 1\), a family \(\mathcal H\) of functions from
    \([n]\) to \([L]\) is an \emph{\(r\)-wise uniform hash family} if,
    for a uniformly random \(h\in\mathcal H\),
    the sequence \((h(1),\ldots,h(n))\) is \(r\)-wise uniform.
\end{definition}

We use the standard explicit constructions of these objects \cite{Vad12}.
An \(r\)-wise uniform string in \(\Omega^n\) can be sampled
using \(O(r\log n+r\log|\Omega|)\) random bits,
and in particular using \(O(r\log n)\) random bits when \(\Omega=\{-1,1\}\).
An \(r\)-wise uniform hash family from \([n]\) to \([L]\) can be sampled
using \(O(r\log(nL))\) random bits.
We also use the following Gopalan--Meka--Reingold generator
for small-width \CNF{} formulas from \cite{gopalan2013dnf}.

\begin{theorem}\label{thm:ost-cnf-prg}
There is an explicit pseudorandom generator $\gmr=\gmr(w,\delta_{\CNF})$
that \(\delta_{\CNF}\)-fools the class of
all width-\(w\) \CNF{} formulas over $\pmcube n$
with seed length
\[
    O\Bigl(
        w^2\cdot\log^2\bigl(w\cdot\log(1/\delta_{\CNF})\bigr)
        +w\cdot\log w\cdot\log(1/\delta_{\CNF})
        +\log\log(n)
    \Bigr)\enspace.
\]
\end{theorem}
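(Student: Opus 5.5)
The statement is quoted from \cite{gopalan2013dnf}, and in this paper we would use it purely as a black box. If we had to reprove it, we would follow the Gopalan--Meka--Reingold strategy of \emph{iterated pseudorandom restrictions}. By negation it is equivalent to fool width-\(w\) \DNF{} formulas, so we work with those.

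\textbf{Step 1: reduce the number of terms.} We would first show that every width-\(w\) \DNF{} \(f\) over \(\pmcube n\) is sandwiched, up to error \(\delta_{\CNF}\), between two width-\(w\) \DNF{}s with only \((w\log(1/\delta_{\CNF}))^{O(w)}\) terms. The argument uses the sunflower lemma. A large collection of terms contains a large sunflower, whose petals are satisfied with high probability once the core is, so the sunflower can be replaced by its core in the upper approximator and dropped in the lower one. Iterating this bounds the size. After this step it suffices to fool \DNF{}s of width \(w\) and size \(M=(w\log(1/\delta_{\CNF}))^{O(w)}\).

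\textbf{Step 2: one restriction round.} The generator works in rounds. In each round, a \(t\)-wise uniform hash selects a set of roughly half of the remaining live coordinates. These are fixed using a small-bias (or \(t\)-wise uniform) string, and the rest stay alive. The core lemma is that after such a pseudorandom half-restriction, a width-\(w\) \DNF{} is, in expectation and up to small error, close to a \DNF{} of width about \(w/2\) plus a few extra terms. This is a switching-lemma-type statement. It must hold with only limited independence, and it suffices because the expected fraction of terms that survive wide decays geometrically. The key calculation bounds the probability, under \(t\)-wise independence, that many terms of the bounded-size \DNF{} stay wide. We would obtain this by moment (inclusion--exclusion) bounds, which are where the limited independence is used.

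\textbf{Step 3: iterate and account for the seed.} After \(O(\log w)\) rounds the width drops to a constant, and a final small-bias space with seed \(O(\log(M/\delta_{\CNF}))\) fools the remaining function. Each round needs independence of order \(w\log(w\log(1/\delta_{\CNF}))\) at the level of terms. We do not want to pay \(\log n\) for this, so the coordinates are first hashed into \(\poly{w,\log(1/\delta_{\CNF})}\) buckets, which costs the only \(\log\log n\) term in the seed.

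\textbf{Main obstacle.} Summing the per-round cost over \(O(\log w)\) rounds should give the \(w^2\log^2(w\log(1/\delta_{\CNF}))\) term, and the final round plus error bookkeeping gives \(w\log w\log(1/\delta_{\CNF})\). The hardest step is the derandomized width-reduction lemma in Step 2. Achieving the stated near-quadratic dependence on \(w\), rather than \(w^{O(1)}\) with a large exponent, requires the careful ``error slowly accumulates'' analysis of \cite{gopalan2013dnf}, and we would simply invoke that analysis.
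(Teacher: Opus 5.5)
Citing \cite{gopalan2013dnf} as a black box is exactly what the paper does: the theorem is imported without proof, so your first sentence already settles the statement. The reconstruction you add, however, is not how this bound arises. It is best read as $\log\log n+O(w\log w\cdot\log(M/\delta_{\CNF}))$, evaluated at the sparsification size $M=(w\log(1/\delta_{\CNF}))^{O(w)}$.

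This comes from two ingredients. The first is your Step~1, the sunflower-based sandwiching sparsification, which is the main result of \cite{gopalan2013dnf}. The second, as I recall the source, is a theorem of De, Etesami, Trevisan, and Tulsiani: $\epsilon$-biased distributions $\delta$-fool width-$w$ DNFs with $M$ terms once $\epsilon=w^{-O(w\log(M/\delta))}$. They prove this via sandwiching polynomials of small Fourier $L_1$ norm. Substituting $\log M=O(w\log(w\log(1/\delta_{\CNF})))$ gives at most $O(w^2\log^2(w\log(1/\delta_{\CNF}))+w\log w\,\log(1/\delta_{\CNF}))$. The $\log\log n$ term appears because the sandwiching DNFs depend on at most $wM$ variables. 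It therefore suffices to be $\epsilon$-biased only against parities of size at most $wM$. A Naor--Naor or Alon--Goldreich--H{\aa}stad--Peralta construction achieves this with $O(\log\log n+\log(wM)+\log(1/\epsilon))$ seed bits. There is no iterated-restriction or width-reduction step in this derivation.

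Read as a proof, your Steps~2--3 also have gaps of their own.

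\emph{Hashing cost.} Exactly $r$-wise uniform hashing on $[n]$ with $r\ge 2$ needs $\Omega(r\log n)$ seed bits, consistent with the $O(r\log(nL))$ cost recorded in the preliminaries. So your bucketing step cannot be the source of a mere $\log\log n$ term; that requires almost-independent or bounded-weight small-bias objects.

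\emph{Width halving.} The width-halving lemma is not routine. Under a $1/2$-restriction, a width-$w$ term survives unfalsified with probability $(3/4)^w$ and then typically keeps about $2w/3$ live literals. The event that a given term survives with more than $w/2$ live literals and is then satisfied has probability about $2^{-w-1}$. A sparsified DNF can still have far more than $2^{w}$ terms, so a termwise union bound cannot discard these survivors.

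\emph{Attribution.} The iterated mild-restriction analysis you plan to invoke comes from the separate Gopalan--Meka--Reingold--Trevisan--Vadhan work on read-once CNFs and combinatorial rectangles, not from \cite{gopalan2013dnf}. So the step you identify as hardest is deferred to a source that does not contain it.
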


\subsection{Derivatives and Multidimensional Taylor Expansion}
For a $\mathcal{C}^d$ function $\psi:\R^m\to\R$, $x\in\R^m$ and
$i_1,\dots,i_d \in [m]$,
we denote the $d$-th order partial derivative of $\psi$ at $x$
in coordinates $i_1,\dots,i_d$ by
\[
    \partial_{i_1,\dots,i_d}\,\psi(x)
    ~\coloneqq~
    \frac{\partial^d \psi(x)}{\partial x_{i_1}\cdots\partial x_{i_d}} \enspace.
\]
We define the $L_1$-norm of the $d$-th derivative of $\psi$ by
\[
	\norm{ \psi^{(d)}\! (x)}_1 ~\coloneqq~ \sum_{i_1,\dots,i_d = 1}^m \abs{ \partial_{i_1,\dots,i_d}\,\psi(x) } \enspace, \qquad
    \norm{ \psi^{(d)}}_1 ~\coloneqq~ \sup_{x \in \R^m} \norm{ \psi^{(d)}\! (x)}_1  \enspace.
\]
We have the following multidimensional Taylor expansion with an explicit error bound.

\begin{fact}\label[fact]{fact:multidim-taylor}
Let \(d\ge 1\) be an integer and \(\psi:\R^m\to\R\) be a \(\mathcal{C}^d\) function.
Then for all \(x,y\in\R^m\),
\[
    \psi(x+y)
    ~=~
    \psi(x)
    ~+~
    \sum_{c=1}^{d-1}\frac1{c!}
    \sum_{i_1,\dots,i_c=1}^m
    \partial_{i_1,\dots,i_c}\psi(x)\cdot
    y_{i_1}\cdots y_{i_c}
    ~+~
    \newsf{err}_d(x,y)\enspace,
\]
where
\[
    \abs{\newsf{err}_d(x,y)}
    ~\le~
    \frac1{d!}\cdot\norm{\psi^{(d)}}_1\cdot\norm{y}_\infty^d \enspace.
\]
\end{fact}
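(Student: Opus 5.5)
The plan is to reduce the multivariate statement to the one-dimensional Taylor theorem by restricting \(\psi\) to the segment from \(x\) to \(x+y\). Fix \(x,y\in\R^m\) and define \(g:\R\to\R\) by \(g(t)\coloneqq\psi(x+ty)\). Since \(\psi\) is \(\mathcal{C}^d\) and \(t\mapsto x+ty\) is affine, \(g\) is \(\mathcal{C}^d\) on \(\R\).

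First, compute the derivatives of \(g\) by the chain rule, by induction on \(c\). The claim is that for every \(0\le c\le d\) and every \(t\in\R\),
\[
    g^{(c)}(t) ~=~ \sum_{i_1,\dots,i_c=1}^m \partial_{i_1,\dots,i_c}\psi(x+ty)\cdot y_{i_1}\cdots y_{i_c}\enspace.
\]
For the inductive step, differentiate each summand once more in \(t\). This gives \(\sum_{i_{c+1}}\partial_{i_{c+1}}\partial_{i_1,\dots,i_c}\psi(x+ty)\,y_{i_{c+1}}\). Continuity of the partial derivatives up to order \(d\) justifies the chain rule at each stage, and it also allows the derivatives to be written in the ordered-index form of the paper's notation.

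Second, apply the one-variable Taylor theorem with Lagrange remainder to \(g\) on \([0,1]\). It gives some \(\xi\in(0,1)\) with
\[
    g(1)=\sum_{c=0}^{d-1}\frac{g^{(c)}(0)}{c!}+\frac{g^{(d)}(\xi)}{d!}\enspace.
\]
Alternatively, use the integral remainder \(\frac{1}{(d-1)!}\int_0^1(1-t)^{d-1}g^{(d)}(t)\,\d t\), which needs only continuity of \(g^{(d)}\). Substituting the formula from the first step for \(c<d\) yields exactly the displayed expansion, with \(\newsf{err}_d(x,y)\coloneqq g^{(d)}(\xi)/d!\).

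Third, bound the remainder. By the triangle inequality and \(|y_{i_1}\cdots y_{i_d}|\le\norm{y}_\infty^d\),
\[
    |g^{(d)}(\xi)| ~\le~ \sum_{i_1,\dots,i_d=1}^m\abs{\partial_{i_1,\dots,i_d}\psi(x+\xi y)}\cdot\norm{y}_\infty^d ~=~ \norm{\psi^{(d)}(x+\xi y)}_1\cdot\norm{y}_\infty^d ~\le~ \norm{\psi^{(d)}}_1\cdot\norm{y}_\infty^d\enspace.
\]
If \(\norm{\psi^{(d)}}_1=\infty\), the bound holds trivially. With the integral form one instead uses \(\int_0^1(1-t)^{d-1}\,\d t=1/d\), which gives the same constant \(1/d!\).

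There is no real obstacle here. The only point needing care is the inductive chain-rule computation of \(g^{(c)}\) in the first step, where the \(\mathcal{C}^d\) hypothesis guarantees that all intermediate derivatives exist and are continuous.
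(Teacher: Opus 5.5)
Your proof is correct. Restricting \(\psi\) to the segment \(t\mapsto x+ty\), computing \(g^{(c)}\) by the chain rule, and bounding the Lagrange (or integral) remainder termwise by \(\frac{1}{d!}\norm{\psi^{(d)}}_1\norm{y}_\infty^d\) is the standard argument. The paper states this as a Fact without giving a proof, so there is nothing further to compare.
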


\subsection{Agnostic Learning and PAC Learning}

We recall the standard learning models.

\begin{definition}[Agnostic learning]
    Let \(\D\) be a distribution over \(\R^n\),
    and let \(\CC\) be a class of Boolean functions \(f:\R^n\to\{0,1\}\).
    We say that an algorithm $\A$ \emph{agnostically learns \(\CC\)
    under \(\D\)
    with accuracy \(\varepsilon\) and confidence \(1-\eta\)} if the following holds.
    For every distribution \(\D'\) over \(\R^n\times\{0,1\}\)
    whose marginal distribution on \(\R^n\) is \(\D\),
    given independent samples $(x,y)\sim\D'$,
    the algorithm outputs a hypothesis $h:\R^n\to\{0,1\}$
    with probability at least $1-\eta$,
    such that
    \[
    	\Pr_{(x,y)\sim\D'} [h(x)\neq y]
        ~\le~
        \inf_{f\in\CC} \Pr_{(x,y)\sim\D'} [f(x)\neq y] + \varepsilon
        \enspace.
    \]
\end{definition}

\begin{definition}[PAC learning]
    Let \(\D\) be a distribution over \(\R^n\),
    and let \(\CC\) be a class of Boolean functions \(f:\R^n\to\{0,1\}\).
    We say that an algorithm \(\A\) \emph{PAC learns \(\CC\)
    under \(\D\) with accuracy \(\varepsilon\) and confidence
    \(1-\eta\)} if the following holds.
    For every target function \(f\in\CC\),
    given independent samples \((x,f(x))\) with \(x\sim\D\),
    the algorithm outputs a hypothesis \(h:\R^n\to\{0,1\}\) such that,
    with probability at least \(1-\eta\),
    \[
        \Pr_{x\sim\D}[h(x)\neq f(x)]
        ~\le~
        \varepsilon
        \enspace.
    \]
\end{definition}

The following results connect noise sensitivity and Gaussian surface area to
the learnability of Boolean functions under the uniform distribution on $\pmcube{n}$
and the Gaussian distribution on $\R^n$, respectively.

 \begin{theorem}\label{thm:ns-gsa-learning}
    Let \(\CC\) be a class of Boolean functions.
    Suppose that there exist a function \(\alpha:[0,1]\to[0,1]\)
    and a parameter \(\beta\ge0\) such that for every \(f\in\CC\)
    and every \(\delta\in[0,1]\),
    \[
        \ns_{\delta}(f) \le \alpha(\delta)
        \qquad\text{and}\qquad
        \gsa(f) \le \beta \enspace.
    \]
    Then, for every \(\varepsilon,\eta\in(0,1)\), there exist algorithms
    that learn \(\CC\) with accuracy \(\varepsilon\) and confidence \(1-\eta\):
    \begin{itemize}[leftmargin=*]
        \item \citetext{\citealp[Theorem 5]{KKMS08}; \citealp[Corollary 17]{KOS02}}
        agnostically under the uniform distribution on \(\pmcube n\) in time
        $n^{O(1/\alpha^{-1}(\varepsilon^2/2.32))}
            \cdot \mathrm{poly}(\frac{1}{\varepsilon},\log(\frac{1}{\eta}))$
        
        \item \citetext{\citealp[Theorem 4.3]{Man94}; \citealp[Corollary 17]{KOS02}}
        in the PAC model under the uniform distribution on \(\pmcube n\) in time
        $n^{O(1/\alpha^{-1}(\varepsilon/4.64))}
            \cdot \mathrm{poly}(\frac{1}{\varepsilon},\log(\frac{1}{\eta}))$;

        \item \cite[Corollary~1.3]{PSW26}
        agnostically under the Gaussian distribution in time
        $n^{\tildeo(\beta^2/\varepsilon^2)}
            \cdot \mathrm{poly}(\frac{1}{\varepsilon},\log(\frac{1}{\eta}))$;        

        \item \cite[Theorems 10 and Theorem 15]{KOS08}
        in the PAC model under the Gaussian distribution in time
        $n^{O(\beta^2/\varepsilon^2)}
            \cdot \mathrm{poly}(\frac{1}{\varepsilon},\log(\frac{1}{\eta}))$.
            \end{itemize}
\end{theorem}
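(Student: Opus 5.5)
This statement collects known learning results, so I would not prove it from scratch. The plan is to reduce each item to the cited theorem by a spectral-concentration argument. The unifying principle is the low-degree / polynomial-regression paradigm. If every \(f\in\CC\) is \(\gamma\)-close in \(L_2\) to a polynomial of degree \(d\) (Fourier or Hermite), then:
\begin{itemize}
    \item the low-degree algorithm of Linial--Mansour--Nisan type (as in \cite{Man94,KOS08}) PAC learns \(\CC\) to error \(O(\gamma)\);
    \item \(L_1\) polynomial regression \cite{KKMS08} learns \(\CC\) agnostically to error \(O(\sqrt{\gamma})\).
\end{itemize}
Both run in time \(n^{O(d)}\cdot\mathrm{poly}(1/\varepsilon,\log(1/\eta))\). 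So for each item the only work is converting the boundary bound (\(\ns\) or \(\gsa\)) into a degree bound \(d\).

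\emph{Uniform distribution.} I would use the standard identity \(\ns_\delta(f)=\tfrac12\sum_S(1-(1-2\delta)^{|S|})\hat f(S)^2\) for \(\pm1\)-valued \(f\). It implies
\[
\sum_{|S|\ge 1/\delta}\hat f(S)^2 ~\le~ \frac{2}{1-\e^{-2}}\,\ns_\delta(f) ~\le~ 2.32\,\alpha(\delta)\enspace,
\]
which is the KOS02 Corollary~17 statement and the source of the constant \(2.32\). In the PAC case I would choose \(\delta=\alpha^{-1}(\varepsilon/4.64)\), so the Fourier tail is at most \(\varepsilon/2\); Mansour's low-degree algorithm with degree \(d=1/\delta\) then gives error \(\varepsilon\). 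In the agnostic case I would choose \(\delta=\alpha^{-1}(\varepsilon^2/2.32)\), so the \(L_2^2\) tail is at most \(\varepsilon^2\). The \(L_2\) error is then at most \(\varepsilon\), so the \(L_1\) error is also at most \(\varepsilon\), and KKMS08 Theorem~5 yields agnostic error \(\varepsilon\). Both run in time \(n^{O(1/\delta)}\), matching the stated bounds. One point of care is the change of convention between \(\{0,1\}\)- and \(\pm1\)-valued \(f\) (a factor of \(4\) in Fourier mass), which I would absorb into the stated constants.

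\emph{Gaussian distribution.} KOS08 (Theorems 10 and 15) shows that \(\gsa(f)\le\beta\) implies Hermite tail concentration \(\sum_{|\alpha|>d}\hat f(\alpha)^2\le O(\beta/\sqrt d)\). The route is through Gaussian noise sensitivity: \(\ns^{\mathcal N}_\rho(f)\le O(\sqrt\rho\,\beta)\), obtained via Ledoux's semigroup argument. Taking \(d=O(\beta^2/\varepsilon^2)\) makes the tail at most \(\varepsilon\), and the Hermite low-degree algorithm gives PAC learning in time \(n^{O(\beta^2/\varepsilon^2)}\). For the agnostic Gaussian item I would invoke \cite[Corollary~1.3]{PSW26} directly, since it is stated exactly in terms of a surface-area bound.

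\emph{Main obstacle.} Naive \(L_2\)-to-\(L_1\) conversion would only give degree \(\beta^4/\varepsilon^4\), so matching \(\tildeo(\beta^2/\varepsilon^2)\) requires the sharper \(L_1\)-approximation argument of PSW26. That result is a genuine black box here, and I would not try to reprove it. Apart from this, the remaining steps are bookkeeping of constants, so the proof amounts to citing each result with the parameter substitutions above.
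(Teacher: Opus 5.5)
Your proposal is correct and matches the paper, which gives no proof of this statement and simply assembles the cited results; your bookkeeping (\(\sum_{|S|\ge 1/\delta}\hat f(S)^2\le \frac{2}{1-\e^{-2}}\,\ns_\delta(f)\le 2.32\,\ns_\delta(f)\), with \(\delta=\alpha^{-1}(\varepsilon/4.64)\) for PAC and \(\delta=\alpha^{-1}(\varepsilon^2/2.32)\) for agnostic learning) correctly accounts for the stated constants. One small slip is in your aside: the naive route through the KOS08 tail bound \(O(\beta/\sqrt d)\le\varepsilon^2\) gives degree \(O(\beta^2/\varepsilon^4)\), which is KOS08's agnostic bound, not \(\beta^4/\varepsilon^4\); this does not affect anything because you invoke PSW26 directly.
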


\section{Smooth Approximation of Threshold Functions}
\label{sec:smooth-approx}
This section develops a smooth approximation of threshold functions
used in the pseudorandom generator proof.
In \cref{sec:inner-outer-approximators}, we define the threshold sets
and introduce the inner and outer approximators for these sets.
\cref{sec:bentkus-type-mollifier} constructs Bentkus-type mollifiers
that serve as smooth approximators.
\cref{sec:sandwich} proves the sandwiching lemma,
which converts the problem of fooling the discontinuous threshold indicator
into fooling the smooth mollifiers.

\subsection{Inner and Outer Approximators}
\label{sec:inner-outer-approximators}

We first introduce notation for \emph{$k$-out-of-$m$ threshold sets},
which provides an equivalent representation for $k$-out-of-$m$ thresholds of halfspaces.

\begin{definition}\label{def:threshold-set}
    For an integer $k\geq 1$ and a vector $b\in\R^m$,
    we define $\ort_{k,b}$ to be the \emph{$k$-out-of-$m$ threshold set} associated with $b$:
    \[
        \ort_{k,b}
        :=
        \{x\in\R^m:\text{at least $k$ coordinates of $x$ satisfy }x_i\le b_i\}\enspace.
    \]
    With a slight abuse of notation, we also write $\ort_{k,b}$ for the indicator function of the set $\ort_{k,b}$, i.e., $\ort_{k,b}(x) = 1$ if $x \in \ort_{k,b}$ and $\ort_{k,b}(x) = 0$ otherwise.
    Equivalently, we can write
    \[
        \ort_{k,b}(x)
        =
        1-\prod_{S\in \binom{[m]}{k}}
        \left[
        1-\prod_{i\in S} \ind[x_i\le b_i]
        \right]\enspace.
    \]
\end{definition}

\begin{remark}\label{rmk:equivalent-representation}
    For a $k$-out-of-$m$ threshold of halfspaces $F(x) = \thr_{m,k}(f_1(x),\dots,f_m(x))$, where each $f_i(x) = \ind[\langle a_i,x\rangle \le b_i]$ is a halfspace, we can write $F(x) = \ort_{k,b}(A x)$, where $A$ is the matrix whose $i$-th row is $a_i$ and $b$ is the vector whose $i$-th coordinate is $b_i$.
\end{remark}

To relate smooth functions to the indicator $\ort_{k,b}$,
we allow the approximation to be inaccurate only within a narrow boundary.
We use the following definitions of \emph{inner} and \emph{outer approximators}
adapted from \cite{OST22}.

\begin{definition}\label{def:approximator}
For $k\geq 1$, $b\in\R^m$, $\Lambda>0$ and $\delta\in(0,1)$,
a function $A:\R^m\to[0,1]$ is a $(\Lambda,\delta)$-inner approximator
for the $k$-out-of-$m$ threshold set $\ort_{k,b}$, if
\[
\abs{A(x)-\ort_{k,b}(x)}\le\delta
\qquad
\forall \ x\notin \ort_{k,b}\setminus \ort_{k,b-\Lambda\cdot\onevec}\enspace.
\]
It is a $(\Lambda,\delta)$-outer approximator for $\ort_{k,b}$, if
\[
\abs{A(x)-\ort_{k,b}(x)}\le\delta
\qquad
\forall \ x\notin \ort_{k,b+\Lambda\cdot\onevec}\setminus \ort_{k,b}\enspace.
\]
\end{definition}

We list some elementary properties of these approximators
which follow directly from the definitions and will be used later.
\begin{fact}\label[fact]{fact:property-mollifier}
    Let $k\geq 1$ and $b\in\R^m$. Suppose that $\mollifier_{\newsf{in}}$ and $\mollifier_{\newsf{out}}$ are $(\Lambda,\delta)$-inner and $(\Lambda,\delta)$-outer approximators for the $k$-out-of-$m$ threshold set $\ort_{k,b}$, respectively. Then
    \begin{itemize}
        \item $\mollifier_{\newsf{in}}(x) - \delta \leq \ort_{k,b}(x) \leq \mollifier_{\newsf{out}}(x) + \delta$ for all $x \in \R^m$\,,
        \item $\mollifier_{\newsf{in}}$ is a $(\Lambda,\delta)$-outer approximator for $\ort_{k,b-\Lambda\cdot\onevec}$\,,
        \item $\mollifier_{\newsf{out}}$ is a $(\Lambda,\delta)$-inner approximator for $\ort_{k,b+\Lambda\cdot\onevec}$\,.
    \end{itemize}
\end{fact}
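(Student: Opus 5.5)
The three properties follow directly from \cref{def:approximator}, together with the monotonicity of the threshold sets in the shift parameter. A point with at least \(k\) coordinates satisfying \(x_i\le b_i-\Lambda\) also has at least \(k\) coordinates satisfying \(x_i\le b_i\). Hence \(\ort_{k,b-\Lambda\cdot\onevec}\subseteq\ort_{k,b}\subseteq\ort_{k,b+\Lambda\cdot\onevec}\), which I will use throughout.

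\emph{First item.} I split according to the value of \(\ort_{k,b}(x)\).
\begin{itemize}
\item If \(\ort_{k,b}(x)=1\), then \(\ort_{k,b}(x)\le\mollifier_{\newsf{out}}(x)+\delta\) holds for every \(x\). Indeed, if \(x\in\ort_{k,b}\) then \(x\notin\ort_{k,b+\Lambda\cdot\onevec}\setminus\ort_{k,b}\), so the outer guarantee gives \(\mollifier_{\newsf{out}}(x)\ge 1-\delta\). The other inequality is trivial here, since \(\mollifier_{\newsf{in}}(x)-\delta\le 1\) because \(\mollifier_{\newsf{in}}\) takes values in \([0,1]\).
\item If \(\ort_{k,b}(x)=0\), the upper bound is trivial since \(\mollifier_{\newsf{out}}\ge0\). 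For the lower bound, \(x\notin\ort_{k,b}\) implies \(x\notin\ort_{k,b}\setminus\ort_{k,b-\Lambda\cdot\onevec}\). The inner guarantee then gives \(\mollifier_{\newsf{in}}(x)\le\delta\), that is, \(\mollifier_{\newsf{in}}(x)-\delta\le0\).
\end{itemize}

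\emph{Second item.} Fix \(x\notin\ort_{k,b}\setminus\ort_{k,b-\Lambda\cdot\onevec}\); this is exactly the excluded band for the outer-approximator condition relative to \(b'=b-\Lambda\cdot\onevec\), since \(b'+\Lambda\cdot\onevec=b\). I need \(|\mollifier_{\newsf{in}}(x)-\ort_{k,b'}(x)|\le\delta\). The same \(x\) lies outside the excluded band of the inner guarantee for \(\ort_{k,b}\), so \(|\mollifier_{\newsf{in}}(x)-\ort_{k,b}(x)|\le\delta\). Moreover, outside this band the two indicators agree: if \(x\in\ort_{k,b}\) then \(x\in\ort_{k,b'}\), and if \(x\notin\ort_{k,b}\) then \(x\notin\ort_{k,b'}\) by the containment above. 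Combining the two facts gives the claim.

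\emph{Third item.} This is symmetric. With \(b''=b+\Lambda\cdot\onevec\), the excluded band of the inner condition for \(\ort_{k,b''}\) is \(\ort_{k,b''}\setminus\ort_{k,b}\), which coincides with the excluded band of the given outer guarantee. Outside this band, \(\ort_{k,b}(x)=\ort_{k,b''}(x)\), so the bound transfers.

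There is no real obstacle here. The only step needing care is checking that the shifted bands coincide exactly and that the indicators agree outside them; both follow from the nesting of the threshold sets.
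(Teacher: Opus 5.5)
Your proof is correct and is the direct verification the paper has in mind; the paper gives no written proof and only says these properties follow from \cref{def:approximator}. Your argument supplies the two points that verification needs: the $[0,1]$ range handles the trivial directions of the first item, and the nesting $\ort_{k,b-\Lambda\cdot\onevec}\subseteq\ort_{k,b}\subseteq\ort_{k,b+\Lambda\cdot\onevec}$ shows that the shifted excluded bands coincide and that the two indicators agree outside them.
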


\subsection{Bentkus-Type Mollifier}
\label{sec:bentkus-type-mollifier}
We introduce a Bentkus-type mollifier for the indicator $\ort_{k,b}$
by replacing each halfline indicator in $\ort_{k,b}$ with a smoothed version.
The smoothed indicator for a halfline is defined as follows.

\begin{definition}[Gaussian-Mollified Halfline \citetext{\citealp[Definition 6.1]{OST22}}]
    For $\theta\in\R$ and $\lambda>0$, the $\lambda$-smoothed indicator
    for $\ind[x\le\theta]$ is defined as
    \[
        \widetilde{\ind}_{\theta,\lambda}(x)
        :=
        \Pr_{g\sim\mathcal N(0,1)}[x+\lambda g\le\theta]
        =
        \gcdf\!\left(\frac{\theta-x}{\lambda}\right)\enspace.
    \]  
\end{definition}

\begin{fact}[\citetext{\citealp[Exercise 11.41]{O14}}]\label[fact]{fact:derivative-smoothed-indicator}
    For every $\theta\in\R$, $\lambda>0$, integer $d\geq 1$ and $x\in\R$,
    \[
    	\abs{ \widetilde{\ind}_{\theta,\lambda}^{(d)}(x) }~
        = ~ O_d\!\left(\frac{1}{\lambda^d}\right) \enspace.
    \]
\end{fact}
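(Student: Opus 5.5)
The plan is a change of variables that reduces everything to derivatives of the standard Gaussian CDF. Write
\[
\widetilde{\ind}_{\theta,\lambda}(x)=\gcdf(t), \qquad t=\frac{\theta-x}{\lambda}.
\]
Since \(t\) is affine in \(x\) with \(\d t/\d x=-1/\lambda\), the chain rule applied \(d\) times gives
\[
\widetilde{\ind}_{\theta,\lambda}^{(d)}(x)=(-1/\lambda)^d\,\gcdf^{(d)}\bigl((\theta-x)/\lambda\bigr).
\]
It therefore suffices to show that \(\sup_{t\in\R}\abs{\gcdf^{(d)}(t)}=O_d(1)\) for every integer \(d\ge1\). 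This bound does not depend on \(\theta\) or \(\lambda\), so the stated estimate \(O_d(1/\lambda^d)\) follows immediately.

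For the bound on \(\gcdf^{(d)}\), note first that \(\gcdf'=\gpdf\), so \(\gcdf^{(d)}=\gpdf^{(d-1)}\). Next, using \(\gpdf'(t)=-t\,\gpdf(t)\) and inducting on \(j\), one has \(\gpdf^{(j)}(t)=(-1)^j\He_j(t)\,\gpdf(t)\). Here \(\He_j\) is the probabilists' Hermite polynomial, a polynomial of degree \(j\) whose coefficients depend only on \(j\). The induction step is
\[
\bigl(p\gpdf\bigr)'=(p'-tp)\,\gpdf,
\]
which carries a degree-\(j\) polynomial to a degree-\((j+1)\) polynomial. This is all the induction needs; the exact Hermite form is not required.

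It remains to bound \(\abs{p(t)}\gpdf(t)\) for a fixed polynomial \(p\) of degree \(j\). Each monomial satisfies \(\abs{t}^a\e^{-t^2/2}\le (a/\e)^{a/2}\), obtained by maximizing over \(t\): the maximum occurs at \(t^2=a\). Hence \(\sup_t\abs{p(t)}\gpdf(t)\) is bounded by a constant depending only on \(j=d-1\), which completes the argument.

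No step is a real obstacle. The only points needing care are that the constant depends only on \(d\), and that the case \(d=1\), where the bound is \(\gpdf\le 1/\sqrt{2\pi}\), is covered by the same argument.
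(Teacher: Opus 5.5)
Your proof is correct. The paper does not prove this fact; it only cites O'Donnell's book. Your argument matches reasoning the paper uses elsewhere: the affine chain rule giving the factor $(-1/\lambda)^d$ appears in the proof of \cref{thm:scaled}, and the representation $\gcdf^{(d)}=(-1)^{d-1}\He_{d-1}\gpdf$ together with the boundedness of a fixed polynomial times $\gpdf$ appears in the first step of \cref{lem:oned}. Your bound $|t|^a\e^{-t^2/2}\le(a/\e)^{a/2}$ simply makes the $O_d(1)$ constant explicit.
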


The product representation of $\ort_{k,b}$ in \cref{def:threshold-set} expresses
the threshold indicator in terms of halfline indicators. We define its mollified
version by keeping the same product structure and replacing each indicator
$\ind[x_i\le b_i]$ with its $\lambda$-smoothed version
$\widetilde{\ind}_{b_i,\lambda}(x_i)$.

\begin{definition}\label{def:mollifier}
For an integer $k\geq 1$, a vector $b\in\R^m$, and $\lambda>0$,
the \emph{Bentkus-type $\lambda$-mollifier} for the $k$-out-of-$m$
threshold set $\ort_{k,b}$ is the function $\mollifier_{k,b,\lambda}:\R^m\to\R$
defined by
\[
    \mollifier_{k,b,\lambda}(x)
    :=
    1-\prod_{S\in \binom{[m]}{k}}
    \left[
    1-\prod_{i\in S}
    \widetilde{\ind}_{b_i,\lambda}(x_i)
    \right]\enspace.
\]
\end{definition}

When $k=m$, the threshold set $\ort_{k,b}$ is an orthant,
and $\mollifier_{k,b,\lambda}$ coincides with the Bentkus' mollifier for polytopes
\cite{Ben90,OST22}. 

The next theorem gives a derivative bound for this mollifier,
which will be used to control the Taylor remainder in the hybrid argument.
The proof is deferred to \cref{sec:derivative},
since it requires a separate technical analysis of the derivative structure
of the Bentkus-type mollifier.

\begin{restatable}{theorem}{derivatives}\label[theorem]{thm:scaled}
Let $m,k,d$ be integers with $m\ge 2$, $1\le k\le m/2$, and $d\ge 1$.
For every $b\in\R^m$ and every $\lambda>0$, the Bentkus-type $\lambda$-mollifier
$\mollifier_{k,b,\lambda}$ for $k$-out-of-$m$ threshold set $\ort_{k,b}$ satisfies
\[
    \norm{\mollifier_{k,b,\lambda}^{(d)}}_1
    =
    O_d\!\left(
    \frac{ k\sqrt{\log (m/k)} }{\lambda}
    \right)^d\enspace.
\]
\end{restatable}

With an appropriate choice of parameters, the Bentkus-type mollifier gives
both smooth inner and outer approximators for the threshold set.

\begin{restatable}{lemma}{inner}
\label[lemma]{lem:approximator}
    For every $1\leq k\leq m$, $b\in\R^m$, $\lambda>0$, and $\delta\in(0,1)$, let
    \[
    \beta:= \Theta\!\br{\lambda\sqrt{\log\frac{m}{\delta}}}\enspace,
    \qquad
    b^{\newsf{in}}:=b-\beta\cdot\onevec\enspace,
    \qquad
    b^{\newsf{out}}:=b+\beta\cdot\onevec\enspace,
    \qquad
    \Lambda:=2\beta\enspace.
    \]
    Then $\mollifier_{k,b^{\newsf{in}},\lambda}$ and
    $\mollifier_{k,b^{\newsf{out}},\lambda}$ are
    $(\Lambda,\delta)$-inner and $(\Lambda,\delta)$-outer approximators,
    respectively, for the $k$-out-of-$m$ threshold set $\ort_{k,b}$.
\end{restatable}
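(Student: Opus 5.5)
The plan is a pointwise comparison based on the monotone structure of the mollifier. Write $\widetilde{\ind}_i(x_i) := \widetilde{\ind}_{b'_i,\lambda}(x_i) = \gcdf((b'_i - x_i)/\lambda)$ for the relevant shifted threshold $b'\in\{b^{\newsf{in}},b^{\newsf{out}}\}$. The mollifier $\mollifier_{k,b',\lambda}(x) = 1-\prod_S\bigl[1-\prod_{i\in S}\widetilde{\ind}_i(x_i)\bigr]$ lies in $[0,1]$ and is coordinatewise monotone in the numbers $p_i := \widetilde{\ind}_i(x_i)\in[0,1]$. The key observation is that if every $p_i$ is within $\eta$ of the Boolean value $q_i\in\{0,1\}$, then the whole expression is within $\delta$ of its Boolean evaluation $1-\prod_S[1-\prod_{i\in S} q_i]$, provided $\eta$ is small enough in terms of $m$. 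This must be done without paying a factor $\binom{m}{k}$, which is the one point needing care.

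Fix $\eta := \delta/m$ and choose $\beta = \lambda\,\gcdf^{-1}(1-\eta)$, which is $\Theta(\lambda\sqrt{\log(m/\delta)})$ by the Gaussian tail bound. Then $\widetilde{\ind}_{b_i-\beta,\lambda}(x_i)\ge 1-\eta$ whenever $x_i\le b_i-2\beta$, and $\le \eta$ whenever $x_i> b_i$; symmetrically for $b^{\newsf{out}}$ with the window $(b_i, b_i+2\beta]$.

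For the inner approximator, take $x\notin \ort_{k,b}\setminus\ort_{k,b-\Lambda\onevec}$ and split into two cases.

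\emph{Case 1: $x\in\ort_{k,b-\Lambda\onevec}$.} There is a set $T$ with $|T|=k$ and $x_i\le b_i-2\beta$ for $i\in T$. The factor of $S=T$ is $1-\prod_{i\in T}p_i \le 1-(1-\eta)^k\le k\eta$. Since all other factors lie in $[0,1]$, the mollifier is at least $1-k\eta\ge 1-\delta$, while $\ort_{k,b}(x)=1$.

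\emph{Case 2: $x\notin\ort_{k,b}$.} Fewer than $k$ coordinates satisfy $x_i\le b_i$, so every $k$-set $S$ contains some $i$ with $x_i>b_i$, hence $p_i\le\eta$. To avoid a union bound over all $S$, I will use the inequality $1-\prod_S(1-a_S)\le \sum_S a_S$ in a refined form: rather than summing over $\binom{m}{k}$ sets, I bound the probability interpretation. Let $Y_i$ be independent Bernoulli$(p_i)$ variables. Then
\[
1-\prod_S\Bigl(1-\prod_{i\in S}p_i\Bigr)
\]
is not literally a probability, but by Harris/FKG-type positive correlation of the increasing events $\{\prod_{i\in S}Y_i=1\}$, we have $\prod_S\Pr[\text{not }E_S]\ge$ hmm, which gives the inequality in the wrong direction. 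Instead I use the direct route: the mollifier is increasing in each $p_i$, so I raise $p_i$ to $1$ on the at most $k-1$ good coordinates and to $\eta$ on the bad ones. The expression then becomes $1-\prod_{S}(1-\eta^{|S\setminus G|})$, where $G$ is the set of good coordinates. This is at most $\sum_{j\ge1}\binom{m}{j}\binom{k-1}{k-j}\eta^j\le \sum_j (m\eta)^j\binom{k-1}{k-j}$. Sets with the same bad part $S\setminus G$ share a factor, so grouping identical factors (using $1-\prod_S(1-a_S)\le\sum_{\text{distinct } a}$ over distinct bad parts $U$) gives at most $\sum_{j\ge1}\binom{m}{j}\eta^j\le 2m\eta$. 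This is $\le\delta$ after adjusting constants in $\eta=\delta/(2m)$.

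The outer approximator is symmetric: for $x\in\ort_{k,b}$ use Case 1 with threshold $b+\beta$; for $x\notin\ort_{k,b+\Lambda\onevec}$ use Case 2 with the bad coordinates satisfying $x_i>b_i+2\beta$.

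The main obstacle is exactly Case 2's grouping step. It justifies that duplicated factors can be collapsed, since $(1-a)^r\ge 1-a$ for $a\in[0,1]$, so that only $\le 2^m$-free, $\binom{m}{j}$-weighted counts appear and the $\sqrt{\log(m/\delta)}$ dependence suffices.
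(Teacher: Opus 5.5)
Your outline matches the paper's: pick $\beta$ from the Gaussian tail, handle $x\in\ort_{k,b-\Lambda\cdot\onevec}$ through one good $k$-set, handle $x\notin\ort_{k,b}$ by monotonicity plus a union bound, and get the outer case by symmetry. Your Case~1 is fine. \textbf{The gap is in Case~2.}

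The grouping step relies on $(1-a)^r\ge 1-a$, which is backwards: for $a\in[0,1]$ and $r\ge 1$ one has $(1-a)^r\le 1-a$. For each bad part $U$ there are $\binom{|G|}{k-|U|}$ repeated factors $1-\eta^{|U|}$. Collapsing them into a single factor therefore \emph{increases} the product. That gives a lower bound on the mollifier, not an upper bound.

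In fact the conclusion is false for $\eta=c\,\delta/m$ with any constant $c$. Let $k-1$ coordinates satisfy $x_i\to-\infty$, and let the remaining $m-k+1$ satisfy $x_i\downarrow b_i$.
\begin{itemize}
\item Then $x\notin\ort_{k,b}$, yet the mollifier tends to $1-\prod_S\bigl(1-\eta^{|S\setminus G|}\bigr)\ge 1-\exp\bigl(-\sum_S\eta^{|S\setminus G|}\bigr)$.
\item When $k\le m/2$, the sets with $|S\setminus G|=2$ alone contribute $\binom{m-k+1}{2}(k-1)\eta^2\ge (k-1)c^2\delta^2/8$.
\item For fixed $\delta$ and $k\to\infty$ (with $m\ge 2k$) this diverges, so the mollifier approaches $1$ instead of staying below $\delta$.
\end{itemize}
The $k-1$ good coordinates genuinely multiply the number of dangerous $k$-sets, and no regrouping removes that.

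The repair is a smaller $\eta$, which is what the paper does via \cref{lem:prob-estimate}: take $\eta=\delta/(4m^2)$.
\begin{itemize}
\item $\beta=\lambda\gcdf^{-1}(1-\eta)$ is still $\Theta\bigl(\lambda\sqrt{\log(m/\delta)}\bigr)$.
\item Case~1 is unaffected, since $\eta\le\delta/m$.
\item The plain union bound you already wrote closes Case~2 with no grouping. Using $mk\eta\le m^2\eta=\delta/4$, you get $\sum_{j\ge1}\binom{m}{j}\binom{k-1}{j-1}\eta^j\le\sum_{j\ge1}m\eta\,(mk\eta)^{j-1}\le\frac{\delta}{4m}\sum_{j\ge0}(\delta/4)^{j}\le\delta$.
\end{itemize}
The paper phrases this same estimate as the ratio bound $t_{r+1}/t_r\le 1/4$ between consecutive terms.
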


We use the following elementary estimate in the proof of \cref{lem:approximator}.
It says that the probability of seeing a bad $k$-subset is small
when every such subset contains at least one sufficiently unlikely element.
The proof is deferred to \appref{app:prob-estimate}.

\begin{restatable}{lemma}{simpleest}\label[lemma]{lem:prob-estimate}
    Let $1\leq k\leq m$, $\delta\in(0,1)$ and $p_i\in[0,1]$ for all $i\in[m]$.
    Suppose that there exists a set $T\subseteq[m]$ of size $m-k+1$
    such that for every $i\in T$, $p_i\le \frac{\delta}{4m^2}$.
    Then we have
\[
1-\prod_{S\in \binom{[m]}{k}}\left(1-\prod_{i\in S}p_i\right)
\le
\delta\enspace.
\]
\end{restatable}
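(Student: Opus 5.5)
The plan is to reduce the product over $S\in\binom{[m]}{k}$ to a union bound. Write $Q:=\prod_{S}(1-\prod_{i\in S}p_i)$. Every factor lies in $[0,1]$, so $Q\ge 1-\sum_S\prod_{i\in S}p_i$. This follows from the elementary inequality $\prod_j(1-a_j)\ge 1-\sum_j a_j$ for $a_j\in[0,1]$, which can be proved by induction on the number of factors. It therefore suffices to show
\[
\sum_{S\in\binom{[m]}{k}}\prod_{i\in S}p_i\le\delta .
\]

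Next, use $|T|=m-k+1$. Any $k$-subset $S$ satisfies $|S|+|T|=m+1>m$, so $S\cap T\neq\emptyset$. Hence every clause product $\prod_{i\in S}p_i$ contains at least one factor $p_j$ with $j\in T$, and all the remaining factors are at most $1$. A naive bound of $\binom{m}{k}\cdot\delta/(4m^2)$ is too weak, so the counting has to be done more carefully. This is the main obstacle: the single small factor must be balanced against the sum over the other $k-1$ indices.

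To handle it, charge each $S$ to one of its elements $j\in S\cap T$. This gives
\[
\sum_S\prod_{i\in S}p_i\le\sum_{j\in T}p_j\sum_{S'\in\binom{[m]\setminus\{j\}}{k-1}}\prod_{i\in S'}p_i .
\]
The inner sum is bounded by the elementary symmetric polynomial $e_{k-1}(p)\le \sigma^{k-1}/(k-1)!$, where $\sigma:=\sum_i p_i$. Split on the size of $\sigma$.

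\emph{Case $\sigma\le 2$.} Each inner sum is at most $\sum_{r\ge0}\sigma^r/r!\le e^2<8$. The whole sum is then at most $8\sum_{j\in T}p_j\le 8m\cdot\delta/(4m^2)=2\delta/m$, which is at most $\delta$ when $m\ge2$. The case $m=1$ forces $k=1$ and $T=[m]$, and there the sum is simply $p_1\le\delta/4$.

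\emph{Case $\sigma$ large.} Here the indices outside $T$ carry most of the mass. The complement $[m]\setminus T$ has only $k-1$ elements, so a $k$-set can contain at most $k-1$ of them. I would then bound directly instead of via $e_{k-1}$. Fix $j\in T$ and the set $S'\setminus T$; the number of ways to complete it is governed by how many indices of $S'$ lie in $T$, and each such index contributes a factor $\le \delta/(4m^2)$. Writing $t=|S\cap T|\ge1$, the sum is at most
\[
\sum_{t\ge1}\binom{m-k+1}{t}\binom{k-1}{k-t}\Bigl(\frac{\delta}{4m^2}\Bigr)^{t}.
\]
Using $\binom{m-k+1}{t}\binom{k-1}{k-t}\le m^{t}\cdot m^{t}/t!$, whose second factor is at most $\binom{k-1}{t-1}\le m^{t-1}$, each term is at most $(\delta/4)^t$. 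Summing over $t$ gives at most $\delta/3\le\delta$.

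This second bound handles all cases at once, so I would present it directly and drop the case split.
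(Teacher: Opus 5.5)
Your final argument is correct and is essentially the paper's proof. It has the same three steps: the union bound $1-\prod_S(1-\prod_{i\in S}p_i)\le\sum_S\prod_{i\in S}p_i$; the observation that every $k$-set meets $T$ because $|[m]\setminus T|=k-1$; and grouping by $t=|S\cap T|$ to reach the series $\sum_{t\ge1}\binom{m-k+1}{t}\binom{k-1}{k-t}\bigl(\delta/(4m^2)\bigr)^t$.

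The only difference is cosmetic. The paper bounds this series using the ratio $t_{r+1}/t_r\le 1/4$ together with $t_1\le\delta/(4m)$. You instead bound each term by $(\delta/4)^t$, using $\binom{m-k+1}{t}\le m^t/t!$ and $\binom{k-1}{t-1}\le m^{t-1}$. As you noted, the $\sigma$-based case split is unnecessary, and your wording of the binomial estimate should be cleaned up.
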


\begin{proof}[Proof of \cref{lem:approximator}]
    We prove the inner approximator, and the outer approximator statement follows from a similar argument. Let $x\notin \ort_{k,b}\setminus \ort_{k,b-\Lambda\cdot\onevec}$. We consider two cases.
    \begin{itemize}
        \item $x\in \ort_{k,b-\Lambda\cdot\onevec}$. In this case,
        we know that there exists a set $S_0\in\binom{[m]}{k}$ such that $x_i\le b_i-\Lambda$ for every $i\in S_0$. Since
        $b_i^{\newsf{in}}=b_i-\beta$ and $\Lambda=2\beta$,
        we have that for every $i\in S_0$,
        \[
            \widetilde{\ind}_{b_i^{\newsf{in}},\lambda}(x_i)
            =
            \gcdf\!\left(\frac{b_i^{\newsf{in}}-x_i}{\lambda}\right)
            =
            \gcdf\!\left(\frac{b_i-\beta-x_i}{\lambda}\right)
            \ge
            \gcdf\!\left(\frac{\beta}{\lambda}\right)
            \ge
            1-e^{-\beta^2/(2\lambda^2)}\enspace,
        \]
        where the first inequality is from the fact that $b_i - \beta - x_i \geq \Lambda - \beta = \beta$ and $\gcdf$ is increasing, and 
        the second inequality is from the tail bound of the Gaussian distribution.
        Choosing $C$ large enough and letting $\beta = C\lambda\sqrt{\log\frac{m}{\delta}}$,
        we have
        $e^{-\beta^2/(2\lambda^2)}\le \frac{\delta}{m}$, and hence
        \[
            \prod_{i\in S_0}
            \widetilde{\ind}_{b_i^{\newsf{in}},\lambda}(x_i)
            \ge
            \left(1-\frac{\delta}{m}\right)^k
            \ge
            1 - \frac{k}{m}\cdot\delta
            \ge
            1-\delta\enspace.
        \]
        Thus, $1-\delta\le \mollifier_{k,b^{\newsf{in}},\lambda}(x)\le 1$.
        Since $\ort_{k,b}(x)=1$, we have that $\abs{\mollifier_{k,b^{\newsf{in}},\lambda}(x)-\ort_{k,b}(x)}\le\delta$.
        \item $x\notin \ort_{k,b}$. Then there are at least $m-k+1$ coordinates satisfying $x_i>b_i$.
        Let $T$ be the set of these coordinates. For every $i\in T$,
        \[
            \widetilde{\ind}_{b_i^{\newsf{in}},\lambda}(x_i)
            =
            \gcdf\!\left(\frac{b_i^{\newsf{in}}-x_i}{\lambda}\right)
            =
            \gcdf\!\left(\frac{b_i-\beta-x_i}{\lambda}\right)
            \le
            \gcdf\!\left(-\frac{\beta}{\lambda}\right)
            \le
            e^{-\beta^2/(2\lambda^2)}\enspace.
        \]
        Choosing $C$ large enough and letting $\beta = C\lambda\sqrt{\log\frac{m}{\delta}}$, this is at most $\frac{\delta}{4m^2}$. 
        By \cref{lem:prob-estimate}, we have that $\mollifier_{k,b^{\newsf{in}},\lambda}(x)\le\delta$.
        In this case, $\ort_{k,b}(x)=0$, so $\abs{\mollifier_{k,b^{\newsf{in}},\lambda}(x)-\ort_{k,b}(x)}\le\delta$. \qedhere
    \end{itemize}
\end{proof}

\subsection{A Sandwiching Argument}
\label{sec:sandwich}
We now prove the sandwiching lemma used in \cref{sec:prg}.
Intuitively, it says that if two random variables $x$ and $y$
have close expectations for the inner and outer approximators,
and if $x$ has an anticoncentration property that
$x$ is unlikely to be in the boundary of $\ort_{k,b}$,
then $x$ and $y$ also have close expectations for the threshold indicator $\ort_{k,b}$.

\begin{lemma}\label[lemma]{lem:sandwich}
    Let $1\leq k \leq m$, $b\in\R^m$, $\Lambda>0$ and $\delta,\varepsilon,\xi\in(0,1)$.
    Suppose that $\mollifier_{\newsf{in}}$ and $\mollifier_{\newsf{out}}$ are $(\Lambda,\delta)$-inner and $(\Lambda,\delta)$-outer approximators for
    the $k$-out-of-$m$ threshold set $\ort_{k,b}$, respectively.
    Let $x$ and $y$ be two random variables in $\R^m$ satisfying:
    \begin{itemize}
        \item For both $\mollifier\in\{\mollifier_{\newsf{in}}, \mollifier_{\newsf{out}}\}$,
        \[
        	\abs{ \expect{x}{\mollifier(x)} - \expect{y}{\mollifier(y)}  } ~\leq~ \varepsilon \enspace.
        \]
    \item For random variable $x$,
    \[
    	\Pr_x\!\Br{x \in \ort_{k,b+\Lambda\cdot\onevec}   \setminus \ort_{k,b - \Lambda\cdot\onevec}} ~\leq~ \xi \enspace.
    \]
    \end{itemize}
    Then, we have
    \[
    	\abs{ \expect{x}{\ort_{k,b}(x)} - \expect{y}{\ort_{k,b}(y)} } ~\leq~ \xi + \varepsilon + 2 \delta  \enspace.
    \]
\end{lemma}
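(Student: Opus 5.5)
We prove the two one-sided bounds $\E[\ort_{k,b}(y)] \le \E[\ort_{k,b}(x)] + \xi+\varepsilon+2\delta$ and $\E[\ort_{k,b}(y)] \ge \E[\ort_{k,b}(x)] - \xi-\varepsilon-2\delta$ separately. The argument is the standard sandwich. First transfer from $\ort_{k,b}$ to a mollifier on the $y$ side using the pointwise bounds of \cref{fact:property-mollifier}. Then move from $y$ to $x$ using the first hypothesis. Finally transfer back to $\ort_{k,b}$ on the $x$ side, paying for the boundary band via the anticoncentration hypothesis.

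\textbf{Upper bound.} By the first item of \cref{fact:property-mollifier}, $\ort_{k,b}(y)\le \mollifier_{\newsf{out}}(y)+\delta$ pointwise, so
\[
\E[\ort_{k,b}(y)]\le \E[\mollifier_{\newsf{out}}(y)]+\delta\le \E[\mollifier_{\newsf{out}}(x)]+\varepsilon+\delta .
\]
We now bound $\mollifier_{\newsf{out}}(x)$ pointwise in terms of $\ort_{k,b}(x)$ and the band indicator. The outer-approximator property gives $\mollifier_{\newsf{out}}(x)\le \ort_{k,b}(x)+\delta$ whenever $x\notin \ort_{k,b+\Lambda\cdot\onevec}\setminus\ort_{k,b}$. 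On that excluded set we only use $\mollifier_{\newsf{out}}\le 1$. Since the excluded set is contained in $\ort_{k,b+\Lambda\cdot\onevec}\setminus \ort_{k,b-\Lambda\cdot\onevec}$ (because $\ort_{k,b-\Lambda\cdot\onevec}\subseteq\ort_{k,b}$), we get
\[
\mollifier_{\newsf{out}}(x)\le \ort_{k,b}(x)+\delta+\ind[x\in \ort_{k,b+\Lambda\cdot\onevec}\setminus \ort_{k,b-\Lambda\cdot\onevec}].
\]
Taking expectations and applying the second hypothesis gives $\E[\ort_{k,b}(y)]\le \E[\ort_{k,b}(x)]+\xi+\varepsilon+2\delta$.

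\textbf{Lower bound.} This is symmetric, using $\mollifier_{\newsf{in}}$:
\[
\E[\ort_{k,b}(y)]\ge \E[\mollifier_{\newsf{in}}(y)]-\delta\ge \E[\mollifier_{\newsf{in}}(x)]-\varepsilon-\delta .
\]
Outside $\ort_{k,b}\setminus\ort_{k,b-\Lambda\cdot\onevec}$ we have $\mollifier_{\newsf{in}}(x)\ge \ort_{k,b}(x)-\delta$. Inside that set we use $\mollifier_{\newsf{in}}\ge 0=\ort_{k,b}(x)-1$. The set is again contained in the band, since $\ort_{k,b}\subseteq\ort_{k,b+\Lambda\cdot\onevec}$. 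Hence
\[
\mollifier_{\newsf{in}}(x)\ge \ort_{k,b}(x)-\delta-\ind[\text{band}],
\]
and taking expectations completes the bound.

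There is no real obstacle here. The only points needing care are the monotonicity inclusions $\ort_{k,b-\Lambda\cdot\onevec}\subseteq\ort_{k,b}\subseteq\ort_{k,b+\Lambda\cdot\onevec}$, which are immediate from \cref{def:threshold-set}, and the fact that the approximators take values in $[0,1]$ by \cref{def:approximator}. Each approximator's bad region sits inside the band, so both bounds are charged to the band only once, on the $x$ side.
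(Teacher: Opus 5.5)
Your proof is correct and follows the paper's argument. Both use the chain $\E[\ort_{k,b}(y)]\le\E[\mollifier_{\newsf{out}}(y)]+\delta\le\E[\mollifier_{\newsf{out}}(x)]+\varepsilon+\delta$ and then charge the approximator's bad region to the band on the $x$ side. The differences are cosmetic: the paper reaches your pointwise inequality by viewing $\mollifier_{\newsf{out}}$ as an inner approximator for $\ort_{k,b+\Lambda\cdot\onevec}$ and splitting $\ort_{k,b+\Lambda\cdot\onevec}=\ort_{k,b}\sqcup(\ort_{k,b+\Lambda\cdot\onevec}\setminus\ort_{k,b})$, whereas you write it directly, and you also spell out the lower bound that the paper leaves as ``similar''.
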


\begin{proof}
    From \cref{fact:property-mollifier}, we have
    \begin{align*}
        \expect{y}{\ort_{k,b}(y)}
        ~\leq~ \expect{y}{ \mollifier_{\newsf{out}}( y ) } + \delta
        ~\leq~ \expect{x}{ \mollifier_{\newsf{out}}( x ) } + \varepsilon + \delta
        ~\leq~
        \expect{x}{\ort_{k,b + \Lambda\cdot\onevec}(x)} + \varepsilon + 2\delta \enspace.
    \end{align*}
    Moreover, we have
    \[
    	\expect{x}{\ort_{k,b + \Lambda\cdot\onevec}(x)}
        =
        \Pr_x\!\Br{x \in \ort_{k,b + \Lambda\cdot\onevec}}
        =
        \Pr_x\!\Br{x \in \ort_{k,b}} + \Pr_x\!\Br{x \in \ort_{k,b + \Lambda\cdot\onevec} \setminus \ort_{k,b} }
        \leq
        \expect{x}{\ort_{k,b}(x)} + \xi \enspace.
    \]
    Combining these inequalities, we obtain
    \[
    	\expect{y}{\ort_{k,b}(y)} ~\leq~ \expect{x}{\ort_{k,b}(x)} + \xi + \varepsilon + 2 \delta \enspace.
    \]
    By a similar argument, we have
    \[
    	\expect{y}{\ort_{k,b}(y)} ~\geq~ \expect{x}{\ort_{k,b}(x)} - \xi - \varepsilon - 2 \delta \enspace.
    \]
    This completes the proof.
\end{proof}

\section{Boolean Anticoncentration, Noise Sensitivity and Gaussian Surface Area}
\label{sec:anti}
This section proves a Boolean anticoncentration estimate
for $k$-out-of-$m$ thresholds of standardized halfspaces, as needed for the
sandwiching argument.
Our main technique is a \emph{random thinning argument},
which reduces the boundary of a $k$-out-of-$m$ threshold to
the boundary of an $\OR$ of a random subset of the halfspaces.
We begin with the Boolean anticoncentration theorem
for intersections (i.e., $\AND$) of halfspaces.

\begin{theorem}[\citetext{\citealp[Theorem 7.1]{OST22}}]\label{thm:ost-boolean-anticonc}
Let matrix $A\in\R^{m\times n}$ satisfy the condition that
each row of $A$ has a $\tau$-regular subvector of $2$-norm one.
For $b\in\R^m$, define the orthant $\ort_b^{\wedge}$ as
\[
    \ort_b^{\wedge} \coloneqq \{x\in\R^m:x_i\le b_i\ \text{ for all }\ i\in[m]\} \enspace.
\]
Then, for all $b\in \R^m$ and $\Lambda\ge\tau$,
\[
    \Pr_{u\sim \pmcube n}
    \left[
        Au \in \ort_{b+\Lambda\cdot\onevec}^{\wedge}
        \setminus
        \ort_{b-\Lambda\cdot\onevec}^{\wedge}
    \right]
    = O\!\br{\Lambda\sqrt{\log m}} \enspace.
\]
\end{theorem}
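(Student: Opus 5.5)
Since this statement is quoted from \cite{OST22}, in the paper it serves only as a black box. If I had to prove it from scratch, I would follow the ``Gaussian anticoncentration plus invariance'' template. The Gaussian counterpart is available: for $g\sim\NN(0,I)$, the strip $\ort^{\wedge}_{b+\Lambda\cdot\onevec}\setminus\ort^{\wedge}_{b-\Lambda\cdot\onevec}$ has Gaussian mass $O(\Lambda\sqrt{\log m})$ by Nazarov's surface-area bound (cf.\ \cref{thm:intersection-halfspaces-ns-gsa}) applied to the polytope $\{y: Ay\le b\}$. The job is to transfer this to $u\sim\pmcube n$ without paying for the irregular coordinates.

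\textbf{Step 1 (removing the irregular parts).} For each row $i$, let $T_i$ be the coordinates of its $\tau$-regular unit-norm subvector. I would not try to condition on all coordinates outside $\bigcup_i T_i$, because the sets $T_i$ differ across rows. Instead I would hash coordinates into buckets with a random partition, in the spirit of \cite{OST22}. Each bucket then meets every row's tail in a set of coordinates whose $\ell_2$-mass is spread out. After fixing the coordinates outside a given bucket, each row becomes $b_i'$ plus a regular form in the bucket variables.

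\textbf{Step 2 (Lindeberg replacement at scale $\Lambda$).} I would sandwich the strip indicator between Bentkus mollifiers at scale $\lambda\approx\Lambda/\sqrt{\log m}$, whose $d$-th derivatives satisfy $\|\cdot^{(d)}\|_1=O_d(\sqrt{\log m}/\lambda)^d$ by \cite{Ben90}. I would then swap Boolean bits for Gaussians one bucket at a time. Matching first and second moments cancels the low-order Taylor terms, and the third-order remainder per coordinate is $\lesssim(\sqrt{\log m}/\lambda)^3\max_i|a_{ij}|^3$.

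\textbf{Main obstacle.} Summed naively over coordinates, this remainder gives roughly $\tau\log^{3/2}m/\lambda^3$. That is far from $O(\Lambda\sqrt{\log m})$ when $\Lambda$ is close to $\tau$, so a plain invariance argument cannot reach the stated regime $\Lambda\ge\tau$. I therefore expect the real work to be a genuinely Boolean Littlewood--Offord argument, not just invariance.

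My first attempt at this would write the strip event as a union over facets $i$ of ``$(Au)_i\in[b_i-\Lambda,b_i+\Lambda]$ and $(Au)_j\le b_j+\Lambda$ for $j\ne i$''. I would then bound the resulting sum using a random-flip or pairing of $u$ with a neighbour obtained by resampling a random tail coordinate. Because $|a_{ij}|\le\tau\le\Lambda$, each such step moves every coordinate by at most $\Lambda$. This allows a discrete analogue of the co-area formula, in which the expected number of facets crossed under a random walk step is controlled by a $\sqrt{\log m}$ maximal-inequality bound. Making this counting argument sharp, without losing extra $\log m$ factors, is the step I expect to be hardest.
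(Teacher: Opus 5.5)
\textbf{Verdict: genuine gap.} There is no in-paper proof to compare with. The paper imports this statement verbatim from \cite{OST22} (Theorem~7.1) and uses it as a black box, describing the proof there as a delicate analysis of how the Boolean cube interacts with the facets. Judged on its own, your proposal is not a proof, and the missing piece is the one you flag yourself.

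Steps~1--2 cannot reach $\Lambda$ close to $\tau$. As you note, a Lindeberg swap at mollifier scale $\lambda\approx\Lambda/\sqrt{\log m}$ only closes once $\Lambda$ is polynomially larger than $\tau$. Step~1 is also not available under the stated hypothesis. The theorem assumes only that row $i$ has \emph{some} $\tau$-regular unit-norm subvector, on an index set $T_i$ depending on $i$, with no control on the other entries. Restricting row $i$ to a hash bucket therefore does not give ``$b_i'$ plus a regular form'': the bucket still contains row $i$'s irregular coordinates, which may be arbitrarily many and arbitrarily large. The sparse-head property that makes bucketing useful in the hybrid argument of \cref{sec:threshold-mollifier-fool} is not a hypothesis here. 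For example, if row $1$ is regular on the first half of the coordinates and huge on the second half, and row $2$ is the reverse, then no coordinate can be Gaussianized for both rows at once.

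Your fallback argument contains a false step and leaves the essential estimate unproved.
\begin{itemize}
\item The claim ``because $|a_{ij}|\le\tau\le\Lambda$, each step moves every coordinate by at most $\Lambda$'' fails. The bound $|a_{ij}|\le\tau$ holds only for $j\in T_i$, and such a $j$ may lie outside $T_{i'}$ with $|a_{i'j}|$ unbounded; in the example above, every flip that is small for one row is huge for the other. Even for $j\in T_i$, the move is $2|a_{ij}|$, which can exceed $\Lambda$ when $\Lambda=\tau$. So a single resampling step can carry $(Au)_{i'}$ across the whole strip, and the discrete co-area comparison between strip mass and boundary crossings breaks down.
\item Even if $A$ were entirely $\tau$-regular, the easy bound only gives $O(m\Lambda)$. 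That bound conditions on the coordinates outside $T_i$, applies Berry--Esseen to get $\Pr[(Au)_i\in(b_i-\Lambda,b_i+\Lambda]]=O(\Lambda)$ using $\Lambda\ge\tau$, and takes a union bound over facets. Replacing $m$ by $\sqrt{\log m}$ is the entire content of the theorem.
\item A maximal inequality does not supply that saving. It controls the size of the sup-norm displacement in one step, not the probability that a step changes membership in the polytope. Bounding the latter by $O(\text{step size}\cdot\sqrt{\log m})$ is a Boolean noise-sensitivity statement essentially equivalent to what you are trying to prove.
\end{itemize}
In Gaussian space, the $\sqrt{\log m}$ comes from Nazarov's global argument. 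Roughly, it translates facets along their normals, charges nearby facets to essentially disjoint regions of the polytope, and discards facets beyond distance about $\sqrt{2\log m}$ by tail decay. You give no Boolean substitute for that mechanism. That Boolean Littlewood--Offord/Nazarov-type argument is precisely what \cite{OST22} supplies.
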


By the standard duality between $\AND$ and $\OR$, obtained by negating the halfspaces,
we also get the following anticoncentration bound for an $\OR$ of halfspaces.

\begin{corollary}\label{cor:ost-boolean-anticonc-or}
Let matrix $A\in\R^{m\times n}$ satisfy the condition that
each row of $A$ has a $\tau$-regular subvector of $2$-norm one.
For $b\in\R^m$, define
\[
    \ort_b^{\vee}
    \coloneqq
    \{x\in\R^m: x_i\le b_i\ \text{ for some } i\in[m]\}\enspace.
\]
Then, for all $b\in\R^m$ and $\Lambda\ge\tau$,
\[
    \Pr_{u\sim\pmcube n}
    \left[
        Au \in \ort_{b+\Lambda\cdot\onevec}^{\vee}
        \setminus
        \ort_{b-\Lambda\cdot\onevec}^{\vee}
    \right]
    =
    O\!\br{\Lambda\sqrt{\log m}}\enspace.
\]
\end{corollary}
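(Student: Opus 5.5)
The plan is to derive the corollary from \cref{thm:ost-boolean-anticonc} by complementation together with a negation of the halfspaces. First, I will identify the boundary region of the $\OR$ set with that of an orthant. For $x\in\R^m$ we have $x\notin\ort_b^{\vee}$ if and only if $x_i>b_i$ for all $i$, that is, $-x_i<-b_i$ for all $i$. Hence $\R^m\setminus\ort_b^{\vee}$ is the open orthant $\{y: y_i<-b_i\ \forall i\}$ evaluated at $y=-x$. The event $x\in\ort_{b+\Lambda\onevec}^{\vee}\setminus\ort_{b-\Lambda\onevec}^{\vee}$ says two things:
\begin{itemize}
\item $x\notin \ort^{\vee}_{b-\Lambda\onevec}$, i.e.\ $-x_i< -b_i+\Lambda$ for every $i$;
\item $x\in\ort^{\vee}_{b+\Lambda\onevec}$, i.e.\ it is not the case that $-x_i<-b_i-\Lambda$ for all $i$.
\end{itemize}
Writing $y=-x$ and $b'=-b$, the event becomes $y\in\{y_i<b'_i+\Lambda\ \forall i\}\setminus\{y_i<b'_i-\Lambda\ \forall i\}$. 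This set is contained in $\ort^{\wedge}_{b'+\Lambda\onevec}\setminus\ort^{\wedge}_{b'-2\Lambda\onevec}$, since strict versus non-strict inequalities can be absorbed by slightly enlarging the width. Alternatively, the strict orthant with threshold $b'-\Lambda$ contains the closed orthant with threshold $b'-\Lambda-\eta$ for every $\eta>0$, so one can take width $\Lambda+\eta$ and use $2\Lambda$.

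Second, I apply \cref{thm:ost-boolean-anticonc} to the matrix $-A$. Negating a row preserves the property that it has a $\tau$-regular subvector of $2$-norm one, because regularity and norms depend only on absolute values. To handle the asymmetric widths, I recenter at $b''=b'-\tfrac{\Lambda}{2}\onevec$ with width $\tfrac{3}{2}\Lambda\ge\tau$. Then
\[
\ort^{\wedge}_{b'+\Lambda\onevec}\setminus\ort^{\wedge}_{b'-2\Lambda\onevec}
=\ort^{\wedge}_{b''+\frac32\Lambda\onevec}\setminus\ort^{\wedge}_{b''-\frac32\Lambda\onevec}.
\]
Since $(-A)u=-Au$, the theorem gives the bound $O(\tfrac32\Lambda\sqrt{\log m})=O(\Lambda\sqrt{\log m})$.

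There is no real obstacle in this argument. The only care needed is the bookkeeping of strict versus non-strict inequalities at the boundary, which I handle by the slight width enlargement and recentering above, losing only a constant factor. The condition $\Lambda\ge\tau$ is preserved because the width only grows.
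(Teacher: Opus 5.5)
Your proof is correct and takes the same route as the paper: negate to $-A$ and $-b$, absorb the strict versus non-strict boundary by enlarging the width by a constant factor, and invoke \cref{thm:ost-boolean-anticonc}. The only difference is cosmetic. The paper notes directly that the event implies $-Au\in\ort^{\wedge}_{-b+2\Lambda\cdot\onevec}\setminus\ort^{\wedge}_{-b-2\Lambda\cdot\onevec}$, a symmetric window of width $2\Lambda$, so your recentering at $b''$ with width $\tfrac32\Lambda$ is valid but not needed.
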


\begin{proof} 
If $Au\in \ort_{b+\Lambda\cdot\onevec}^{\vee}\setminus\ort_{b-\Lambda\cdot\onevec}^{\vee}$,
then $(Au)_i>b_i-\Lambda$ for all $i$, and $(Au)_i\le b_i+\Lambda$ for some $i$.
Hence, $-Au\in\ort_{-b+2\Lambda\cdot\onevec}^{\wedge}\setminus\ort_{-b-2\Lambda\cdot\onevec}^{\wedge}$.
Therefore, by \cref{thm:ost-boolean-anticonc} with matrix $-A$, threshold $-b$, and width parameter $2\Lambda$,
the desired probability is
$O(2\Lambda\sqrt{\log m})=O(\Lambda\sqrt{\log m})$.
\end{proof}

We now prove an anticoncentration bound for $k$-out-of-$m$ thresholds.

\begin{theorem}\label{thm:threshold-ost-anticonc}
Let matrix $A\in\R^{m\times n}$ satisfy that
each row of $A$ has a $\tau$-regular subvector of $2$-norm one.
Then for every $1\leq k \leq m/2$, $b\in\R^m$ and $\Lambda\ge\tau$,
\[
    \Pr_{u\sim \pmcube n }
    \Br{
        Au\in \ort_{k,b+\Lambda\cdot\onevec} 
        \setminus
        \ort_{k,b-\Lambda\cdot\onevec} 
    }
    =
    O\!\br{ k \cdot \Lambda \sqrt{\log(m/k)}} \enspace,
\]
where $ \ort_{k,b} $ is the $k$-out-of-$m$ threshold set defined in \cref{def:threshold-set}.
\end{theorem}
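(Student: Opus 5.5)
The proof will be the random thinning argument sketched in the overview, reducing to \cref{cor:ost-boolean-anticonc-or}. Fix $u$ with $Au\in\ort_{k,b+\Lambda\cdot\onevec}\setminus\ort_{k,b-\Lambda\cdot\onevec}$. Set
\[
S_+ := \{i : (Au)_i\le b_i+\Lambda\}, \qquad S_- := \{i:(Au)_i\le b_i-\Lambda\}.
\]
Then $S_-\subseteq S_+$, $|S_+|\ge k$ and $|S_-|\le k-1$. Draw a random $R\subseteq[m]$ that contains each index independently with probability $p=1/k$, independent of $u$. Write $A_R$ and $b_R$ for the restrictions to the rows in $R$, and $\ort^{\vee}_{b_R}$ for the $\OR$ set in dimension $|R|$.

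\textbf{Step 1: a pointwise lower bound.} Let $E_R$ be the event $R\cap S_-=\emptyset$ and $R\cap S_+\neq\emptyset$. On $E_R$, every $i\in R$ has $(Au)_i>b_i-\Lambda$, and some $i\in R$ has $(Au)_i\le b_i+\Lambda$. Hence
\[
A_Ru\in\ort^{\vee}_{b_R+\Lambda\cdot\onevec}\setminus\ort^{\vee}_{b_R-\Lambda\cdot\onevec}.
\]
Also,
\[
\Pr_R[E_R]=(1-p)^{|S_-|}\bigl(1-(1-p)^{|S_+\setminus S_-|}\bigr)\ge (1-p)^{k-1}\,p\ge 1/(\e k),
\]
using $|S_+\setminus S_-|\ge1$ and $(1-1/k)^{k-1}\ge 1/\e$.

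\textbf{Step 2: averaging.} By Step 1,
\[
\Pr_u[\text{bad}] \le \e k\cdot \E_R\Pr_u\bigl[A_Ru\in\ort^{\vee}_{b_R+\Lambda\cdot\onevec}\setminus\ort^{\vee}_{b_R-\Lambda\cdot\onevec}\bigr].
\]
For each fixed nonempty $R$, every row of $A_R$ still has a $\tau$-regular subvector of norm one. So \cref{cor:ost-boolean-anticonc-or} bounds the inner probability by $C\Lambda\sqrt{\log\max(|R|,2)}$, where the max handles $|R|=1$; the term for $R=\emptyset$ vanishes. It remains to show
\[
\E_R\sqrt{\log\max(|R|,2)} = O\bigl(\sqrt{\log(m/k)}\bigr).
\]
The map $t\mapsto\sqrt{\log\max(t,2)}$ is concave for $t\ge e$ and bounded by a constant below that, so Jensen applies to $\max(|R|,\e)$ and gives the bound $O\bigl(\sqrt{\log(\e+m/k)}\bigr)$. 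Since $k\le m/2$ gives $m/k\ge2$, this is $O(\sqrt{\log(m/k)})$. Alternatively, a Chernoff bound shows $|R|\le 2m/k+O(\log)$ except with tiny probability, and $|R|\le m$ always.

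\textbf{Expected obstacle.} The main point of care is the constant in the corollary for small $|R|$, i.e.\ interpreting $\sqrt{\log m}$ as $\sqrt{\log\max(m,2)}$. This is the standard convention, since even a single halfspace has $O(\Lambda)$ anticoncentration from regularity. The other point is making the concavity/Jensen step rigorous near $|R|\in\{0,1,2\}$. Neither step is deep; the substantive input is entirely the OST polytope theorem.
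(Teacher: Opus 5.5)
Your proposal is correct and is essentially the paper's proof. It uses the same random thinning with $p=1/k$, the same event $R\cap S_-=\emptyset$, $R\cap S_+\neq\emptyset$ of probability at least $p(1-p)^{k-1}$, and the same reduction to \cref{cor:ost-boolean-anticonc-or} followed by Jensen over $|R|$; your handling of small $|R|$ (via $\sqrt{\log\max(|R|,2)}$ and Jensen applied to $\max(|R|,\e)$) is slightly more careful than the paper, which applies Jensen to $\sqrt{\log|R|}$ directly.
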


\begin{proof}
For a fixed $u\in\pmcube n$, define
\[
    S_-(u) \coloneqq \{i:A_i u\le b_i-\Lambda\}
    \qquad\text{and}\qquad
    S_+(u) \coloneqq \{i:A_i u\le b_i+\Lambda\}
\]
to be the sets of satisfied halfspaces at the inner and outer thresholds, respectively.
Then,
\[
	\ind\!\Br{Au\in \ort_{k,b+\Lambda\cdot\onevec} \setminus \ort_{k,b-\Lambda\cdot\onevec}}
    =
    \ind\bigl[|S_+(u)|\ge k \ \wedge\ |S_-(u)|<k\bigr] \eqqcolon \EE(u) \enspace.
\]

Now, we randomly pick a subset $R\subseteq[m]$ by adding each row index $i\in [m]$ into $R$
independently with probability $p = 1/k$.
Define
\[
	\EE_R(u) \coloneqq \ind\bigl[R\cap S_+(u)\ne\emptyset ~\wedge~ R\cap S_-(u)=\emptyset\bigr]
\]
to indicate the event that $R$ contains at least one index from $S_+(u)$ but no index from $S_-(u)$.
Note that for a fixed $u\in\pmcube n$, if $\EE(u) = 1$,
then $|S_+(u)|\ge k$, $|S_-(u)|\le k-1$, and hence $|S_+(u)\setminus S_-(u)| \ge 1$.
In this case, we further have
\begin{align*}
    \Pr_R[\EE_R(u) = 1] = (1-p)^{|S_-(u)|} \cdot \left(1-(1-p)^{|S_+(u)\setminus S_-(u)|}\right)
    \ge p(1-p)^{k-1} \enspace.
\end{align*}
This means that for every $u\in\pmcube n$,
\[
	p(1-p)^{k-1}\cdot\EE(u) ~\le~ \Pr_R[\EE_R(u) = 1] \enspace.
\]
Taking expectation over $u\sim \pmcube n $ gives
\begin{equation}\label{eq:key-equation}
    p(1-p)^{k-1}\E_{u\sim \pmcube n}[\EE(u)]
    \le
    \E_{u\sim \pmcube n}\Br{\Pr_R[\EE_R(u) = 1]}
    =
    \Pr_{u\sim \pmcube n, R}\Br{\EE_R(u) = 1}\enspace.
\end{equation}

We next upper bound the right-hand side of \cref{eq:key-equation}.
To this end, we fix $R$ and upper bound $\Pr_u[\EE_R(u) = 1]$.
If $R$ is empty, then $\EE_R(u) = 0$ for all $u$, and thus $\Pr_u[\EE_R(u) = 1] = 0$.
Below, we assume $R$ is nonempty.
The key observation is that $\EE_R(u) = 1$ implies that
$u$ belongs to the $\pm\Lambda$-boundary of
the $\OR$ of the halfspaces $\{A_i u\le b_i\}_{i\in R}$.
Indeed, on this event,
there exists $i\in R$ such that $A_{i} u\le b_{i}+\Lambda$,
and there is no $i\in R$ such that $A_{i} u\le b_{i}-\Lambda$.
Let $A'\in\R^{|R|\times n}$ be the submatrix of $A$ with rows indexed by $R$,
and $b'\in\R^{|R|}$ be the corresponding subvector of $b$.
Then, $\EE_R(u) = 1$ implies
$A' u \in \ort_{b'+\Lambda\cdot\onevec}^{\vee}\setminus\ort_{b'-\Lambda\cdot\onevec}^{\vee}$.
Applying \cref{cor:ost-boolean-anticonc-or}, we get
\begin{align*}
    \Pr_u[\EE_R(u) = 1]
    &\le
    \Pr_u\Br{A' u \in \ort_{b'+\Lambda\cdot\onevec}^{\vee}\setminus\ort_{b'-\Lambda\cdot\onevec}^{\vee}}
    =
    O\!\br{\Lambda\sqrt{\log(|R|)}} \enspace.
\end{align*}

Combining the above bound with \cref{eq:key-equation}, we have
\begin{align*}
    p(1-p)^{k-1}\E_u[\EE(u)]
    &\le
    \E_R\Br{\Pr_u[\EE_R(u) = 1]}
    \le
    O\!\br{\Lambda \cdot \E_R\Br{\sqrt{\log(|R|)}}} 
    \le O\!\br{\Lambda \sqrt{\log(mp)}} \enspace,
\end{align*}
where the last inequality follows from Jensen's inequality and the fact that $\E_R[|R|] = mp$.
Since $p = 1/k$, we have $p(1-p)^{k-1} = \Omega(1/k)$, and thus
\[
    \Pr_{u\sim \pmcube n }
    \left[
        Au\in \ort_{k,b+\Lambda\cdot\onevec} 
        \setminus
        \ort_{k,b-\Lambda\cdot\onevec} 
    \right]
    =\E_u[\EE(u)]
    \le
    O\!\br{ k \cdot \Lambda \sqrt{\log(m/k)}}\enspace. \qedhere
\]
\end{proof}

The random thinning argument can also be used to
give bounds on the noise sensitivity and Gaussian surface area
of $k$-out-of-$m$ thresholds of halfspaces.
A similar argument yields the following theorem.

\begin{restatable}{theorem}{nsgsa}\label{thm:threshold-ns-gsa}
    Let \(m,k\) be integers with \(1 \le k \le m/2\),
    and let $F:\R^n \to \{0,1\}$ be a $k$-out-of-$m$ threshold of halfspaces.
    Then,
    \begin{itemize}
    \item \(\ns_\delta(F) = O( k \sqrt{\delta\cdot \log(m/k)} )\) for every
    \(\delta\in(0,1)\), and
    \item \(\gsa(F) = O(k \sqrt{\log(m/k)})\).
    \end{itemize}
\end{restatable}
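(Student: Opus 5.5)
We reuse the random thinning argument from the proof of \cref{thm:threshold-ost-anticonc}, now applied to the boundary events that define noise sensitivity and Gaussian surface area. Write $F(x)=\thr_{m,k}(f_1(x),\dots,f_m(x))$. For an input $x$ let $S(x)\coloneqq\{i: f_i(x)=1\}$, so that $F(x)=\ind[|S(x)|\ge k]$. Sample $R\subseteq[m]$ by including each index independently with probability $p=1/k$. Define $G_R(x)\coloneqq\OR_{i\in R} f_i(x)=\ind[R\cap S(x)\neq\emptyset]$. The function $1-G_R$ is an intersection of the $|R|$ complementary halfspaces. Both noise sensitivity and Gaussian surface area are invariant under complementation, so \cref{thm:intersection-halfspaces-ns-gsa} bounds them for $G_R$ by $C\sqrt{\delta\log|R|}$ and $C\sqrt{\log|R|}$ respectively. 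We use the convention $\sqrt{\log 1}=0$, and note that $G_\emptyset\equiv0$.

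\emph{Noise sensitivity.} Fix a pair $(x,y)$ with $F(x)\ne F(y)$. Without loss of generality, $|S(x)|\ge k>|S(y)|$. Consider the event that $R$ meets $S(x)\setminus S(y)$ but avoids $S(y)$. On this event $G_R(x)=1$ and $G_R(y)=0$. As in the anticoncentration proof, its probability is
\[
(1-p)^{|S(y)|}\bigl(1-(1-p)^{|S(x)\setminus S(y)|}\bigr)\ge p(1-p)^{k-1}=\Omega(1/k),
\]
using $|S(y)|\le k-1$ and $|S(x)\setminus S(y)|\ge1$. Hence $\ind[F(x)\ne F(y)]\le O(k)\Pr_R[G_R(x)\ne G_R(y)]$ pointwise. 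Average over $x\sim\pmcube n$, $y\sim N_\delta(x)$, and swap the order of expectations. This gives $\ns_\delta(F)\le O(k)\,\E_R[\ns_\delta(G_R)]\le O(k)\,\E_R\bigl[\sqrt{\delta\log|R|}\bigr]$. The map $t\mapsto\sqrt{\log t}$ is not concave near $t=1$, so we instead apply Jensen to $\sqrt{\log(1+|R|)}$, which is concave for $t\ge0$. Since $\E|R|=m/k\ge2$, this yields $O(k\sqrt{\delta\log(m/k)})$.

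\emph{Gaussian surface area.} Let $K=K(F)$ and $K_R=K(G_R)$. The key pointwise claim is: if $x\in K^\eta\setminus K$, then with probability $\Omega(1/k)$ over $R$ we have $x\in K_R^\eta\setminus K_R$. To see this, take $x'\in K$ with $\|x-x'\|\le\eta$. Then $|S(x)|\le k-1$ and $|S(x')|\ge k$. The event that $R$ avoids $S(x)$ and meets $S(x')\setminus S(x)$ has probability at least $p(1-p)^{k-1}$. On this event $G_R(x)=0$ and $G_R(x')=1$, so $x\in K_R^\eta\setminus K_R$. Integrating against the Gaussian measure gives
\[
\operatorname{vol}(K^\eta\setminus K)\le O(k)\,\E_R\bigl[\operatorname{vol}(K_R^\eta\setminus K_R)\bigr].
\]
Next, divide by $\eta$ and let $\eta\to0^+$.

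The main obstacle is exchanging the limit with $\E_R$. Since $R$ ranges over finitely many sets, the expectation is a finite sum, so the limit passes inside provided each limit exists. The definition of $\gsa$ presumes the limit exists for $F$. For the outer parts $K_R^\eta\setminus K_R$ of the unions $K_R$, we need the limit, or at least the $\limsup$, to be controlled by the known bound. Complementation turns the outer boundary of a union into the inner boundary of the intersection $1-G_R$. So we need a statement that both the inner and the outer $\eta$-boundary mass of a polytope are $O(\eta\sqrt{\log m})$ asymptotically. We would take this from Nazarov's bound \cite{Naz04}, which controls both sides, or else work with $\limsup$ throughout and use that convention. With this in hand, Jensen as before gives $\gsa(F)=O(k\sqrt{\log(m/k)})$.
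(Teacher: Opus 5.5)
Your noise-sensitivity argument is the paper's own. Your switch to $\sqrt{\log(1+|R|)}$ in the Jensen step is slightly more careful than the paper's $\sqrt{\log|R|}$, which degenerates when $|R|\in\{0,1\}$.

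For the Gaussian surface area you take a genuinely different route. The paper assumes the hyperplanes $H_i=\{A_ix=b_i\}$ are in general position. It writes $\gsa(F)=\sum_i\int_{H_i}\ind[s_i(x)=k-1]\,\phi_n\,\d\sigma_i$ and $\gsa(F_R)=\sum_i\int_{H_i}\EE_{R,i}\,\phi_n\,\d\sigma_i$, compares the integrands via $\E_R[\EE_{R,i}(x)]=p(1-p)^{s_i(x)}$, and handles degenerate configurations only by a perturbation remark. You instead apply the thinning to the sets $K^\eta\setminus K$ at each fixed $\eta$, using that $K$ is closed to pick $x'$. You then pass $\limsup_{\eta\to0^+}$ through the finite average over $R$. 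This works straight from the definition and avoids general position and the perturbation-and-limit step. The paper's route, in exchange, yields an explicit facet-by-facet boundary formula, which it reuses for the lower bound on $\gsa(F_2)$.

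The price of your route is the point you flag. You land on the outer $\eta$-boundary of the union $K_R$, i.e.\ the inner $\eta$-boundary of the open polytope $P_R=\bigcap_{i\in R}\{A_ix>b_i\}$. The paper needs exactly the same fact, hidden in its claim that $\gsa$ is unchanged under complementation. You can close it without a two-sided form of Nazarov's theorem:
\begin{itemize}
\item Normalize $\|A_i\|_2=1$ (constant halfspaces are trivial) and set $d(x)\coloneqq\operatorname{dist}(x,P_R^c)=\min_{i\in R}(A_ix-b_i)$ on $P_R$.
\item $d$ has unit gradient almost everywhere, so by the coarea formula the Gaussian mass of $\{x\in P_R: d(x)\le\eta\}$ equals $\int_0^\eta\int_{\{d=s\}}\phi_n\,\d\mathcal{H}^{n-1}\,\d s$.
\item Each level set $\{d=s\}$ lies in the boundary of the polytope $\bigcap_{i\in R}\{A_ix\ge b_i+s\}$, which has at most $|R|$ facets.
\item Nazarov's bound on Gaussian surface measure then gives $O(\eta\sqrt{\log(1+|R|)})$, which is precisely what your Jensen step needs.
\end{itemize}
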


We also provide lower bounds on the noise sensitivity
and Gaussian surface area of thresholds of halfspaces.
The proofs of both the upper and lower bounds
are given in \appref{app:threshold-ns-gsa}.

\begin{restatable}{theorem}{nslb}\label{prop:threshold-ns-lower-bound}
    There exists a universal constant \(c_0>0\) such that for all integers
    \(m,k\) with \(1 \le k \le m/2\) and
    \(0<\delta\le c_0/(k^2\log(m/k))\),
    there exists \(n_0=n_0(m,k,\delta)\)
    such that for every \(n\ge n_0\),
    some \(k\)-out-of-\(m\) threshold of halfspaces \(F:\R^n\to\{0,1\}\) satisfies
    \(\ns_\delta(F) = \Omega \bigl(\sqrt{\delta}(k+\sqrt{k\cdot\log(m/k)}) \bigr)\).
\end{restatable}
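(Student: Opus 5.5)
The statement has two terms, and I would construct one threshold $F_1$ achieving each term separately. Since $\sqrt{\delta}\bigl(k+\sqrt{k\log(m/k)}\bigr)\le 2\sqrt{\delta}\max\{k,\sqrt{k\log(m/k)}\}$, it suffices to take whichever construction is larger. Both constructions use the same template.
\begin{itemize}
\item Split the $n$ input bits into disjoint blocks $B_1,B_2,\dots$.
\item On block $B_j$, let $s_j(x)=|B_j|^{-1/2}\sum_{i\in B_j}x_i$.
\item Build every halfspace from a single $s_j$. The constant functions $0$ and $1$ are also halfspaces, via $w=\zerovec$ and $\theta=-1$ or $\theta=1$, which lets me pad up to exactly $m$ halfspaces.
\end{itemize}
If $x\sim\pmcube n$ and $y\sim N_\delta(x)$, the pairs $(s_j(x),s_j(y))$ are independent across $j$. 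By the multivariate CLT (Berry--Esseen), as $|B_j|\to\infty$ each pair converges to a bivariate Gaussian with correlation $1-2\delta$. Every probability I compute is a finite combination of orthant and slab probabilities of these pairs. I would therefore compute $\ns_\delta$ in the Gaussian limit and choose $n_0$ large enough that the CLT error is below half the target bound.

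\textbf{The $k\sqrt\delta$ term.} Take $k$ blocks. For each $j\le k$, use the pair $h_j=\ind[s_j\le t]$ and $h'_j=\ind[-s_j\le t]$. Pad with $m-2k\ge 0$ constant-$0$ halfspaces.
\begin{itemize}
\item Each pair contributes $1+\ind[|s_j|\le t]$ satisfied halfspaces.
\item So the $(k+1)$-out-of-$m$ threshold equals $\bigvee_{j\le k}\ind[|s_j|\le t]$.
\item To stay within the $k$-out-of-$m$ convention, I would instead add one constant-$0$ pair slot and adjust. The simplest option is to replace one pair by a single constant-$1$ halfspace, so that with threshold $k$ the function is an OR of $k-1$ slabs. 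This gives the same asymptotics.
\end{itemize}
Choose $t=c/k$, so that each slab has Gaussian mass about $2c\phi(0)/k$ and the OR is bounded away from $0$ and $1$. Under noise, $s_j$ moves by about $\sqrt{\delta}$. For a single slab of width $\Theta(1/k)\gg\sqrt\delta$ (which uses $\delta\le c_0/k^2$), its indicator flips with probability $\Theta(\sqrt\delta)$. The event that slab $j$ flips while all other slabs stay off under both $x$ and $y$ has probability $\Theta(\sqrt\delta)\cdot\Omega(1)$. These events are disjoint across $j$ up to a second-order correction $O(k^2\delta)$, which is negligible by the hypothesis on $\delta$. Summing over $j$ gives $\ns_\delta(F_1)=\Omega(k\sqrt\delta)$.

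\textbf{The $\sqrt{\delta k\log(m/k)}$ term.} Take $m$ blocks and $f_j=\ind[s_j\le\theta]$ with $\Phi(\theta)=q\coloneqq k/m\le 1/2$. The count $N=\sum_j f_j$ is then roughly $\mathrm{Bin}(m,q)$ with mean $k$. The steps are:
\begin{itemize}
\item $F=\ind[N\ge k]$ changes when exactly one $f_j$ flips and $N$ is at $k$ or $k-1$. By the local CLT / binomial mode estimate, this pivotality has probability $\Omega(1/\sqrt k)$.
\item A single $f_j$ flips with probability $\Theta(\sqrt\delta\,\phi(\theta))$ (for $\sqrt\delta\ll 1/|\theta|$, again guaranteed by $\delta\le c_0/(k^2\log(m/k))$). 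The Mills ratio gives $\phi(\theta)=\Theta\bigl(q\sqrt{\log(1/q)}\bigr)$.
\item Multiple flips occur with probability $O\bigl((m\sqrt\delta\,\phi(\theta))^2\bigr)$, which is smaller by a factor $\sqrt{\delta k\log(m/k)}\le\sqrt{c_0}$.
\end{itemize}
Hence
\[
\ns_\delta(F)\ \ge\ \Omega\Bigl(m\cdot\tfrac{1}{\sqrt k}\cdot\sqrt\delta\,\tfrac{k}{m}\sqrt{\log(m/k)}\Bigr)\ =\ \Omega\bigl(\sqrt{\delta k\log(m/k)}\bigr).
\]

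\textbf{Main obstacle.} I expect the hardest step to be the pivotality estimate in the second construction, namely $\Pr[N\in\{k-1,k\}]=\Omega(1/\sqrt k)$, made uniform over all $1\le k\le m/2$ and conditioned on a given $f_j$ flipping. I would handle this by conditioning on block $j$, so that the remaining count is exactly $\mathrm{Bin}(m-1,q)$ in the Gaussian limit, and then invoking the standard lower bound on the binomial mode probability. The remaining steps are bookkeeping: controlling the multiple-flip corrections by $c_0$, and absorbing the CLT errors by choosing $n_0(m,k,\delta)$.
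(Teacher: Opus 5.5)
Your overall plan follows the paper's: block sums approximating correlated Gaussians, the larger of a $k\sqrt\delta$ construction and a $\sqrt{\delta k\log(m/k)}$ construction, and for the latter $m$ independent coordinates thresholded at the $k/m$ quantile together with $\binom{m-1}{k-1}q^{k-1}(1-q)^{m-k}=\Omega(1/\sqrt k)$. However, your control of multiple flips in the second construction fails as stated. Since $\phi(\theta)=\Theta(q\sqrt{\log(1/q)})$, we have $m\phi(\theta)=\Theta(k\sqrt{\log(m/k)})$. So the probability of two or more flips is $\Theta\bigl((m\sqrt\delta\,\phi(\theta))^2\bigr)=\Theta(k^2\delta\log(m/k))$. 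Its ratio to your main term $\Theta(\sqrt{\delta k\log(m/k)})$ is $\Theta(k^{3/2}\sqrt{\delta\log(m/k)})$, not $\sqrt{\delta k\log(m/k)}$; you are off by a factor of $k$, and the hypothesis only bounds this ratio by $O(\sqrt{c_0k})$. Concretely, at $\delta=c_0/(k^2\log(m/k))$ the main term is $\Theta(\sqrt{c_0/k})$ while the multiple-flip probability is $\Theta(c_0)$. So once $k$ exceeds a constant times $1/c_0$, subtracting a global multiple-flip bound leaves nothing. Conditioning first on block $j$ flipping does not help either. Another block then flips with probability $\Theta(k\sqrt{\delta\log(m/k)})$, which may be of order $\sqrt{c_0}$ and is therefore not small compared with the pivotality probability $\Theta(1/\sqrt k)$. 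In your first construction the analogous correction is harmless, because $k^2\delta\ll k\sqrt\delta$, and your events there are in fact exactly disjoint.

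The missing idea is to charge the requirement that no other block flips multiplicatively, conditionally on the pivotal configuration, as the paper does. For the limiting pair $(s,s')$ with correlation $1-2\delta$, define:
\begin{itemize}
\item $a=\Pr[s>\theta,\ s'\le\theta]$, which equals the on-to-off probability by exchangeability;
\item $b=\Pr[s\le\theta,\ s'\le\theta]=q-a$;
\item $r=\Pr[s>\theta,\ s'>\theta]=1-q-a$.
\end{itemize}
Consider the event that block $i$ switches from off to on, exactly $k-1$ other blocks are on under both $x$ and $y$, and all remaining blocks are off under both. By independence across blocks, it has probability exactly $a\binom{m-1}{k-1}b^{k-1}r^{m-k}$. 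These events are disjoint over $i$, and each forces $F(x)\ne F(y)$. You assert $a=\Theta(\sqrt\delta\,\phi(\theta))$; the \emph{upper} bound $a=O(q\sqrt{\delta\log(1/q)})$ is the direction actually needed here. It gives $(1-a/q)^{k-1}(1-a/(1-q))^{m-k}\ge 1-O(k\sqrt{\delta\log(m/k)})\ge 1-O(\sqrt{c_0})$, so the extra flips cost only a constant factor. Summing over $i$ then yields $\Omega(\sqrt{\delta k\log(m/k)})$.

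Separately, fix the padding in your first construction. With $k-1$ slab pairs plus a constant-$1$ halfspace, at least $k$ halfspaces are always satisfied, so the $k$-out-of-$m$ threshold would be identically $1$. Use $k-1$ pairs padded only with constant-$0$ halfspaces, and let the second construction cover $k=1$. With that change, your OR of independent slabs of width $\Theta(1/k)\gg\sqrt\delta$ is a valid alternative to the paper's single coordinate with $2k-1$ alternating thresholds spaced $\Theta(\sqrt\delta)$ apart.
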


\begin{restatable}{theorem}{gsalb}\label{prop:threshold-surface-lower-bound}
    For all integers \(n,m,k\) with \(1 \le k \le m/2\) and \(m \le n\),
    there exists a \(k\)-out-of-\(m\) threshold of halfspaces
    \(F:\R^n\to\{0,1\}\) such that
    $\gsa(F)=\Omega (k + \sqrt{k \cdot \log (m/k)})$.
\end{restatable}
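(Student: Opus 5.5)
The plan is to give two separate constructions: one with $\gsa = \Omega(k)$, and one with $\gsa = \Omega(\sqrt{k\log(m/k)})$. For given $m,k$ we use whichever is larger. This suffices because $\max\{a,b\}\ge (a+b)/2$. In both constructions, unused halfspaces are padded with the constantly-false halfspace $\ind[\langle 0,x\rangle\le -1]$, which does not change the count of satisfied halfspaces. In both cases the Gaussian surface area is computed from the definition, using that the boundary of $K(F)$ is a finite union of pieces of hyperplanes. For such sets, $\gsa(F)$ equals the integral of the Gaussian density over the boundary pieces across which $F$ actually switches value.

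\textbf{The $\Omega(k)$ term.} We work in one dimension, with all halfspaces depending only on $x_1$. Choose points $-1<t_1<t_2<\dots<t_{2k}<1$ and halfspaces of the two forms $\ind[x_1\le a]$ (decreasing count) and $\ind[-x_1\le -a']$ (increasing count). Arrange them so that the number $c(x_1)$ of satisfied halfspaces steps alternately down and up by one at consecutive $t_j$'s, taking values $k,k-1,k,k-1,\dots$.

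Concretely, start with $k-1$ halfspaces that are true for all $x_1\in[-1,1]$, placing their thresholds far outside. Then add one "$x_1\le t_{2j-1}$"-type halfspace and one "$x_1\ge t_{2j}$"-type halfspace for roughly $k/2$ values of $j$. Together with the always-true block this uses at most $2k\le m$ halfspaces.

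Then $F=\ind[c(x_1)\ge k]$ changes value at $\Omega(k)$ points in $[-1,1]$. Each such point contributes at least $\gpdf(1)$ to $\gsa(F)$, so $\gsa(F)=\Omega(k)$. The main care needed is the bookkeeping that the count really alternates between $k$ and $k-1$ using at most $m$ halfspaces; if needed, we shrink the number of oscillations by a constant factor.

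\textbf{The $\Omega(\sqrt{k\log(m/k)})$ term.} Use the coordinate halfspaces $f_i(x)=\ind[x_i\le\theta]$ for $i\in[m]$, which is possible since $m\le n$. Choose $\theta$ so that $q:=\gcdf(\theta)=k/m\le 1/2$. The boundary of $K(F)$ consists of the pieces $\{x_i=\theta\}$ on which exactly $k-1$ of the other coordinates satisfy $x_j\le\theta$. By independence of the coordinates,
\[
\gsa(F)=m\,\gpdf(\theta)\cdot\Pr\bigl[\mathrm{Bin}(m-1,q)=k-1\bigr].
\]
Since $k-1$ lies within $O(\sqrt k)$ of the mean $(m-1)k/m$, a local central limit estimate (or Stirling's formula) gives $\Pr[\mathrm{Bin}(m-1,q)=k-1]=\Omega(1/\sqrt k)$. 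The Mills-ratio estimate gives $\gpdf(\theta)=\Omega\bigl(q\sqrt{\log(1/q)}\bigr)$ whenever $\log(1/q)$ exceeds a suitable constant. Combining these,
\[
\gsa(F)=\Omega\bigl(m\cdot (k/m)\sqrt{\log(m/k)}/\sqrt k\bigr)=\Omega\bigl(\sqrt{k\log(m/k)}\bigr).
\]
When $\log(m/k)=O(1)$, this term is dominated by the $\Omega(k)$ construction, so nothing is lost.

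I expect the main technical obstacle to be the uniform binomial point-probability lower bound $\Omega(1/\sqrt k)$ across all $1\le k\le m/2$, including the small-$k$ regime where Poisson behaviour replaces the Gaussian approximation. There I would argue directly from $\binom{m-1}{k-1}q^{k-1}(1-q)^{m-k}$ via Stirling's formula.
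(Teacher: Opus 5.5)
Your overall plan matches the paper's. You combine two constructions via $\max\{a,b\}\ge(a+b)/2$. The first is a one-dimensional construction in which the number of satisfied halfspaces toggles between $k-1$ and $k$ at $\Theta(k)$ points of a fixed interval. The second uses the coordinate halfspaces at the $k/m$-quantile, with $\gsa(F)=m\,\gpdf(\theta)\Pr[\mathrm{Bin}(m-1,k/m)=k-1]$, a Mills-ratio lower bound on $\gpdf(\theta)$, and Stirling for the binomial point mass. Your second construction is correct and is the paper's up to reflecting $x_i\le\theta$ to $x_j\ge t$. The paper proves $\gpdf(t)\ge C p\sqrt{\log(1/p)}$ for all $p\le 1/2$ by treating $t\le 1$ separately, whereas you hand the bounded-$\log(m/k)$ regime to the $\Omega(k)$ construction; either works.

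However, the concrete recipe you give for the $\Omega(k)$ construction is miscounted and fails as written. For $x_1<t_1$, all $J$ of your ``$x_1\le t_{2j-1}$'' halfspaces are satisfied and none of the ``$x_1\ge t_{2j}$'' ones are. Together with the $k-1$ always-true halfspaces, the count therefore starts at $k-1+J$ and then alternates between $k-1+J$ and $k-2+J$. For $J\ge 2$, in particular $J\approx k/2$, it never drops below $k$, so $F\equiv 1$ on $[-1,1]$ and the construction produces no boundary there. Shrinking the number of oscillations by a constant factor does not help: only $J=1$ works with your padding, giving $O(1)$ switches.

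What is wrong is the size of the padding block, which must be $k-J$ rather than $k-1$. One fix is to take $J=k$ pairs with no padding, using $2k\le m$ halfspaces; the count then toggles between $k$ and $k-1$ at all $2k$ points. Alternatively, use the paper's arrangement of $2k-1$ halfspaces: $\ind[x_1\ge t_j]$ for odd $j$ and $\ind[x_1\le t_j]$ for even $j$. Its count is $k-1$ to the left of $t_1$ (the $k-1$ even-indexed ones) and toggles between $k-1$ and $k$ at every $t_j$, giving $\gsa(F)\ge\sum_{j}\gpdf(t_j)=\Omega(k)$. If you do pad, the genuinely constant halfspace $\ind[\langle\zerovec,x\rangle\le 1]$ is cleaner than halfspaces with far-away thresholds, though extra boundary would not hurt a lower bound. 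With this correction your proof goes through and coincides with the paper's.
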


Furthermore, the following theorem shows that
in a small-\(\Lambda\) regime,
the Boolean anticoncentration probability
in \cref{thm:threshold-ost-anticonc} cannot in general
have only polylogarithmic dependence on \(k\).
Thus, obtaining a \(\prg\) with seed length polylogarithmic in \(k\)
would require new ideas to overcome this obstruction.
We defer the proof to \appref{app:boolean-threshold-band-lower}.

\begin{restatable}{theorem}{booleanlb}\label{thm:boolean-threshold-band-lower}
There is a universal constant \(c_0>0\) with the following property.
Let \(1\le k\le m/2\) and
\(0<\Lambda\le c_0/(k\sqrt{\log(m/k)})\).
Then there exists an integer \(n_0=n_0(m,k,\Lambda)\) such that,
for every \(n\ge n_0\),
there exist a matrix \(A\in\R^{m\times n}\) and a threshold \(b\in\R^m\)
such that every row of \(A\) has $2$-norm \(1\) and is \(\tau\)-regular
for some \(\tau\le\Lambda\), and
\[
    \Pr_{u\sim\pmcube n}
    \left[
        Au\in \ort_{k,b+\Lambda\cdot\onevec}
        \setminus
        \ort_{k,b-\Lambda\cdot\onevec}
    \right]
    =
    \Omega\!\left(\Lambda\!\left(k+\sqrt{k\cdot\log(m/k)}\right)\right).
\]
\end{restatable}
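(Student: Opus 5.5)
We aim to prove \cref{thm:boolean-threshold-band-lower} by exhibiting two separate constructions: one that forces mass $\Omega(\Lambda k)$ into the band, and one that forces $\Omega(\Lambda\sqrt{k\log(m/k)})$. Since $k+\sqrt{k\log(m/k)}\le 2\max\{k,\sqrt{k\log(m/k)}\}$, it suffices to achieve whichever term is larger.

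In both constructions every row is a normalized regular vector on a disjoint block of $n/m$ coordinates (or a shared one, as specified). With $n\ge n_0$ and each row of the form $\frac{1}{\sqrt{n'}}\sum_{j\in B_i}u_j$, the vector $Au$ is close in distribution to a standard Gaussian. We use a multidimensional Berry--Esseen bound for independent blocks, with error $O(m/\sqrt{n'})$ on rectangles; this is valid because the band event is a union of orthant-type events built from coordinatewise conditions. We take $n_0$ large enough that this error, together with lattice effects ($\tau=1/\sqrt{n'}\le\Lambda$), is $o(\Lambda)$. The task therefore reduces to lower bounding the Gaussian band probability for independent $g_1,\dots,g_m$ at the same $b$. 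One caveat: the band has width $2\Lambda$, which must dominate the lattice spacing, and this is exactly why $\tau\le\Lambda$ is imposed.

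\emph{Term $\Lambda k$.} Take $b_i=\theta$ for all $i$, with $\theta$ chosen so that $q=\Pr[g\le\theta]$ satisfies $mq\approx k$; more precisely, make $N=\#\{i:g_i\le\theta\}$ have median near $k$. The band event asks that $\#\{g_i\le\theta+\Lambda\}\ge k$ and $\#\{g_i\le\theta-\Lambda\}<k$. This event contains $\{N_- \le k-1,\ N_+\ge k\}$, where $N_\pm$ counts the $g_i$ at $\theta\pm\Lambda$. Each coordinate lies in the strip $(\theta-\Lambda,\theta+\Lambda]$ with probability $\approx 2\Lambda\phi(\theta)$. Conditioned on $N_-=k-1$, an event of probability $\Omega(1/\sqrt{k})$ by local CLT for the binomial with mean about $k$, the probability that some strip coordinate exists is $\approx 2m\Lambda\phi(\theta)$ when this is small. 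Since $\phi(\theta)\approx q|\theta|\approx (k/m)\sqrt{\log(m/k)}$, this gives $\Omega(\Lambda\sqrt{k\log(m/k)})$, which is the second term. For the first term, $\Lambda k$, we instead use a correlated construction: $k$ identical-direction groups. Concretely, let rows $1,\dots,m$ all equal the same regular vector $w$ with $b_i=0$; then $\ort_{k,b}$ becomes a single halfspace, with band probability $\Theta(\Lambda)$ only. To gain a factor of $k$, take $k$ disjoint blocks, use rows $w^{(1)},\dots,w^{(k)}$ with thresholds $0$, and pad the remaining $m-k$ rows with far-away thresholds that are never satisfied, say $b_i=-\infty$ effectively, i.e.\ a very negative value. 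The function then becomes the AND of $k$ independent halfspaces $g_j\le 0$, each holding with probability $1/2$. This is bad, because the AND has tiny mass. Set thresholds $b_j=\theta_k$ with $\Pr[g\le\theta_k]=1-1/k$ instead. Then the band event contains the event that all $k$ values satisfy $g_j\le\theta_k+\Lambda$ and at least one of them lies in the strip. Its probability is $\approx(1-1/k)^{k-1}\cdot k\cdot 2\Lambda\phi(\theta_k)\approx\Lambda k\cdot\phi(\theta_k)\cdot\Theta(1)$, and $\phi(\theta_k)\approx(1/k)\sqrt{\log k}$, which is too small. So the $\Lambda k$ term needs a different idea. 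We use rows sharing a common large component: $A_i u=\sqrt{1-\rho}\,h_i+\sqrt{\rho}\,h_0$, which can be made regular via a shared block. This concentrates the event that the $k$-th order statistic crosses the threshold. With $\rho$ close to $1$, the $k$-th smallest value moves in a band of width $\sqrt{1-\rho}$, so the density of the $k$-th order statistic scales as $1/\sqrt{1-\rho}$, and choosing $\sqrt{1-\rho}\approx 1/k$ yields mass $\Omega(\Lambda k)$ provided $\Lambda k\le c_0$.

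This last step is the main obstacle. We must choose $\rho$ and the threshold so that the $k$-th order statistic of the idiosyncratic parts has spread $\Theta(1)$, and then verify via Gaussian computations that the band probability is $\Omega(\Lambda/\sqrt{1-\rho})$. The transfer back to the Boolean cube goes through the Berry--Esseen step described above.
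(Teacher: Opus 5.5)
Your treatment of the \(\Lambda\sqrt{k\log(m/k)}\) term is sound and is essentially the paper's second construction: independent blocks, a common threshold at the \(k/m\) quantile, exactly \(k-1\) coordinates satisfied at the inner threshold, and at least one of the rest in the width-\(2\Lambda\) strip. This uses \(\Pr[\text{Bin}=k-1]=\Omega(1/\sqrt{k})\) and \(\phi(\theta)=\Theta((k/m)\sqrt{\log(m/k)})\), and it needs the hypothesis on \(\Lambda\) to keep \(m\Lambda\phi(\theta)\) small.

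The gap is the \(\Omega(\Lambda k)\) term. You do not establish it, and the route you propose cannot work. Suppose \(A_iu=\sqrt{\rho}\,h_0+\sqrt{1-\rho}\,h_i\) for all \(i\), with the shared component entering with the same sign in every row. Then the number of indices with \(A_iu\le b_i\) is nonincreasing in \(h_0\). Moreover, lowering every threshold by \(2\Lambda\) is the same as raising \(h_0\) by \(2\Lambda/\sqrt{\rho}\). So, conditioned on \(h_1,\dots,h_m\), the band event is an \(h_0\)-interval of length exactly \(2\Lambda/\sqrt{\rho}\). Its probability is therefore at most \(2\Lambda/\sqrt{2\pi\rho}=O(\Lambda)\) for \(\rho\) near \(1\), whatever the thresholds are. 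The \(1/\sqrt{1-\rho}\) concentration of the \(k\)-th order statistic holds only conditionally on \(h_0\); unconditionally it is smeared out by \(\sqrt{\rho}\,h_0\). So the claimed \(\Omega(\Lambda/\sqrt{1-\rho})\) band mass is false. Without a fix, your argument only yields \(\Omega(\Lambda\sqrt{k\log(m/k)})\), which is weaker than the statement whenever \(k\gg\log(m/k)\).

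The missing idea is to make the satisfied-count non-monotone along a single direction by alternating orientations. The paper places \(2k-1\) rows on the same normalized block sum \(Z_1\). The \(j\)-th halfspace is \(Z_1\ge t_j\) for odd \(j\) and \(Z_1\le t_j\) for even \(j\), where \(-1<t_1<\cdots<t_{2k-1}<1\) are spaced at least \(2\Lambda\) apart; this spacing is possible since \(\Lambda\le c_0/k\). The remaining \(m-2k+1\) rows are padded with halfspaces that are never satisfied, even after the \(\Lambda\)-shift. The count is \(k-1\) for \(Z_1<t_1\) and toggles between \(k-1\) and \(k\) at every \(t_j\). Hence whenever \(|Z_1-t_j|<\Lambda\), only the \(j\)-th test differs between thresholds \(b-\Lambda\onevec\) and \(b+\Lambda\onevec\), and the count moves from \(k-1\) to \(k\). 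The \(2k-1\) disjoint windows each carry Gaussian mass \(\Omega(\Lambda)\), giving \(\Omega(\Lambda k)\).
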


Combining \cref{thm:threshold-ns-gsa} with the \cref{thm:ns-gsa-learning},
we immediately obtain the following results on the learnability of $k$-out-of-$m$ thresholds of halfspaces.

\begin{corollary}\label{cor:threshold-learning}
    Let \(n,m,k\in\N\) satisfy \(1\le k\le m/2\).
    For every \(\varepsilon,\eta\in(0,1)\),
    the class of $k$-out-of-$m$ thresholds of halfspaces over \(\R^n\)
    is learnable with accuracy \(\varepsilon\) and confidence \(1-\eta\)
    as follows:
    \begin{itemize}[leftmargin=*]
        \item
        agnostically under the uniform distribution on \(\pmcube n\) 
        in time
        $n^{O( k^2\log(m/k)/\varepsilon^4 )}
            \cdot \mathrm{poly}(\frac{1}{\varepsilon},\log(\frac{1}{\eta}))$;

        \item 
        agnostically under the Gaussian distribution in time
        $n^{\tildeo(k^2\log(m/k)/\varepsilon^2)}
            \cdot \mathrm{poly}(\frac{1}{\varepsilon},\log(\frac{1}{\eta}))$;

        \item 
        in the PAC model, under either the uniform distribution on \(\pmcube n\) or
        the Gaussian distribution in time
        $n^{O(k^2\log(m/k)/\varepsilon^2)}
            \cdot \mathrm{poly}(\frac{1}{\varepsilon},\log(\frac{1}{\eta}))$.
    \end{itemize}
\end{corollary}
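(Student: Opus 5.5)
The plan is to plug the bounds of \cref{thm:threshold-ns-gsa} into the generic learning reductions of \cref{thm:ns-gsa-learning} and compute the resulting exponents. Every $k$-out-of-$m$ threshold of halfspaces $F$ satisfies $\ns_\delta(F)\le C k\sqrt{\delta\log(m/k)}$ for some universal constant $C$. So I will take $\alpha(\delta)\coloneqq\min\{1, Ck\sqrt{\delta\log(m/k)}\}$, which is nondecreasing and maps $[0,1]$ into $[0,1]$. For the Gaussian part I take $\beta\coloneqq Ck\sqrt{\log(m/k)}$.

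\textbf{Uniform case.} The key step is to lower-bound $\alpha^{-1}(t)$, interpreted as the largest $\delta$ with $\alpha(\delta)\le t$. Solving $Ck\sqrt{\delta\log(m/k)}\le t$ gives
\[
    \alpha^{-1}(t)~\ge~\frac{t^2}{C^2k^2\log(m/k)}\enspace,
\]
and it suffices to use any $\delta$ with $\alpha(\delta)\le t$, since the reductions only need noise sensitivity at most $t$ at some rate $\delta$.

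In the agnostic uniform setting, $t=\varepsilon^2/2.32$. This gives $1/\alpha^{-1}(t)=O(k^2\log(m/k)/\varepsilon^4)$, hence running time $n^{O(k^2\log(m/k)/\varepsilon^4)}\cdot\mathrm{poly}(1/\varepsilon,\log(1/\eta))$.

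In the PAC uniform setting, $t=\varepsilon/4.64$. This gives $1/\alpha^{-1}(t)=O(k^2\log(m/k)/\varepsilon^2)$, matching the claimed bound.

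\textbf{Gaussian case.} Here $\beta^2=O(k^2\log(m/k))$. The agnostic bound $n^{\tildeo(\beta^2/\varepsilon^2)}$ and the PAC bound $n^{O(\beta^2/\varepsilon^2)}$ become exactly the stated expressions.

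\textbf{Technicalities.} I expect the only real obstacle to be bookkeeping, in three places.
\begin{itemize}
    \item The definition of $\alpha^{-1}$ must be the generalized inverse, so that monotonicity and the cap at $1$ cause no issues.
    \item When $Ck\sqrt{\log(m/k)}$ is large, the required $\delta$ still lies in $(0,1)$, which is fine.
    \item \cref{thm:threshold-ns-gsa} covers $\delta\in(0,1)$. At $\delta=0$ noise sensitivity is $0$, so the hypothesis of \cref{thm:ns-gsa-learning} holds for all $\delta\in[0,1]$ after adjusting the constant at $\delta=1$ via the trivial bound $\ns\le1$.
\end{itemize}
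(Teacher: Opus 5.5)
Your proposal is correct and matches the paper's argument: the paper obtains the corollary in one line by plugging the bounds of \cref{thm:threshold-ns-gsa} into \cref{thm:ns-gsa-learning}. Your computations $1/\alpha^{-1}(\varepsilon^2/2.32)=O(k^2\log(m/k)/\varepsilon^4)$, $1/\alpha^{-1}(\varepsilon/4.64)=O(k^2\log(m/k)/\varepsilon^2)$ and $\beta^2=O(k^2\log(m/k))$ supply exactly the bookkeeping the paper leaves implicit. Your handling of the generalized inverse and of the endpoint $\delta=1$ is also fine; alternatively, continuity of the polynomial $\delta\mapsto\ns_\delta(F)$ settles the endpoint.
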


\section{A Pseudorandom Generator for Thresholds of Halfspaces}
\label{sec:prg}

This section proves our main pseudorandom generator result
for thresholds of halfspaces.
The core ingredient is a careful analysis showing that
the generator from \cite{OST22} fools the mollifier
from \cref{def:mollifier} for the $k$-out-of-$m$ threshold
of \emph{standardized} halfspaces.
Together with the anticoncentration bound from
\cref{sec:anti} and the sandwiching argument from
\cref{sec:sandwich}, this implies fooling of the actual $k$-out-of-$m$
threshold of standardized halfspaces.
The result for general halfspaces then follows from a reduction to standardized halfspaces.

In \cref{sec:prg-construction}, we describe the generator,
list the parameter choices, and give the resulting seed length.
\cref{sec:standardization} states the standardization reduction,
which allows us to restrict attention to thresholds of standardized halfspaces.
In \cref{sec:threshold-mollifier-fool}, we prove via a hybrid argument
that the generator fools the mollifier
for the $k$-out-of-$m$ threshold of standardized halfspaces.
\cref{sec:cnf-derivative} proves a technical lemma used in the hybrid argument.
Finally, in \cref{sec:threshold-prg}, we put these ingredients together
to prove the main theorem.

\subsection{Construction of the Generator}
\label{sec:prg-construction}

We adopt the O'Donnell-Servedio-Tan $\prg$ construction
in \cite[Definition 4.2]{OST22},
which is a combination of Meka–Zuckerman generator for halfspaces
\cite{MZ13} and
the Gopalan-Meka-Reingold generator for small-width \CNF{} formulas
\cite{gopalan2013dnf}.

\begin{figure}[!t]
\centering
\setlength{\fboxsep}{0.8em}
\fbox{\begin{minipage}{0.92\textwidth}
    \centering
    \vspace{0.1em}
    \(
        m ~\text{: number of halfspaces}~
        \quad k ~\text{: threshold parameter}~
        \quad \delta\in(0,1) ~\text{: error of the \prg}
    \)
\end{minipage}
}

\vspace{0.6em}

\fbox{\begin{minipage}{0.92\textwidth}
    \vspace{0.4em}
\[ 
\begin{array}{rcl@{\hspace{2.4em}}l}
    \varepsilon &\in& (0,1) 
    &\text{arbitrarily small constant}\,,\\[1.4em]

    \dhyb &\in& \N 
    &\text{some constant depends on $\varepsilon$}\,,\\[1.4em]

    \lambda &=& 
    \dfrac{\delta}{k\sqrt{\log(m/k) \log (m/\delta)}}
    &\text{dictated by \cref{eq:lambda}}\,,\\[1.4em]

    \tau &=&
    \dfrac{\delta^{1+\varepsilon}}{k^{2+\varepsilon}\cdot \log^{2.5+\varepsilon}\!m}
    &\text{\text{dictated by \cref{eq:error}}\,}\,,\\[1.4em]

    L &=&  
        \dfrac{k^{4+2\varepsilon} \cdot  \log^{4+2\varepsilon}\!m}{\delta^{2 + 2\varepsilon}}
    &\text{chosen according to \cref{eq:error}}\,,\\[1.4em]

    r_{\newsf{hash}} &=&
    C_{\newsf{hash}}\cdot \log(mL/\delta)
    &\text{dictated by \cref{prop:ost-head-good-hash}}\,,\\[1.4em]

    r_{\newsf{bucket}} &=&
    \log(m/\delta)
    &\text{dictated by \cref{lem:ost-tail-bound}}\,,\\[1.4em]

    \Khead &=&
    C_{\newsf{std}} \cdot
    \dfrac{\log (m/\delta)\log \log (m/\delta)}{\tau^2}
    &\text{dictated by \cref{lem:rowwise-standardization}}\,,\\[1.4em]

    \whd &=&
    \dfrac{2\Khead}{L}
    &\text{dictated by \cref{prop:ost-head-good-hash}}\,,\\[1.4em]

    \Wcnf &=&
    k\cdot\whd
    &\text{dictated by \cref{lem:one-bucket-replacement}}\,,\\[1.4em]

    \deltaCNF &=&
    \dfrac{\delta}{L}\cdot \br{\dfrac{\lambda}{m^{k+1}\sqrt{n}}}^{\dhyb-1}
    &\text{dictated by \cref{eq:error}}\,.\\[1.4em]
\end{array}
\]
\end{minipage}
}
\caption{Parameter choices}
\label{fig:threshold-parameters}
\end{figure}

\begin{definition}\label{def:threshold-ost-generator}
Fix parameters
$L,r_{\newsf{hash}},r_{\newsf{bucket}},\Khead,\Wcnf\in\N$
and $\deltaCNF\in(0,1)$ as in \cref{fig:threshold-parameters}.
The generator
$G=G(L,r_{\newsf{hash}},r_{\newsf{bucket}},\Khead,\Wcnf,\deltaCNF)$
is defined as follows:
\begin{enumerate}
    \item Draw a hash function \(h\) uniformly at random from an
    \(r_{\newsf{hash}}\)-wise uniform hash family.
    \item Draw independent $r_{\newsf{bucket}}$-wise uniform strings
    $y^1,\ldots,y^L\in\pmcube n$.
    \item Draw independent strings
    $\widetilde y^1,\ldots,\widetilde y^L\in\pmcube n$
    from $\gmr(\Wcnf,\deltaCNF)$ in \cref{thm:ost-cnf-prg}.
    \item Draw an independent $2\Khead$-wise uniform string
    $y^\star\in\pmcube n$.
\end{enumerate}
Define the random variables $\widehat{y}^1,\ldots,\widehat{y}^L$ by
$\widehat{y}^\ell \coloneqq y^\ell \oplus \widetilde y^\ell$ for every $\ell\in[L]$,
where $\oplus$ is the coordinate-wise $\xor$ operation on $\pmcube{}$.
Define the random variable $\breve y \in \pmcube n$ by
\[
	\breve{y}_{h^{-1}(\ell)} \coloneqq \widehat{y}^\ell_{h^{-1}(\ell)}\enspace, \quad \forall\ \ell\in[L] \enspace.
\]
In other words, on the \(\ell\)-th bucket \(h^{-1}(\ell)\), \(\breve y\) is the
coordinate-wise $\xor$ of \(y^\ell\) and \(\widetilde y^\ell\).
The generator outputs a string $z\in\pmcube n$ where
$z \coloneqq \breve y \oplus y^\star$.
\end{definition}

The seed length of the generator is given by the following lemma, which is a direct consequence of the definition and the seed lengths of each component.

\begin{lemma}[\citetext{\citealp[Fact 4.3]{OST22}}]\label{lem:ost-generator-seed}
The generator in \cref{def:threshold-ost-generator} has seed length
\[
	\begin{aligned}
        &
        O\bigl(r_{\newsf{hash}}\cdot\log (nL) + L\cdot r_{\newsf{bucket}}\cdot \log n\bigr)
        ~ + ~ O(\Khead\cdot\log n) \\
        &\quad + ~
        L\cdot
        O\Bigl(
            \Wcnf^2 \cdot \log^2(\Wcnf\cdot\log(1/\deltaCNF))
            +\Wcnf \cdot \log \Wcnf \cdot \log(1/\deltaCNF)
            +\log\log n
        \Bigr) \enspace.
    \end{aligned}
\]
In particular, with the parameter choices in \cref{fig:threshold-parameters}, the seed length is
\[
	\tildeo\Bigl(
        \frac{k^{6+2\varepsilon} \cdot \log^{6+2\varepsilon}\!m}{\delta^{2 + 2\varepsilon}}\cdot \log n
    \Bigr) \enspace.
\]

\end{lemma}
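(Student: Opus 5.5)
The plan is to add up the seed cost of each of the four components of \cref{def:threshold-ost-generator} and then substitute the parameters of \cref{fig:threshold-parameters}.

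\textbf{General bound.}
\begin{itemize}
\item Sampling $h$ from an $r_{\newsf{hash}}$-wise uniform family $[n]\to[L]$ costs $O(r_{\newsf{hash}}\log(nL))$ bits.
\item The $L$ independent $r_{\newsf{bucket}}$-wise uniform strings in $\pmcube n$ cost $O(L\cdot r_{\newsf{bucket}}\log n)$ bits.
\item The $2\Khead$-wise uniform string $y^\star$ costs $O(\Khead\log n)$ bits.
\item The $L$ independent copies of $\gmr(\Wcnf,\deltaCNF)$ cost $L$ times the bound of \cref{thm:ost-cnf-prg}.
\end{itemize}
The output is a deterministic function of these independent seeds, so the total is their sum. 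This gives the first display.

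\textbf{Substitution.} Treat $\varepsilon$ and $\dhyb$ as constants, and let $\tildeo$ absorb factors polylogarithmic in $k,\log m,1/\delta$ and $\log\log n$. Then:
\begin{itemize}
\item $1/\tau^2 = k^{4+2\varepsilon}\log^{5+2\varepsilon}m/\delta^{2+2\varepsilon}$, so $\Khead=\tildeo(k^{4+2\varepsilon}\log^{6+2\varepsilon}m\cdot\delta^{-(2+2\varepsilon)})$, counting the $\log(m/\delta)$ factor. This means the $\Khead\log n$ term already has the claimed order.
\item $\whd=2\Khead/L=\tildeo(\log^{2}m)$, since $L=k^{4+2\varepsilon}\log^{4+2\varepsilon}m/\delta^{2+2\varepsilon}$.
\item $\Wcnf=k\whd=\tildeo(k\log^2 m)$.
\item $\log(1/\deltaCNF)=O(\log(L/\delta)+\dhyb(k\log m+\log n+\log(1/\lambda)))=\tildeo(k\log m+\log n)$.
\end{itemize}

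\textbf{The main obstacle.} The delicate step is the CNF term. Each copy costs
\[
O\bigl(\Wcnf^2\log^2(\cdot)+\Wcnf\log\Wcnf\log(1/\deltaCNF)+\log\log n\bigr)=\tildeo\bigl(k^2\log^4 m+k^2\log^3 m+k\log^2 m\cdot\log n\bigr).
\]
The $\log n$ inside $\log(1/\deltaCNF)$ enters only linearly, so the $L$ copies contribute $\tildeo(L\cdot k^2\log^4 m\cdot\log n)$. Since $L k^2\log^4 m=k^{6+2\varepsilon}\log^{8+2\varepsilon}m/\delta^{2+2\varepsilon}$, this has the right exponent of $k$ but overshoots in $\log m$ by polylogarithmic factors. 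These are absorbed into $\tildeo$ only if one reads $\tildeo$ as hiding $\polylog(\log m)$ factors, or tracks constants more carefully, e.g.\ $\whd=O(\log m\log\log m/(\ldots))$.

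The remaining terms are dominated:
\begin{itemize}
\item $L\,r_{\newsf{bucket}}\log n=\tildeo(L\log m\log n)$ is dominated by the bound above.
\item $r_{\newsf{hash}}\log(nL)=\tildeo(\log n)$.
\end{itemize}
So the main point to verify is that the CNF cost, through its $k\cdot\whd$ width and the $m^{k+1}$ factor inside $\deltaCNF$, contributes exactly the extra $k^2$ over $L$ (together with the stated $\log n$ factor), giving the total $\tildeo(k^{6+2\varepsilon}\log^{6+2\varepsilon}m\cdot\delta^{-(2+2\varepsilon)}\log n)$ up to the polylog conventions of the statement.
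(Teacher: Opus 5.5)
\textbf{Assessment.} Your general bound is the paper's argument: the paper treats it as immediate from the seed lengths of the four independent components. Your parameter arithmetic is also correct. The gap is in the last step of the substitution.

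\textbf{Where the substitution breaks.} By your own computation, $\whd = 2\Khead/L = 2C_{\newsf{std}}\log m\cdot\log(m/\delta)\log\log(m/\delta)$. Hence $L\cdot\Wcnf^2=\Theta\bigl(k^{6+2\varepsilon}\log^{6+2\varepsilon}m\cdot\log^2(m/\delta)\,(\log\log(m/\delta))^2\,\delta^{-(2+2\varepsilon)}\bigr)$. Under your bookkeeping this is $\log^{8+2\varepsilon}m$. The term carries no factor of $\log n$, so it is not dominated by the target when $\log^2 m\gg\log n$ (e.g.\ $m=n$).

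Neither of your proposed reconciliations works:
\begin{itemize}
\item The surplus $\log^2(m/\delta)\ge\log^2 m$ is not a $\mathrm{polylog}(\log m)$ factor.
\item ``Tracking constants'' cannot shrink $\whd$, whose form is forced by the parameter table.
\end{itemize}
As written, you end by asserting a bound you have not derived. The phrase ``up to the polylog conventions of the statement'' is covering a genuine $\log^2(m/\delta)$ discrepancy, not a constant.

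\textbf{What is missing.} You need to pin down the source: the surplus is exactly the square of the $\log(m/\delta)\log\log(m/\delta)$ factor in $\Khead$, propagated through $\Wcnf=k\whd$. The stated exponent $6+2\varepsilon$ is what you get when $\tildeo$ also suppresses these factors. This is the same convention that already lets $r_{\newsf{bucket}}$, $r_{\newsf{hash}}$, $\log(1/\delta)$ and $\log\log n$ disappear. Under it:
\begin{itemize}
\item $\whd=\tildeo(\log m)$, $\Wcnf=\tildeo(k\log m)$ and $\log(1/\deltaCNF)=\tildeo(k\log m+\log n)$.
\item $L\Wcnf^2=\tildeo(k^{6+2\varepsilon}\log^{6+2\varepsilon}m\,\delta^{-(2+2\varepsilon)})$.
\item $L\Wcnf\log\Wcnf\log(1/\deltaCNF)=\tildeo\bigl(k^{6+2\varepsilon}\log^{6+2\varepsilon}m\,\delta^{-(2+2\varepsilon)}+k^{5+2\varepsilon}\log^{5+2\varepsilon}m\,\delta^{-(2+2\varepsilon)}\log n\bigr)$.
\item $\Khead\log n=\tildeo(k^{4+2\varepsilon}\log^{5+2\varepsilon}m\,\delta^{-(2+2\varepsilon)}\log n)$, and the remaining terms are smaller.
\end{itemize}
These sum to the claimed bound.

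\textbf{How to fix it.} Either adopt this convention explicitly, or, if you charge every $\log(m/\delta)$ as a $\log m$, state the honest bound $\tildeo\bigl(k^{6+2\varepsilon}\log^{6+2\varepsilon}m\cdot\log^2(m/\delta)\cdot\delta^{-(2+2\varepsilon)}\log n\bigr)$. Do not assert the displayed exponent under the stricter bookkeeping.
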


\subsection{Reduction to Thresholds of Standardized Halfspaces}
\label{sec:standardization}

We show that thresholds of \emph{general} halfspaces can be reduced to
thresholds of \emph{standardized} halfspaces,
which have a \emph{sparse} head and a \emph{regular} tail.
This reduction is the same as in \cite[Section 5]{OST22},
but we restate it here for completeness.

\begin{lemma}[\citetext{\citealp[Lemma 5.1]{OST22}}]\label{lem:rowwise-standardization}
\footnote{Lemma~5.1 of \cite{OST22} is stated for intersections of halfspaces. The formulation used here follows the same proof: each halfspace is standardized separately, and the final error bound is obtained by a union bound over the \(m\) halfspaces.}
There is a universal constant $C_\newsf{std}>0$ such that the following holds:
fix $m\geq 1$, $0<\eta,\tau<1/2$ such that
\[
    K
    \coloneq
    C_\newsf{std}\cdot\frac{\log(m/\eta)\log\log(m/\eta)}{\tau^2} \leq \frac{n}{2} \enspace.
\]
For every matrix $A\in\R^{m\times n}$ and vector $b\in\R^m$, there is a
$(K,\tau)$-standardized matrix $A'\in\R^{m\times n}$ and a vector
$b'\in\R^m$ such that for every $2K$-wise uniform
$y\in\pmcube n$,
\[
    \Pr_y
    \Br{\exists\ i \in[m] \text{ such that }
        \ind[\ip{ A_i, y }\le b_i]\ne \ind[\ip{A'_i, y}\le b'_i]}
    \le \eta \enspace.
\]
\end{lemma}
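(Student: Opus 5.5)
The plan is to standardize each row separately, following the argument of \cite[Lemma 5.1]{OST22}, and then take a union bound. Fix a row \(i\) and write \(w=A_i\), \(\theta=b_i\). Sort the coordinates of \(w\) by decreasing magnitude and look for the \emph{critical index}: the smallest \(j\) such that the tail \(w_{\{j+1,\dots,n\}}\) is \(\tau\)-regular. Two cases follow, depending on whether this index is at most \(K\).

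\textbf{Case 1: the critical index is at most \(K\).} Let \(\newsf{Head}\) be the first \(j\le K\) coordinates. We rescale \(w\) and \(\theta\) by the positive factor \(1/\norm{w_{\newsf{Tail}}}_2\). This does not change the halfspace, and afterwards the tail is \(\tau\)-regular with unit norm. We set \(A'_i\) to the rescaled row and \(b'_i\) to the rescaled threshold, and the error for this row is zero. The degenerate case \(w_{\newsf{Tail}}=0\) is handled by putting all \(\le K\) nonzero coordinates in the head. We then append an artificial regular tail on unused coordinates (possible since \(K\le n/2\)) with threshold shifted by a huge constant. Its contribution under \(2K\)-wise uniformity can be bounded by Chebyshev, or one may follow the exact OST device.

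\textbf{Case 2: the critical index exceeds \(K\).} The standard critical-index analysis (Servedio; Diakonikolas et al.) shows that the weights decay geometrically along the first \(K' = \Theta(\log(m/\eta)\log\log(m/\eta)/\tau^2)\) sorted coordinates. Hence the tail beyond \(K'\) has \(\ell_2\) norm at most roughly \(\tau^{\Omega(K'\tau^2)}\) times \(|w_{K'}|\). With the stated \(K\), this makes the head sum anticoncentrated at scale much larger than the tail's contribution, except with probability \(\eta/(2m)\).

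To make this work we need the anticoncentration of the head and the concentration of the tail (a moment bound of order \(\log(m/\eta)\)) to hold for \(2K\)-wise uniform \(y\), not merely for uniform \(y\). The head involves only \(\le K\) variables, so its distribution is exactly uniform. For the tail, we use a \(2K\)-th moment bound via Markov, which requires only \(2K\)-wise independence. We then replace the row by its head, together with a dummy regular tail as above, so that \(\ind[\ip{A_i,y}\le b_i]\) and \(\ind[\ip{A'_i,y}\le b'_i]\) agree except with probability \(\eta/m\).

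A union bound over the \(m\) rows gives total failure probability at most \(\eta\).

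\textbf{Main obstacle.} Case 2 is where the work lies. Its quantitative part (geometric decay implying head anticoncentration beating tail concentration under only \(2K\)-wise independence) is exactly the content of \cite[Lemma 5.1]{OST22}. Since nothing there uses the AND structure, I would invoke that analysis row by row rather than redo it.
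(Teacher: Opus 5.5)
This matches the paper's treatment: the paper gives no new argument, but applies the proof of \cite[Lemma 5.1]{OST22} to each halfspace separately (error $\eta/m$ per row) and then takes a union bound over the $m$ rows, exactly as you propose. Two slips in your sketch need fixing, though neither changes the strategy: the dummy unit-norm tail is made harmless by first moving the threshold strictly inside a gap between the finitely many head values and then scaling head and threshold by a huge factor (merely shifting the threshold by a huge constant changes the function), and in Case~2 the tail beyond $K'$ is negligible only relative to the earlier, well-separated head weights that give anticoncentration, not relative to $|w_{K'}|$, which it can exceed by a factor of about $1/\tau$.
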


As a consequence, we can reduce fooling arbitrary thresholds of halfspaces to
fooling thresholds of standardized halfspaces.

\begin{corollary}\label{cor:threshold-outer-reduction}
Let \(0<\eta,\tau<1/2\) and \(K\)
be as in \cref{lem:rowwise-standardization}.
Let \(f:\{0,1\}^m\to\{0,1\}\) be any Boolean function.
For every \(A\in\R^{m\times n}\) and \(b\in\R^m\), let \(A',b'\) be the
\((K,\tau)\)-standardized instance given by \cref{lem:rowwise-standardization}.
Then for every $2K$-wise uniform
$y\in\pmcube n$,
\[
    \Pr_y\Bigl[
        f(\ind[\ip{A_1,y}\le b_1],\ldots,\ind[\ip{A_m,y}\le b_m])
        \ne
        f(\ind[\ip{A'_1,y}\le b'_1],\ldots,\ind[\ip{A'_m,y}\le b'_m])
    \Bigr]
    \le \eta \enspace.
\]
In particular, for $f=\thr_{m,k}$, i.e., the $k$-out-of-$m$ threshold function, we have
\[
	\Pr_y\Bigl[
        \ort_{k,b}(Ay)
        \ne
        \ort_{k,b'}(A'y)
    \Bigr]
    \le \eta \enspace,
\]
where $\ort_{k,b}$ and $\ort_{k,b'}$ are indicators of the $k$-out-of-$m$ threshold sets
defined in \cref{def:threshold-set}.
\end{corollary}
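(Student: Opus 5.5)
This is a pointwise event-containment argument followed by a single application of \cref{lem:rowwise-standardization}. Fix \(A,b\), and let \(A',b'\) be the standardized instance that the lemma provides for the parameters \(\eta,\tau\). Let \(y\) be any \(2K\)-wise uniform string. Define the bad event
\[
    \mathcal{B} \coloneqq \bigl\{\exists\, i\in[m]:\ \ind[\ip{A_i,y}\le b_i]\ne\ind[\ip{A'_i,y}\le b'_i]\bigr\}.
\]
By \cref{lem:rowwise-standardization}, \(\Pr_y[\mathcal{B}]\le\eta\).

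The key step is the containment of events. Suppose \(\mathcal{B}\) does not occur for a given outcome of \(y\). Then the two \(m\)-bit vectors
\[
    \bigl(\ind[\ip{A_i,y}\le b_i]\bigr)_{i\in[m]}
    \qquad\text{and}\qquad
    \bigl(\ind[\ip{A'_i,y}\le b'_i]\bigr)_{i\in[m]}
\]
agree coordinatewise, so they are the same element of \(\{0,1\}^m\). Applying the arbitrary Boolean function \(f\) to identical inputs gives identical outputs. Hence the event that the two \(f\)-values differ is contained in \(\mathcal{B}\), and monotonicity of probability gives the bound \(\eta\). No property of \(f\) is used, which is why the statement holds for every outer function.

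For the particular case, take \(f=\thr_{m,k}\) and use \cref{rmk:equivalent-representation}. It gives
\[
    \thr_{m,k}\bigl(\ind[\ip{A_1,y}\le b_1],\ldots,\ind[\ip{A_m,y}\le b_m]\bigr)=\ort_{k,b}(Ay),
\]
and likewise \(\ort_{k,b'}(A'y)\) for the primed instance. Substituting these into the general statement yields the displayed bound.

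There is no real obstacle here; the only point to check is that \cref{lem:rowwise-standardization} controls all \(m\) rows simultaneously with one probability bound. Its footnote confirms this: the union bound over rows is already built into the lemma, so \(f\) introduces no further loss.
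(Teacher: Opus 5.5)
Your proposal is correct and follows the same route as the paper's proof: the disagreement of the two $f$-values is contained in the rowwise disagreement event, whose probability is at most $\eta$ by \cref{lem:rowwise-standardization}. The specialization to $f=\thr_{m,k}$ through \cref{rmk:equivalent-representation} is likewise exactly what the statement intends.
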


\begin{proof}
The function $f$ can disagree only if at least one of the $m$ row bits disagrees.
Hence the disagreement event is contained in the rowwise disagreement event
from \cref{lem:rowwise-standardization}, whose probability is at most \(\eta\).
\end{proof}

The final \(2\Khead\)-wise $\xor$ in \cref{def:threshold-ost-generator} makes
the generator output \(2\Khead\)-wise uniform.  Hence
\cref{cor:threshold-outer-reduction} applies to the pseudorandom distribution
as well as to the uniform distribution, and it remains to fool thresholds of
standardized halfspaces.
We now describe the structural consequence of
standardization that will be used in the bucket-by-bucket hybrid argument.

\vspace{-0.6em}
\paragraph{Sparse heads and regular tails.}
Let $A\in\R^{m\times n}$ be a $(K,\tau)$-standardized matrix.
By definition, we can decompose $A$ as
\begin{equation} \label{eq:head-tail-decomposition}
    A = H + T \enspace,
\end{equation}
where \(H\in\R^{m\times n}\) is the \emph{head} matrix and \(T\in\R^{m\times n}\) is the \emph{tail} matrix.
In particular, each row of $H$ is supported on at most $K$ coordinates,
and each row of $T$ is $\tau$-regular with norm one.
This decomposition is useful because a random hash spreads
the heavy coordinates across buckets, so each bucket sees only a small part of the head.

To be more precise, for a bucket $B\subseteq[n]$,
we say that $B$ is \(w\)\emph{-sparse} with respect to \(H\)
if
\begin{equation} \label{eq:def-sparse-bucket}
     |\operatorname{supp}(H_i)\cap B|\le w
    \qquad\text{for every }i\in[m] \enspace.
\end{equation}
The following proposition says that this sparsity condition holds
simultaneously for all rows and all buckets with high probability.

\begin{proposition}[\citetext{\citealp[Proposition 8.11]{OST22}}]\label{prop:ost-head-good-hash}
Let \(H\in\R^{m\times n}\) be a matrix such that each row is supported on
at most \(K\) coordinates.
Let \(h:[n]\to[L]\) be sampled uniformly from an \(r\)-wise uniform hash family,
where \(r\ge C_{\newsf{hash}}\log(mL/\delta)\) for some constant \(C_{\newsf{hash}}\), and let \(w = 2K/L\).
Then, except with probability at most $\delta$ over $h$,
for every \(\ell\in[L]\), the bucket \(h^{-1}(\ell)\) is \(w\)-sparse with respect to \(H\).
\end{proposition}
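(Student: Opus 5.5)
The plan is a union bound over all pairs \((i,\ell)\in[m]\times[L]\), with each fixed pair controlled by a high-moment tail bound that only needs \(r\)-wise uniformity of \(h\). Fix a row \(i\) and a bucket \(\ell\), and let \(S_i=\operatorname{supp}(H_i)\), so that \(|S_i|\le K\). Set \(X_{i,\ell}=|S_i\cap h^{-1}(\ell)|=\sum_{j\in S_i}\ind[h(j)=\ell]\). Under a fully uniform hash, \(X_{i,\ell}\) is a sum of at most \(K\) independent Bernoulli\((1/L)\) variables with mean \(\mu\le K/L\). We want \(\Pr[X_{i,\ell}>2K/L]\le \delta/(mL)\); summing over the \(mL\) pairs then finishes the proof.

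To avoid full independence, I would bound the event by counting \(t\)-subsets. If \(X_{i,\ell}\ge s\), then at least \(\binom{s}{t}\) of the \(t\)-subsets of \(S_i\) are mapped entirely into bucket \(\ell\). Markov's inequality and \(t\)-wise uniformity (with \(t\le r\)) give
\[\Pr[X_{i,\ell}\ge s]\le \binom{K}{t}L^{-t}\Big/\binom{s}{t}.\]
This expression depends only on \(t\)-wise marginals, so it is the same as under full independence. Bounding it by \(\big(\tfrac{eK}{tL}\big)^t\big(\tfrac{t}{s}\big)^t\) is crude. The better estimate is \(\binom{K}{t}/\binom{s}{t}\le (K/(s-t))^t\), which gives \(\Pr\le (K/(L(s-t)))^t\). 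With \(s=2K/L\) and \(t\le s/4\), this is at most \((2/3)^t\) up to rounding.

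Now take \(t=\Theta(\log(mL/\delta))\). The bound becomes at most \(\delta/(mL)\), provided \(t\le\min(r,s/4)\). This is where \(C_{\newsf{hash}}\) enters: it ensures \(r\ge t\).

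The main obstacle is the regime where \(2K/L\) is small compared with \(\log(mL/\delta)\), so the choice \(t\le s/4\) fails. There the argument has to change: we can take \(t=\lceil s\rceil+1\) directly and use \(\Pr[X\ge t]\le\binom{K}{t}L^{-t}\le (eK/(tL))^t\le (e/2)^t\)-type bounds. This does not give the desired tail unless \(K/L\) is small relative to \(t\). I would therefore either (a) note that in the paper's parameters \(K/L\) is large (\(\whd\) is polynomial in \(1/\tau\) divided by \(L\)), or (b) state the threshold as \(w=\max(2K/L,\,C\log(mL/\delta))\), which is what the moment argument naturally yields. I would check which convention \cite[Proposition 8.11]{OST22} uses and adopt its exact form. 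A union bound over \(i\in[m]\) and \(\ell\in[L]\) then gives failure probability at most \(\delta\).
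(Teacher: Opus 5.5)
Your argument is correct and is the standard proof of this kind of statement (the paper itself gives no proof and only cites \cite[Proposition 8.11]{OST22}): a union bound over the $mL$ pairs $(i,\ell)$, plus a limited-independence tail bound that you derive directly from $\E\binom{X_{i,\ell}}{t}\le\binom{K}{t}L^{-t}$ and Markov's inequality, which requires only $t\le r$ and $t\le K/(2L)$.

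Your worry about the second condition is justified, because as literally stated the proposition is false when $L>2K$. In that case $w<1$, and for any nonzero row, the bucket containing one of its support coordinates fails to be $w$-sparse with probability $1$. So the implicit hypothesis $K/L\ge C\log(mL/\delta)$ is genuinely needed, and your option (a) is the right fix. The parameters of \cref{fig:threshold-parameters} give $\tau^2L=1/\log m$, hence $\Khead/L=C_{\newsf{std}}\log m\cdot\log(m/\delta)\log\log(m/\delta)$, which comfortably exceeds $C\log(mL/\delta)=O(\log(m/\delta))$. You can therefore drop the small-regime paragraph entirely; its $(e/2)^t$ bound exceeds $1$ anyway.
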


\subsection{Fooling the Bentkus-Type Mollifier for Thresholds of Standardized Halfspaces}
\label{sec:threshold-mollifier-fool}
The main ingredient for the generator analysis is the following theorem,
which shows that the generator from \cref{def:threshold-ost-generator} fools
the Bentkus-type mollifier for thresholds of standardized halfspaces.

\begin{theorem}\label{thm:threshold-mollifier-fool}
    Let all parameters be chosen as in \cref{fig:threshold-parameters},
    and let $G$ be the generator from \cref{def:threshold-ost-generator}.
    Then for every $(\Khead,\tau)$-standardized matrix $A\in\R^{m\times n}$ and
    all $b\in\R^m$,
    \[
    	\abs{
            \E_{u\sim\pmcube n} \Br{\mollifier_{k,b,\lambda}(Au)}
            -
            \E_{z\sim G} \Br{\mollifier_{k,b,\lambda}(Az)}  
        }
        ~=~
        O(\delta)\enspace,
    \]
    where $\mollifier_{k,b,\lambda}$ is the Bentkus-type mollifier defined in \cref{def:mollifier} for the $k$-out-of-$m$ threshold set $\ort_{k,b}$.
\end{theorem}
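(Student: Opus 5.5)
We follow the hybrid argument of \cite{OST22}, with $\mollifier_{k,b,\lambda}$ in place of Bentkus' orthant mollifier.

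\textbf{Step 1: fixing the hash.} Write $A=H+T$ as in \cref{eq:head-tail-decomposition}. Since $h$ is shared by the pseudorandom process, we condition on $h$. By \cref{prop:ost-head-good-hash} with $r_{\newsf{hash}}$ and $\whd=2\Khead/L$, except with probability $\delta$ every bucket $B_\ell=h^{-1}(\ell)$ is $\whd$-sparse with respect to $H$. The mollifier takes values in $[0,1]$, so this bad event costs $O(\delta)$. The final XOR with $y^\star$ can be absorbed: for fixed $y^\star$, XOR only flips signs of columns of $A$, which keeps $A$ $(\Khead,\tau)$-standardized and preserves bucket sparsity.

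\textbf{Step 2: the hybrids.} Define $x_0=u$ and let $x_\ell$ agree with $\widehat y^{\ell'}$ on $B_{\ell'}$ for $\ell'\le\ell$ and with $u$ elsewhere, so that $x_L=\breve y$. It suffices to show
\[
\abs{\E[\mollifier_{k,b,\lambda}(Ax_{\ell-1})]-\E[\mollifier_{k,b,\lambda}(Ax_\ell)]}=O(\delta/L)
\]
for each $\ell$; this is the content of \cref{lem:one-bucket-replacement}. Fix all coordinates outside $B_\ell$ and set $v=A^{\overline{B_\ell}}x_{\overline{B_\ell}}$. Split the bucket contribution as $H^{B_\ell}x_{B_\ell}+T^{B_\ell}x_{B_\ell}$. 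Taylor expand via \cref{fact:multidim-taylor} around $v+H^{B_\ell}x_{B_\ell}$ in the tail increment $\Delta=T^{B_\ell}x_{B_\ell}$, to order $d=\dhyb$.

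\textbf{Step 3a: low-degree terms.} Each term is $\partial_{i_1,\dots,i_c}\mollifier(v+Hx_{B_\ell})\,\Delta_{i_1}\cdots\Delta_{i_c}$ with $c<d$. Each factor $\Delta_i$ is linear in $x_{B_\ell}$, so its products are combinations of monomials of degree below $d$. Expanding the derivative through the product structure, it depends on $x_{B_\ell}$ only through the head values $H_ix_{B_\ell}$, which are $\whd$-juntas. After the product rule, each clause $\prod_{i\in S}$ with $|S|=k$ becomes a $k\whd$-junta, and $\mollifier$ itself is a product of clause-complements. The claim of \cref{lem:cnf-decomposition} is that each term is a bounded-weight combination of width-$\Wcnf=k\whd$ CNFs times parities of degree below $d$. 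The $r_{\newsf{bucket}}$-wise uniform $y^\ell$ handles the parities, and XORing in $\widetilde y^\ell$ from $\gmr$ handles the CNF part. The weight is at most $(m^{k+1}\sqrt n/\lambda)^{d-1}$, coming from $\binom mk$ clauses, the derivative sizes $\lambda^{-c}$, and coefficients $\|T_i\|_1\le\sqrt n$. Hence $\deltaCNF$ as in \cref{fig:threshold-parameters} makes the total error $\delta/L$. The identical terms under $u$ contribute the same way by exact uniformity.

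\textbf{Step 3b: remainder.} I expect the remainder to be the main obstacle. Use \cref{thm:scaled} to get $\norm{\mollifier^{(d)}}_1\le O_d(k\sqrt{\log(m/k)}/\lambda)^d$. The bound $\|\Delta\|_\infty^d$ is too crude, so I would use the refined form of Taylor's bound, $\sum_{i_1,\dots,i_d}|\partial_{i_1\dots i_d}\mollifier|\,|\Delta_{i_1}\cdots\Delta_{i_d}|$, and then bound $\E\max_i|\Delta_i|^d$. Each $\Delta_i$ is a sum over a random bucket of a $\tau$-regular unit vector. Using \cref{lem:ost-tail-bound} with the $r_{\newsf{bucket}}$-wise independence and the hashing, this should give $\E\|\Delta\|_\infty^d\lesssim (\sqrt{\log(m/\delta)}\,(\tau+L^{-1/2}))^d$ up to constants. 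The per-bucket remainder is then
\[
O_d\!\left(\frac{k\sqrt{\log(m/k)\log m}\,(\tau+L^{-1/2})}{\lambda}\right)^{d}.
\]
Plugging in $\lambda$, $\tau$ and $L$, then choosing $d=\dhyb$ large enough in terms of $\varepsilon$, should make this $\le\delta/L$ as in \cref{eq:error}.

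Summing over the $L$ buckets and adding the hash failure probability gives $O(\delta)$.
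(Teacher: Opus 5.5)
Your outline follows the paper's proof. You absorb $y^\star$ into the columns of $A$, pay $\delta$ for a hash with a non-sparse bucket, and run the bucket hybrid with a Taylor expansion around the head value. The low-degree terms go through \cref{lem:cnf-decomposition} and the $\gmr$ component, and the remainder through \cref{thm:scaled} and \cref{lem:ost-tail-bound}. One step's justification, however, does not work as you state it.

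\textbf{Step 3a.} You write each Taylor term as a (width-$k\whd$ \CNF{}) $\times$ (parity of degree $<d$) combination. You then say that $y^\ell$ ``handles the parities'' while $\widetilde y^\ell$ ``handles the \CNF{} part.'' Fooling each factor with a different independent component does not imply fooling their product. Concretely, conditioning on $\widetilde y^\ell$, you would need $y^\ell$ to fool a \CNF{} times a parity, and $\log(m/\delta)$-wise uniformity does not guarantee that.

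The correct mechanism, which is the one the paper uses, is to absorb the parity factor. Since $x_j=2\cdot\ind[x_j=1]-1$, the product $(T^Bx)_{i_1}\cdots(T^Bx)_{i_c}$ is a weight-$O_c(n^{c/2})$ combination of conjunctions of literals. A width-$k\whd$ \CNF{} conjoined with literals is still a width-$k\whd$ \CNF{}. So each Taylor term is a combination of width-$k\whd$ \CNF{}s of weight $O_c(m^k\sqrt n/\lambda)^c$, which is exactly what \cref{lem:cnf-decomposition} says. It is therefore fooled by $\widetilde y^\ell$ alone.

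XORing with the independent $y^\ell$ is harmless. For each fixed value of $y^\ell$, shifting a \CNF{} by that fixed string gives a \CNF{} of the same width, and the uniform distribution is shift invariant. The $r_{\newsf{bucket}}$-wise uniformity of $y^\ell$ is used only in the remainder, to give $\widehat y^\ell_B$ the tail moment bound.

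\textbf{Step 3b.} Three smaller points:
\begin{itemize}
\item Your ``refined'' Taylor bound $\sum|\partial_{i_1\cdots i_d}\mollifier|\,|\Delta_{i_1}\cdots\Delta_{i_d}|$ is at most $\norm{\mollifier^{(d)}}_1\norm{\Delta}_\infty^d$. It is therefore the same estimate as \cref{fact:multidim-taylor}, not a sharper one.
\item \cref{lem:ost-tail-bound} gives $O_d(\tau\log(m/\delta)+\sqrt{\log(m/\delta)/L})^d$, not $(\sqrt{\log(m/\delta)}(\tau+L^{-1/2}))^d$; the $\tau$-term carries a full logarithm. This does not break the argument. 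The $\tau$ in \cref{fig:threshold-parameters} carries an extra $\sqrt{\log m}$ relative to $L^{-1/2}$, which absorbs it, as in \cref{eq:error}.
\item \cref{lem:ost-tail-bound} is an average over $h$. So rather than literally conditioning on a good $h$, bound the good-hash indicator by $1$ and average the remainder over $h$, as the paper does in \cref{eq:intro-hybrid}.
\end{itemize}
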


Recall from \cref{def:threshold-ost-generator} that the random variable
$z$ is the coordinate-wise $\xor$ of $\breve y$ and $y^\star$.
We can also view a uniform $u$ as the coordinate-wise $\xor$ of $u$ and $y^\star$.
Since $y^\star$ can be absorbed in the coefficient matrix $A$,
it suffices to show that $\breve y$ fools the mollifier, i.e., that
\[
    \abs{
        \E_{u\sim\pmcube n} \Br{\mollifier_{k,b,\lambda}(Au)}
        -
        \E_{\breve y} \Br{\mollifier_{k,b,\lambda}(A\breve y)}  
    }
    ~=~
    O(\delta)\enspace.
\]
We prove this by a standard hybrid argument,
replacing the random variables one bucket at a time.
For each hash function $h$ and each bucket index $\ell=0,1,\ldots,L$, define
the hybrid random variable $x^{h,\ell}$ by
\begin{equation} \label{eq:hybrid-x}
    x^{h, \ell}_j
    \coloneqq
    \begin{cases}
        ~\breve y_j & \text{if } h(j) \leq \ell\,,\\[0.1em]
        ~u_j & \text{otherwise}\,.
    \end{cases}
\end{equation}
Then $x^{h,0}=u$, $x^{h,L}=\breve y$, and for each $\ell\in[L]$,
$x^{h, \ell}$ is obtained from $x^{h, \ell-1}$ by replacing the coordinates
in the $\ell$-th bucket with the corresponding coordinates of $\breve y$.

Note that the $(\Khead,\tau)$-standardized $A$ can be written as
$A = H + T$ as in \cref{eq:head-tail-decomposition}.
As long as the hash function $h$ is such that every bucket is $w$-sparse with respect to $H$,
which happens with high probability by \cref{prop:ost-head-good-hash}, 
we can show that
\begin{equation}\label{eq:single-hybrid}
    \E_{u\sim\pmcube n} \Br{\mollifier_{k,b,\lambda}(Ax^{h, \ell-1})} \approx \E_{u\sim\pmcube n} \Br{\mollifier_{k,b,\lambda}(Ax^{h, \ell})} \enspace.
\end{equation}
This is done by a Taylor expansion of the mollifier.
To be more precise, let $s \in \pmcube{[n]\setminus h^{-1}(\ell)}$
be the variables that $x^{h, \ell}$ and $x^{h, \ell-1}$ have in common.
We may write
\begin{align}\label{eq:single-hybrid-explain}
    \mollifier_{k,b,\lambda}(Ax^{h, \ell-1}) 
     = \mollifier_{k,b,\lambda}(A^{[n]\setminus h^{-1}(\ell)}\cdot s + A^{h^{-1}(\ell)}\cdot u_{h^{-1}(\ell)}) 
     = \mollifier_{k,b-A^{[n]\setminus h^{-1}(\ell)}\cdot s,\lambda}(A^{h^{-1}(\ell)}\cdot u_{h^{-1}(\ell)})
\end{align}
where the second equality follows from the definition of the mollifier.
Using the decomposition of $A$, we have
$ A^{h^{-1}(\ell)}\cdot u_{h^{-1}(\ell)} = H^{h^{-1}(\ell)}\cdot u_{h^{-1}(\ell)} + T^{h^{-1}(\ell)}\cdot u_{h^{-1}(\ell)}$.
Letting $b' = b-A^{[n]\setminus h^{-1}(\ell)}\cdot s$,
expanding the last term in \cref{eq:single-hybrid-explain} around the head $H^{h^{-1}(\ell)}\cdot u_{h^{-1}(\ell)}$ according to \cref{fact:multidim-taylor} gives
\begin{align}\label{eq:taylor-expansion-mollifier}
    &\mollifier_{k,b', \lambda}(A^{h^{-1}(\ell)} u_{h^{-1}(\ell)}) \nonumber \\
    &\qquad =  \sum_{c = 0}^{d - 1} \frac{1}{c!} \cdot \sum_{i_1, \ldots, i_c=1}^m
    \partial_{i_1, \ldots, i_c} \mollifier_{k,b',\lambda}(H^{h^{-1}(\ell)} u_{h^{-1}(\ell)}) \cdot (T^{h^{-1}(\ell)} u_{h^{-1}(\ell)})_{i_1} \cdots (T^{h^{-1}(\ell)} u_{h^{-1}(\ell)})_{i_c}  \nonumber \\
    &\qquad\qquad + \newsf{err}_d(H^{h^{-1}(\ell)} u_{h^{-1}(\ell)}, T^{h^{-1}(\ell)} u_{h^{-1}(\ell)})\enspace,
\end{align}
where
\begin{align*}
    \abs{\newsf{err}_d(H^{h^{-1}(\ell)} u_{h^{-1}(\ell)}, T^{h^{-1}(\ell)} u_{h^{-1}(\ell)}) }
    \leq \frac{1}{d!} \cdot \norm{ \mollifier_{k,b',\lambda}^{(d)} }_1 \cdot \norm{T^{h^{-1}(\ell)} u_{h^{-1}(\ell)}}_\infty^d \enspace.
\end{align*}
Therefore, to show \cref{eq:single-hybrid},
it suffices to show that the generator fools each term in the above expansion,
and that the error term is small.

To fool each term, 
we prove the following lemma in \cref{sec:cnf-derivative},
which shows that a random variable that fools width-$kw$ \CNF{} formulas
also fools the terms in the Taylor expansion,
as long as the head is $w$-sparse.

\begin{restatable}{lemma}{bucketreplacement}\label{lem:one-bucket-replacement}
    Let $B \subseteq [n]$ and $H^B, T^B \in \R^{m \times |B|}$
    such that each row of $H^B$ is supported on at most $w$ coordinates
    and each row of $T^B$ has norm at most $1$.
    Let $u$ be a uniform random variable over $\pmcube{B}$ and $y$ be a random variable over $\pmcube{B}$ that fools all width-$kw$ \CNF{} formulas with error at most $\deltaCNF$.
    Then for every $b\in\R^m$, $\lambda>0$, $c\in\N$ and $i_1, \ldots, i_c \in [m]$, we have
    \begin{align*}
        &\abs{
            \E_{u} \Br{\partial_{i_1, \ldots, i_c} \mollifier_{k,b,\lambda}(H^B u) \cdot (T^B u)_{i_1} \cdots (T^B u)_{i_c}}
            -
            \E_{y} \Br{\partial_{i_1, \ldots, i_c} \mollifier_{k,b,\lambda}(H^B y) \cdot (T^B y)_{i_1} \cdots (T^B y)_{i_c}}
        } \\
        & \qquad\qquad\qquad\qquad\qquad\qquad\qquad\qquad\qquad\qquad\qquad\qquad\qquad\qquad\qquad =~ \deltaCNF \cdot O_c\!\br{ \frac{m^k \sqrt{n}}{\lambda} }^c \enspace,
    \end{align*}
    where $\mollifier_{k,b,\lambda}$ is the Bentkus-type mollifier defined in \cref{def:mollifier} for the $k$-out-of-$m$ threshold set $\ort_{k,b}$.
\end{restatable}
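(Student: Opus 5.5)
We follow the strategy of \cite[Section 8]{OST22}: write the test function $\Psi(u)\coloneqq \partial_{i_1,\ldots,i_c}\mollifier_{k,b,\lambda}(H^B u)\cdot\prod_{t=1}^c (T^B u)_{i_t}$ as a combination of width-$kw$ \CNF{} formulas with bounded total weight $W$. Fooling each formula with error $\deltaCNF$ then gives total error $\deltaCNF\cdot W$, so the goal is $W=O_c(m^k\sqrt n/\lambda)^c$.

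\emph{Step 1: the derivative factor.} The vector $H^Bu$ takes finitely many values. Its $i$-th coordinate is a function of the at most $w$ bits $u_{\operatorname{supp}(H_i)\cap B}$. By the product structure of \cref{def:mollifier}, $g(u)\coloneqq\partial_{i_1,\ldots,i_c}\mollifier_{k,b,\lambda}(H^Bu)$ is a polynomial in the quantities $\phi_i(u)\coloneqq\widetilde{\ind}_{b_i,\lambda}((H^Bu)_i)$ and their derivatives. Expand $1-\mollifier=\prod_S(1-\prod_{i\in S}\phi_i)$ and apply the product rule. Each resulting term is a product, over all $S\in\binom{[m]}{k}$, of either $(1-\prod_{i\in S}\phi_i)$ or a derivative of $\prod_{i\in S}\phi_i$, and at most $c$ of the factors are differentiated. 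There are at most $\binom{m}{k}^c\cdot k^c$ such terms. By \cref{fact:derivative-smoothed-indicator}, every factor is bounded by $O_c(\lambda^{-j})$, where $j$ is its number of derivatives.

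Each factor $1-\prod_{i\in S}\phi_i$ is a function of at most $kw$ bits with values in $[0,1]$. Decompose it over the values of those $kw$ bits, or better, write $\phi_i=\int_0^1\ind[\phi_i(u)\ge\theta]\,\d\theta$. Then $1-\prod_{i\in S}\phi_i$ becomes an average over thresholds of $1-\AND_{i\in S}(\text{$w$-junta})$, which is an OR of width-$kw$ conditions. A product over $S$ of such ORs is exactly a \CNF{} of width $kw$, and averaging over thresholds keeps the total weight at $1$. The boundedly many differentiated factors are bounded $kw$-juntas. We absorb them by conditioning on the assignment to their variables, giving weight $2^{O(ckw)}$. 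To avoid this blow-up, treat them like the other factors: write each as a weight-$O(\lambda^{-j})$ combination of Boolean $kw$-juntas via \cref{fact:weight-w-combination}, and each such junta as a sum of width-$kw$ terms. Then a product of terms with one \CNF{} is again a width-$kw$ \CNF{}, after expressing the junta-terms as conjunctions of literals.

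\emph{Step 2: the tail factors.} Each $(T^Bu)_{i_t}$ is linear with $\|T_{i_t}\|_2\le1$, hence $\ell_1$-norm at most $\sqrt n$. Expanding the product of the $c$ tail factors gives at most $n^c$ monomials $u_{j_1}\cdots u_{j_c}$ of total coefficient weight at most $n^{c/2}$. Write each $u_j=2\ind[u_j=1]-1$. A product of a \CNF{} with literals is still a width-$kw$ \CNF{}, since literals are width-$1$ clauses. This contributes a factor $O_c(\sqrt n)^c$ with $2^c$ sign terms.

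\emph{Step 3: collect.} The total weight is $\binom{m}{k}^c k^c\cdot O_c(\lambda^{-c})\cdot O(\sqrt n)^c\cdot 2^{O(c)}=O_c(m^k\sqrt n/\lambda)^c$. Linearity of expectation then gives the bound. The main obstacle is Step 1: expressing the undifferentiated product over all $\binom{m}{k}$ clauses as a width-$kw$ \CNF{} combination with weight $O(1)$ rather than exponential in $kw$. The threshold-integral representation of each $\phi_i$ is the device I would try first, since it turns each clause factor into a genuine OR of junta events. I would then check that a common threshold $\theta_i$ per coordinate $i$, shared across all $S\ni i$, keeps the product a single \CNF{}. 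This holds because the $\phi_i$ are shared across clauses, so integrating over independent $\theta_i\sim U[0,1]$ reproduces $\prod_S(1-\prod_{i\in S}\phi_i)$ only after care. Since $\E\bigl[\prod_S(1-\prod_{i\in S}\ind[\theta_i\le\phi_i])\bigr]$ is not equal to $\prod_S(1-\prod_{i\in S}\phi_i)$, I would instead fix $s$ in the head variables directly. As in \cite{OST22}, I would use that $g$ depends on $u$ only through the indicators $\ind[(H^Bu)_i=v]$, and expand in those, which are width-$w$ conjunctions.
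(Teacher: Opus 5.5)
Your proposal has a genuine gap in Step 1, exactly where you locate the main obstacle, and the route you finally settle on does not close it. Expanding $g$ in the value indicators $\ind[(H^Bu)_i=v]$ gives a weight equal to the sum of $|g|$ (or of the factor values) over all realized value cells. That is up to $2^{w}$ cells per coordinate, and up to $2^{kw}$ per clause factor $1-\prod_{i\in S}\phi_i$. Taken over the $\binom{m}{k}$ undifferentiated factors, or jointly over all of $H^Bu$, the weight can be exponentially larger than $O_c(m^k\sqrt n/\lambda)^c$. Since the lemma must hold for every $w$-sparse $H^B$, this does not prove it. These indicators are also not conjunctions in general, because several assignments to the $w$ support bits can produce the same value $v$.

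You reintroduce the same $2^{kw}$ loss for the differentiated factors when you rewrite each Boolean junta ``as a sum of width-$kw$ terms.'' Counting one term per satisfying assignment costs weight up to $2^{kw}$.

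The missing idea is to apply \cref{fact:weight-w-combination} to each whole factor $Q_S(u)=1-\prod_{i\in S}\widetilde{\ind}_{b_i,\lambda}((H^Bu)_i)$ separately, not to the individual $\phi_i$ with shared thresholds. Each $Q_S$ is a $[0,1]$-valued $kw$-junta. Through its level sets $\ind[Q_S\ge\theta]$, it is a weight-$1$ combination of Boolean $kw$-juntas. Weights multiply under products, since $\sum_{a,b}|c_ad_b|=(\sum_a|c_a|)(\sum_b|d_b|)$. Hence the product over all $\binom{m}{k}$ factors is still a weight-$1$ combination of products of Boolean $kw$-juntas.

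No expansion into terms is needed. Every Boolean $kw$-junta is already a width-$kw$ \CNF{}, and an AND of width-$kw$ \CNF{}s is again a width-$kw$ \CNF{}; the \CNF{} generator places no restriction on the number of clauses.

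Your threshold idea becomes exactly this argument if you draw a fresh, independent threshold for each factor $S$. The correlation problem you found comes only from insisting on one $\theta_i$ per coordinate. The differentiated factors are handled the same way, with weight $O_c(\lambda^{-|g^{-1}(S)|})$ from the second part of the Fact. With that fix, your Step 2 (tail monomials via $\ell_1$-norm at most $\sqrt n$) and your final weight accounting go through as in the paper's \cref{lem:cnf-decomposition}.
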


To bound the error term, we use the bounds on the derivatives of the mollifier in \cref{thm:scaled}
and the following moment bound on the tail from \cite{OST22}.

\begin{lemma}[\citetext{\citealp[Lemma 8.10]{OST22}}]\label{lem:ost-tail-bound}
    Let \(h:[n]\to[L]\) be sampled uniformly from an
    \(r_{\newsf{hash}}\)-wise uniform hash family,
    and $y$ be an $r_{\newsf{bucket}}$-wise uniform random variable over $\pmcube n$,
    where $r_{\newsf{hash}}$ and $r_{\newsf{bucket}}$ are as in \cref{fig:threshold-parameters}.
    Let $T\in\R^{m \times n}$ be a matrix whose rows are $\tau$-regular with norm $1$.
    Then, for all $\ell\in[L]$ and integer $d\ge 1$,
    \[
    	\E_{h, y} \Br{\norm{T^{h^{-1}(\ell)} y_{h^{-1}(\ell)}}_\infty^d}
        ~=~
        O_d\!\br{ \tau \log (m/\delta) + \sqrt{ \frac{\log (m/\delta) }{L} }}^d\enspace.
    \]
\end{lemma}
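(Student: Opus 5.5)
The plan is to pass from the sup-norm to a high moment of each row, and then bound the row moments in two layers: first over $y$ with $h$ fixed, then over $h$.

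Fix $\ell$ and write $Z_i \coloneqq (T^{h^{-1}(\ell)} y_{h^{-1}(\ell)})_i = \sum_{j} T_{ij}\,\ind[h(j)=\ell]\, y_j$. Choose an even integer $q=\Theta(\log(m/\delta))$ with $d\le q\le r_{\newsf{bucket}}$ and $q\le r_{\newsf{hash}}$. For small $m/\delta$, where no such $q\ge d$ exists, $d$ is a constant and the claim degenerates to a crude bound that I would check separately. Jensen's inequality together with the inequality $\max_i |Z_i|^q \le \sum_i |Z_i|^q$ gives
\[
  \E\Br{\norm{Z}_\infty^d} ~\le~ \Bigl(\sum_{i=1}^m \E\Br{Z_i^q}\Bigr)^{d/q} ~\le~ m^{d/q}\,\max_i\, \E\Br{Z_i^q}^{d/q}.
\]
Since $q=\Theta(\log(m/\delta))$, the factor $m^{d/q}$ is $O_d(1)$. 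It therefore suffices to show that $\E[Z_i^q]^{1/q} = O\bigl(q\tau + \sqrt{q/L}\bigr)$ for every row $i$.

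\textbf{Step 1 (over $y$, with $h$ fixed).} Expanding $Z_i^q$ produces monomials in at most $q\le r_{\newsf{bucket}}$ of the variables $y_j$. Because $y$ is $r_{\newsf{bucket}}$-wise uniform, $\E_y[Z_i^q]$ equals the same moment under a uniform $y$. The Khintchine (or hypercontractive) inequality then gives
\[
  \E_y\Br{Z_i^q} ~\le~ (q-1)^{q/2}\, W_i^{q/2},
  \qquad\text{where}\quad W_i \coloneqq \sum_j T_{ij}^2\,\ind[h(j)=\ell].
\]

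\textbf{Step 2 (over $h$), which I expect to be the main obstacle.} We need $\E_h\bigl[W_i^{p}\bigr]^{1/p} = O(1/L + p\tau^2)$ with $p=q/2$, using only limited independence of $h$.
\begin{itemize}
  \item Expanding $W_i^p$, each monomial involves at most $p\le r_{\newsf{hash}}$ of the indicators $\ind[h(j)=\ell]$, so its expectation matches the fully independent case.
  \item In the independent case, $W_i$ is a sum of independent nonnegative terms. Each term is bounded by $\tau^2$, by $\tau$-regularity and unit row norm, and the sum has mean $\sum_j T_{ij}^2/L = 1/L$.
  \item A Rosenthal/Bernstein-type moment inequality for such sums then gives $\norm{W_i}_p \le O(1/L + p\tau^2)$.
\end{itemize}
The care needed here is in checking that the moment inequality holds with an absolute constant. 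If it does not, the fallback is to bound $\E W_i^p$ directly by counting the set partitions of the $p$ indices by their distinct $j$'s.

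\textbf{Combining.} Steps 1 and 2 give
\[
  \E\Br{Z_i^q}^{1/q} ~\le~ O(\sqrt q)\,\bigl(1/L + q\tau^2\bigr)^{1/2} ~=~ O\bigl(\sqrt{q/L} + q\tau\bigr).
\]
Substituting $q=\Theta(\log(m/\delta))$ and raising to the power $d$ yields
\[
  \E_{h,y}\Br{\norm{T^{h^{-1}(\ell)} y_{h^{-1}(\ell)}}_\infty^d} ~=~ O_d\!\br{\tau\log(m/\delta) + \sqrt{\log(m/\delta)/L}}^d,
\]
which is the claimed bound.
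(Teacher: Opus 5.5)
The paper does not prove this lemma. It imports it from \cite{OST22} (Lemma~8.10), so there is no in-paper argument to compare against. Your proposal is a correct, self-contained proof by the standard moment method.

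The step you flagged in Step~2 does hold with an absolute constant. Write $W_i=\tau^2 S$ with $S=\sum_j a_j\xi_j$, where $a_j=T_{ij}^2/\tau^2\in[0,1]$, the $\xi_j$ are independent $\mathrm{Bernoulli}(1/L)$, and $\mu=\E S=1/(L\tau^2)$. Convexity gives $\E \e^{\theta S}\le\exp(\mu(\e^{\theta}-1))$. Combining this with $x^p\le (p/(\e\theta))^p \e^{\theta x}$ at $\theta=\min\{1,p/\mu\}$ yields $\norm{S}_p\le \e(\mu+p)$, that is, $\norm{W_i}_p\le \e(1/L+p\tau^2)$. No set-partition counting is needed.

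One point needs correcting: your remark that for small $m/\delta$ the claim ``degenerates to a crude bound.'' No such bound exists. Once $r_{\newsf{bucket}}<d$, the $d$-th moment is not controlled at all.

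For a concrete example, take $L=m=1$ and $n=2^k-1$. Let $T=(n^{-1/2},\dots,n^{-1/2})$, so $\tau=n^{-1/2}$. Let $y$ be the pairwise-uniform string $y_S=(-1)^{\langle s,S\rangle}$, indexed by nonzero $S\in\{0,1\}^k$, with $s$ uniform in $\{0,1\}^k$. Then $Ty=\sqrt n$ with probability $1/(n+1)$, so $\E[(Ty)^4]\approx n$. The right-hand side, with $\log(m/\delta)=2=r_{\newsf{bucket}}$ and $d=4$, is $O(1)$.

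So the statement implicitly requires an even $q$ with $d\le q\le r_{\newsf{bucket}}$, $q/2\le r_{\newsf{hash}}$ and $q=\Theta(\log(m/\delta))$. In other words, both independence parameters must be at least of order $d$. The regime you set aside has to be excluded by the parameter choice rather than handled by a separate argument. In the paper's figure, $r_{\newsf{bucket}}=\log(m/\delta)$ should be raised to something like $\max\{\dhyb,\log(m/\delta)\}$, which costs only a constant factor in the seed length.
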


Now we have all the ingredients to complete the proof of \cref{thm:threshold-mollifier-fool}.
\begin{proof}[Proof of \cref{thm:threshold-mollifier-fool}]
    As explained above, it suffices to show that
    $\breve y$ defined in \cref{def:threshold-ost-generator} fools the mollifier alone.
    Using the hybrid random variables $\{ x^{h, \ell} \}_{\ell=0}^L$ defined in \cref{eq:hybrid-x},
    \begin{align*}
        (\triangle)~&\coloneqq~ \abs{
            \E_{u\sim\pmcube n} \Br{\mollifier_{k,b,\lambda}(Au)}
            -
            \E_{\breve y} \Br{\mollifier_{k,b,\lambda}(A\breve y)}  
        }\\
        &~=~ \abs{
            \E_{h} \Br{\E_{u,\breve y} \Br{\mollifier_{k,b,\lambda}(Ax^{h,0})}
            -
            \E_{u,\breve y} \Br{\mollifier_{k,b,\lambda}(Ax^{h,L})}}
        } \\
        &~\leq~ \E_{h} \Br{\abs{
            \E_{u,\breve y} \Br{\mollifier_{k,b,\lambda}(Ax^{h, 0})}
            -
            \E_{u,\breve y} \Br{\mollifier_{k,b,\lambda}(Ax^{h, L})}
        }} \enspace.
    \end{align*}
    Recall that the $(\Khead,\tau)$-standardized $A$ can be written as
    $A = H + T$ as in \cref{eq:head-tail-decomposition}.
    By \cref{prop:ost-head-good-hash}, except with probability at most $\delta$ over the choice of $h$, every bucket $h^{-1}(\ell)$ is $\whd$-sparse with respect to $H$.
    Let $\mathcal{E}$ denote the event that all buckets are $\whd$-sparse,
    and $\ind_{\mathcal{E}}$ be the indicator random variable for $\mathcal{E}$.
    Then, we have
    \begin{align}\label{eq:intro-hybrid}
        (\triangle)
        \leq ~&~ \E_{h} \Br{\abs{
            \E_{u,\breve y} \Br{\mollifier_{k,b,\lambda}(Ax^{h, 0})}
            -
            \E_{u,\breve y} \Br{\mollifier_{k,b,\lambda}(Ax^{h, L})}
        } \cdot \ind_{\mathcal{E}}} ~+~ \delta \enspace.
    \end{align}
    Now, for every fixed $h$ such that $\mathcal{E}$ holds,
    we claim that for all integers $d\ge1$,
    \begin{align}\label{eq:hybrid-goal}
        &\abs{
            \E_{u,\breve y} \Br{\mollifier_{k,b,\lambda}(Ax^{h, 0})}
            -
            \E_{u,\breve y} \Br{\mollifier_{k,b,\lambda}(Ax^{h, L})}
        }
        ~\leq~
        L \cdot \deltaCNF \cdot O\!\br{ \frac{m^{k+1} \sqrt{n}}{\lambda} }^{d-1} \nonumber\\
        &\quad+~
        O\!\br{ \frac{k\sqrt{\log(m/k)}}{\lambda} }^d \cdot
        \sum_{\ell = 1}^L 
        \left(
            \E_{u}\Br{\norm{T^{h^{-1}(\ell)} u_{h^{-1}(\ell)}}_\infty^d}
            +
            \E_{\widehat{y}^\ell}\Br{\norm{T^{h^{-1}(\ell)} \widehat{y}^\ell_{h^{-1}(\ell)}}_\infty^d}
        \right) \enspace,
    \end{align}
    where $\widehat{y}^\ell$ is defined in \cref{def:threshold-ost-generator}.
    Combining \cref{eq:intro-hybrid} and \cref{eq:hybrid-goal}, we have
    \begin{align} \label{eq:error}
        (\triangle)
        \leq ~& L \cdot \deltaCNF \cdot O\!\br{ \frac{m^{k+1} \sqrt{n}}{\lambda} }^{d-1} \nonumber\\
        &~+
        O\!\br{ \frac{k\sqrt{\log(m/k)}}{\lambda} }^d \cdot
        \sum_{\ell = 1}^L 
        \left(
            \E_{h,u}\Br{\norm{T^{h^{-1}(\ell)} u_{h^{-1}(\ell)}}_\infty^d}
            +
            \E_{h,\widehat{y}^\ell}\Br{\norm{T^{h^{-1}(\ell)} \widehat{y}^\ell_{h^{-1}(\ell)}}_\infty^d}
        \right)  + \delta \nonumber\\
        = L & \cdot \deltaCNF \cdot O\!\br{ \frac{m^{k+1} \sqrt{n}}{\lambda} }^{d-1}\!\!\!\!\!+ O\!\br{ \frac{k\sqrt{\log(m/k)}}{\lambda} }^d\!\! \cdot
        L\cdot O\!\br{ \tau \log (m/\delta) + \sqrt{ \frac{\log (m/\delta) }{L} }}^d\!\! + \delta \nonumber\\
        = L & \cdot \deltaCNF \cdot O\!\br{ \frac{m^{k+1} \sqrt{n}}{\lambda} }^{d-1}\!\!\!\!\!+ L \cdot O\!\br{
            \frac{\tau k\sqrt{\log(m/k)}\log (m/\delta)}{\lambda} + \frac{k\sqrt{\log(m/k)\log (m/\delta) }}{\lambda \sqrt{L}}
        }^d \,,
    \end{align}
    where the first equality follows from \cref{lem:ost-tail-bound}, since both $u$ and $\widehat{y}^\ell$ are $r_{\newsf{bucket}}$-wise uniform, and each row of $T$ has norm $1$.
    Taking $d = \dhyb$ and using the parameters in \cref{fig:threshold-parameters}, we have
    \begin{align*}
        (\triangle) &= O(\delta) + \frac{k^5\cdot \log^5\! m}{\delta^{2+\varepsilon}}\cdot
        O\!\br{
            \delta^{\varepsilon}\cdot \frac{ \log (m/k) \cdot \log^{1.5}\!(m/\delta) }{k^{\varepsilon}\cdot \log^{2.5+\varepsilon}\! m}
            +
            \delta^{\varepsilon}\cdot \frac{ \log (m/k) \cdot \log (m/\delta) }{k^{\varepsilon}\cdot \log^{2+\varepsilon}\! m}
        }^{\dhyb}\,.
    \end{align*}
    Choosing $\dhyb$ such that $\dhyb\cdot\varepsilon$ is sufficiently large, we have $(\triangle) = O(\delta)$ as desired.

    Below, we prove \cref{eq:hybrid-goal} for every fixed $h$
    such that for every $\ell\in[L]$, the bucket $h^{-1}(\ell)$ is $\whd$-sparse with respect to $H$.
    Fix such an $h$.
    By the triangle inequality, we have
    \begin{align}\label{eq:hybrid-triangle}
        \abs{
            \E_{u,\breve y} \Br{\mollifier_{k,b,\lambda}(Ax^{h, 0})}
            -
            \E_{u,\breve y} \Br{\mollifier_{k,b,\lambda}(Ax^{h, L})}
        } ~\leq~
        \sum_{\ell=1}^L \abs{
            \E_{u,\breve y} \Br{\mollifier_{k,b,\lambda}(Ax^{h, \ell-1})}
            -
            \E_{u,\breve y} \Br{\mollifier_{k,b,\lambda}(Ax^{h, \ell})}
        }\enspace.
    \end{align}
    Fix $\ell\in[L]$,
    let $B \coloneqq h^{-1}(\ell)$ be the $\ell$-th $\whd$-sparse bucket replaced in the $\ell$-th hybrid step,
    and let $s \in \pmcube{[n]\setminus B}$
    be the variables that $x^{h, \ell}$ and $x^{h, \ell-1}$ have in common.
    As in \cref{eq:single-hybrid-explain}, let $b' = b - A^{[n]\setminus B}\cdot s$
    and we have
    \begin{align*}
        \mollifier_{k,b,\lambda}(Ax^{h, \ell-1}) 
        = \mollifier_{k,b',\lambda}(A^B\cdot u_B)
        \quad\text{and}\quad
        \mollifier_{k,b,\lambda}(Ax^{h, \ell}) 
        = \mollifier_{k,b',\lambda}(A^B\cdot \breve{y}_B) \enspace.
    \end{align*}
    Thus, since $u_B$ and $\breve{y}_B$ are independent of $s$, we have
    \begin{align}\label{eq:one-step-in-hybrids}
        \abs{
            \E_{u,\breve y} \Br{\mollifier_{k,b,\lambda}(Ax^{h, \ell-1})}
            -
            \E_{u,\breve y} \Br{\mollifier_{k,b,\lambda}(Ax^{h, \ell})}
        } = &~ \abs{
            \E_{u,\breve y} \Br{\mollifier_{k,b',\lambda}(A^B u_B)}
            -
            \E_{u,\breve y} \Br{\mollifier_{k,b',\lambda}(A^B \breve{y}_B)}
        } \nonumber\\ \leq &~
        \E_{s}\biggl[
            \underbrace{\abs{
                \E_{u_B} \Br{\mollifier_{k,b',\lambda}(A^B u_B)}
                -
                \E_{\breve{y}_B} \Br{\mollifier_{k,b',\lambda}(A^B \breve{y}_B)}
            }}_{(\Diamond)}
        \biggr] \enspace.
    \end{align}
    Expanding $\mollifier_{k,b',\lambda}(A^B u_B)$ and $\mollifier_{k,b',\lambda}(A^B \breve y_B)$
    around $H^B u_B$ and $H^B \breve y_B$ up to order $d-1$, respectively,
    as in \cref{eq:taylor-expansion-mollifier}, we obtain
    \begin{equation}\label{eq:taylor-expansion-mollifier-hybrid}
    \resizebox{0.93\textwidth}{!}{$
    \begin{aligned}
    (\Diamond)
    \leq
    &\sum_{c = 0}^{d - 1} \frac{1}{c!}
    \sum_{i_1, \ldots, i_c=1}^m
    \Bigg|
        \E_{u_B}\Br{
            \partial_{i_1, \ldots, i_c}
            \mollifier_{k,b',\lambda}(H^B u_B)
            \prod_{t=1}^c (T^B u_B)_{i_t}
        }
    -
        \E_{\breve y_B}\Br{
            \partial_{i_1, \ldots, i_c}
            \mollifier_{k,b',\lambda}(H^B \breve y_B)
            \prod_{t=1}^c (T^B \breve y_B)_{i_t}
        }
    \Bigg| \\
    &\quad
    + \frac{1}{d!}\cdot\norm{\mollifier_{k,b',\lambda}^{(d)}}_1\cdot
    \left(
        \E_{u_B}\Br{\norm{T^B u_B}_\infty^d}
        +
        \E_{\breve y_B}\Br{\norm{T^B \breve y_B}_\infty^d}
    \right)\enspace.
    \end{aligned}
    $}
    \end{equation}
    \par\noindent Since $B$ is $\whd$-sparse with respect to $H$,
    each row of $H^B$ has at most $\whd$ non-zero entries by
    definition in \cref{eq:def-sparse-bucket}.
    Moreover, each row of $T$ is $\tau$-regular with norm $1$ (see \cref{eq:head-tail-decomposition}),
    and hence each row of $T^B$ has norm at most $1$.
    By the definition of $\breve y$ in \cref{def:threshold-ost-generator},
    $\breve y_B$ is the coordinate-wise $\xor$ of $y_B^\ell$ and $\widetilde{y}_B^\ell$, where $\widetilde{y}_B^\ell$ fools all width-$k\whd$ \CNF{} formulas with error at most $\deltaCNF$.
    Thus, $\breve y_B$ also $\deltaCNF$-fools all width-$k\whd$
    \CNF{} formulas.
    Applying \cref{lem:one-bucket-replacement} gives
    {\small\begin{align*}
        \Bigg|
            \E_{u_B}\Br{
                \partial_{i_1, \ldots, i_c}
                \mollifier_{k,b',\lambda}(H^B u_B)
                \prod_{t=1}^c (T^B u_B)_{i_t}
            } 
        -
            \E_{\breve y_B}\Br{
                \partial_{i_1, \ldots, i_c}
                \mollifier_{k,b',\lambda}(H^B \breve y_B)
                \prod_{t=1}^c (T^B \breve y_B)_{i_t}
            }
        \Bigg|
        = \deltaCNF \cdot O_c\!\br{ \frac{m^k \sqrt{n}}{\lambda} }^c .
    \end{align*}}
    \par\noindent Inserting this bound into
    \cref{eq:taylor-expansion-mollifier-hybrid}
    and using the bounds on the derivatives in \cref{thm:scaled} gives
    \begin{align*}
        (\Diamond)
        ~\leq~   \deltaCNF \cdot O \!\br{ \frac{m^{k+1} \sqrt{n}}{\lambda} }^{d-1}
        + ~
        O \!\br{ \frac{k\sqrt{\log(m/k)}}{\lambda} }^{d} 
        \cdot
        \left(
            \E_{u_B}\Br{\norm{T^B u_B}_\infty^d}
            +
            \E_{\breve y_B}\Br{\norm{T^B \breve y_B}_\infty^d}
        \right) \,.
    \end{align*}
    Combining this with \cref{eq:hybrid-triangle,eq:one-step-in-hybrids},
    we have
    \begin{align*}
        &\abs{
            \E_{u,\breve y} \Br{\mollifier_{k,b,\lambda}(Ax^{h, 0})}
            -
            \E_{u,\breve y} \Br{\mollifier_{k,b,\lambda}(Ax^{h, L})}
        }
        ~\leq~
        L \cdot \deltaCNF \cdot O\!\br{ \frac{m^{k+1} \sqrt{n}}{\lambda} }^{d-1} \\
        &\quad+~
        O\!\br{ \frac{k\sqrt{\log(m/k)}}{\lambda} }^d \cdot~
        \sum_{\ell = 1}^L 
        \left(
            \E_{u_{h^{-1}(\ell)}}\Br{\norm{T^{h^{-1}(\ell)} u_{h^{-1}(\ell)}}_\infty^d}
            +
            \E_{\breve y_{h^{-1}(\ell)}}\Br{\norm{T^{h^{-1}(\ell)} \breve y_{h^{-1}(\ell)}}_\infty^d}
        \right) \enspace.   
    \end{align*}
    \cref{eq:hybrid-goal} then follows from the fact $\breve y_{h^{-1}(\ell)}=\widehat y^\ell_{h^{-1}(\ell)}$. 
\end{proof}

\subsection{Fooling Taylor Terms}
\label{sec:cnf-derivative}

This section proves \cref{lem:one-bucket-replacement},
which shows that any random variable fooling width-$kw$ \CNF{} formulas also fools the Taylor terms,
provided that the head is $w$-sparse.
Before that, we first show that the Taylor terms can be expressed as
linear combinations of width-$kw$ \CNF{} formulas with bounded total weight,
which is the key structural consequence of sparse heads.

\begin{lemma}\label{lem:cnf-decomposition}
    Let $B$, $H^B$ and $T^B$ be as in \cref{lem:one-bucket-replacement}.
    For every $b\in\R^m$, $\lambda>0$, $c\in\N$ and $i_1, \ldots, i_c \in [m]$, define
    $f:\pmcube B \to \R$ by
    \[
    	f(x) \coloneqq \partial_{i_1, \ldots, i_c} \mollifier_{k,b,\lambda}(H^B x) \cdot (T^B x)_{i_1} \cdots (T^B x)_{i_c} \enspace,
    \]
    where $\mollifier_{k,b,\lambda}$ is the Bentkus-type mollifier defined in \cref{def:mollifier}
    for the $k$-out-of-$m$ threshold set $\ort_{k,b}$.
    Then $f$ is a combination of width-$kw$ \CNF{} formulas with
    weight at most $O_c\!\br{m^k \sqrt{n}/\lambda }^c$.
\end{lemma}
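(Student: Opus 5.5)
The plan is to mirror the corresponding argument of \cite{OST22} for polytopes. I will split $f$ into two factors, write each as a combination of small-width objects, and multiply. Two facts drive everything: a product of CNFs is again a CNF whose width is the maximum width, and the coefficient weights multiply.

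\textbf{The tail factor.} Each $(T^Bx)_{i_t}=\sum_{j\in B}T_{i_t,j}x_j$ is a linear combination of the width-$1$ literals $x_j$. Writing each $x_j$ as $2\ind[x_j=1]-1$, this gives weight at most $3\norm{T_{i_t}}_1\le 3\sqrt{n}$, by Cauchy--Schwarz and $\norm{T_{i_t}}_2\le1$. The product of the $c$ tail factors is then a combination of conjunctions of at most $c$ literals with total weight $O(\sqrt n)^c$. These are ANDs of literals, hence width-$1$ CNFs.

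\textbf{The derivative factor.} Put $g(x)=\partial_{i_1,\dots,i_c}\mollifier_{k,b,\lambda}(H^Bx)$. Differentiate the product $\prod_{S}(1-\prod_{i\in S}\widetilde{\ind}_{b_i,\lambda}(y_i))$ by the product rule, distributing the $c$ derivatives among at most $c$ of the $\binom mk\le m^k$ factors. This yields at most $(m^k)^c$ terms. Each term is a product of three pieces:
\begin{itemize}
\item the untouched factors $1-\prod_{i\in S}\widetilde{\ind}_{b_i,\lambda}(y_i)$;
\item at most $c$ differentiated factors, each equal to $-\prod_{i\in S}\widetilde{\ind}^{(a_i)}_{b_i,\lambda}(y_i)$ with $\sum_i a_i\le c$;
\item a sign.
\end{itemize}
By \cref{fact:derivative-smoothed-indicator}, every derivative factor $\widetilde{\ind}^{(a)}_{b_i,\lambda}$ has sup norm $O_a(\lambda^{-a})$. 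So each of the $m^{kc}$ terms is bounded by $O_c(\lambda^{-c})$.

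Now substitute $y=H^Bx$. Each coordinate $y_i=(H^Bx)_i$ depends on at most $w$ bits, so every one-variable function $\psi(y_i)$ becomes a $w$-junta of $x$, with range in $[-\norm{\psi}_\infty,\norm{\psi}_\infty]$.

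\textbf{Expressing a single term.} The main obstacle is to write a term as a bounded-weight combination of width-$kw$ CNFs, even though it involves an unbounded product over all $S$. The untouched part
\[
\prod_{S}\Bigl(1-\prod_{i\in S}p_i(x)\Bigr),\qquad p_i(x):=\widetilde{\ind}_{b_i,\lambda}((H^Bx)_i)\in[0,1],
\]
cannot be expanded naively. Instead I will use the randomized-rounding trick of \cite{OST22}.

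Write each $p_i(x)$ as $\E_{\theta_i}\bigl[\ind[p_i(x)\ge\theta_i]\bigr]$ with $\theta_i\sim U[0,1]$ independent. Since the factors for different $S$ share coordinates, the $\theta_i$ must be shared across clauses, and independence across distinct $i$ makes this work: for fixed $\theta$, the product over $S$ is exactly an OR-free expression of the form $\prod_S(1-\prod_{i\in S}\ind[p_i(x)\ge\theta_i])$. Each event $\ind[p_i(x)\ge\theta_i]$ is a $w$-junta, so $\prod_{i\in S}$ of them is a $kw$-junta. Then $1-\prod_{i\in S}$ is a $kw$-junta that can be written as a single width-$kw$ CNF. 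The product over $S$ is therefore a width-$kw$ CNF.

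The catch is that $\prod_S(1-\prod_{i\in S}p_i)$ is multilinear in the $p_i$, so its expectation over independent $\theta$ equals the product of the expectations. Hence the identity
\[
\E_\theta\Bigl[\prod_S\Bigl(1-\prod_{i\in S}\ind[p_i\ge\theta_i]\Bigr)\Bigr]=\prod_S\Bigl(1-\prod_{i\in S}p_i\Bigr)
\]
holds, and the untouched part is a convex combination (an integral) of width-$kw$ CNFs, with weight $1$.

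The differentiated factors depend on at most $kc$ coordinates of $y$, hence on at most $kcw$ bits of $x$. For the width bound I will group each such factor with its own $k$ coordinates of $y$, so that each is a $kw$-junta with range $O_c(\lambda^{-c})$. By \cref{fact:weight-w-combination}, each is then a combination of Boolean $kw$-juntas, that is, width-$kw$ CNFs, of weight $O_c(\lambda^{-c})$. I am least sure about keeping the width at $kw$ rather than $ckw$ here. If needed, I will instead absorb the constant-$c$ loss into the $O_c$ notation and the parameter choice for $\Wcnf$.

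\textbf{Putting the pieces together.} Multiplying everything gives a combination of width-$kw$ CNFs with total weight
\[
m^{kc}\cdot O_c(\lambda^{-c})\cdot O(\sqrt n)^c=O_c\bigl(m^k\sqrt n/\lambda\bigr)^c,
\]
which is the claimed bound.
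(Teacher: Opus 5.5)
There is a genuine error in the step you flag as the main obstacle. The polynomial $\prod_{S}\bigl(1-\prod_{i\in S}p_i\bigr)$ is \emph{not} multilinear in the $p_i$ when $2\le k<m$. Each $p_i$ occurs in $\binom{m-1}{k-1}$ factors, so the expansion contains terms such as $p_1^2p_2p_3$. With shared roundings $X_i=\ind[p_i\ge\theta_i]$ you have $X_i^2=X_i$. Indeed, for Boolean $X$,
\[
\prod_S\Bigl(1-\prod_{i\in S}X_i\Bigr)=\ind\Bigl[\sum_iX_i<k\Bigr],
\]
so your left-hand side equals $\Pr_\theta[\sum_i X_i<k]$, not the Bentkus-type product. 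Concretely, take $m=4$, $k=2$, $p_i\equiv 1/2$: the left side is $5/16$, while the right side is $(3/4)^6=729/4096$. The identity does hold for $k=1$ and $k=m$, which is why it looks plausible from the polytope case.

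The fix is to drop the belief that the thresholds must be shared across clauses. Treat each factor $Q_S(x)=1-\prod_{i\in S}p_i(x)$ as a function of $x$ and give it its own independent threshold: $Q_S(x)=\E_{\theta_S}\ind[Q_S(x)\ge\theta_S]$. Pointwise in $x$, independence of the $\theta_S$ gives $\prod_SQ_S(x)=\E_\theta\prod_S\ind[Q_S(x)\ge\theta_S]$. Each $\ind[Q_S\ge\theta_S]$ is a Boolean $kw$-junta, so the untouched part is a weight-$1$ combination of ANDs of $kw$-juntas, i.e.\ of width-$kw$ CNFs. This is exactly the paper's argument: \cref{fact:weight-w-combination} applied factor by factor, with the weights multiplying.

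With that repair, the rest of your proof goes through and matches the paper. Your worry about width $ckw$ is unfounded. Grouping each differentiated factor with its own $S$ gives a $kw$-junta, and an AND of any number of width-$kw$ CNFs is still a width-$kw$ CNF.
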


\begin{proof}
    Fix $b\in\R^m$, $\lambda>0$, $c\in\N$ and $I = (i_1, \ldots, i_c) \in [m]^c$.
    By the product rule for multivariate derivatives, we have~\footnote{For $c=0$,
    the constant $1$ remains, but the same argument applies. We omit this term for simplicity.}
    \begin{align*}
        \partial_I \mollifier_{k,b,\lambda}(H^B x)
        \coloneqq \partial_{i_1, \ldots, i_c} \mollifier_{k,b,\lambda}(H^B x) 
        &= -~\partial_{i_1, \ldots, i_c}\!\! \br{ \prod_{S\in \binom{[m]}{k}}
        \br{1- \prod_{\ell\in S} \widetilde{\ind}_{b_\ell,\lambda} (H^B_\ell x) }} \\
        &= -~\sum_{g: [c] \to \binom{[m]}{k}} \prod_{S\in \binom{[m]}{k}} \partial_{g^{-1}(S)}\!\!
        \br{1- \prod_{\ell\in S} \widetilde{\ind}_{b_\ell,\lambda} (H^B_\ell x) }\enspace,
    \end{align*}
    where $\partial_{g^{-1}(S)}$ denotes the partial derivative with respect to
    all $i_j$ such that $j\in g^{-1}(S)$.
    For each $S\in \binom{[m]}{k}$, the function
    $\partial_{g^{-1}(S)}\!\!
        \br{1- \prod_{\ell\in S} \widetilde{\ind}_{b_\ell,\lambda} (H^B_\ell x) }$
    is zero if $g^{-1}(S)$ contains an index $j$ such that $i_j \notin S$.
    Thus, we assume that $i_j\in S$ for every $j\in g^{-1}(S)$.
    For each $\ell\in S$, let $\alpha_{\ell,S}$ be the number of indices $j\in g^{-1}(S)$ such that $i_j = \ell$, so $\sum_{\ell\in S} \alpha_{\ell,S} = |g^{-1}(S)|$ and $\sum_{S}\sum_{\ell\in S} \alpha_{\ell,S} = c$.
    Then, we have
    {\small\begin{align*}
        &\partial_I \mollifier_{k,b,\lambda}(H^B x)\\
        = &-\!\! \sum_{\substack{
                g: [c] \to \binom{[m]}{k}\\
                i_j\in g(j)\ \forall j\in[c]
            }}
            \prod_{S\in \binom{[m]}{k}} 
        \left(
            -\prod_{\ell\in S}
            \widetilde{\ind}_{b_\ell,\lambda}^{(\alpha_{\ell,S})}(H^B_\ell x)
            \cdot \ind[g^{-1}(S)\neq\emptyset]
            +
            \br{1-\prod_{\ell\in S}
            \widetilde{\ind}_{b_\ell,\lambda}(H^B_\ell x)}
            \cdot \ind[g^{-1}(S)=\emptyset] 
        \right)\enspace.
    \end{align*}}
    \par\noindent
    Since each row of $H^B$ is supported on at most $w$ coordinates and $\abs{S} = k$,
    $1-\prod_{\ell\in S} \widetilde{\ind}_{b_\ell,\lambda}(H^B_\ell x)$
    is $[0,1]$-valued and depends on at most $kw$ coordinates.
    Hence, by \cref{fact:weight-w-combination}, it is a
    weight-$1$ combination of Boolean $kw$-juntas.
    By \cref{fact:weight-w-combination} and
    \cref{fact:derivative-smoothed-indicator},
    each $\widetilde{\ind}_{b_\ell,\lambda}^{(\alpha_{\ell,S})}(H^B_\ell x)$
    is a weight-$O(\lambda^{-\alpha_{\ell,S}})$ combination of Boolean $w$-juntas,
    with weight $1$ when $\alpha_{\ell,S}=0$.
    Therefore,
    $\prod_{\ell\in S}\widetilde{\ind}_{b_\ell,\lambda}^{(\alpha_{\ell,S})}(H^B_\ell x)$
    is a weight
    $O_c\!\br{\lambda^{-\sum_{\ell\in S}\alpha_{\ell,S}}}$
    combination of products of Boolean $w$-juntas.
    Thus, $\partial_I \mollifier_{k,b,\lambda}(H^B x)$ is a combination of
    products of Boolean $kw$-juntas, and the weight is bounded by
    \[
    	\binom{m}{k}^c \cdot O_c\!\br{ \frac{1}{\lambda} }^c
        ~=~
        O_c\!\br{ \frac{m^k}{\lambda} }^c \enspace.
    \]
    In other words, we can write
    $\partial_I \mollifier_{k,b,\lambda}(H^B x) = \sum_{j} c_j \cdot u_j(x)$
    where each $u_j$ is a product of Boolean $kw$-juntas
    and $\sum_j |c_j| \le O_c\!\br{ m^k/\lambda }^c$.

    It remains to handle the tail monomial.
    For each $t\in[c]$,
    since each row of $T^B$ has norm at most $1$,
    the Cauchy--Schwarz inequality gives
    $ \lVert T^B_{i_t} \rVert_1 \le \sqrt n $.
    Hence, $(T^B x)_{i_t}$ is a weight-$2\sqrt n$ combination of Boolean $1$-juntas.
    Therefore, $(T^B x)_{i_1}\cdots (T^B x)_{i_c}$
    is a weight-$O_c(n^{c/2})$ combination of products of Boolean $1$-juntas.

    Combining the above decompositions, we can write $f(x)=\sum_j a_j \cdot F_j(x)$,
    where each $F_j$ is a product of Boolean $kw$-juntas and Boolean $1$-juntas, and
    \[
        \sum_j |a_j|
        ~\le~
        O_c\!\br{\frac{m^k}{\lambda}}^c \cdot~ O_c(n^{c/2})
        ~=~
        O_c\!\br{\frac{m^k\sqrt n}{\lambda}}^c .
    \]
    Since products correspond to conjunctions,
    each $F_j$ is computable by a width-$kw$ \CNF{} formula.
    Hence, $f$ is a combination of width-$kw$ \CNF{}
    formulas with the claimed weight.
\end{proof}

Now it is straightforward to prove \cref{lem:one-bucket-replacement}.
We restate the lemma here for convenience.

\bucketreplacement*

\begin{proof}
    Let $f$ be as in \cref{lem:cnf-decomposition}.
    By \cref{lem:cnf-decomposition}, we can write $f(x)=\sum_j a_j \cdot F_j(x)$, where each $F_j$ is a width-$kw$ \CNF{} formula and $\sum_j |a_j| \le O_c\!\br{ \frac{m^k \sqrt{n}}{\lambda} }^c$.
    Therefore,
    \begin{align*}
        &\abs{
            \E_{u} \Br{\partial_{i_1, \ldots, i_c} \mollifier_{k,b,\lambda}(H^B u) \cdot (T^B u)_{i_1} \cdots (T^B u)_{i_c}}
            -
            \E_{y} \Br{\partial_{i_1, \ldots, i_c} \mollifier_{k,b,\lambda}(H^B y) \cdot (T^B y)_{i_1} \cdots (T^B y)_{i_c}}
        } \\
        \le ~&\sum_j |a_j| \cdot \abs{\E_u[F_j(u)] - \E_y[F_j(y)]} \\
        \le ~&\deltaCNF \cdot O_c\!\br{ \frac{m^k \sqrt{n}}{\lambda} }^c \,,
    \end{align*}
    where the last step uses the assumption that $y$ $\deltaCNF$-fools width-$kw$ \CNF{} formulas.
\end{proof}

\subsection{Fooling Thresholds of General Halfspaces}
\label{sec:threshold-prg}

We first show that the generator from \cref{def:threshold-ost-generator} fools
standardized thresholds of standardized halfspaces,
and then combine this with \cref{cor:threshold-outer-reduction}
to prove the main $\prg$ theorem for thresholds of general halfspaces.

\begin{theorem}\label{thm:prg-standardized}
    Let all parameters be chosen as in \cref{fig:threshold-parameters},
    and let $G$ be the generator from \cref{def:threshold-ost-generator}.
    Then, for every $(\Khead,\tau)$-standardized matrix $A\in\R^{m\times n}$ and every $b\in\R^m$,
    \[
        \abs{
        \E_{u\sim \pmcube n}\Br{\ort_{k,b}(Au)}
        -
        \E_{z\sim G}\Br{\ort_{k,b}(Az)}
        } ~=~ O(\delta)\enspace,
    \]
    where $\ort_{k,b}$ is the indicator of the $k$-out-of-$m$ threshold set defined in \cref{def:threshold-set}.
    \end{theorem}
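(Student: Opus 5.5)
We combine \cref{thm:threshold-mollifier-fool}, \cref{lem:approximator}, \cref{thm:threshold-ost-anticonc} and the sandwiching \cref{lem:sandwich}. Fix a $(\Khead,\tau)$-standardized $A$ and $b\in\R^m$. We apply \cref{lem:approximator} with the parameter $\lambda$ from \cref{fig:threshold-parameters} and error $\delta$. This gives $\beta=\Theta(\lambda\sqrt{\log(m/\delta)})$ and $\Lambda=2\beta$, and shows that $\mollifier_{\newsf{in}}:=\mollifier_{k,b-\beta\onevec,\lambda}$ and $\mollifier_{\newsf{out}}:=\mollifier_{k,b+\beta\onevec,\lambda}$ are $(\Lambda,\delta)$-inner and outer approximators for $\ort_{k,b}$. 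With $\lambda=\delta/(k\sqrt{\log(m/k)\log(m/\delta)})$ we get
\[
\Lambda=\Theta\!\left(\frac{\delta}{k\sqrt{\log(m/k)}}\right).
\]

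Next we set $x=Au$ and $y=Az$ with $u$ uniform and $z\sim G$, and verify the two hypotheses of \cref{lem:sandwich}.

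For the first hypothesis, \cref{thm:threshold-mollifier-fool} is stated for arbitrary $b$. Applying it with $b\mp\beta\onevec$ gives $|\E[\mollifier(Au)]-\E[\mollifier(Az)]|=O(\delta)$ for both mollifiers, so $\varepsilon=O(\delta)$.

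For the second hypothesis, each row of a $(\Khead,\tau)$-standardized matrix has a $\tau$-regular tail subvector of norm one, so \cref{thm:threshold-ost-anticonc} applies provided $\Lambda\ge\tau$. It gives
\[
\xi=O\!\left(k\Lambda\sqrt{\log(m/k)}\right)=O(\delta).
\]
If $k>m/2$, we first pass to the complementary threshold, as the paper assumes throughout.

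\Cref{lem:sandwich} then yields $|\E[\ort_{k,b}(Au)]-\E[\ort_{k,b}(Az)]|\le \xi+\varepsilon+2\delta=O(\delta)$.

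The main point to check is the side condition $\Lambda\ge\tau$ in the anticoncentration theorem. We have $\tau=\delta^{1+\varepsilon}/(k^{2+\varepsilon}\log^{2.5+\varepsilon}m)$, while $\Lambda\gtrsim \delta/(k\sqrt{\log m})$ up to the constant in $\beta$. Since $\delta<1$, $k\ge1$ and $m\ge2$, we get $\tau\ll\Lambda$ once the constant in $\beta$ is fixed large enough (adjusting for small $m$ if needed). The remaining care is to confirm that the $O(\delta)$ constants stay universal; this follows because $\dhyb$ depends only on $\varepsilon$ and is fixed in \cref{fig:threshold-parameters}.
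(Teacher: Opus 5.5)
Your proposal is correct and follows the paper's proof essentially step for step: the inner and outer mollifiers from \cref{lem:approximator}, fooling them via \cref{thm:threshold-mollifier-fool} at the shifted thresholds $b\mp\beta\onevec$, the anticoncentration bound of \cref{thm:threshold-ost-anticonc} with $\Lambda=\Theta(\delta/(k\sqrt{\log(m/k)}))$, and then \cref{lem:sandwich}. The paper only asserts the side condition $\Lambda\ge\tau$, so your explicit check of it is a useful extra detail rather than a departure from the argument.
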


\begin{proof}
    Let $\lambda$ be as in \cref{fig:threshold-parameters}.
    By \cref{lem:approximator}, $\mollifier_{k,b^{\newsf{in}},\lambda}$ and
    $\mollifier_{k,b^{\newsf{out}},\lambda}$ are
    $(\Lambda,\delta)$-inner and $(\Lambda,\delta)$-outer approximators for
    the $k$-out-of-$m$ threshold set $\ort_{k,b}$, respectively,
    where $\Lambda = \Theta(\lambda\sqrt{\log(m/\delta)})$.
    By \cref{thm:threshold-mollifier-fool}, we have that for $\mollifier\in \{\mollifier_{k,b^{\newsf{in}},\lambda}, \mollifier_{k,b^{\newsf{out}},\lambda}\}$
    \[
    	\abs{
            \E_{u\sim \pmcube n}\Br{\mollifier(Au)}
            -
            \E_{z\sim G}\Br{\mollifier(Az)}
        }
        ~=~ O(\delta) \enspace.
    \]
    Since $A$ is $(\Khead,\tau)$-standardized and $\Lambda\geq \tau$, \cref{thm:threshold-ost-anticonc} gives
    \[
    	\Pr_{u\sim \pmcube n}\Br{Au\in \ort_{k,b +\Lambda\cdot\onevec} \setminus \ort_{k,b -\Lambda\cdot\onevec}} ~=~ O\!\br{k\cdot\Lambda\sqrt{\log(m/k)}} \enspace.
    \]
    Applying \cref{lem:sandwich}, we conclude that
    \begin{align}\label{eq:lambda}
        \abs{
        \E_{u\sim \pmcube n}\Br{\ort_{k,b}(Au)}
        -
        \E_{z\sim G}\Br{\ort_{k,b}(Az)}
        } ~=~ O(\delta) ~+~ O\!\br{k\cdot\Lambda\sqrt{\log(m/k)}} ~=~ O(\delta) \enspace.
    \end{align}
    This completes the proof.
\end{proof}

We are now ready to prove the main $\prg$ theorem for thresholds of general halfspaces.

\begin{theorem}\label{thm:main-prg}
    Let \(n,m\in\N\), \(1\le k\le m\), and \(\delta\in(0,1)\).
    Set \(\kappa=\min\{k,m-k+1\}\).
    Then there is an explicit $\prg$ with seed length $\tildeo\Bigl(\dfrac{\kappa^{6+2\varepsilon} \cdot \log^{6+2\varepsilon}\!m}{\delta^{2 + 2\varepsilon}}\cdot \log n\Bigr)$
    for any arbitrarily small constant $\varepsilon > 0$ 
    that $\delta$-fools the class of all $k$-out-of-$m$ thresholds of halfspaces over $\pmcube n$.
\end{theorem}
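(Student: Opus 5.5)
We first reduce to the case $k\le m/2$, so that $\kappa=k$. Suppose $k>m/2$. The complement $1-F$ is the event that at most $k-1$ of the halfspaces hold. Equivalently, at least $m-k+1$ of the complementary conditions $\langle a_i,x\rangle>b_i$ hold. Over $\pmcube n$ these are halfspaces again: for $u\in\pmcube n$ the set of values $\langle a_i,u\rangle$ is finite, so we may write $\langle a_i,u\rangle>b_i$ as $\langle -a_i,u\rangle\le -b_i'$ with $b_i'$ slightly larger than $b_i$. Hence $1-F$ is an $(m-k+1)$-out-of-$m$ threshold of halfspaces on the cube, and $m-k+1\le m/2+1$. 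A generator that fools $1-F$ also fools $F$ with the same error. The single boundary value $k=\lceil m/2\rceil$ with $m-k+1$ slightly above $m/2$ only changes constants, since \cref{thm:threshold-ost-anticonc} and \cref{thm:scaled} hold equally well for $k\le m/2+1$. So we instantiate the generator $G$ of \cref{def:threshold-ost-generator} with threshold parameter $\kappa$ and the parameters of \cref{fig:threshold-parameters}, and from now on write $k=\kappa$.

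Fix an arbitrary threshold $F(x)=\ort_{k,b}(Ax)$, which is possible by \cref{rmk:equivalent-representation}. Apply \cref{cor:threshold-outer-reduction} with $\eta=\delta$ and $\tau$ as in \cref{fig:threshold-parameters}. This gives $K=\Khead$ and a $(\Khead,\tau)$-standardized pair $(A',b')$. The disagreement probability is at most $\delta$ under any $2\Khead$-wise uniform distribution. The uniform distribution qualifies, and so does the output $z=\breve y\oplus y^\star$, because the independent $2\Khead$-wise uniform $y^\star$ makes $z$ itself $2\Khead$-wise uniform. We then use the triangle inequality:
\[
\bigl|\E_u[F(u)]-\E_z[F(z)]\bigr|\le \delta+\bigl|\E_u[\ort_{k,b'}(A'u)]-\E_z[\ort_{k,b'}(A'z)]\bigr|+\delta .
\]
By \cref{thm:prg-standardized}, the middle term is $O(\delta)$. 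Rescaling $\delta$ by a constant therefore yields $\delta$-fooling. The reduction needs $\Khead\le n/2$. If instead $\Khead>n/2$, the seed length $O(\Khead\log n)$ already allows a seed of $n$ bits, and we output it directly as a truly uniform string.

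The seed length then follows from \cref{lem:ost-generator-seed}, which gives $\tildeo(\kappa^{6+2\varepsilon}\log^{6+2\varepsilon}m\cdot\delta^{-(2+2\varepsilon)}\log n)$. We should still check the dominant terms. The term $\Khead\log n$ is of order $\log(m/\delta)/\tau^2$, which is $\tildeo(\kappa^{4+2\varepsilon}\log^{6+2\varepsilon}m/\delta^{2+2\varepsilon})$. The term $L\cdot\Wcnf^2$ is $L\cdot\kappa^2(2\Khead/L)^2=4\kappa^2\Khead^2/L$, which gives the $\kappa^{6+2\varepsilon}$ factor. The term $L\,\Wcnf\log(1/\deltaCNF)$ is $\kappa\,\Khead\cdot \dhyb\cdot O(\kappa\log m+\log n)$ and lies within the bound as well. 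Explicitness is immediate, since every component of $G$ is explicit.

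The main obstacle is the bookkeeping in this seed-length check, in particular confirming that $\log(1/\deltaCNF)$ contributes only $\kappa\log m$ and $\log n$ factors. A secondary point is the strict-versus-nonstrict inequality in the complementation step, which the finiteness of the cube handles.
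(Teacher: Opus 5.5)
Your proposal is correct and follows essentially the same route as the paper. You complement to reduce to $\kappa=k$, standardize via \cref{cor:threshold-outer-reduction} (valid for both $u$ and $z$ since $z$ is $2\Khead$-wise uniform), apply \cref{thm:prg-standardized}, and read off the seed length from \cref{lem:ost-generator-seed}. Your extra care fills in details the paper's proof glosses over. This covers the strict-versus-nonstrict inequality in the complemented halfspaces and the odd-$m$ value $k=(m+1)/2$, where complementation does not land in $k\le m/2$; there $m/k\ge 3/2$ for $m\ge 3$, so the anticoncentration and derivative bounds still hold up to constants.
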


\begin{proof}
It suffices to prove the theorem for \(k\le m/2\), in which case
\(\kappa=k\). The case \(k>m/2\) follows by applying the same argument
to the complementary \((m-k+1)\)-out-of-\(m\) threshold of the
complemented halfspaces.

    Fix parameters as in \cref{fig:threshold-parameters}.
    Let $G$ be the generator from \cref{def:threshold-ost-generator}.
    If $\Khead > n/2$, then the generator has seed length $n$, so it trivially fools every function on $\pmcube n$.
    Therefore, we assume $\Khead \leq n/2$.
    Given a $k$-out-of-$m$ threshold of halfspaces $F(x)$,
    we can write $F(x)$ as $\ort_{k,b}(Ax)$ for some $A\in\R^{m\times n}$ and $b\in\R^m$
    as in \cref{rmk:equivalent-representation}.
    By \cref{cor:threshold-outer-reduction},
    there is a $(\Khead,\tau)$-standardized matrix $A'$ and
    a vector $b'\in\R^m$ such that for uniform $u\sim \pmcube n$,
    \begin{align}\label{eq:approx-1}
        \abs{\E_{u\sim \pmcube n}\Br{\ort_{k,b}(Au)}
        -
        \E_{u\sim \pmcube n}\Br{\ort_{k,b'}(A'u)}
        } ~=~ O(\delta) \enspace.
    \end{align}
    Similarly, since $z\sim G$ is $2\Khead$-wise uniform, we have
    \begin{align}\label{eq:approx-2}
        \abs{\E_{z\sim G}\Br{\ort_{k,b}(Az)}
        -
        \E_{z\sim G}\Br{\ort_{k,b'}(A'z)}
        } ~=~ O(\delta) \enspace.
    \end{align}
    By \cref{thm:prg-standardized}, we have
    \begin{align}\label{eq:approx-3}
        \abs{\E_{u\sim \pmcube n}\Br{\ort_{k,b'}(A'u)}
        -
        \E_{z\sim G}\Br{\ort_{k,b'}(A'z)}
        } ~=~ O(\delta) \enspace.
    \end{align}
    Combining \cref{eq:approx-1,eq:approx-2,eq:approx-3} gives
    \begin{align*}
        \abs{\E_{u\sim \pmcube n}\Br{\ort_{k,b}(Au)}
        -
        \E_{z\sim G}\Br{\ort_{k,b}(Az)}
        } ~=~ O(\delta) \enspace,
    \end{align*}
    as claimed.
    To get error $\delta$, we can simply rescale the parameters by a constant factor.
    The seed length of $G$ is $\tildeo\Bigl(\dfrac{k^{6+2\varepsilon} \cdot \log^{6+2\varepsilon}\!m}{\delta^{2 + 2\varepsilon}}\cdot \log n\Bigr)$ by \cref{lem:ost-generator-seed}.
\end{proof}

\section{Derivative Bounds for the Bentkus-Type Mollifier}
\label{sec:derivative}
This section proves a derivative bound for the Bentkus-type mollifier
used in the analysis of the Taylor remainder.
To this end, we first introduce a normalized form for the mollifier.
Recall that \(\gcdf\) is the CDF of the standard Gaussian distribution.

\begin{definition}
    For integers $m\ge 2$ and $1\le k\le m/2$,
    we define a function $\mollifierkernel_{m,k}:\R^m\to\R$ by 
    \[
    \mollifierkernel_{m,k}(x)
    \coloneqq
    \prod_{S \in \binom{[m]}{k}}
    \left(
    1 - \prod_{i \in S} \gcdf(x_i)
    \right)\enspace.
    \]
    When the parameters $m$ and $k$ are clear from the context, we simply write $\mollifierkernel$ for $\mollifierkernel_{m,k}$.
\end{definition}

\begin{remark}\label{rmk:mollifier-rewrite}
    The Bentkus-type $\lambda$-mollifier $\mollifier_{k,b,\lambda}$ defined in \cref{def:mollifier} for the $k$-out-of-$m$ threshold set
    can be written as
    \[
        \mollifier_{k,b,\lambda}(x)
        =
            1-\mollifierkernel_{m,k}\!\left(\frac{b-x}{\lambda}\right) 
        \quad\text{where}\quad
        \frac{b-x}{\lambda}
        \coloneqq
        \left(
            \frac{b_1-x_1}{\lambda}\enspace,
            \dots\enspace,
            \frac{b_m-x_m}{\lambda}
        \right)\enspace.
    \]
\end{remark}

The main result of this section is the following uniform derivative bound for $\mollifierkernel_{m,k}$.

\begin{restatable}{theorem}{fullderivative}\label{thm:derivative-bound}
For all integers $m,k,d$ with $m\ge 2$, $1\le k\le m/2$, and $d\ge 1$,
there exists a constant $C_d>0$ depending only on $d$ such that
\[
    \norm{\mollifierkernel_{m,k}^{(d)}}_1
    ~\leq~
    C_d \cdot k^d \cdot \br{\log\frac{m}{k}}^{d/2}\enspace.
\]
\end{restatable}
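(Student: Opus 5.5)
We bound derivatives of $\mollifierkernel=\prod_S(1-P_S)$, where $P_S(x)=\prod_{i\in S}\gcdf(x_i)$. The plan is to first obtain a formula for derivatives of $\log\mollifierkernel$ (or of the product directly), and then control the resulting clause sums uniformly in $x$. Following the outline in the introduction, we proceed in three stages: a refined one-dimensional bound (\cref{lem:oned}), a clause bound (\cref{lem:clause}), and a summed moment bound (\cref{lem:moment}).

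\textbf{Step 1 (one dimension).} For each $a\ge1$, we bound $|\gcdf^{(a)}(t)|\le C_a\,\gcdf(t)\,(1+\max(-t,0))^{a}$ on the left tail, where $\gcdf^{(a)}/\gcdf\approx |t|^a$ by the Mills ratio. On the right tail we use $|\gcdf^{(a)}(t)|\le C_a\,\gpdf(t)(1+t)^{a-1}$. The role of the bound is that a derivative of $\gcdf(x_i)$ costs a factor $\gcdf(x_i)\cdot\rho(x_i)^a$ with $\rho(t)\lesssim 1+|t|$ when $t<0$, and a factor $\lesssim \gpdf(x_i)$ when $t>0$.

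\textbf{Step 2 (clauses).} By Leibniz, a mixed derivative $\partial_{i_1\dots i_c}P_S$ vanishes unless all $i_j\in S$, and then $|\partial^{\alpha}P_S|\le C\,P_S\prod_{i}\rho(x_i)^{\alpha_i}$ with $\rho$ as in Step 1 (and $\rho(t)\lesssim\gpdf(t)/\gcdf(t)$ in general). Next, $\partial\mollifierkernel$ expands via the product rule and Fa\`a di Bruno into terms of the form $\mollifierkernel\cdot\prod_{r}\frac{\partial^{\alpha_r}P_{S_r}}{1-P_{S_r}}$ over collections of distinct clauses $S_1,\dots,S_q$, $q\le d$ (and higher derivatives of $\log(1-P_S)$). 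The $L_1$ norm over index tuples is then bounded by sums of the form
\[
\mollifierkernel(x)\cdot \prod_{r=1}^q\Bigl(\sum_{S}\frac{P_S(x)}{1-P_S(x)}\,\Bigl(\sum_{i\in S}\rho(x_i)\Bigr)^{a_r}\Bigr),\qquad \sum_r a_r=d.
\]

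\textbf{Step 3 (the main obstacle: summing over $\binom{[m]}{k}$).} We need
\[
\mollifierkernel(x)\prod_r M_{a_r}(x)\le C_d\,(k\sqrt{\log(m/k)})^{d},\qquad M_a:=\sum_S \frac{P_S}{1-P_S}\Bigl(\sum_{i\in S}\rho(x_i)\Bigr)^a.
\]
Since $\mollifierkernel\le \exp(-\sum_S P_S)$, the factor $\mollifierkernel$ kills regions where $W:=\sum_S P_S$ is large, because $W^q e^{-W}=O_q(1)$. The delicate part is bounding the weighted sum. Without loss of generality we sort so that $\gcdf(x_{(1)})\ge\gcdf(x_{(2)})\ge\cdots$. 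We expect the dominant clauses to consist of coordinates with the largest $\gcdf$. By a Hölder/majorization argument, $(\sum_{i\in S}\rho)^a\le k^{a-1}\sum_{i\in S}\rho^a$, which gives $M_a\le k^{a-1}\sum_i\rho(x_i)^a\,\partial_iW\cdot\gcdf(x_i)$-type sums, namely $k^{a-1}\sum_i \rho(x_i)^a\gcdf(x_i)\,e_{k-1}(\gcdf(x_{-i}))$. For a Gaussian tail, $\rho(x_i)\approx|x_i|$ only matters when $\gcdf(x_i)$ is small. The key claim is that coordinates with $\rho(x_i)\gtrsim\sqrt{\log(m/k)}$ contribute to the weighted sum at most about $\sum_{i}\gcdf(x_i)\,\rho(x_i)^a$, which the Gaussian tail keeps at $O(\log^{a/2}(m/k))$ relative to $W$ after conditioning on $e^{-W}$. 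The comparison is between $\sum_i\gcdf(x_i)e_{k-1}$ and $W=e_k$, where Newton-type inequalities give $\sum_i \gcdf(x_i)e_{k-1}(x_{-i})=k\,e_k$ (the extra factor $k$ per derivative). Combining these yields $M_a\lesssim k^a(\log(m/k))^{a/2}(W+W^{\,\cdot})$, up to polynomial factors in $W$ absorbed by $e^{-W}$. I expect the justification of the $\sqrt{\log(m/k)}$ (rather than $\sqrt{\log m}$) level, via a Nazarov/Bentkus style splitting into coordinates with $\gcdf(x_i)\ge k/m$ versus smaller, to be the hardest step. Once it is in place, the products over $r$ give $k^d\log^{d/2}(m/k)$, which proves the theorem.
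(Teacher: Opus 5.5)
There is a genuine gap, in two places: a relaxation in Step 2 that produces an unbounded quantity, and a Step 3 that is only a heuristic.

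\textbf{Step 2 relaxes to a quantity that is not bounded.} The product rule does give terms $\mollifierkernel\cdot\prod_r \partial^{\alpha_r}P_{S_r}/(1-P_{S_r})$ over \emph{distinct} clauses. Each such term is harmless because $\mollifierkernel/\prod_r(1-P_{S_r})=\prod_{R\notin\{S_1,\dots,S_q\}}(1-P_R)$. The trouble starts when you drop distinctness and keep the weights $P_S/(1-P_S)$ inside $M_a$. Then the inequality $\mollifierkernel\prod_r M_{a_r}\le C_d(k\sqrt{\log(m/k)})^d$ that Step 3 aims at is false. Take $m=2$, $k=1$, $d=2$, $a_1=a_2=1$, $x_2=0$ and $x_1\to+\infty$. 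Then $\mollifierkernel(x)\,M_1(x)^2\ge \gcdf(x_1)^2/(2(1-\gcdf(x_1)))\to\infty$, whereas $\partial_{12}\mollifierkernel=\gpdf(x_1)\gpdf(x_2)$ is bounded. Going through higher derivatives of $\log(1-P_S)$ is worse: it creates $(1-P_S)^{-j}$ factors that cancel only in aggregate. The paper never divides. It first bounds the untouched factors by $\prod_{R\notin\{S_1,\dots,S_q\}}(1-P_R)\le\exp(-W+\sum_r P_{S_r})\le \e^{d-W}$, using only $P_R\le 1$. Only then does it relax the sum over distinct clauses to $\prod_r\sum_S\|P_S^{(a_r)}\|_1$.

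\textbf{Step 3, the heart of the theorem, is only a heuristic.} Your ``key claim'' is not proved, and your H\"older step $(\sum_{i\in S}\rho)^a\le k^{a-1}\sum_{i\in S}\rho^a$ works against you. It decouples the coordinates and discards the link between a clause's derivative and the clause's own weight. The only tail information a single coordinate inherits from the clause level is $\sum_{i\in S}\gcdf(x_i)^{-\beta}\le kP_S^{-\beta}$. That controls $\log(1/\gcdf(x_i))$ only at scale $\log(N/W)\approx k\log(\e m/k)$, which loses a factor $k^{a/2}$.

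The missing idea is to keep everything at the clause level, as the paper does.
\begin{enumerate}
\item By Cauchy--Schwarz and $\rho(x_i)^2\lesssim 1+\log(1/\gcdf(x_i))$, one gets $\|P_S^{(a)}\|_1\lesssim_a k^aP_S(1+\frac1k\log\frac1{P_S})^{a/2}$.
\item Set $\mu(S)=P_S/W$ and $N=\binom mk$. The power-mean inequality gives $\E_{S\sim\mu}[P_S^{-1/2}]=W^{-1}\sum_S P_S^{1/2}\le (N/W)^{1/2}$.
\item Markov then gives $\Pr_\mu[\frac1k\log\frac1{P_S}\ge t]\le \e^{-k(t-T)/2}$ with $T=\frac1k\log\frac NW$. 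Integrating yields $\sum_S\|P_S^{(a)}\|_1\lesssim_a k^aW(1+T^{a/2})$.
\item For $W\ge1$ one has $T\le\log\frac{\e m}{k}$, and for $W<1$ one has $T\le\log\frac{\e m}{k}+\log\frac1W$.
\item Up to $O_d(1)$ factors, the product over blocks is at most $k^d\e^{-W}W^{q}(1+T^{d/2})$ with $q\ge1$. This is $O_d(k^d\log^{d/2}(m/k))$, because $W^q\log^{d/2}(1/W)$ is bounded on $(0,1)$.
\end{enumerate}
The $\log(1/W)$ correction is genuinely needed. For $k=1$, $m=2$, $x_1=x_2\to-\infty$, the clause sum is of order $W\log^{a/2}(1/W)$, not $O(W)$. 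So a bound of the form $k^a\log^{a/2}(m/k)$ times a polynomial in $W$ vanishing at $0$, as your ``$(W+W^{\cdot})$'' suggests, cannot hold.
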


This theorem immediately gives the corresponding bound for derivatives of $\mollifier_{k,b,\lambda}$
as stated in \cref{thm:scaled}.
We restate the theorem here for the reader's convenience.

\derivatives*

\begin{proof}
By \cref{rmk:mollifier-rewrite} and the chain rule,
for each ordered derivative tuple $(j_1,\dots,j_d)$ we have
\[
    \partial_{j_1\cdots j_d}\mollifier_{k,b,\lambda}(x)
    =
    (-1)^{d+1} \cdot \lambda^{-d} \cdot
    \left(
    \partial_{j_1\cdots j_d}\mollifierkernel
    \right)\!\!\left(\frac{b-x}{\lambda}\right)\enspace.
\]
Therefore
\[
    \sum_{j_1,\dots,j_d=1}^m
    \left|
    \partial_{j_1\cdots j_d}\mollifier_{k,b,\lambda}(x)
    \right|
    =
    \frac1{\lambda^d}\cdot
    \sum_{j_1,\dots,j_d=1}^m
    \left|
    \left(
    \partial_{j_1\cdots j_d}\mollifierkernel
    \right)\!\!\left(\frac{b-x}{\lambda}\right)
    \right|
    \le
    \frac{1}{\lambda^d}\cdot\norm{\mollifierkernel^{(d)}}_1\enspace.
\]
Taking the supremum over $x$ and applying \cref{thm:derivative-bound} yields the claimed bound.
\end{proof}

The rest of this section is devoted to proving \cref{thm:derivative-bound}.
The key step is to carefully control the derivatives of $\prod_{i\in S}\gcdf(x_i)$
for every subset $S\subseteq [m]$ with size $k$.
Indeed, every factor in $\mollifierkernel_{m,k}$ has
the form $1-\prod_{i\in S}\gcdf(x_i)$,
and the product rule expresses $\Vert \mollifierkernel_{m,k}^{(d)} \Vert_1$
in terms of sums of derivatives of these clause functions.
\cref{app:gaussian-derivative} proves the derivative estimate for a single $\gcdf(x_i)$.
This estimate is then lifted to one clause $\prod_{i\in S}\gcdf(x_i)$ in \cref{sec:clause-derivative}.
In \cref{sec:sum-clause-bound}, we show that
the sum of these clausewise bounds over all $S$ remains controlled,
avoiding the trivial factor $\binom{m}{k}$.
Finally, \cref{sec:proof-derivative-bound} proves \cref{thm:derivative-bound}.

\subsection{Derivative Bounds for the Gaussian CDF}
\label{app:gaussian-derivative}

This section bounds the derivatives of the Gaussian CDF $\gcdf$
using $\gcdf$ itself.
We start with Mills' ratio inequality,
which gives a lower bound on the Gaussian tail probability $\gcdf(-z)$
in terms of the Gaussian density $\gpdf(z)$.

\begin{lemma}[Mills' Ratio Inequality \cite{Gor41}]\label[lemma]{lem:mills-ratio}
    For every $z\geq 0$,
    \(
    \dfrac{z}{1+z^2}\cdot \gpdf(z)
    \leq
    \gcdf(-z)
    \).
\end{lemma}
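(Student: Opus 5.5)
The approach is to compare the two functions through the derivative of their difference. Define
\[
h(z) \coloneqq \gcdf(-z) - \frac{z}{1+z^2}\,\gpdf(z)\enspace, \qquad z\ge 0\enspace,
\]
and show that \(h(z)\to 0\) as \(z\to\infty\) while \(h\) is nonincreasing on \([0,\infty)\). Together these give \(h(z)\ge 0\) for all \(z\ge 0\), which is exactly the claim.

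The limit is immediate, since both \(\gcdf(-z)\) and \(\frac{z}{1+z^2}\gpdf(z)\) tend to \(0\). For the monotonicity, I will differentiate using \(\gpdf'(z)=-z\gpdf(z)\), so that \(\frac{\d}{\d z}\gcdf(-z)=-\gpdf(z)\). With \(g(z)=z/(1+z^2)\), one has
\[
g'(z)=\frac{1-z^2}{(1+z^2)^2}\enspace,
\]
and therefore
\[
h'(z) = -\gpdf(z) - \gpdf(z)\Bigl(\frac{1-z^2}{(1+z^2)^2} - \frac{z^2}{1+z^2}\Bigr)\enspace.
\]

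The only real computation is simplifying the bracket. Put everything over \((1+z^2)^2\). The numerator becomes
\[
(1+z^2)^2 + 1 - z^2 - z^2(1+z^2) = 1+2z^2+z^4+1-z^2-z^2-z^4 = 2\enspace.
\]
This gives
\[
h'(z) = -\frac{2\gpdf(z)}{(1+z^2)^2} \le 0\enspace.
\]
So \(h\) is nonincreasing, and since it tends to \(0\) at infinity, \(h(z)\ge \lim_{t\to\infty}h(t)=0\). If one prefers to avoid the limit argument, the same conclusion follows by writing \(h(z)=\int_z^\infty \frac{2\gpdf(t)}{(1+t^2)^2}\,\d t\ge 0\).

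The main obstacle is simply getting the algebraic cancellation above right. As a sanity check, at \(z=0\) the inequality reads \(0\le 1/2\), which is trivially true.
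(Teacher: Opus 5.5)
Your proof is correct. The simplification to $h'(z) = -2\gpdf(z)/(1+z^2)^2$ checks out, and with $h(z)\to 0$ (or the representation $h(z)=\int_z^\infty 2\gpdf(t)(1+t^2)^{-2}\,\d t$) it gives the inequality on all of $[0,\infty)$, including $z=0$.

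The paper gives no proof of this lemma; it simply cites Gordon~\cite{Gor41}. Your argument therefore supplies a self-contained proof, and it is a standard one.

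For comparison, another classical route for $z>0$ avoids the algebraic cancellation. It uses the identity $\int_z^\infty (1+t^{-2})\gpdf(t)\,\d t = \gpdf(z)/z$, obtained from $\frac{\d}{\d t}\bigl(-\gpdf(t)/t\bigr) = (1+t^{-2})\gpdf(t)$. Combining this with $1+t^{-2}\le 1+z^{-2}$ for $t\ge z$ gives $\gpdf(z)/z \le (1+z^{-2})\gcdf(-z)$, which rearranges to the claim. That version is slightly shorter, but it needs $z=0$ handled separately. Your version handles every $z\ge 0$ at once and yields an exact formula for the gap.
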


We will use the following consequence,
which remains useful when $z$ is close to zero.

\begin{corollary}\label[corollary]{cor:mills}
For every $z\ge0$,
\(
    \dfrac{\gpdf(z)}{2(1+z)}
    \le
    \gcdf(-z).
\)
\end{corollary}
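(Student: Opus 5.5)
We derive the inequality directly from \cref{lem:mills-ratio} in the regime \(z\ge1\), and treat \(z\in[0,1]\) separately by monotonicity, since there the factor \(z/(1+z^2)\) from Mills' ratio degenerates as \(z\to0\).

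\emph{Case \(z\ge 1\).} By \cref{lem:mills-ratio}, \(\gcdf(-z)\ge \frac{z}{1+z^2}\gpdf(z)\). It suffices to show \(\frac{z}{1+z^2}\ge \frac{1}{2(1+z)}\), i.e.\ \(2z(1+z)\ge 1+z^2\), i.e.\ \(z^2+2z-1\ge0\). This holds for \(z\ge\sqrt2-1\), and in particular for all \(z\ge1\).

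\emph{Case \(0\le z\le 1\).} Here we avoid Mills' ratio. Since \(\gcdf(-z)\) is decreasing in \(z\), we have \(\gcdf(-z)\ge\gcdf(-1)\approx0.1587\). On the other side, \(\gpdf(z)\le\gpdf(0)=1/\sqrt{2\pi}\approx0.3989\) and \(1+z\ge1\), so \(\frac{\gpdf(z)}{2(1+z)}\le \frac{1}{2\sqrt{2\pi}}\approx0.1995\). These constants do not quite compare, so the interval needs to be split further. On \([\sqrt2-1,1]\) the argument of the first case already applies, because \(z^2+2z-1\ge0\) there. On \([0,\sqrt2-1]\) we use \(\gcdf(-z)\ge\gcdf(-(\sqrt2-1))\ge\gcdf(-0.42)\approx0.337\), which exceeds \(0.1995\ge\frac{\gpdf(z)}{2(1+z)}\). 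Combining the two cases covers all \(z\ge0\).

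The only step needing care is this numerical comparison near \(0\). It can be made rigorous using the crude bound \(\gcdf(-t)\ge \frac12-\frac{t}{\sqrt{2\pi}}\), which follows from \(\gpdf\le 1/\sqrt{2\pi}\). At \(t=0.42\) this gives \(\gcdf(-0.42)\ge 0.5-0.168=0.332>0.2\), which finishes the proof.
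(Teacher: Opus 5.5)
Your proof is correct, but it handles the small-$z$ regime differently from the paper. The paper works with the inverse Mills ratio $f(z)=\gpdf(z)/\gcdf(-z)$. It shows $f'\ge 0$ using the upper tail bound $\gcdf(-z)\le\gpdf(z)/z$, and then on $[0,1]$ bounds $f(z)\le f(1)\le 2$, where $f(1)\le 2$ is \cref{lem:mills-ratio} at $z=1$. For $z\ge 1$ it uses $z+1/z\le 2(1+z)$.

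You instead notice that \cref{lem:mills-ratio} alone already gives the claim on all of $[\sqrt2-1,\infty)$, by solving $z^2+2z-1\ge 0$. So only the short interval $[0,\sqrt2-1]$ needs separate treatment, and there you compare crude constants:
\[
\frac{\gpdf(z)}{2(1+z)}\le \frac{1}{2\sqrt{2\pi}}<0.2,
\qquad
\gcdf(-z)\ge \tfrac12-\frac{z}{\sqrt{2\pi}}\ge 0.332 .
\]
Both numerical steps reduce to $2\pi>6.25$, so they are fully rigorous.

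Your route is entirely elementary: no differentiation and no second tail inequality, at the cost of a small numerical check. The paper's route avoids numerics and establishes the structural fact that $\gpdf/\gcdf(-\,\cdot)$ is increasing, which makes the constant $2$ transparent, but it needs the extra derivative computation.
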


\begin{proof}
Let
\(
    f(z) \coloneqq \dfrac{\gpdf(z)}{\gcdf(-z)} 
\)
and therefore
\(
    f'(z)
    =
    \dfrac{\gpdf(z)\cdot (\gpdf(z)-z\cdot\gcdf(-z))}{\gcdf(-z)^2}
\),
where we use the fact that $\gpdf'(z)=-z\cdot\gpdf(z)$.
Note that for $z > 0$,
\[
    \gcdf(-z)
    ~=~
    \int_z^\infty \gpdf(t)\,\d t
    ~\le~
    \frac1z \int_z^\infty t\cdot \gpdf(t)\,\d t
    ~=~
    \frac{\gpdf(z)}{z}\enspace,
\]
and hence
\(
    f'(z) \ge 0
\).
Thus, $f$ is increasing on $(0,\infty)$.
If $0\le z\le1$, then
\[
    \frac{\gpdf(z)}{\gcdf(-z)}
    ~=~
    f(z)
    ~\le~
    f(1)
    ~=~
    \frac{\gpdf(1)}{\gcdf(-1)} \le 2 \le 2(1+z) \enspace,
\]
where the second inequality follows from \cref{lem:mills-ratio} at $z=1$.
As for $z\ge1$, \cref{lem:mills-ratio} gives
\[
    \frac{\gpdf(z)}{\gcdf(-z)}
    ~\le~
    \frac{1+z^2}{z}
    ~=~
    z+\frac1z
    ~\le~
    z+1
    ~\le~
    2(1+z)\enspace. \qedhere
\]
\end{proof}

We now prove the Gaussian CDF derivative estimate needed for the clausewise
bound.

\begin{lemma}\label[lemma]{lem:oned}
For every integer $r\in\N$,
there exists a constant $C_{1,r}>0$ depending only on $r$
such that for every $x\in\R$,
\[
    \left|\gcdf^{(r)}(x)\right|
    ~\le~
    C_{1,r}\cdot\gcdf(x)\cdot
    \left(
    1+\log\frac1{\gcdf(x)}
    \right)^{r/2}\enspace.
\]
\end{lemma}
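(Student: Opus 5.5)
For $r=0$ the claim is trivial with $C_{1,0}=1$, so assume $r\ge1$. We use the Hermite representation $\gcdf^{(r)}(x)=\gpdf^{(r-1)}(x)=(-1)^{r-1}\He_{r-1}(x)\gpdf(x)$, where $\He_{r-1}$ is the probabilists' Hermite polynomial of degree $r-1$. Since $|\He_{r-1}(x)|\le c_r(1+|x|)^{r-1}$, this gives
\[
    \left|\gcdf^{(r)}(x)\right|~\le~ c_r\,(1+|x|)^{r-1}\,\gpdf(x)\enspace.
\]
It therefore suffices to bound $(1+|x|)^{r-1}\gpdf(x)$ by $\gcdf(x)$ times a power of $1+\log(1/\gcdf(x))$. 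We split into two cases.

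\emph{Case $x\ge0$.} Here $\gcdf(x)\ge 1/2$, so the right-hand side of the lemma is at least $C_{1,r}/2$. The quantity $(1+|x|)^{r-1}\gpdf(x)$ is bounded by a constant depending only on $r$, so the claim holds once $C_{1,r}$ is large enough.

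\emph{Case $x=-z$ with $z>0$.} \Cref{cor:mills} gives $\gpdf(z)\le 2(1+z)\gcdf(-z)$, hence
\[
    \left|\gcdf^{(r)}(-z)\right|~\le~ 2c_r\,(1+z)^{r}\,\gcdf(-z)\enspace.
\]
It remains to show $(1+z)^r=O_r\bigl((1+\log(1/\gcdf(-z)))^{r/2}\bigr)$, i.e., $1+z=O\bigl(\sqrt{1+\log(1/\gcdf(-z))}\bigr)$. For this we use the standard tail bound $\gcdf(-z)\le \e^{-z^2/2}$ for $z\ge0$, which gives $\log(1/\gcdf(-z))\ge z^2/2$. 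Consequently
\[
    (1+z)^2~\le~ 2(1+z^2)~\le~ 4\bigl(1+\log(1/\gcdf(-z))\bigr)\enspace.
\]
Raising this to the power $r/2$ and combining with the previous display yields the lemma with $C_{1,r}=2^{r+1}c_r$, after enlarging the constant to also cover the case $x\ge0$.

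The only potentially delicate point is the regime $z\to0$, where Mills' ratio in its raw form degenerates. \Cref{cor:mills} is stated in the form $\gpdf(z)\le2(1+z)\gcdf(-z)$ precisely to handle this: it supplies the factor $1+z$ rather than $z+1/z$. Beyond this, all steps are routine.
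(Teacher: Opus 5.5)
Your proof is correct and follows essentially the same route as the paper. Both arguments use the Hermite representation of $\gcdf^{(r)}$, the trivial bound for $x\ge0$ via $\gcdf(x)\ge 1/2$, and \cref{cor:mills} together with the tail bound $\gcdf(-z)\le \e^{-z^2/2}$ for $x<0$. The only cosmetic difference is that you bound $|\He_{r-1}|$ by $c_r(1+|x|)^{r-1}$ rather than by $A_r(1+|x|^{r-1})$, which yields the slightly cleaner final step $(1+z)^2\le 4\bigl(1+\log(1/\gcdf(-z))\bigr)$.
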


\begin{proof}
The case $r=0$ is immediate by taking $C_{1,0}=1$, and hence we assume $r\ge1$.
Let $\He_j$ denote the \emph{$j$-th probabilists' Hermite polynomial} \cite[Section 11]{O14}
defined through the Gaussian density $\gpdf$ by $\gpdf^{(j)}\!(x)=(-1)^j\cdot \He_j(x)\cdot \gpdf(x)$.
Therefore, we have
\[
\gcdf^{(r)}\!(x) ~=~ \gpdf^{(r-1)}\!(x) ~=~ (-1)^{r-1}\cdot \He_{r-1}(x)\cdot\gpdf(x)\enspace.
\]
Since $\He_{r-1}$ has degree $r-1$, there is a constant $A_r>0$ depending only on $r$ such that
\[
    |\He_{r-1}(x)|
    \le
    A_r\cdot (1+|x|^{r-1})
    \qquad\text{for all }x\in\R\enspace.
\]
Therefore,
\[
    |\gcdf^{(r)}(x)|
    \le
    A_r\cdot (1+|x|^{r-1})\cdot \gpdf(x)
    \qquad\text{for all }x\in\R\enspace.
\]

We first assume that $x\ge 0$.
Since the function $x\mapsto (1+|x|^{r-1})\cdot\gpdf(x)$ is bounded on $[0,\infty)$,
there exists a constant $B_r>0$ depending only on $r$ such that
\(
    |\gcdf^{(r)}(x)|
    \le
    B_r.
\)
Notice that $\gcdf(x)\ge \gcdf(0)=\frac12$ for $x\ge0$. Hence
\[
    |\gcdf^{(r)}(x)|
    ~\le~
    B_r
    ~\le~
    2B_r \cdot \gcdf(x)
    ~\le~
    2B_r \cdot \gcdf(x) \cdot \left(1+\log\frac1{\gcdf(x)}\right)^{r/2}\enspace.
\]
Now suppose $x<0$, and let $z=-x$ with $z>0$. By \cref{cor:mills},
\(
\dfrac{\gpdf(z)}{\gcdf(-z)}
\le
2(1+z).
\)
Therefore,
there exists a constant $D_r>0$ depending only on $r$, such that
\begin{equation}\label{eq:single-derivative-bound-1}
    \frac{\abs{\gcdf^{(r)}(x)}}{\gcdf(x)}
    ~\le~
    2A_r\cdot (1+z^{r-1})(1+z)
    ~\le~
    D_r \cdot (1+z^r) \enspace.
\end{equation}
By the standard tail bound
\(
    \gcdf(-z)\le e^{-z^2/2}
\),
we have
\(
z^2 \le 2\log\dfrac{1}{\gcdf(x)}
\)
and therefore
\[
    1+z^r
    ~\le~
    (1+2^{r/2})\cdot\left(1+\log\frac1{\gcdf(x)}\right)^{r/2} \enspace.
\]
Substituting this into \cref{eq:single-derivative-bound-1} gives  
\[
    |\gcdf^{(r)}(x)|
    ~\le~
    D_r\cdot(1+2^{r/2})\cdot  \gcdf(x) \cdot \left(1+\log\frac1{\gcdf(x)}\right)^{r/2}\enspace.
\]
Choosing $C_{1,r} = \max\{ 2 B_r, D_r\cdot(1+2^{r/2})\}$ completes the proof.
\end{proof}

\subsection{Derivative Bounds for a Clause}
\label{sec:clause-derivative}

This section lifts the previous bound for derivatives of a single Gaussian CDF
to products of multiple Gaussian CDFs.

\begin{lemma}\label[lemma]{lem:clause}
For every integer $r\ge1$, there exists a constant $C_{2,r}>0$ depending only on $r$
such that the following holds:
Let $m$ and $k$ be integers with $m\ge2$ and
$1\le k\le m/2$, and let $S\subseteq[m]$ satisfy $|S|=k$. Define
$\clauseprob_S(x)\coloneqq\prod_{i\in S}\gcdf(x_i)$. Then, for every
$x\in\R^m$,
\[
    \norm{\clauseprob_S^{(r)}\!(x)}_1
    ~\le~
    C_{2,r}\cdot k^r\cdot \clauseprob_S(x)\cdot
    \left(
    1+\frac1k \cdot \log\frac1{\clauseprob_S(x)}
    \right)^{r/2}\enspace.
\]
\end{lemma}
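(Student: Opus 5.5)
Since $\clauseprob_S$ is a product of univariate functions, only indices in $S$ contribute to $\norm{\clauseprob_S^{(r)}(x)}_1$. For an ordered tuple $(j_1,\dots,j_r)\in S^r$, let $\alpha_i$ denote the number of occurrences of $i\in S$, so that $\sum_i\alpha_i=r$. Then
\[
\partial_{j_1,\dots,j_r}\clauseprob_S(x)=\prod_{i\in S}\gcdf^{(\alpha_i)}(x_i).
\]
Apply \cref{lem:oned} to each factor and write $L_i\coloneqq \log\frac1{\gcdf(x_i)}\ge 0$. This gives
\[
\bigl|\partial_{j_1,\dots,j_r}\clauseprob_S(x)\bigr|\le C\,\clauseprob_S(x)\prod_{i\in S}(1+L_i)^{\alpha_i/2},
\]
where $C=\max\prod C_{1,\alpha_i}$ depends only on $r$. (Only at most $r$ of the $\alpha_i$ are nonzero, and $C_{1,0}=1$.) Summing over all tuples in $S^r$ yields
\[
\norm{\clauseprob_S^{(r)}(x)}_1\le C\,\clauseprob_S(x)\Bigl(\sum_{i\in S}\sqrt{1+L_i}\Bigr)^r .
\]
To see this, expand $(\sum_i a_i)^r$ as a sum over ordered tuples with $a_i=\sqrt{1+L_i}$; each tuple produces exactly $\prod_i a_i^{\alpha_i}$.

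Next apply Cauchy--Schwarz:
\[
\sum_{i\in S}\sqrt{1+L_i}\le \sqrt{k\sum_{i\in S}(1+L_i)}=\sqrt{k\bigl(k+\textstyle\sum_i L_i\bigr)} .
\]
The key identity is $\sum_{i\in S}L_i=\log\frac1{\clauseprob_S(x)}$, because $\clauseprob_S$ is the product of the $\gcdf(x_i)$. Hence
\[
\Bigl(\sum_{i\in S}\sqrt{1+L_i}\Bigr)^r\le k^r\Bigl(1+\frac1k\log\frac1{\clauseprob_S(x)}\Bigr)^{r/2},
\]
which is exactly the claimed bound with $C_{2,r}=C$.

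The only delicate points are the bookkeeping steps. First, the combinatorial expansion must count ordered tuples exactly, with no extra multinomial factors. Second, the constant from \cref{lem:oned} must stay independent of $k$; this holds because at most $r$ factors carry a nonzero derivative order. The Cauchy--Schwarz step is what produces the favorable $\frac1k$ normalization inside the parentheses. I expect no real obstacle beyond these checks.
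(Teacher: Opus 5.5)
Your proposal is correct and follows essentially the same route as the paper's proof: you write each derivative of the clause as $\prod_{j\in S}\gcdf^{(\alpha_j)}(x_j)$ and bound it via \cref{lem:oned} with an $r$-dependent constant. You then sum over ordered tuples to get $\bigl(\sum_{j\in S}\sqrt{1+\log(1/\gcdf(x_j))}\bigr)^r$, and finish with Cauchy--Schwarz and the identity $\sum_{j\in S}\log\frac{1}{\gcdf(x_j)}=\log\frac{1}{\clauseprob_S(x)}$.
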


\begin{proof}
    Fix $r\geq 1$, $S\subseteq[m]$ with $|S|=k$ and $x\in\R^m$.
    If at least one of the indices $i_1,\dots,i_r$ is not in $S$, then
    $\partial_{i_1\cdots i_r} \clauseprob_S(x)=0$, since
    $\clauseprob_S$ depends only on the coordinates $x_i$ with $i\in S$.
    Thus, fix an $r$-tuple $(i_1,\dots,i_r)\in S^r$.
    For each $j\in S$, let $\alpha_j$ be the multiplicity of $j$ in the tuple:
    $\alpha_j \coloneqq \#\{t\in[r] : i_t=j\}$.
    Thus, $\alpha_j\ge0$ for all $j\in S$ and
    \(
        \sum_{j\in S}\alpha_j = r
    \).
    Using this notation, we can write the derivative as
    \[
        \partial_{i_1\cdots i_r} \clauseprob_S(x)
        ~=~
        \prod_{j\in S} \gcdf^{(\alpha_j)}(x_j)
        ~=~
        \clauseprob_S(x)\cdot
        \prod_{\substack{j\in S\\ \alpha_j\ge1}}
        \frac{\gcdf^{(\alpha_j)}(x_j)}{\gcdf(x_j)}\enspace.
    \]
    By \cref{lem:oned}, we have
    \begin{equation}\label{eq:single-derivative-bound}
        \abs{
        \partial_{i_1\cdots i_r} \clauseprob_S(x)
        }
        ~\le~ 
        A_{\alpha}\cdot \clauseprob_S(x)\cdot
        \prod_{j\in S} \br{1+\log\frac1{\gcdf(x_j)}}^{\alpha_j/2}
        \enspace.
    \end{equation}
    where
    \(
        A_{\alpha} \coloneqq
        \prod_{j\in S,\ \alpha_j\ge1} C_{1,\alpha_j}
    \).
    Since at most $r$ of the $\alpha_j$'s are nonzero and each nonzero
    $\alpha_j$ lies in $\{1,\dots,r\}$,
    $A_{\alpha}$ is bounded by a
    constant $C_{2,r}>0$ depending only on $r$.
    Since $\alpha_j$ is the multiplicity of $j$ in $(i_1,\dots,i_r)$, we have
    \begin{equation}\label{eq:equal-product}
        \prod_{j\in S}
        \left(1+\log\frac1{\gcdf(x_j)}\right)^{\alpha_j/2}
        =~~~
        \prod_{t=1}^r
        \left(1+\log\frac1{\gcdf(x_{i_t})}\right)^{1/2}\enspace.
    \end{equation}
    Summing over all $(i_1,\dots,i_r)\in S^r$ on both sides of \cref{eq:single-derivative-bound}
    and applying \cref{eq:equal-product}, we obtain
    \begin{align*}
        \norm{\clauseprob_S^{(r)}(x)}_1
        &~\le~
        C_{2,r}\cdot \clauseprob_S(x)\cdot
        \sum_{(i_1,\dots,i_r)\in S^r}
        \prod_{t=1}^r \br{1+\log\frac1{\gcdf(x_{i_t})}}^{1/2}\\
        &~=~
        C_{2,r}\cdot\clauseprob_S(x)\cdot
        \left(
        \sum_{j\in S} \br{1+\log\frac1{\gcdf(x_j)}}^{1/2}
        \right)^r\enspace.
    \end{align*}
    Applying the Cauchy--Schwarz inequality, we have
    \begin{align*}
        \norm{\clauseprob_S^{(r)}(x)}_1
        &\leq
        C_{2,r}\cdot \clauseprob_S(x)\cdot
        k^{r/2}\cdot \left(\sum_{j\in S} \br{1+\log\frac1{\gcdf(x_j)}}\right)^{r/2} \\
        &=
        C_{2,r}\cdot \clauseprob_S(x)\cdot
        k^{r/2}\cdot \left(k+\log\frac1{\clauseprob_S(x)}\right)^{r/2}\\
        &=
        C_{2,r}\cdot \clauseprob_S(x)\cdot
        k^{r}\cdot \left(1+\frac1k\log\frac1{\clauseprob_S(x)}\right)^{r/2}\enspace.
    \end{align*}
    This completes the proof.
\end{proof}

\subsection{Summing the Clause Bounds}
\label{sec:sum-clause-bound}
In this section, we show that the derivative bounds for individual clauses
can be summed over all $k$-subsets with only polynomial overhead in $k$ and $\log m$,
rather than the trivial factor $\binom{m}{k}$.
The key intuition comes from the form of the clause bound in \cref{lem:clause}:
the sum to be controlled can be viewed as a weighted sum of
$(1+\frac1k\log\frac1{\clauseprob_S(x)})^{r/2}$
with weights $\clauseprob_S(x)$.
Large values of the logarithmic factor can occur only when
$\clauseprob_S(x)$ is small, and then the corresponding weight is exponentially
small in that logarithmic parameter.
Thus, large terms contribute little to the sum,
which prevents the full $\binom{m}{k}$ factor from appearing.

\begin{lemma}\label[lemma]{lem:moment}
For every integer $r\ge1$,
there exists a constant $C_{3,r}>0$ depending only on $r$
such that the following holds:
Let $m$ and $k$ be integers with $m\ge2$ and $1\le k\le m/2$,
and let $N = \binom{m}{k}$.
For each $S\in\binom{[m]}{k}$, define
$\clauseprob_S(x)\coloneqq\prod_{i\in S}\gcdf(x_i)$, and define
$\clausemass(x)\coloneqq \sum_{S\in\binom{[m]}{k}} \clauseprob_S(x)$.
Then, for every $x\in\R^m$,
\[
    \sum_{S\in\binom{[m]}{k}}
    \norm{ \clauseprob_S^{(r)}\!(x) }_1
    ~\leq~
    C_{3,r}\cdot k^r \cdot \clausemass(x)\cdot
    \left[
    1+
    \left(
    \frac{1}{k}\cdot \log\frac{N}{\clausemass(x)}
    \right)^{r/2}
    \right]\enspace.
\]
\end{lemma}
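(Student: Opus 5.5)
Apply \cref{lem:clause} to every clause and sum. This gives
\[
    \sum_{S}\norm{\clauseprob_S^{(r)}(x)}_1
    ~\le~
    C_{2,r}\, k^r \sum_S \clauseprob_S(x)\,\Bigl(1+\tfrac1k\log\tfrac1{\clauseprob_S(x)}\Bigr)^{r/2}.
\]
Using $(1+a)^{r/2}\le 2^{r/2}(1+a^{r/2})$, it suffices to bound
\[
    \sum_S \clauseprob_S\bigl(\tfrac1k\log\tfrac1{\clauseprob_S}\bigr)^{r/2}
    ~\le~
    C\,\clausemass\,\bigl(1+(\tfrac1k\log\tfrac N{\clausemass})^{r/2}\bigr).
\]
This is a concavity (Jensen) statement for the function $\psi(t)=t\,(\log(1/t))^{r/2}$, with the sum viewed as $N$ times an average. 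The plan is to use the probability weights $\pi_S=\clauseprob_S/\clausemass$ on $\binom{[m]}{k}$. Then
\[
    \sum_S \clauseprob_S\bigl(\log\tfrac1{\clauseprob_S}\bigr)^{r/2}
    =
    \clausemass\,\E_{S\sim\pi}\Bigl[\bigl(\log\tfrac1{\clausemass}+\log\tfrac1{\pi_S}\bigr)^{r/2}\Bigr],
\]
and $\clausemass\le N$, since each $\clauseprob_S\le1$.

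The function $g(y)=(\log(1/\clausemass)+y)^{r/2}$ is not concave for $r>2$. So instead of applying Jensen directly, I will use a level-set/tail argument, which is cleaner. For a threshold $t\ge0$, bound the total weight of clauses with $\clauseprob_S\le e^{-t}$ by $N e^{-t}$. Writing the sum via the layer-cake formula,
\[
    \sum_S \clauseprob_S\, h\bigl(\log(1/\clauseprob_S)\bigr)
    \le
    h(t_0)\,\clausemass+\int_{t_0}^\infty h'(t)\,\sum_{S:\clauseprob_S\le e^{-t}}\clauseprob_S\,dt ,
\]
with $h(t)=(t/k)^{r/2}$ and $t_0=\log(N/\clausemass)$. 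The tail integral is at most $\int_{t_0}^\infty h'(t)\,N e^{-t}\,dt$. Because $h$ is polynomial, this is $O_r\bigl(N e^{-t_0}(h(t_0)+k^{-r/2})\bigr)=O_r\bigl(\clausemass\,(h(t_0)+1)\bigr)$, using $k\ge1$ for the lower-order terms $t^{j}/k^{r/2}\le t^j$. To make the lower-order terms honest I will bound $h'(t)e^{-t}$ integrals by $\Gamma$-function estimates: $\int_{t_0}^\infty t^{a}e^{-t}dt\le C_a e^{-t_0}(1+t_0)^a$. That leaves the worry that $(1+t_0)^{r/2-1}/k^{r/2}$ is not of the form $(t_0/k)^{r/2}$. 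The worry dissolves, since $(1+t_0)^{j}k^{-r/2}\le (1+t_0/k)^{r/2}$ for $j<r/2$, which in turn is $\lesssim 1+(t_0/k)^{r/2}$.

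Combining this with the first paragraph yields the claimed bound with $C_{3,r}$ depending only on $r$. The main obstacle is carefully handling the layer-cake step: the sum over clauses with $\clauseprob_S>e^{-t_0}$ is bounded by $h(t_0)$ times their total mass, which is at most $\clausemass$, and the tail uses only the crude count $N$. I expect the remaining issue to be the bookkeeping of constants depending on $r$.
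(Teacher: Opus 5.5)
Your proposal is correct and follows essentially the paper's route: after \cref{lem:clause} and the split $(1+z)^{r/2}\lesssim 1+z^{r/2}$, both arguments integrate an exponential tail bound for $\log(1/\clauseprob_S)$ above the level $\log(N/\clausemass)$, taken under the $\clauseprob_S$-weighted measure, against a polynomial. The only difference is how the tail bound is obtained: the paper bounds the moment generating function $\frac{1}{\clausemass}\sum_S \clauseprob_S^{1-\beta}\le (N/\clausemass)^{\beta}$ via Jensen with $\beta=1/2$ and applies Markov. Your direct count $\sum_{S:\clauseprob_S\le e^{-t}}\clauseprob_S\le N e^{-t}$ is exactly the $\beta=1$ case, needs no Jensen, and gives a slightly stronger tail.
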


\begin{proof}
By \cref{lem:clause}, it suffices to show that for $\tau = r/2$,
there exists a constant $C_{4,\tau}>0$ depending only on $\tau$ such that
for every $x\in\R^m$,
\[
    \clausemoment_\tau(x) ~\coloneqq~
    \sum_{S\in\binom{[m]}{k}}
    \clauseprob_S(x)
    \left(
    1+\frac1k\cdot\log\frac1{\clauseprob_S(x)}
    \right)^\tau
    ~\le~
    C_{4,\tau}\cdot \clausemass(x)\cdot
    \left[
    1+
    \left(
    \frac{1}{k}\cdot \log\frac{N}{\clausemass(x)}
    \right)^\tau
    \right]\enspace.
\]
Fix $x\in\R^m$.
For brevity, write $\clauseprob_S \coloneqq \clauseprob_S(x)$
and $\clausemass \coloneqq \clausemass(x)$.
Since each $\clauseprob_S$ lies in $(0,1)$, we have $0<\clausemass<N$.
Define a probability measure $\mu$ on $\binom{[m]}{k}$ by $\mu(S) \coloneqq \frac{\clauseprob_S}{\clausemass}$,
and define $Z_S \coloneqq \frac1k  \cdot \log\frac1{\clauseprob_S}$.
With this notation, $\clausemoment_\tau$ can be written as
\[
\clausemoment_\tau(x) ~=~ \clausemass\cdot\expect{S\sim\mu}{(1+Z_S)^\tau}\enspace.
\]
For every $\tau>0$, there is a constant $A_\tau>0$ such that
\(
    (1+z)^\tau \le A_\tau(1+z^\tau)
\)
for all $z\ge0$.
Therefore, we have
\begin{align} \label{eq:moment-tau}
    \clausemoment_\tau(x)
    ~\le~
    A_\tau \cdot \clausemass\cdot
    \expect{S\sim\mu}{1+Z_S^\tau}
    ~=~
    A_\tau \cdot \clausemass\cdot
    \left(1+\expect{S\sim\mu}{Z_S^\tau}\right)\enspace.
\end{align}
Thus, it suffices to give a bound for
$\expect{S\sim\mu}{Z_S^\tau}$.

Consider the moment-generating function of $Z_S$.
For $\beta\in(0,1)$,
\[
    \expect{S\sim\mu}{\e^{\beta k Z_S}}
    ~=~
    \sum_{S\in\binom{[m]}{k}} \frac{\clauseprob_S}{\clausemass} \cdot \e^{\beta k Z_S}
    ~=~
    \frac1\clausemass\cdot \sum_{S\in\binom{[m]}{k}} \clauseprob_S^{1-\beta}\enspace.
\]
Then, by Jensen's inequality, we have
\[
    \expect{S\sim\mu}{\e^{\beta k Z_S}}
    ~=~
    \frac1\clausemass\cdot \sum_{S\in\binom{[m]}{k}} \clauseprob_S^{1-\beta}
    ~\leq~
    \frac1\clausemass\cdot N^\beta \cdot
    \left(
    \sum_{S\in\binom{[m]}{k}} \clauseprob_S
    \right)^{1-\beta}
    ~=~
    \left(\frac{N}{\clausemass}\right)^\beta\enspace.
\]
Taking $\beta=\frac12$ and applying Markov's inequality,
for every $t\ge0$,
\[
	\Pr_{S\sim\mu}[Z_S\ge t]
    \le
    \frac{ \expect{S\sim\mu}{\e^{k Z_S/2}} }{\e^{k t/2}}
    \leq
    \exp\!\left(\frac12\br{\log\frac{N}{\clausemass}-kt}\right)\enspace.
\]
Now set a threshold $T \coloneqq \frac{1}{k} \cdot \log\frac{N}{\clausemass}$.
We have that for every $0 \le t \le T$, $\Pr_{S\sim\mu}[Z_S\ge t] \le 1$,
and for every $t\ge T$, $\Pr_{S\sim\mu}[Z_S\ge t] \le \e^{-k(t-T)/2}$.
Hence, 
\begin{align*}
    \expect{S\sim\mu}{Z_S^\tau}
    ~=~
    \int_0^\infty \Pr_{S\sim\mu}[Z_S^\tau \ge t]\,\d t
    ~&=~
    \int_0^\infty \Pr_{S\sim\mu}[Z_S \ge t^{1/\tau}]\,\d t \\
    ~&=~
    \tau \cdot \int_0^\infty t^{\tau-1} \cdot \Pr_{S\sim\mu}[Z_S \ge t]\,\d t \\
    ~&\le~
    \tau \cdot \int_0^T t^{\tau-1}\,\d t
    ~+~
    \tau \cdot \int_T^\infty t^{\tau-1}\cdot \e^{-k(t-T)/2}\,\d t\enspace.
\end{align*}
The first integral equals $T^\tau$.
As for the second integral, we consider two cases:
\begin{itemize}
    \item If $0<\tau<1$, then $t^{\tau-1}$ is decreasing on $(0,\infty)$, and hence
    \begin{align*}
        \int_T^\infty t^{\tau-1}\cdot \e^{-k(t-T)/2}\, \d t
        ~&=~
        \int_0^\infty (T+v)^{\tau-1} \cdot \e^{-kv/2}\, \d v\\
        ~&\le~
        \int_0^\infty v^{\tau-1} \cdot \e^{-kv/2}\, \d v
        ~=~
        \left(\frac{2}{k}\right)^\tau \cdot \Gamma(\tau)
        ~\le~
        2^\tau \cdot \Gamma(\tau)
        \enspace.
    \end{align*}
    \item If $\tau\ge1$,
    we have that for every $T,v\ge0$,
    $(T+v)^{\tau-1} \le 2^{\tau-1}\cdot\bigl(T^{\tau-1}+v^{\tau-1}\bigr)$.
    Therefore, using $k\ge1$, we have
    \[
        \begin{aligned}
        \int_T^\infty t^{\tau-1}\cdot \e^{-k(t-T)/2}\,\d t
        ~&=~
        \int_0^\infty (T+v)^{\tau-1}\cdot \e^{-kv/2}\,\d v \\
        ~&\le~
        2^{\tau-1}\cdot
        \left(
        T^{\tau-1} \cdot \int_0^\infty \e^{-kv/2}\,\d v
        ~+~
        \int_0^\infty v^{\tau-1} \cdot \e^{-kv/2}\,\d v
        \right) \\
        ~&=~
        2^{\tau-1}
        \left(
        \frac{2}{k} \cdot T^{\tau-1}
        ~+~
        \left(\frac{2}{k}\right)^\tau \cdot\Gamma(\tau)
        \right) \\
        ~&\le~
        2^{2\tau}\cdot \Gamma(\tau) \cdot (1+T^\tau)\enspace.
        \end{aligned}
    \]
\end{itemize}
Thus, for some constant $B_{\tau}>0$ depending only on $\tau$,
$\expect{S\sim\mu}{Z_S^\tau} \le B_{\tau}\cdot (1 + T^\tau)$.
Inserting this into \cref{eq:moment-tau} gives
\[
    \clausemoment_\tau(x)
    ~\le~
    A_\tau \cdot \clausemass\cdot
    \left(1+B_{\tau}\cdot (T^\tau+1)\right)
    ~\le~
    A_\tau \cdot (B_{\tau}+1) \cdot \clausemass\cdot
    \left[
    1 +
    \left(\frac{1}{k}\cdot \log\frac{N}{\clausemass}\right)^\tau
    \right]\enspace.
\]
Letting $C_{4,\tau} = A_\tau \cdot (B_{\tau}+1)$ completes the proof.
\end{proof}

\subsection{\texorpdfstring{Proof of \cref{thm:derivative-bound}}{Proof of Theorem 6.3}}
\label{sec:proof-derivative-bound}

We are ready to prove the main derivative bound of this section,
which is restated below.

\fullderivative*

\begin{proof}
Fix $m,k,d$ with $m\ge2$, $1\le k\le m/2$ and $d\ge1$,
and fix $x\in\R^m$.
We prove a uniform bound for $\Vert\mollifierkernel^{(d)}(x)\Vert_1$,
and the theorem then follows by taking the supremum over all $x\in\R^m$.
We first introduce some notation.
Let $\S \coloneqq \{S \subseteq [m] : |S| = k\}$.
For each $S\in\S$, define
\[
    \clauseprob_S~\coloneqq~\clauseprob_S(x) ~=~  \prod_{i\in S}\gcdf(x_i)\enspace,
    \qquad
    Q_S ~\coloneqq~ 1-\clauseprob_S\enspace,
    \qquad
    \clausemass ~\coloneqq~ \sum_{S\in\S}\clauseprob_S\enspace.
\]
Thus, $\mollifierkernel(x)=\prod_{S\in\S}Q_S$ and $0<\clauseprob_S<1$ for every $S\in\S$.

Fix an ordered derivative index
$J=(j_1,\dots,j_d)\in[m]^d$.
By the product rule for multivariate derivatives,
$\partial_J\mollifierkernel(x) \coloneqq \partial_{j_1}\cdots\partial_{j_d}\mollifierkernel(x)$
can be expressed as
\[
    \partial_J\mollifierkernel(x) ~= 
    \sum_{ h : [d] \to \S }
    \prod_{S\in\S} \partial_{J_{h^{-1}(S)}} Q_S\enspace,
\]
where
\begin{itemize}[topsep=5pt,itemsep=3pt]
    \item $h$ ranges over all functions from $[d]$ to $\S$,
            assigning each derivative index in $J$ to a factor $Q_S$ on which it acts,
    \item $h^{-1}(S) \coloneqq \{t \in [d] : h(t) = S\}$ denotes the set of derivative indices that hit $Q_S$,
    \item $J_{h^{-1}(S)}$ denotes the subtuple of $J$ consisting of the indices $j_t$ with $t\in h^{-1}(S)$, and
    \item $\partial_{J_{h^{-1}(S)}} Q_S$ denotes the derivative of $Q_S$ with respect to the indices in $J_{h^{-1}(S)}$.
\end{itemize}
Let $\S_h \coloneqq \{S\in\S : h^{-1}(S)\neq \emptyset\}$
denote the set of factors in $\mollifierkernel(x)$ that receive at least one
derivative under the assignment $h$.
Then, we have
\[
    \partial_J\mollifierkernel(x) ~= 
    \sum_{ h : [d] \to \S }
    \prod_{S\in\S_h} \partial_{J_{h^{-1}(S)}} Q_S \cdot
    \prod_{R\notin\S_h}Q_R
    \enspace.
\]
Summing over all $J\in[m]^d$, we obtain
\begin{align*}
    \norm{\mollifierkernel^{(d)}(x)}_1
    ~=~
    \sum_{j_1,\dots,j_d=1}^m
    \left|
    \partial_{j_1\cdots j_d}\mollifierkernel(x)
    \right|
    ~&\le~
    \sum_{j_1,\dots,j_d=1}^m
    \sum_{ h : [d] \to \S }
    \prod_{S\in\S_h} \abs{\partial_{J_{h^{-1}(S)}} Q_S} \cdot
    \prod_{R\notin\S_h}Q_R \nonumber\\
    &=~
    \sum_{ h : [d] \to \S }
    \left(
    \sum_{j_1,\dots,j_d=1}^m
    \prod_{S\in\S_h} \abs{\partial_{J_{h^{-1}(S)}} Q_S} \cdot
    \prod_{R\notin\S_h}Q_R
    \right) \nonumber\\
    &=~
    \sum_{ h : [d] \to \S }
    \left(\prod_{S\in\S_h} \norm{ \clauseprob_S^{(|h^{-1}(S)|)}(x) }_1 \right)\cdot
    \prod_{R\notin\S_h}Q_R\enspace.
\end{align*}
Moreover, since $0<\clauseprob_R<1$ for every $R\in\S$ and $\abs{\S_h}\leq d$,
we have
\[
    \prod_{R\notin \S_h}Q_R
    ~=~
    \prod_{R\notin \S_h}(1-\clauseprob_R)
    ~\le~
    \exp\!\left(-\sum_{R\notin \S_h}\clauseprob_R\right)
    ~=~
    \exp\!\left(-\clausemass+\sum_{R\in \S_h}\clauseprob_R\right)
    ~\le~
    \e^{d-\clausemass}\enspace.
\]
Therefore, we have
\begin{align*}
    \norm{\mollifierkernel^{(d)}(x)}_1
    ~\le~
    \e^{d-\clausemass} \cdot
    \sum_{ h : [d] \to \S }
    \left(\prod_{S\in\S_h} \norm{ \clauseprob_S^{(|h^{-1}(S)|)}(x) }_1 \right) \enspace.
\end{align*}
For each function $h:[d]\to\S$, the nonempty sets $\{h^{-1}(S)\}_{S\in \S_h}$
form a partition of $[d]$.
Conversely, such a function is obtained by choosing a partition of $[d]$
and then assigning a distinct $S\in\S$ to each block.
Let $\mathcal{P}_d$ denote the set of partitions of $[d]$.
For each partition $P\in\mathcal P_d$, fix an arbitrary ordering of its blocks
and write $P=\{B_1,\dots,B_l\}$.
Then
\begin{align*}
    \sum_{ h : [d] \to \S }
    \left(\prod_{S\in\S_h} \norm{ \clauseprob_S^{(|h^{-1}(S)|)}(x) }_1 \right)
    ~&=~
    \sum_{ \substack{P\in\mathcal{P}_d : P = \{B_1,\dots,B_l\} \\
    B_1,\dots,B_l\neq\emptyset \\ B_1\sqcup \cdots \sqcup B_l = [d] } }
    \sum_{\substack{S_1,\dots,S_l\in\S\\ \text{are distinct}}}
    ~\prod_{t=1}^l \norm{ \clauseprob_{S_t}^{(|B_t|)}(x) }_1 \\
    ~&\leq~
    \sum_{ \substack{P\in\mathcal{P}_d : P = \{B_1,\dots,B_l\} \\
    B_1,\dots,B_l\neq\emptyset \\ B_1\sqcup \cdots \sqcup B_l = [d] } }
    ~\prod_{t=1}^l \left( \sum_{S\in\S} \norm{ \clauseprob_S^{(|B_t|)}(x) }_1 \right)\enspace.
\end{align*}
Thus, we have
\[
    \norm{\mollifierkernel^{(d)}(x)}_1
    ~\le~
    \e^{d-\clausemass}\cdot
    \sum_{ \substack{P\in\mathcal{P}_d : P = \{B_1,\dots,B_l\} \\
    B_1,\dots,B_l\neq\emptyset \\ B_1\sqcup \cdots \sqcup B_l = [d] } }
    ~\prod_{t=1}^l \left( \sum_{S\in\S} \norm{ \clauseprob_S^{(|B_t|)}(x) }_1 \right)
    \enspace.
\]
Now set $T = \frac{1}{k} \cdot \log\frac{N}{\clausemass}$ where $N = \binom{m}{k}$.
Applying \cref{lem:moment} to each block $B_t$, we obtain
\begin{align*}
    \norm{\mollifierkernel^{(d)}(x)}_1
    ~\leq~
    \e^{d-\clausemass}\cdot
    \sum_{ \substack{P\in\mathcal{P}_d : P = \{B_1,\dots,B_l\} \\
    B_1,\dots,B_l\neq\emptyset \\ B_1\sqcup \cdots \sqcup B_l = [d] } }
    ~\prod_{t=1}^l~
    C_{3,|B_t|} \cdot k^{|B_t|}\cdot \clausemass \cdot (1+T^{|B_t|/2}) \enspace.
\end{align*}
Since each $\abs{B_t} \leq d$ and $l\le d$,
there exists a constant $C_{5,d}>0$ depending only on $d$ such that
\begin{align}
    \norm{\mollifierkernel^{(d)}(x)}_1
    &~\leq~
    C_{5,d} \cdot k^d \cdot 
    \sum_{ \substack{P\in\mathcal{P}_d : P = \{B_1,\dots,B_l\} \\
    B_1,\dots,B_l\neq\emptyset \\ B_1\sqcup \cdots \sqcup B_l = [d] } }
    \e^{-\clausemass} \cdot \clausemass^{l}\cdot
    \prod_{t=1}^l ~(1+T^{|B_t|/2}) \nonumber\\
    &~\leq~
    C_{5,d} \cdot 2^d \cdot k^d \cdot 
    \sum_{ \substack{P\in\mathcal{P}_d : P = \{B_1,\dots,B_l\} \\
    B_1,\dots,B_l\neq\emptyset \\ B_1\sqcup \cdots \sqcup B_l = [d] } }
    \e^{-\clausemass} \cdot \clausemass^{l}\cdot
    (1+T^{d/2}) \enspace, \label{eq:final-bound-on-derivative}
\end{align}
where the last step uses
$\prod_{t=1}^l(1+T^{|B_t|/2})\le 2^d(1+T^{d/2})$ for every
$P=\{B_1,\dots,B_l\}\in\mathcal P_d$.
Since the number of partitions of $[d]$ depends only on $d$,
it suffices to bound
$\e^{-\clausemass} \cdot \clausemass^{l}\cdot (1+T^{d/2})$
for all $1\le l\le d$.
We split into two cases according to whether $\clausemass\ge1$ or $0<\clausemass<1$.
\begin{itemize}
\item If $\clausemass\ge1$, then
$T = \frac{1}{k} \cdot \log\frac{N}{\clausemass} \leq \frac{1}{k} \cdot \log N = \frac{1}{k} \cdot \log \binom{m}{k} \leq \log \frac{\e m}{k}$,
and hence we have
\begin{align*}
    \e^{-\clausemass} \cdot \clausemass^{l} \cdot (1+T^{d/2})
    ~\le~
    \left(\sup_{u\ge1}~ \e^{-u}\cdot u^l\right) \cdot \br{1+\br{\log \frac{\e m}{k}}^{d/2}}\enspace.
\end{align*}
Since $\e^{-u}\cdot u^l$ is bounded on $[1,\infty)$,
for some constant $C_{6,l}>0$ depending only on $l$,
\[
    \e^{-\clausemass} \cdot \clausemass^{l} \cdot (1+T^{d/2})
    ~\le~
    C_{6,l}\cdot\br{1+\br{\log \frac{\e m}{k}}^{d/2}}\enspace.
\]

\item If $0<\clausemass<1$, then
$T = \frac{1}{k} \cdot \log\frac{N}{\clausemass} = \frac{1}{k} \cdot \log \binom{m}{k} + \frac{1}{k} \cdot \log \frac{1}{\clausemass} \leq \log \frac{\e m}{k} + \log \frac{1}{\clausemass}$, and hence
\begin{align*}
    \e^{-\clausemass} \cdot \clausemass^{l} \cdot (1+T^{d/2})
    ~&\le~
     \clausemass^l \cdot
    \br{1 + \br{\log\frac{\e m}{k} + \log\frac1{\clausemass}}^{d/2}}\\
    ~&\leq~
     \clausemass^l \cdot
    \br{1 + 2^{d/2}\cdot\br{\br{\log\frac{\e m}{k}}^{d/2} + \br{\log\frac1{\clausemass}}^{d/2}}} \\
    ~&\leq~
     2^{d/2} \cdot
    \br{1 +  \br{\log\frac{\e m}{k}}^{d/2} + \clausemass^l \cdot \br{\log\frac1{\clausemass}}^{d/2}} \enspace.
\end{align*}
Since $\clausemass^l \cdot \br{\log\frac1{\clausemass}}^{d/2}$ is bounded on $(0,1)$ for every $l\in[d]$, there exists a constant $C_{7,l,d}>0$ depending only on $l,d$ such that
\[
    \e^{-\clausemass} \cdot \clausemass^{l} \cdot (1+T^{d/2})
    ~\le~
    C_{7,l,d}\cdot\br{1+\br{\log\frac{\e m}{k}}^{d/2}}\enspace.
\]
\end{itemize}
Since $l\in[d]$, we can take $C_{8,d} = \max_{l\in[d]}\max\{C_{6,l}, C_{7,l,d}\}$
and obtain, for every $1\le l\le d$,
\[
    \e^{-\clausemass} \cdot \clausemass^{l} \cdot (1+T^{d/2})
    ~\le~
    C_{8,d}\cdot\br{1+\br{\log\frac{\e m}{k}}^{d/2}}\enspace.
\]
Substituting this into \cref{eq:final-bound-on-derivative} gives
\[
    \norm{\mollifierkernel^{(d)}(x)}_1 ~\leq~
    C_{5,d} \cdot C_{8,d} \cdot 2^d \cdot 
    \abs{\mathcal{P}_d}
    \cdot k^d \cdot \br{1+\br{\log\frac{\e m}{k}}^{d/2}}\enspace.
\]
Since $m\geq2k$, we have
\[
	\log\frac{\e m}{k} ~\leq~ \br{1+\frac{1}{\log2}} \cdot \log\frac{m}{k} \enspace.
\]
Letting $C_d = C_{5,d} \cdot C_{8,d} \cdot 2^d \cdot \br{1+\frac{1}{\log2}}^{d/2} \cdot \abs{\mathcal{P}_d}$ completes the proof.
\end{proof}

\newpage
\appendix

\phantomsection
\addcontentsline{toc}{section}{Appendix}
\section*{Appendix}

\section{\texorpdfstring{Proof of \cref{lem:prob-estimate}}{Proof of Lemma 3.10}}
\label{app:prob-estimate}

\simpleest*

\begin{proof}
Since $|[m]\setminus T|=k-1$, every $S\in \binom{[m]}{k}$ intersects $T$.
Hence, by the union bound,
\[
    1-\prod_{S\in\binom{[m]}{k}}\left(1-\prod_{i\in S}p_i\right)
    ~\le~
    \sum_{S\in\binom{[m]}{k}}\prod_{i\in S}p_i
    ~\le~
    \sum_{r=1}^k
    \binom{m-k+1}{r}\binom{k-1}{k-r}\br{\frac{\delta}{4m^2}}^r \enspace.
\]
Now, let 
\[
    t_r~\coloneqq~\binom{m-k+1}{r}\binom{k-1}{k-r}\br{\frac{\delta}{4m^2}}^r \enspace.
\]
Then, we have that $t_1 = (m-k+1)\cdot{\frac{\delta}{4m^2}} \leq \frac{\delta}{4m}$, and 
for $1\le r<k$,
\[
    \frac{t_{r+1}}{t_r}
    ~=~
    \frac{(m-k+1-r)(k-r)}{r(r+1)} \cdot{\frac{\delta}{4m^2}}
    ~\le~
    m^2\cdot{\frac{\delta}{4m^2}}
    ~\le~
    \frac14\enspace.
\]
Thus, we have that $t_{r+1} \le t_r/4$ for all $1\le r<k$, and hence
\[
    \sum_{r=1}^k t_r
    ~\le~
    \sum_{r=1}^k t_1 \cdot 4^{-(r-1)}
    ~\le~
    \frac43 \cdot t_1
    ~\le~
    \frac43\cdot \frac{\delta}{4m}
    ~\le~
    \delta \enspace.
\]
This completes the proof.
\end{proof}

\section[Noise Sensitivity and Gaussian Surface Area of Thresholds of Halfspaces]{Noise Sensitivity and Gaussian Surface Area of Thresholds of\\Halfspaces}
\label{app:threshold-ns-gsa}

This section proves the upper and lower bounds
on the noise sensitivity and Gaussian surface area of thresholds of halfspaces
stated in \cref{sec:anti}.
We first prove \cref{thm:threshold-ns-gsa},
which gives upper bounds on the noise sensitivity and
Gaussian surface area of thresholds of halfspaces.
We restate the theorem here for convenience.

\nsgsa*

\begin{proof}
    As discussed in \cref{rmk:equivalent-representation},
    we can write $F$ as $F(x) = \ort_{k,b}(Ax)$
    for some $A\in\R^{m\times n}$ and $b\in\R^m$.
    Let \(R\subseteq[m]\) be a random subset 
    obtained by picking each index independently with probability \(p=1/k\),
    and define
    \[
        F_R(x) ~\coloneqq~ \ind[\exists i\in R:A_i x\le b_i] \enspace.
    \]
    That is, \(F_R\) is the \(\OR\) of the halfspaces \(\{A_i x\le b_i\}_{i\in R}\).

    We first prove the bound for noise sensitivity.
    Let \(x\sim\pmcube n\) and \(y \sim N_\delta(x)\).
    For \(z\in\pmcube n\), let $S(z)\coloneqq \{i\in[m]:A_i z\le b_i\}$
    denote the set of satisfied halfspaces at \(z\).
    On the event \(F(x)=1\) and \(F(y)=0\), we have \(|S(x)|\ge k\) and
    \(|S(y)|\leq k-1\). Hence
    \[
        \Pr_R[F_R(x)=1,\ F_R(y)=0]
        ~=~
        (1-p)^{|S(y)|} \cdot
        \left(1-(1-p)^{|S(x)\setminus S(y)|}\right)
        ~\ge~
        p(1-p)^{k-1}.
    \]
    Similarly, if \(F(x)=0\) and \(F(y)=1\), then
    $\Pr_R[F_R(x)=0,\ F_R(y)=1]~\ge~p(1-p)^{k-1}$.
    Therefore, 
    \[
    	p(1-p)^{k-1} \cdot \ind[F(x)\ne F(y)]
        ~\le~
        \Pr_{R} [ F_R(x) \neq  F_R(y)] \enspace.
    \]
    Taking expectation over \(x\sim\pmcube n\) and \(y\sim N_\delta(x)\) gives
    \[
        p(1-p)^{k-1} \cdot \ns_\delta(F)
        ~\le~
        \E_{R}[\ns_\delta(F_R)] \enspace.
    \]
    Since \(F_R\) is an \(\OR\) of halfspaces,
    \(1-F_R\) is an intersection of the complementary halfspaces.
    Therefore, applying \cref{thm:intersection-halfspaces-ns-gsa} to \(1-F_R\),
    we have $\ns_\delta(F_R)=\ns_\delta(1-F_R)\le C\sqrt{\delta\log(|R|)}$.
    Thus, by Jensen's inequality and \(\E_R[|R|]=m/k\), we have
    \[
        \ns_\delta(F)
        ~\le~
        \frac{C\cdot \E_R[\sqrt{\delta\log(|R|)}]}{p(1-p)^{k-1}}
        ~\le~
        \frac{C\sqrt{\delta\cdot \log(\E_R[|R|])}}{p(1-p)^{k-1}}
        ~=~
        O\!\br{k \sqrt{\delta\cdot \log(m/k)}} \enspace.
    \]

    We next prove the bound for Gaussian surface area.
    Without loss of generality, assume that the boundary hyperplanes
    \(\{H_i\}_{i=1}^m\), where \(H_i \coloneqq \{x\in\R^n:A_i x=b_i\}\),
    satisfy that every pairwise intersection \(H_i\cap H_j\), \(i\ne j\),
    has \((n-1)\)-dimensional surface measure zero.
    The general case follows by an arbitrarily small perturbation of the halfspaces
    and then letting the perturbation tend to zero.
    
    For each $i\in[m]$ and \(x\in H_i\), let
    $ s_i(x)\coloneqq \#\{j\in[m]\setminus\{i\}: A_jx\le b_j\} $
    denote the number of satisfied halfspaces of $x$
    other than the $i$-th one.
    Except on intersections of two hyperplanes,
    crossing \(H_i\) changes only the \(i\)-th halfspace indicator.
    Hence the part of the boundary of \(K(F)\) lying on \(H_i\)
    consists exactly of those \(x\in H_i\) for which \(s_i(x)=k-1\).
    Therefore,
    \begin{equation}\label{eq:gsa-threshold}
        \gsa(F)
        ~=~
        \sum_{i=1}^m
        \int_{H_i}  \ind[s_i(x)=k-1] \cdot \phi_n(x)\,\mathrm{d}\sigma_i(x) \enspace,
    \end{equation}
    where \(\phi_n\) is the standard Gaussian density on \(\R^n\), and
    \(\mathrm{d}\sigma_i\) is surface measure on \(H_i\).
    For each \(i\in [m]\), \(x\in H_i\) and random subset \(R\) as above,
    define
    \[
        \EE_{R,i}(x)
        ~\coloneqq~
        \ind[i\in R]\cdot \ind\bigl[\{j\in R\setminus\{i\}:A_jx\le b_j\}=\emptyset\bigr] \enspace.
    \]
    Then, we have
    \begin{equation} \label{eq:gsa-threshold-R}
        \gsa(F_R)
        ~=~
        \sum_{i\in [m]}
        \int_{H_i} \EE_{R,i}(x)\cdot \phi_n(x)\,\mathrm{d}\sigma_i(x) \enspace.
    \end{equation}
    For fixed \(i\in[m]\) and \(x\in H_i\), we have
    \[
        \E_R[ \EE_{R,i}(x)] ~=~ p(1-p)^{s_i(x)} \enspace.
    \]
    In particular, this expectation equals \(p(1-p)^{k-1}\) for \(s_i(x)=k-1\).
    Consequently,
    \begin{equation} \label{eq:gsa-threshold-R-compare}
        \ind[s_i(x)=k-1]
        ~\le~
        \frac{1}{p(1-p)^{k-1}}\cdot \E_R[\EE_{R,i}(x)] \enspace.
    \end{equation}
    Combining \cref{eq:gsa-threshold,eq:gsa-threshold-R,eq:gsa-threshold-R-compare} gives
    \[
        \gsa(F)
        ~\le~
        \frac{1}{p(1-p)^{k-1}}\cdot\E_R[\gsa(F_R)] \enspace.
    \]
    Since \(F_R\) is an \(\OR\) of halfspaces, \(1-F_R\) is an intersection of
    the complementary halfspaces.  Gaussian surface area is unchanged under
    complementing the function, so by \cref{thm:intersection-halfspaces-ns-gsa},
    \[
        \gsa(F_R)~=~\gsa(1-F_R)~\le~ C\sqrt{\log(|R|)} \enspace.
    \]
    Therefore, by Jensen's inequality and \(\E_R[|R|]=mp=m/k\),
    \[
        \gsa(F)
        ~\le~
        \frac{C\cdot \E_R[\sqrt{\log(|R|)}]}{p(1-p)^{k-1}}
        ~\le~
        \frac{C\sqrt{\log(\E_R[|R|])}}{p(1-p)^{k-1}}
        ~=~
        O\!\br{k\sqrt{\log(m/k)}} \enspace. \qedhere
    \]
\end{proof}

Now we prove the lower bounds in \cref{prop:threshold-ns-lower-bound}
and \cref{prop:threshold-surface-lower-bound}.
We first prove the lower bound for Gaussian surface area,
which is restated below.

\gsalb*

\begin{proof}

Fix $n \geq m$. We will give two constructions of
\(k\)-out-of-\(m\) thresholds of halfspaces $F_1, F_2:\R^n\to\{0,1\}$ 
such that
\[
	\gsa(F_1) = \Omega(k) \qquad\text{and}\qquad
    \gsa(F_2) = \Omega\!\br{\sqrt{k\log(m/k)}} \enspace.
\]
The desired lower bound then follows since
\[
    \max\left\{k,
    \sqrt{k\log\frac{m}{k}}\right\}
    ~\ge~
    \frac12
    \left(k+\sqrt{k\log\frac{m}{k}}\right) \enspace.
\]

\vspace{-1em}
\paragraph{Constructing \(F_1\).}
Choose $2k-1$ distinct thresholds $t_1<t_2<\cdots<t_{2k-1}$
in a fixed interval around the origin.
For \(j=1,\ldots,2k-1\), define the halfspace $h_j:\R^n\to\{0,1\}$ by
\[
    h_j(x)
    =
    \begin{cases}
        ~\ind[x_1\ge t_j]\,, & j \text{ is odd} \enspace,\\[0.2em]
        ~\ind[x_1\le t_j]\,, & j \text{ is even}\enspace.
    \end{cases}
\]
For \(j=2k,\ldots,m\), set $h_j(x)=\ind[\langle \mathbf{0},x\rangle\le -1]$,
which always outputs 0.
Finally, define
\[
    F_1(x)
    ~=~
    \thr_{m,k}(h_1(x),\ldots,h_m(x)) \enspace.
\]

We claim that \(\gsa(F_1)=\Omega(k)\).
Note that the function \(F_1\) depends only on the first coordinate.
For \(x_1<t_1\), exactly \(k-1\) of the halfspaces \(h_1,\ldots,h_{2k-1}\) are satisfied.
As \(x_1\) crosses each threshold \(t_j\),
exactly one of these halfspace values changes,
and hence the number of satisfied halfspaces alternates between \(k-1\) and \(k\).
Therefore each hyperplane $\{x\in\R^n:x_1=t_j\}$
lies on the boundary of \(F_1^{-1}(1)\).
Hence, we have
\[
    \gsa(F_1)
    ~\ge~
    \sum_{j=1}^{2k-1}\gpdf(t_j)
    ~=~
    \Omega(k)\enspace,
\]
where \(\gpdf\) is the PDF of the standard Gaussian distribution
and each \(\gpdf(t_j)\) is a constant.

\vspace{-1em}
\paragraph{Constructing \(F_2\).}
Let \(p=k/m\), and choose \(t\in\R\) such that
$\Pr_{g\sim \NN(0,1)}[g\ge t]=p$.
For \(j=1,\ldots,m\), define the halfspace \(h_j:\R^n\to\{0,1\}\) by
$h_j(x) = \ind[x_j\ge t]$.
Finally, define $F_2(x) = \thr_{m,k}(h_1(x),\ldots,h_m(x))$.

We now prove that \(\gsa(F_2)=\Omega(\sqrt{k\log(m/k)})\).
For each \(j\in[m]\), the hyperplane \(\{x\in\R^n:x_j=t\}\)
contributes to the boundary of \(F_2^{-1}(1)\)
exactly when precisely \(k-1\) of the other \(m-1\) halfspaces are satisfied.
Hence,
\[
    \gsa(F_2)
    ~=~
    m \cdot \gpdf(t)\cdot
    \binom{m-1}{k-1} \cdot p^{k-1}(1-p)^{m-k} \enspace.
\]
We will use the following bound on $\gpdf(t)$.

\begin{claim}\label[claim]{clm:gaussian-quantile-density}
There is a constant \(C>0\) such that for every \(0<p\le1/2\), if
\(t\ge0\) is chosen so that \(\Pr_{g\sim \NN(0,1)}[g\ge t]=p\), then
\(\gpdf(t)\ge C\cdot p\sqrt{\log(1/p)}\).
\end{claim}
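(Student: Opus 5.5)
Since $t\ge0$ is the upper $p$-quantile, the identity $\Phi(-t)=p$ holds, and the natural tool is the Mills-ratio lower bound of \cref{lem:mills-ratio} together with the Gaussian tail bound $\Phi(-t)\le \e^{-t^2/2}$. We split into two regimes for $p$: a bounded-away-from-zero regime, say $p\in[p_0,1/2]$ for a fixed small constant $p_0$, and the small regime $p<p_0$.

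In the first regime, $t$ lies in the compact interval $[0,t_0]$ with $t_0=\Phi^{-1}(1-p_0)$. There $\phi(t)\ge\phi(t_0)$ is an absolute constant, while $p\sqrt{\log(1/p)}\le \tfrac12\sqrt{\log(1/p_0)}$ is bounded. So the claim holds with a suitable constant.

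In the second regime, $t$ is large (at least $t_0\ge1$ once $p_0\le\Phi(-1)$). For the density, I will use the trivial upper bound $\Phi(-t)\le \phi(t)/t$ from the proof of \cref{cor:mills}, which gives $\phi(t)\ge t\,p$. For the logarithm, the tail bound $p=\Phi(-t)\le \e^{-t^2/2}$ gives only $t\le\sqrt{2\log(1/p)}$, which points the wrong way. So I will instead use \cref{lem:mills-ratio}: for $t\ge1$ it gives $p\ge \frac{t}{1+t^2}\phi(t)\ge \frac{\phi(t)}{2t}$, hence $p\ge \frac{\e^{-t^2/2}}{2\sqrt{2\pi}\,t}\ge \e^{-t^2}$ for $t$ large enough. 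This yields $\log(1/p)\le t^2$, i.e.\ $t\ge\sqrt{\log(1/p)}$. Combining the two bounds gives $\phi(t)\ge tp\ge p\sqrt{\log(1/p)}$.

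Finally, choosing $p_0$ small enough that all the ``$t$ large'' conditions hold and taking $C$ to be the minimum of the two constants finishes the proof. The only delicate point is getting the lower bound on $t$ in terms of $\log(1/p)$, which requires the Mills lower bound rather than the usual tail bound; the rest is routine.
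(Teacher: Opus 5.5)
Your proposal is correct and follows essentially the same argument as the paper. Both split into a bounded-$t$ case handled by compactness and a large-$t$ case that combines $\phi(t)\ge tp$ with the Mills-ratio lower bound to get $t\gtrsim\sqrt{\log(1/p)}$. The only cosmetic difference is that you move the cutoff beyond $t=1$ to obtain the clean inequality $\log(1/p)\le t^2$, whereas the paper cuts at $t=1$ and absorbs the $\log(2\sqrt{2\pi}\,t)$ term into the constant $C$.
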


\begin{proof}
If \(0\le t\le1\), then \(\gpdf(t)\ge \gpdf(1)\).
Since \(p\in(0,1/2]\), \(p\sqrt{\log(1/p)}\) is bounded above by a universal constant.
Hence, by choosing the constant \(C>0\) sufficiently small,
we have \(C\cdot p\sqrt{\log(1/p)}\le \gpdf(1)\le \gpdf(t)\).

For \(t>1\), we have
$\Pr_{g\sim \NN(0,1)}[g\ge t]
    =
    \int_t^\infty \gpdf(s)\,\d s
    \le
    \frac1t\int_t^\infty s\cdot \gpdf(s)\,\d s
    =
    \frac{\gpdf(t)}{t}$.
This gives \(\gpdf(t)\ge pt\).
On the other hand, by \cref{lem:mills-ratio},
$p=\gcdf(-t) \ge \frac{t}{1+t^2} \cdot \gpdf(t)$.
Since \(t>1\), we have \( \frac{t}{1+t^2} \geq \frac{1}{2t} \).
Hence, $ p \geq  \frac{1}{2\sqrt{2\pi}t}\cdot \e^{-t^2/2}$.
Taking logarithms gives
$\log(1/p) \le\frac{t^2}{2}+\log(2\sqrt{2\pi}t)$.
Thus for sufficiently small $C>0$,
$ C\sqrt{\log(1/p)} \leq t $.
Combining this with
\(\gpdf(t)\ge pt\), we get
$\gpdf(t)\ge C\cdot p\sqrt{\log(1/p)}$.
\end{proof}
By \cref{clm:gaussian-quantile-density},
we have
$ \gpdf(t) = \Omega\bigl((k/m)\sqrt{\log(m/k)}\bigr) $.
Therefore, it suffices to show that
\begin{equation}\label{eq:binomial-lower-bound}
    \binom{m-1}{k-1} \cdot p^{k-1}(1-p)^{m-k} ~=~ \Omega\!\br{\frac{1}{\sqrt{k}}} \enspace.
\end{equation}
The case \(k=1\) is immediate, so assume \(k\ge2\).
Since \(p=k/m\), by Stirling's formula, we have
\[
\begin{aligned}
\binom{m-1}{k-1}\cdot p^{k-1}(1-p)^{m-k}
~=~
\Theta\br{\sqrt{\frac{m-1}{(k-1)(m-k)}} \cdot
\left(\frac{m-1}{m}\right)^{m-1} \cdot 
\left(\frac{k}{k-1}\right)^{k-1}} \enspace.
\end{aligned}
\]
The required lower bound in \cref{eq:binomial-lower-bound}
then follows from the fact that
the last two factors are bounded below by absolute constants,
and
\[
    \sqrt{\frac{m-1}{(k-1)(m-k)}} ~\ge~ \sqrt{\frac{m-1}{k(m-1)}} ~=~ \frac{1}{\sqrt{k}} \enspace. \qedhere
\]
\end{proof}

We then prove the lower bound for noise sensitivity in \cref{prop:threshold-ns-lower-bound}.

\nslb*

\begin{proof}

We will give two $k$-out-of-$m$ thresholds of halfspaces $F_1$ and $F_2$ such that
\[
    \ns_\delta(F_1)=\Omega(k\sqrt{\delta})
    \qquad\text{and}\qquad
    \ns_\delta(F_2)
    =
    \Omega\!\left(\sqrt{\delta\cdot k\log\frac{m}{k}}\right).
\]
The desired lower bound then follows since
\[
    \max\left\{k\sqrt{\delta},
    \sqrt{k\delta\log\frac{m}{k}}\right\}
    \ge
    \frac12\sqrt{\delta}
    \left(k+\sqrt{k\log\frac{m}{k}}\right).
\]
The idea is to approximate Gaussian variables
by normalized sums of independent Boolean variables
and use the constructions from the proof of \cref{prop:threshold-surface-lower-bound}.

Let \(q\) be a sufficiently large integer, and let $n_0 = mq$.
For \(x\sim\pmcube{n_0}\), we define random variables $Z^x \in \R^m$ by
\[
	Z^x_i = \frac{\sum_{j=(i-1)q+1}^{iq} x_j}{\sqrt{q}}\,, \qquad \forall i\in [m]\enspace.
\]
Let \(y\sim N_\delta(x)\), and define \(Z^y\) analogously.
As \(q\to\infty\),
the joint distribution of \((Z^x,Z^y)\) converges to
that of \((G,G')\), where \(G=(G_1,\ldots,G_m)\) and \(G'=(G'_1,\ldots,G'_m)\)
are standard Gaussian vectors
and each pair \((G_i,G'_i)\) has correlation \(1-2\delta\).
We choose \(q\) sufficiently large
so that the Boolean sums approximate the Gaussian variables well enough
for all probability estimates used below.
The construction uses only the first \(n_0=mq\) coordinates.
If \(n>n_0\), we simply let the function be independent of
the remaining coordinates.

\vspace{-1em}
\paragraph{Constructing \(F_1\).}
Choose \(2k-1\) thresholds \(t_1<t_2<\cdots<t_{2k-1}\)
in a fixed interval around the origin
so that adjacent thresholds are separated by
\(\Theta(\sqrt{\delta})\).
We can do this because \(\delta\le c_0/(k^2\log(m/k))\)
and \(c_0\) is a sufficiently small constant.
For \(j=1,\ldots,2k-1\), define
\[
    h_j(x)
    =
    \begin{cases}
        \ind[Z^x_1\ge t_j]\ , & j \text{ is odd}\enspace,\\[0.2em]
        \ind[Z^x_1\le t_j]\ , & j \text{ is even}\enspace.
    \end{cases}
\]
For \(j=2k,\ldots,m\), set
\(h_j(x)=\ind[\langle \mathbf{0},x\rangle\le -1]\), which always outputs \(0\).
Finally, define
\[
    F_1(x)=\thr_{m,k}(h_1(x),\ldots,h_m(x)) \enspace.
\]

We show that \(\ns_\delta(F_1)=\Omega(k\sqrt{\delta})\).
Let $x\sim\pmcube{n_0}$ and $y\sim N_\delta(x)$.
By our choice of \(q\),
we view \((Z^x_1,Z^y_1)\) as a pair of correlated Gaussian variables
with correlation \(1-2\delta\).
For \(j=1,\ldots,2k-1\),
let \(\EE_j\) be the event that
\(Z^x_1\) and \(Z^y_1\) lie on different sides of \(t_j\),
but on the same side of every other threshold.
Since the number of satisfied halfspaces alternates between \(k-1\) and \(k\)
across each threshold \(t_j\), we have \(F_1(x)\ne F_1(y)\) on \(\EE_j\).
We now lower bound \(\Pr[\EE_j]\).
Let \(\eta>0\) be a constant such that the event
\[
    t_j-\eta\sqrt{\delta}
    ~\le~ Z^x_1
    ~\le~ t_j-\frac{\eta}{2}\sqrt{\delta}
    \qquad\text{and}\qquad
    t_j+\frac{\eta}{2}\sqrt{\delta}
    ~\le~ Z^y_1
    ~\le~ t_j+\eta\sqrt{\delta}
\]
is contained in \(\EE_j\).
It remains to estimate the probability of this event.
Conditioned on \(Z^x_1=s\),
the random variable \(Z^y_1\) is a Gaussian variable
with mean \((1-2\delta)s\) and variance \(1-(1-2\delta)^2=\Theta(\delta)\).
For \(s\in[t_j-\eta\sqrt{\delta},\,t_j-\eta\sqrt{\delta}/2]\),
the interval \([t_j+\eta\sqrt{\delta}/2,\,t_j+\eta\sqrt{\delta}]\)
has length \(\Theta(\sqrt{\delta})\)
and lies within \(O(\sqrt{\delta})\) of this conditional mean.
Therefore the conditional probability that \(Z^y_1\) lies
in this interval is bounded below by an absolute constant.
Since all thresholds lie in a fixed interval around the origin,
the Gaussian measure of \(Z^x_1\) is bounded below by $\Omega(\sqrt{\delta})$
on \([t_j-\eta\sqrt{\delta},\,t_j-\eta\sqrt{\delta}/2]\).
Hence, we have
\[
    \Pr[\EE_j]
    ~\ge~
    \Omega(\sqrt{\delta})\cdot \Omega(1)
    ~=~
    \Omega(\sqrt{\delta}) .
\]
The events \(\EE_1,\ldots,\EE_{2k-1}\) are disjoint, and each implies
\(F_1(x)\ne F_1(y)\).  Thus, we obtain
\[
    \ns_\delta(F_1)
    ~\ge~
    \sum_{j=1}^{2k-1}\Pr[\EE_j]
    ~=~
    \Omega(k\sqrt{\delta}) \enspace.
\]

\vspace{-2em}
\paragraph{Constructing \(F_2\).}
Let \(p=k/m\leq 1/2\), and choose \(t\geq 0\) such that
\(\Pr_{g\sim\NN(0,1)}[g\ge t]=p\).
Then \(t=O(\sqrt{\log(m/k)})\) by the standard Gaussian tail bound.
For \(i=1,\ldots,m\), define $h_i(x)=\ind[Z^x_i\ge t]$.
Finally, define
$F_2(x)=\thr_{m,k}(h_1(x),\ldots,h_m(x))$.

We show that
\(\ns_\delta(F_2)=\Omega(\sqrt{k\delta\log(m/k)})\).
Let \(x\sim\pmcube{n_0}\) and \(y\sim N_\delta(x)\).
As above, we view \((Z^x_i,Z^y_i)\) as a pair of Gaussian variables
with correlation \(1-2\delta\).
For each \(i\in[m]\), let \(\RR_i\) be the event that
\(Z^x_i<t\) and \(Z^y_i\ge t\), exactly \(k-1\) of the remaining pairs satisfy
\(Z^x_\ell\ge t\) and \(Z^y_\ell\ge t\), and all other remaining pairs satisfy
\(Z^x_\ell<t\) and \(Z^y_\ell<t\).
Hence \(F_2(x)\ne F_2(y)\) on \(\RR_i\).
Let $g$ and $g'$ be standard Gaussian variables with correlation \(1-2\delta\).
Set
\[
    a=\Pr[g<t,\ g'\ge t]\enspace,\qquad
    b=\Pr[g\ge t,\ g'\ge t]\enspace,\qquad
    r=\Pr[g<t,\ g'<t]\enspace.
\]
Then, we have $\Pr[\RR_i] = a \cdot \binom{m-1}{k-1}\cdot b^{k-1} \cdot r^{m-k}$.

We first lower bound \(a\).  Write \(\rho=1-2\delta\) and
\(\sqrt{1-\rho^2}=\Theta(\sqrt{\delta})\).  We may write
\(g'=\rho g+\sqrt{1-\rho^2}\cdot w\), where \(w\sim\NN(0,1)\) is independent of \(g\).
Let \(I=[t-\eta\sqrt{1-\rho^2},t-\eta\sqrt{1-\rho^2}/2]\), where \(\eta>0\) is a sufficiently
small absolute constant.  For every \(s\in I\),
\begin{equation} \label{eq:gaussian-conditional-probability-1}
    \Pr[g'\ge t\mid g=s]
    ~=~
    \Pr\!\left[w\ge \frac{t-\rho s}{\sqrt{1-\rho^2}}\right]
    ~=~
    \Omega(1) \enspace,
\end{equation}
since
\[
	\frac{t-\rho s}{\sqrt{1-\rho^2}} ~\leq~
    \frac{t-\rho(t - \eta\sqrt{1-\rho^2})}{\sqrt{1-\rho^2}}
    ~=~
    \frac{(1-\rho)t}{\sqrt{1-\rho^2}} + {\eta\rho}
    ~=~
    O(t\sqrt{\delta})+O(1)
    ~=~
    O(1) \enspace,
\]
where last step uses \(t=O(\sqrt{\log(m/k)})\) and \(\delta\le c_0/(k^2\log(m/k))\).
Moreover, for every \(s\in I\), we have
\[
    \frac{\gpdf(s)}{\gpdf(t)}
    ~=~
    \e^{{(t^2-s^2)}/{2}}
    ~=~
    \e^{ t(t-s)-{(t-s)^2}/{2} }
    ~\geq~
    \e^{ -{(t-s)^2}/{2} }
    ~\geq~
    \e^{ -{\eta^2(1-\rho^2)}/{2} }
    ~=~
    \e^{-\Theta(\delta)}
    ~=~
    \Omega(1) \enspace.
\]
Therefore, we have
\begin{equation}\label{eq:gaussian-conditional-probability-2}
    \Pr[g\in I]
    ~=~
    \int_I \gpdf(s) \,\d s
    ~=~
    \Omega\!\br{\abs{I}\cdot \gpdf(t)}
    ~=~
    \Omega\!\br{\sqrt{\delta}\cdot \gpdf(t)}
    \enspace.
\end{equation}
Combining \cref{eq:gaussian-conditional-probability-1,eq:gaussian-conditional-probability-2}, we get
\[
	a
    ~=~
    \Pr[g<t,\ g'\ge t]
    ~\ge~
    \Pr[ g\in I, g'\ge t]
    ~=~
    \Omega\!\br{\sqrt{\delta}\cdot \gpdf(t)} \enspace.
\]
By \cref{clm:gaussian-quantile-density}, we get
\begin{equation}\label{eq:lb-a}
    a ~=~ \Omega\!\left(\frac{k}{m}\sqrt{\delta\cdot \log\frac{m}{k}}\right) \enspace.
\end{equation}

We also need an upper bound on \(a\).
Note that by \cref{eq:gaussian-conditional-probability-1}, we have
\[
    a
    ~=~
    \int_{-\infty}^t
    \gpdf(s)\cdot
    \Pr_{w\sim\NN(0,1)}\!\left[
        w\ge \frac{t-\rho s}{\sqrt{1-\rho^2}}
    \right]\,\d s .
\]
Since $t-\rho s=(1-\rho)t+\rho(t-s)\ge \rho(t-s)$ and \(\rho\ge 1/2\) for sufficiently small \(c_0\),
the standard tail bound gives
\[
    \Pr\!\left[
        w\ge \frac{t-\rho s}{\sqrt{1-\rho^2}}
    \right]
    ~\le~
    \Pr\!\left[
        w\ge \frac{\rho(t-s)}{\sqrt{1-\rho^2}}
    \right]
    ~\le~
    \exp\!\left(-\frac{(t-s)^2}{8(1-\rho^2)}\right) \enspace.
\]
Also, $\gpdf(s)
    =
    \gpdf(t) \cdot \e^{ t(t-s)-{(t-s)^2}/{2} }
    \le
    \gpdf(t) \cdot \e^{t(t-s)}$.
Thus,
we obtain
\begin{align*}
    a
    ~&\le~
    \gpdf(t) \cdot
    \int_{-\infty}^t
    \exp\!\left(t(t-s)-\frac{(t-s)^2}{8(1-\rho^2)}\right)\d s \\
    ~&=~
    \gpdf(t) \cdot
    \sqrt{1-\rho^2} \cdot
    \int_0^\infty
    \exp\!\left(\sqrt{1-\rho^2}\cdot tu-\frac{u^2}{8}\right)\d u \\
    ~&=~
    O(\sqrt{\delta}\cdot \gpdf(t)) \enspace,
\end{align*}
where the first equality uses the change of variable \(u=(t-s)/\sqrt{1-\rho^2}\),
and the last step uses
\(\sqrt{1-\rho^2}=\Theta(\sqrt{\delta})\),
\(t\sqrt{1-\rho^2}=O(1)\), and the fact that
\(\int_0^\infty \exp(Cu-u^2/8)\,\d u=O(1)\) for every constant \(C\).
By \cref{lem:oned} with \(r=1\) and \(x=-t\), and using
\(\gcdf(-t)=p\), we have
\[
    \gpdf(t)
    ~=~
    \gpdf(-t)
    ~=~
    \gcdf'(-t)
    ~\le~
    C_{1,1}\cdot \gcdf(-t)
    \left(1+\log\frac1{\gcdf(-t)}\right)^{1/2}
    =~
    O\!\left(p\sqrt{\log\frac1p}\right) \enspace,
\]
where the last step uses \(p\le 1/2\).
Therefore, we have
\begin{equation}\label{eq:ub-a}
    a ~=~ O\!\left(\frac{k}{m}\sqrt{\delta\cdot \log\frac{m}{k}}\right) \enspace.
\end{equation}

Now we lower bound \(b\) and \(r\). Note that
\[
    b=\Pr[g'\ge t]-a=p-a
    \qquad\text{and}\qquad
    r=\Pr[g<t]-a=1-p-a \enspace.
\]
Moreover, by \cref{eq:ub-a}, we have \(a\le p/(2k)\), since \(\delta\le c_0/(k^2\log(m/k))\) and \(c_0\) is sufficiently small.
Thus, we have
\begin{equation} \label{eq:lb-b}
    b^{k-1}
    ~=~
    (p-a)^{k-1}
    ~\ge~
    p^{k-1}\left(1-\frac1{2k}\right)^{k-1}
    =~
    \Omega(p^{k-1}) \enspace.
\end{equation}
On the other hand, we have
\begin{equation}\label{eq:lb-r}
    r^{m-k}
    ~=~
    (1-p-a)^{m-k}
    ~=~
    (1-p)^{m-k}
    \left(1-\frac{a}{1-p}\right)^{m-k}
    =~
    \Omega((1-p)^{m-k}) \enspace,
\end{equation}
where the last step follows from
$
    \frac{(m-k)a}{1-p} = ma = O(k\sqrt{\delta\log(m/k)}) = O(\sqrt{c_0}) \ll 1
$.
Therefore, combining \cref{eq:lb-a,eq:lb-b,eq:lb-r}, we get
\begin{align*}
    \Pr[\RR_i]
    ~=~
    a\cdot \binom{m-1}{k-1} \cdot b^{k-1}r^{m-k}
    =
    \Omega\!\left(
        \frac{k}{m}\sqrt{\delta\log\frac{m}{k}}\cdot
        \binom{m-1}{k-1} \cdot p^{k-1}(1-p)^{m-k}
    \right) \enspace.
\end{align*}
Using \cref{eq:binomial-lower-bound}, we obtain
\[
    \Pr[\RR_i]
    ~=~
    \Omega\!\left(
        \frac{\sqrt{k}}{m}\sqrt{\delta\log\frac{m}{k}}
    \right) \enspace.
\]
The events \(\RR_1,\ldots,\RR_m\) are disjoint, and each implies
\(F_2(x)\ne F_2(y)\).  Thus, we have
\[
    \ns_\delta(F_2)
    ~\ge~
    \sum_{i=1}^m \Pr[\RR_i]
    ~=~
    \Omega\!\left(\sqrt{\delta\cdot k\log\frac{m}{k}}\right)\enspace. \qedhere
\]
\end{proof}

\section{\texorpdfstring{Proof of \cref{thm:boolean-threshold-band-lower}}{Proof of Theorem 4.7}}
\label{app:boolean-threshold-band-lower}

\booleanlb*

\begin{proof}
Let \(q\) be an integer, and let \(n_0=mq\).
For \(u\sim\pmcube{n_0}\), define \(Z^u\in\R^m\) by
\[
    Z^u_i
    =
    \frac{\sum_{j=(i-1)q+1}^{iq} u_j}{\sqrt q}\,,
    \qquad \forall i\in [m] \enspace.
\]
As in the proof of \cref{prop:threshold-ns-lower-bound},
we take \(q\) sufficiently large
and treat \(Z^u=(Z^u_1,\ldots,Z^u_m)\) as an \(m\)-dimensional standard Gaussian vector
for all probability estimates below.

We will give two constructions of row-normalized matrices $A\in\R^{m\times n}$
and thresholds $b\in\R^m$ for all $n\geq n_0$.
The first has anticoncentration probability \(\Omega(\Lambda k)\), and the second has
anticoncentration probability \(\Omega(\Lambda\sqrt{k\log(m/k)})\).
The desired lower bound then follows by taking the maximum.

\vspace{-1em}
\paragraph{Construction 1.}
Choose \(2k-1\) thresholds
$-1<t_1<t_2<\cdots<t_{2k-1}<1$
such that adjacent thresholds are separated by at least \(2\Lambda\).
This is possible because \(\Lambda\le c_0/k\) for sufficiently small \(c_0>0\).
For \(i=1,\ldots,2k-1\), define
\[
    A_{i,j}
    =
    \begin{cases}
        -q^{-1/2}\,, & \text{if } i \text{ is odd and } 1\le j\le q \enspace,\\
        ~q^{-1/2}\,, & \text{if } i \text{ is even and } 1\le j\le q \enspace,\\
        ~0\,, & \text{otherwise}\enspace.
    \end{cases}
\]
For \(i=2k,\ldots,m\), let
\[
    A_{i,j}
    =
    \begin{cases}
        q^{-1/2}\,, & \text{if } 1\le j\le q \enspace,\\
        0\,, & \text{otherwise}\enspace.
    \end{cases}
\]
Set \(b_i=-t_i\) for odd \(i\le 2k-1\), \(b_i=t_i\) for even
\(i\le 2k-1\), and \(b_i=-q\) for \(i=2k,\ldots,m\).
Since \(Z^u_1\ge -\sqrt q\), the last \(m-2k+1\) rows are never satisfied
even after shifting the threshold by \(\Lambda\).
All rows have $2$-norm \(1\) and are \(q^{-1/2}\)-regular.
We can choose \(q\) sufficiently large so that
$q^{-1/2} \leq \Lambda$.

It is easy to check that if \(Z^u_1\in(t_i-\Lambda,t_i+\Lambda)\) for some \(i\le 2k-1\),
then $Au\in \ort_{k,b+\Lambda\cdot\onevec}$ but $Au \notin \ort_{k,b-\Lambda\cdot\onevec}$.
Indeed, the first \(2k-1\) halfspaces alternate between
the tests \(Z^u_1\ge t_j\) and \(Z^u_1\le t_j\).
When \(Z^u_1\in(t_i-\Lambda,t_i+\Lambda)\),
passing from \(b-\Lambda\cdot\onevec\) to \(b+\Lambda\cdot\onevec\) changes
only the \(i\)-th test from unsatisfied to satisfied,
increasing the number of satisfied rows from \(k-1\) to \(k\).
Hence, we have
\begin{align*}
    \Pr\!
    \left[
        Au\in \ort_{k,b+\Lambda\cdot\onevec}
        \setminus
        \ort_{k,b-\Lambda\cdot\onevec}
    \right]
    ~\ge~
    \sum_{j=1}^{2k-1}\Pr[|Z^u_1-t_j| < \Lambda]
    ~=~
    \Omega(\Lambda k) \enspace,
\end{align*}
where the last step uses the fact that the Gaussian density is bounded below by
an absolute constant on \([-1,1]\).

\vspace{-1em}
\paragraph{Construction 2.}
Let \(t\) be a threshold to be determined later.
For \(i=1,\ldots,m\), define
\[
    A_{i,j}
    =
    \begin{cases}
        -q^{-1/2}\,, & (i-1)q<j\le iq\enspace,\\
        0\,, & \text{otherwise}\enspace,
    \end{cases}
    \qquad
    b_i=-t \enspace.
\]
Thus the \(i\)-th halfspace is \(\ind[Z^u_i\ge t]\).
Again, all rows have $2$-norm \(1\) and are \(q^{-1/2}\)-regular.

For a lower bound, it suffices to consider the event that exactly \(k-1\)
of the variables \(Z^u_i\) are at least \(t+\Lambda\), and among the remaining
variables at least one lies in the interval \([t-\Lambda,t+\Lambda)\).
Let $g$ be a standard Gaussian variable, and let
\[
	a ~=~ \Pr[g\ge t+\Lambda]\enspace, \qquad
	r ~=~ \Pr[ g\leq t-\Lambda] \enspace.
\]
Then, we have
\begin{align}
    \Pr\!
    \left[
        Au\in \ort_{k,b+\Lambda\cdot\onevec}
        \setminus
        \ort_{k,b-\Lambda\cdot\onevec}
    \right]
    ~&\ge~
    \binom{m}{k-1} \cdot a^{k-1}
    \left((1-a)^{m-k+1}-r^{m-k+1}\right) \nonumber\\
    ~&=~
    \binom{m}{k-1}\cdot a^{k-1}(1-a)^{m-k+1}
    \left(1-\left(\frac{r}{1-a}\right)^{m-k+1}\right) \enspace. \label{eq:ba-1}
\end{align}
We choose \(t\) so that
\[
	a ~=~ \Pr[g\ge t+\Lambda] ~=~ \frac{k}{m}\enspace.
\]
Since \(k\le m/2\), we have \(t+\Lambda\ge0\).
By \cref{lem:oned} with derivative order \(1\) and \(x=-(t+\Lambda)\),
and using \(a=\gcdf(-(t+\Lambda))=k/m\), we have
$\gpdf(t+\Lambda) =
    O\!\left(
        \frac{k}{m}\sqrt{\log\frac{m}{k}}
    \right)$.
By \cref{clm:gaussian-quantile-density}, we also have
$\gpdf(t+\Lambda) =
    \Omega\!\left(
        \frac{k}{m}\sqrt{\log\frac{m}{k}}
    \right)$.
This implies that
\[
    \gpdf(t+\Lambda)
    ~=~
    \Theta\!\left(
        \frac{k}{m}\sqrt{\log\frac{m}{k}}
    \right) \enspace.
\]
Moreover, we know that \(t+\Lambda=O(\sqrt{\log(m/k)})\) by the standard Gaussian tail bound.
Since \(\Lambda\le c_0/(k\sqrt{\log(m/k)})\),
the Gaussian density on \([t-\Lambda,t+\Lambda]\) is within a constant factor of
\(\gpdf(t+\Lambda)\).
Hence, we have
\[
    (1-a)-r
    ~=~
    \Pr[t-\Lambda\le g<t+\Lambda]
    ~=~
    \Theta\!\left(
        \Lambda\cdot \frac{k}{m}\sqrt{\log\frac{m}{k}}
    \right) \enspace.
\] 
Since \(1-a\ge 1/2\) and \(\Lambda\le c_0/(k\sqrt{\log(m/k)})\),
choosing \(c_0>0\) sufficiently small ensures that
$(m-k+1)\frac{(1-a)-r}{1-a}$ is at most a small absolute constant.
Using \(1-(1-x)^z=\Omega(zx)\) for \(0 < zx < 1\),
we get
\begin{align}
    1-\left(\frac{r}{1-a}\right)^{m-k+1}
    &=~
    1-\left(1-\frac{(1-a)-r}{1-a}\right)^{m-k+1} \nonumber\\
    &=~
    \Omega\!\left(
        (m-k+1)\frac{(1-a)-r}{1-a}
    \right) \nonumber\\
    &=~
    \Omega\!\left(
        \Lambda \cdot k\sqrt{\log\frac{m}{k}}
    \right) \enspace. \label{eq:ba-2} 
\end{align}
By the same argument as in \cref{eq:binomial-lower-bound},
\begin{equation}
    \binom{m}{k-1}\cdot a^{k-1} \cdot (1-a)^{m-k+1}
    ~=~
    \Omega\!\left(\frac1{\sqrt{k}}\right) \enspace. \label{eq:ba-3}
\end{equation}
Combining \cref{eq:ba-1,eq:ba-2,eq:ba-3} yields
\[
\Pr\!
    \left[
        Au\in \ort_{k,b+\Lambda\cdot\onevec}
        \setminus
        \ort_{k,b-\Lambda\cdot\onevec}
    \right]
    ~=~
    \Omega\!\left(
        \Lambda \sqrt{k\cdot \log\frac{m}{k}}
    \right) \enspace. \qedhere
\]
\end{proof}

\newpage
\phantomsection
\addcontentsline{toc}{section}{References}
\bibliographystyle{alpha}
\bibliography{ref}

\end{document}